\documentclass[11pt,letterpaper]{article}
\usepackage[margin=1in]{geometry}

\usepackage{natbib}
\usepackage[utf8]{inputenc} % allow utf-8 input
\usepackage[T1]{fontenc}    % use 8-bit T1 fonts
\usepackage{url}            % simple URL typesetting
\usepackage{booktabs}       % professional-quality tables
\usepackage{amsfonts}       % blackboard math symbols
\usepackage{nicefrac}       % compact symbols for 1/2, etc.
\usepackage{xcolor}         % colors
\usepackage{amsmath,amsfonts,amssymb,amsthm,array}
\usepackage{mathtools}
\usepackage{comment}
\usepackage{algorithm}
\usepackage[font=large]{caption}
\usepackage{algpseudocode}
\usepackage{bm}
\usepackage{color}
\usepackage{graphicx}
\usepackage{enumitem}
\usepackage{xargs}
\usepackage{multirow}
\usepackage{makecell}
\usepackage{fancyhdr}
\usepackage{wrapfig}
\usepackage{subcaption}
\usepackage{lscape}
\usepackage{bbm, dsfont}
\usepackage{thmtools}
\usepackage[colorinlistoftodos,bordercolor=orange,backgroundcolor=orange!20,linecolor=orange,textsize=scriptsize]{todonotes}

\usepackage{xspace}

\definecolor{bgcolor}{rgb}{0.8,1,1}
\definecolor{bgcolor2}{rgb}{0.8,1,0.8}
\usepackage{threeparttable}

\usepackage{tcolorbox}
\usepackage{pifont}
\definecolor{mydarkgreen}{RGB}{39,130,67}
\definecolor{mydarkred}{RGB}{192,47,25}

\newcounter{todocounter}

\newcommand{\bor}[1]{\textcolor{orange}{(Boris: #1)}}

\allowdisplaybreaks

\usepackage{lmodern} 
\usepackage{physics}
\usepackage{diagbox}
\usepackage{tabularx}
\usepackage{lscape}
\usepackage{tikz}
\usetikzlibrary{shapes.arrows}
\usetikzlibrary{arrows.meta} 

\usepackage{ninecolors}
\usepackage[backref=page]{hyperref}
\hypersetup{
    unicode=false,          % non-Latin characters in Acrobat’s bookmarks
    colorlinks=true,        % false: boxed links; true: colored links
    linkcolor=green6,         % color of internal links (change box color with linkbordercolor)
    citecolor=purple6,       % color of links to bibliography
    filecolor=magenta4,      % color of file links
    urlcolor=cyan4           % color of external links
}

\usepackage[capitalise,noabbrev, nameinlink]{cleveref}

\newcommand{\DeclareAutoPairedDelimiter}[3]{%
  \expandafter\DeclarePairedDelimiter\csname Auto\string#1\endcsname{#2}{#3}%
  \begingroup\edef\x{\endgroup
    \noexpand\DeclareRobustCommand{\noexpand#1}{%
      \expandafter\noexpand\csname Auto\string#1\endcsname*}}%
  \x}
\DeclareAutoPairedDelimiter\p{\lparen}{\rparen}
\DeclareAutoPairedDelimiter\sqb{\lbrack}{\rbrack} % square bracket
\DeclareAutoPairedDelimiter\crb{\lbrace}{\rbrace} % curly braces
\DeclareAutoPairedDelimiter\angle{\langle}{\rangle}
\DeclareAutoPairedDelimiter\ceil{\lceil}{\rceil}
\DeclareAutoPairedDelimiter\floor{\lfloor}{\rfloor}
\DeclareAutoPairedDelimiter\abs{\lvert}{\rvert}   % For absolute value
\DeclareAutoPairedDelimiter\norm{\lVert}{\rVert}  % For norm (double bars)
\newcommand{\R}{\ensuremath{\mathbb{R}}}

\newcommand{\Z}{\ensuremath{\mathbb{Z}}}

\newcommand{\E}{\mathbb{E}}
\newcommand{\poly}{\operatorname{poly}}

\def\<#1,#2>{\left\langle #1,#2 \right\rangle}

\newcommand{\eps}{\ensuremath{\epsilon}}

\newcommand{\ph}{\ensuremath{\varphi}}
\renewcommand{\phi}{\ensuremath{\varphi}}
\newcommand{\bx}{\mathbf{x}}
\newcommand{\expker}{\operatorname{exp}}
\newcommand{\Softmax}{\operatorname{softmax}}
\newcommand{\Attn}{\ensuremath{\mathsf{Attn}}}
\newcommand{\BaseCompress}{\textsc{BaseCompress}}
\newcommand{\Compress}{\textsc{Compress}}

\newcommand{\Var}[1]{\text{\bf Var}\normalfont\Bigl\lbrack #1 \Bigr\rbrack}

\newtheorem{theorem}{Theorem}[section]
\newtheorem*{theorem*}{Theorem}
\newtheorem{proposition}[theorem]{Proposition}
\newtheorem{lemma}[theorem]{Lemma}
\newtheorem{claim}[theorem]{Claim}
\newtheorem{corollary}[theorem]{Corollary}

\newtheorem{definition}{Definition}[section]

\newtheorem{remark}{Remark}[section]
\crefname{assumption}{assumption}{assumptions}
\theoremstyle{plain}

\title{Towards Tight Bounds for Streaming Attention}
\author{
\centering
\begin{tabular}{ccc}
\begin{tabular}{c}
Justin Y. Chen\\
MIT
\end{tabular}
&
\begin{tabular}{c}
Ying Feng\\
MIT
\end{tabular}
&
\begin{tabular}{c}
Piotr Indyk\\
MIT
\end{tabular}
\\[1.4em]
\begin{tabular}{c}
Michael Kapralov\\
EPFL
\end{tabular}
&
\begin{tabular}{c}
Ekaterina Kochetkova\\
EPFL
\end{tabular}
&
\begin{tabular}{c}
Boris Prokhorov\\
EPFL
\end{tabular}
\end{tabular}
}
\date{}

\begin{document}
\maketitle

\begin{abstract}
The attention mechanism is a cornerstone of modern transformer architectures. However, its expressive power comes at the cost of quadratic runtime and linear space usage. In particular, the classical transformer architecture explicitly stores all previously seen input elements (tokens) in order to generate the next one. The problem of implementing a transformer in limited space, known as {\em KV cache compression}, has received much interest over the past few years, spurring the development of powerful heuristics. Recent works of Haris et al, COLT'25 and Kochetkova et al, NeurIPS'25, formalized KV cache compression as the streaming attention approximation problem, providing both upper bounds (based on discrepancy theory) and information theoretic lower bounds. However, those papers left open a significant gap between the upper and lower bounds. For example, the space usage of their algorithms increases with the precision parameter, but the lower bound does not get stronger.

In this work, we revisit the streaming attention approximation problem and provide nearly tight bounds on its space complexity. On the algorithmic side, we achieve the result through a surprisingly tight interplay between three distinct methods for  kernel density estimation: discrepancy-based coreset constructions (e.g., Charikar-Kapralov-Waingarten'24), the polynomial method (e.g., Greengard-Rokhlin'87, Alman-Song'23), and space partitioning (e.g., Andoni-Laarhoven-Razenshteyn-Waingarten'17, Charikar-Kapralov-Nouri-Siminelakis'20). On the lower bound side, our main technical contribution is a new technique for using the \texttt{INDEX} problem with a large amount of side information that we hope will prove useful in other high dimensional geometric estimation problems.
\end{abstract}

\thispagestyle{empty}
\newpage
\thispagestyle{empty}
\tableofcontents
\newpage
\setcounter{page}{1}

\section{Introduction}
Transformer models~\cite{vaswani2017attention} are central to modern machine learning, achieving state-of-the-art performance for a vast range of applications in language processing, coding,  vision and multimodal domains. Most  modern GenAI systems are based on transformer architectures. A key building block in transformers is the {\em attention mechanism}, which computes correlations between data elements (called {\em tokens}) to select and propagate the signals in the data. The tokens are embedded into a geometric space and represented as vectors in $\R^d$. Formally, one is given a sequence $K$ of $n$ {\em key} embeddings $k_i \in \R^d$, an associated sequence $V$ of {\em value} embeddings $v_i \in \R^d$, and a {\em query} embedding $q \in \R^d$. The task is to compute

\begin{equation}
    \Attn(K, q, V) = \sum_{i \in [n]}\frac{\expker(k_i, q)}{\expker(K,q)} v_i
    \label{e:att}
\end{equation}
% \iffalse
% \begin{equation}
%     \mathsf{Attn}(q, K, V) = \frac1{Z(K, q)}\sum_{i \in [n]}e^{k_i^Tq}  v_i
%     \label{e:att}
% \end{equation}
% \fi
where $\expker(k, q) = e^{\langle k, q\rangle}$ and 
$\expker(K, q)=\sum_{i \in [n]} e^{\langle k_i, q\rangle}$.
\footnote{Prior works often normalize the inner product in the exponent. For example~\cite{alman2023fast} uses $\expker(k,q) = e^{\langle k, q\rangle/d}$, while~\cite{kochetkova2025streaming} uses $\expker(k,q) = e^{\langle k, q\rangle/\sqrt{d}}$. In this paper we remove the scaling factor to simplify the notation. }
Intuitively, the weight $\expker(k_i,q)$ captures how relevant the key-value pair $(k_i,v_i)$ is to the query $q$; the most relevant values $v_i$ are aggregated to form the output. 

This calculation is performed for many different query vectors $q$, in multiple layers of the transformer architecture.
%where the inputs are either given, or constructed from outputs of previous steps 
%\piotr{Add a footnote clarifying that in some applications we should technically consider adaptivity. }. 
As a result, attention computation is very resource-intensive. One bottleneck is that the direct algorithm for evaluating Eq.~\eqref{e:att} for $n$ query vectors $q$ scales quadratically in $n$. The value $n$ (often referred to as the {\em context length})  affects the accuracy of the transformers, and can be quite large. As a result, there has been a large body of work devoted to developing sub-quadratic approximate algorithms for attention computation~\cite{kitaev2020reformer,choromanski2020rethinking, chen2021scatterbrain, zandieh2023kdeformer,han2023hyperattention}, as well as studying the time complexity of this task~\cite{keles2023computational, alman2023fast, saha2026computational}. 

Another major computational bottleneck in attention computation is its {\em space} complexity. Specifically, the key-value pairs are typically stored in memory in order to avoid recomputing them at every step.  The storage required to cache these pairs scales linearly in the context length $n$, number of layers and ``attention heads'' of the model, and the number of distinct users. Managing this memory footprint (called {\em KV cache}) is one of the key   challenges in deploying large language models at scale. For example, ~\cite{pope2023efficiently} observes that ``At large batch sizes and context lengths, the KV cache can become very large, putting us at the risk of running out of memory''.  This challenge has stimulated a large body of research focused  on reducing the memory footprint, either by retaining a subset of tokens ~\cite{zhang2023h2o,li2024snapkv,kochetkova2025streaming} or quantizing the embeddings, i.e., representing them using fewer bits~(\cite{liu2024kivi,zhang2024kv}; see also a very recent announcement by Google~\cite{zandieh2026turboquant}).\footnote{Furthermore, NVIDIA maintains a repository containing implementations of dozens of KV cache compression methods: \url{https://github.com/NVIDIA/kvpress}.}
% \todo{justin: added optional footnote, feel free to remove}

Although most of the aforementioned methods for KV caching have been empirical, recent work~\cite{kochetkova2025streaming} proposed provable algorithms for this task, using techniques from discrepancy theory. Specifically, they design a low-storage data structure that maintains the sequence of keys $K$ and associated values $V$, that given any query vector $q$, returns a vector $\hat{z} \in \R^d$ such that\footnote{One could imagine a stronger guarantee where each coordinate is approximated with multiplicative accuracy, i.e.,  $|\hat{z}_j-\Attn(K, q, V)_j| \le \eps \Attn(K, q, V)_j$ for all $j$. Unfortunately, \cite{haris2025compression} showed that to achieve such a guarantee, any algorithm must use linear storage.}
\begin{equation}
    \|\hat{z}-\Attn(K, q, V)\|_2 \le \eps \cdot \|\Softmax(K, q)\|_2 \cdot \|V\|_F
\end{equation}
% \bor{Other works \cite{alman2023fast} actually use simpler additive error $\|\hat{z}-\Attn(K, q, V)\|_2 \le \eps$ with additional assumption $||v_i||_2 < r$ which is reasonable from the practical point of view. Our algorithm and lower bounds are tight up to polylog factors with this simpler guarantee - the only change is to remove Cauchy-Schwarz on page 45.}
where $\Softmax(K, q)_i=\frac{\expker(k_i,q)}{\expker(K,q)}$ for $i=1\ldots n$, and 
$\eps>0$ is the accuracy parameter. 
(Note that the norm of the softmax $\|\Softmax(K, q)\|_2$ is the ratio between the $\ell_2$ norm and the $\ell_1$ norm of the vector $[\expker(k_1,q), \ldots , \expker(k_n,q)]$, and therefore ranges between $\frac{1}{\sqrt{n}}$ and $1$.) The data structure supports adding new key-value pairs $(k_i,v_i)$, and uses 
\[ \tilde{O}\left(d\sqrt{d}\cdot\frac{e^{2r^2}}{\eps}\right) \] bits of space, where $\tilde{O}$ suppresses factors logarithmic in $n$, while $r$ is an upper bound on the $\ell_2$-norm of vectors $q$ and $k_i$.
They also present a lower bound showing that any such data structure must use at least 
\[ \tilde{\Omega}\left(\min\left\{\frac{1}{\eps^2}, d\cdot e^{2r^2}\right\}\right)
\] bits of space. 

These results demonstrate that a sublinear-space data structure for approximating attention is indeed feasible. However, they still leave a large gap between the upper bound (which scales linearly in $1/\eps$) and the lower bound (which does not depend on $1/\eps$ at all, except when the value of $\eps$ is sufficiently high).  

\subsection{Our results}

%\piotr{Is ``their'' $\eps$ the same as ``our'' $\eps$ ? If not, then how to translate and compare the bounds.}

%\piotr{Main points: (1) our algorithmic and lower bounds are tight up to polylog or $n^{o(1)}$ factors; (2) in a natural regime of dimension $d$, we achieve a sub-linear dependence on $1/\eps$. (3) replace $C(q,P)$ with $1/\sqrt{n}$.}

%\|\hat{z}-\mathsf{Attn}(q, P, V)\|_2 \le \eps \cdot \frac{1}{\sqrt{n}} \cdot \|V\|_F

%HERE

In this paper we address the main question left open in~\cite{kochetkova2025streaming}, presenting (nearly) tight upper and lower bounds for the space complexity of approximating attention for a wide range of parameters. 
In parallel, we also study the space complexity of approximating the normalizing constant $\expker(K,q)$. This problem is of separate interest, as it corresponds to the problem of {\em kernel density estimation (KDE)} with respect to the kernel $\expker(k,q)$. Since the latter task returns a scalar value as opposed to a vector, it is somewhat simpler to deal with than attention. At the same time, the techniques used to estimate $\expker(K,q)$ transfer almost immediately to the attention case.

%\piotr{Discuss guarantees, softmax}

We consider the following setting:
    \begin{itemize}
        \item As in~\cite{kochetkova2025streaming}, we assume that for all keys $k$ and queries $q$ we have $\|k\|_2 \leq r, \|q\|_2 \leq r$.\footnote{Following \cite{kochetkova2025streaming} we do not require $\|v\|_2 \leq r$ for all $v\in V$. This assumption was considered in previous works \cite{alman2023fast, gupta2025subquadratic} giving another guarantee $\|\hat{z} - \Attn\| \leq \eps$. We note that our algorithm also produces $\tilde{O}(\eps)$ approximation under this assumption. This is achieved by replacing Cauchy–Schwarz in \eqref{eq:attn-softmax-norm} with a bound $|V_{i,j}| \leq r$. Our lower bound instances also satisfy $\|v\|_2 \leq 1$ for all $v\in V$.}
        
        For simplicity of notation, we define the softmax distribution as proportional to $e^{\langle k, q \rangle}$. This way the radius $r$ becomes the parameter governing the {\em temperature} of the model~\cite{vaswani2017attention,HintonVD15,JangGP17,MaddisonMT17}. Large values of the radius $r$ correspond to the {\em low temperature} regime, in which the attention function can concentrate mass very sharply on a few tokens, and low values of $r$ correspond to the {\em high temperature} regime, in which the softmax distribution cannot peak significantly and therefore cannot concentrate mass on a few tokens. 
        
        \item We focus on the setting where the precision parameter $\eps$ is at least inversely polynomial in $n$, i.e. $\epsilon\in \left(\frac{1}{n^{100}}, \frac{1}{2}\right)$.
         
        \item We focus on the ``high-dimensional'' regime where the dimension $d$ is poly-logarithmic in $n$. That is, we assume $d\le \log^a n$ for some constant $a \geq 1$.
    \end{itemize}

Our main results are two-fold:

\paragraph{Scaling with $\eps$:} Assume that $r = (\log n)^{o(1)}$, $\log n \leq d \leq \log^a n$ for some constant $a > 1$. 
Furthermore, assume\footnote{A similar assumption has been used, e.g., in~\cite{alman2023fast}} that $n^{-100} < \eps < n^{-c}$ for an arbitrarily small constant $c > 0$.
Then streaming attention can be solved using
    \begin{align*}
        \tilde{O}\qty(\qty(\frac{1}{\eps})^{1 - \frac{1}{a} + o(1)})
    \end{align*}
    memory, where $\tilde{O}(\cdot)$ suppresses $O(\poly\log(n))$ factors. Furthermore, this bound is tight for any $d = \log^a n$, $a > 2$, i.e. there is a
    \begin{align*}
        \tilde{\Omega}\qty(\min\qty{n, \qty(\frac{1}{\eps})^{1 - \frac{1}{a} - o(1)}})
    \end{align*}
    lower bound for its space complexity.

Therefore, we provide a (nearly) tight bound for the streaming complexity of attention as a function of $\eps$ for a wide range of parameters. Furthermore, the bound is {\em sublinear} in $1/\eps$, improving over the linear dependence in~\cite{kochetkova2025streaming}.

\paragraph{Scaling with $r$:}  Streaming attention can be solved using   $\tilde{O}\qty(\frac{e^{r^2(1 + o(1))}}{\eps})$ memory. Furthermore,  the problem\footnote{We note that the parameter setting that our lower bound is shown in implies that $\|\Softmax(K, q)\|_2$ is bounded by $O(n^{o(1)}/\sqrt{n})$. %this is due to the assumption $r^2\leq (\log n)/2$.
Showing a lower bound for other ranges of this parameter remains an open problem.} requires
    \[
        \tilde{\Omega}\qty(\min\qty{n, \frac{e^{r^2(1 - o(1))}}{\eps}})
    \]
memory bits, when $d =  (\log n)^{1 + \Omega(1)}$ and $\Omega(\log n) \leq r^2 \leq \frac{\log n}{2}$, i.e. $n^{\Omega(1)} \leq e^{r^2} \leq \sqrt{n}$.

%\piotr{I think the above interprets $\tilde{O}(\cdot)$ as up to $n^{o(1)}$, not up to logarithmic factors. We need to pick one definition and be consistent.} \bor{No, we can do without $n^{o(1)}$. Optimal choice of $\Delta$ gives smth like $e^{r^2(1 - r^{-1/5})}$ and everything else is poly factors. It is going to appear in the proof of this theorem.}
% Piotr: great, thanks!

This yields a (nearly) tight bound for the streaming complexity of attention as a function of $r$ for a wide range of parameters, improving over the bound in~\cite{kochetkova2025streaming}.

\iffalse
Our space ($r = \Omega(\log n)$) for attn with softmax:
\begin{align*}
    \text{u.b.:}\quad \tilde{O}(\frac{e^{r^2}}{\eps}) \quad\quad \text{l.b. (only for $r^2 < \frac{\log n}{2}$}): \quad \tilde{\Omega}(\frac{e^{r^2}}{\eps})
\end{align*}

Our space ($r = O(1), \eps < n^{-c}$) for attn with softmax:
\begin{align*}
    \text{u.b.:}\quad \tilde{O}((\frac{1}{\eps})^{1 - \Theta(1)}) \quad\quad \text{l.b. (only for $d = \omega(\log^2 n)$)}: \quad \tilde{\Omega}((\frac{1}{\eps})^{1 - \Theta(1)})
\end{align*}
\fi

\iffalse
\subsection{Comparison with \cite{kochetkova2025streaming} [RAW]}
\todo{@Michael, @Ekaterina: check pls}
Our guarantee:
\begin{gather*}
        \label{eq:attn}
        \norm{\hat{z} - \frac{\sum\limits_{p\in K} e^{\<p_i, q>}\cdot v_i}{\sum\limits_{p\in K} e^{\<p_i, q>}}}_2 \leq \eps \cdot \frac{1} {\sqrt{n}} \cdot \|V\|_F
\end{gather*}
Guarantee from \cite{kochetkova2025streaming}:
\begin{gather*}
        \label{eq:attn}
        \norm{\hat{z} - \frac{\sum\limits_{p\in K} e^{\<p_i, q>}\cdot v_i}{\sum\limits_{p\in K} e^{\<p_i, q>}}}_2 \leq \eps \cdot {\frac{\sqrt{\sum\limits_{p\in K} e^{2\<p_i, q>}}}{\sum\limits_{p\in K} e^{\<p_i, q>}}} \cdot \|V\|_F
\end{gather*}

For the same choice of $\eps$ our guarantee is always at least as good (from Cauchy–Schwarz inequality) and it is strictly better when softmax is not "flat".

Space of $\cite{kochetkova2025streaming}$:
\begin{align*}
    \text{u.b.:}\quad \tilde{O}(\frac{e^{2r^2}}{\eps}) \quad \quad 
    \text{l.b.:}\quad \tilde{\Omega}(\min\{\frac{1}{\eps^2}, e^{2r^2}\})
\end{align*}
\fi

\section{Technical Overview}
In this section we define our model of computation and problems studied formally, present our results and outline the main techniques used to achieve them.

Throughout, we use $\expker(\cdot, \cdot)$ to refer to the core kernel (henceforth referred to as the {\em softmax kernel}) associated with attention computation. For key and query vectors $k, q$, $\expker(k, q) = e^{\langle k, q \rangle}$. For the full key sequence $K = (k_1, \ldots, k_n)$, $\expker(K, q) = \sum_{i=1}^n e^{\langle k_i, q \rangle}$ is the sum of the kernel over all keys.
We will refer to $\exp(K, q)$ as the \emph{normalizing constant} as it serves the purpose of normalizing the weights $\expker(k_i, q)$ given to each value vector $v_i$ in the attention equation (\cref{e:att}). In the following technical overview we will mostly focus on the problem of computing $\expker(K, q)$, as the techniques needed for this task transfer to the more general attention problem with only minor modifications. 
%\piotr{Is it clear enough?}

In both our upper and lower bounds, we will make use of polynomial expansions of the exponential function. Recall the standard Taylor expansion: $e^x = \sum_{l=0}^\infty \frac{x^l}{l!}$. We denote the truncated kernels $\expker_{\leq t} (k,q ) := \sum_{l=0}^t \frac{\langle k,q \rangle^l}{l!}$  and $\expker_{>t} (k,q ) := \sum_{l=t+1}^{\infty} \frac{\langle k,q \rangle^l}{l!}$, such that: 
\[\expker(K, q) =\sum_{k \in K} \sum_{l=0}^\infty  \frac{\langle k,q \rangle^l}{l!} = \sum_{k \in K}  \expker_{\leq t} (k,q )+ \expker_{> t} (k,q ).\]

We start by formally defining the problem of computing the normalizing constant $\expker(K, q)$ in the streaming model.
We note that this is exactly an instance of the {\em kernel density estimation problem} (KDE for short)  that has received much attention in its own right in the recent literature~\cite{charikar2017hashing,backurs2018efficient,backurs2019space,siminelakis2019rehashing,CKNS20,karppa2022deann,CKW24}.  We note, however, that most of the aforementioned works consider shift-invariant kernels, whereas softmax KDE is a version of maximum inner product search.

\begin{definition}[Streaming softmax KDE, i.e. computing $\expker(K, q)$]\label{def:streaming-kde}
    The algorithm receives keys $k_1, \ldots, k_n\in \R^d$ in a stream and at every point $j = 1,\ldots, n$, 
    %\todo{\ying{We used $t$ for both the truncation threshold and the streaming timestep -- would this be confusing?} \bor{changed to $K_j$ as in \cite{kochetkova2025streaming}}}
    % PIOTR: looks good!
    given any fixed $q\in \R^d$,  outputs an estimate $\widehat \expker(K_{j}, q)$ of $\expker(K_{j}, q)$ that satisfies
    \begin{gather}
        \label{eq:streaming-kde}
        |\widehat{\expker}(K_{j}, q) - \expker(K_{j}, q)| \leq \eps \cdot \expker(K_{j}, q)
    \end{gather}
    with probability at least $1-\delta$ for some $\delta\in (0, 1)$ for any fixed query $q$.  Here $K_{j}=(k_1, \ldots, k_{j})$ is the sequence of keys received by time $j$.
\end{definition}

\begin{definition}[Streaming attention]\label{def:streaming-attn}
    The algorithm receives a sequence of keys, values and queries $(k_j, v_j, q_j), j=1,\ldots, n$ in a stream and at every point $j$ the algorithm must output $\hat z_j$ such that 
    $$
    \|\hat z_j - \Attn(K_{j}, q_j, V_{j})\|_2\leq \epsilon \cdot \|\Softmax(K_{j}, q_j)\|_2 \|V_{j}\|_F,
    $$ 
    with  probability at least $1-\delta$ for some $\delta\in (0, 1)$, where $K_{j}=(k_1,\ldots, k_{j})$ and $V_{j}=(v_1,\ldots, v_{j})$. 
\end{definition}

\subsection{High temperature regime (small $r$).}
Our main result (see \cref{th:const_r_ub_lb}) holds for $r=\log^{o(1)} n$, and shows 
%We focus on $r=O(1)$ here only for the ease of exposition. \bor{do we need to assume this? the analysis doesn't change at all}}} 
that $1/\epsilon$ scaling of the space complexity by~\cite{kochetkova2025streaming} is in fact suboptimal by factors polynomial in $1/\epsilon$ in {\em any fixed polylogarithmic dimension}. Specifically, if $d=\log^a n$, we show that the space complexity of both softmax KDE and attention computation is approximately
%\bor{and running time} PIOTR: the discussion here is mostly focused on space. The theorems should claim time bounds.

\begin{equation}\label{eq:const-r-ub}
\left(\frac1{\epsilon}\right)^{1-1/a+o(1)}.    
\end{equation}

To put this in perspective, it is useful to recall the collection of techniques that have been used for {\em time}-efficient approximation algorithms for the attention function, and, more generally, kernel density estimation. First, a classical approach pioneered in the Fast Multipole Method of Greengard and Rokhlin~\cite{greengard1987fast} for the low dimensional regime (essentially, for constant $d$) amounts to the Taylor expansion of the (softmax) kernel and summarizing the dataset by its moments of order $t=O(\log (1/\epsilon))$. The number of such moments is ${d + t \choose t}$, which is exponential in $t$. A carefully optimized version of these ideas has recently been used to obtain nearly linear time algorithms for {\em computing} the attention function in dimension up to $C\log n$~\cite{alman2023fast}. Interestingly, these algorithms remain subquadratic for $d=C\log n$, with the constant $C$ affecting the runtime exponent. Furthermore, SETH-based lower bounds presented in the same paper show that such dependence is necessary.  
In contrast to these results, we show %~\eqref{eq:const-r-ub}
that the {\em space} 
%\bor{and runtime} %PIOTR: see the earlier comment
complexity of attention in the high temperature regime and logarithmic dimension is {\em sub-polynomial} in $1/\epsilon$ (setting $a \approx 1$ in~\eqref{eq:const-r-ub}). In slightly higher dimension, say, $d=\log^2 n$, our upper bound yields space complexity of 
$$
\approx 1/\sqrt{\epsilon}. 
$$

Surprisingly, our algorithmic  bound is tight up to factors subpolynomial in $1/\epsilon$, for every integer $a> 2$. Formally, we show

\begin{theorem}
    \label{th:const_r_ub_lb}
    Assume that $r = \log^{o(1)} n$, $\log n \leq d \leq \log^a n$ for some constant $a > 1$ and $n^{-100} < \eps < n^{-c}$ for an arbitrarily small constant $c > 0$. Then the Softmax KDE problem \eqref{eq:streaming-kde} can be solved using
    \begin{align*}
        \tilde{O}\qty(\qty(\frac{1}{\eps})^{1 - \frac{1}{a} + o(1)})
    \end{align*}
    memory, where $\tilde{O}(\cdot)$ suppresses $O(\poly\log(n))$ factors. Furthermore, this bound is tight when $d = \Theta(\log^a n)$ for any constant $a > 2$, i.e. there is a
    \begin{align*}
        \tilde{\Omega}\qty(\min\qty{n, \qty(\frac{1}{\eps})^{1 - \frac{1}{a} - o(1)}})
    \end{align*}
    lower bound for the streaming space complexity.
\end{theorem}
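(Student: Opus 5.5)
The plan has two halves: an algorithm that splits the exponential into a Taylor head and tail, and a lower bound from \texttt{INDEX} that plants information in degree-$t$ monomials. For the upper bound, fix a truncation degree $t$ and write $\expker(K,q)=\sum_k \expker_{\leq t}(k,q)+\sum_k \expker_{>t}(k,q)$.

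\emph{The head.} The head is a linear functional of the moment tensors $\sum_k k^{\otimes l}$ for $l\le t$. These can be maintained exactly in the stream using about $\binom{d+t}{t}\approx (ed/t)^t$ words.

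\emph{The tail.} Since $\|k\|,\|q\|\le r$ and $r=\log^{o(1)}n$, we have $|\langle k,q\rangle|\le r^2$. Hence $\expker_{>t}(k,q)$ is tiny relative to $e^{\langle k,q\rangle}$ only for $t\gg r^2\log$-factors. We do not need it to be negligible, though; we only need the tail to be cheaply approximable. The approach is to run a discrepancy-based coreset (in the style of Charikar--Kapralov--Waingarten and \cite{kochetkova2025streaming}) on the tail kernel only. Halving-based merge-and-reduce maintains a weighted subset whose error is governed by the discrepancy of the kernel $\expker_{>t}$, which scales with its ``size'' $\approx (r^2)^{t}/t!$ rather than $e^{r^2}$. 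This reduces the coreset size from $\tilde O(1/\eps)$ to roughly $\tilde O\big((r^2)^{t}/(t!\,\eps)\big)$ in the relative-to-$\expker(K,q)\ge n e^{-r^2}$ sense.

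\emph{Space partitioning.} Space partitioning (LSH-style, as in ALRW/CKNS) will be used to handle keys with large inner product with $q$. Bucketing keys by direction lets us argue that, within a bucket, the typical $|\langle k,q\rangle|$ for random-like directions in dimension $d$ is $\approx r^2/\sqrt d$, which shrinks the effective tail size to $(r^2/\sqrt d)^t/t!$.

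\emph{Balancing.} We balance $(ed/t)^t$ against $\eps^{-1}(r^2/\sqrt d)^{t}/t!\cdot\poly\log$. With $d=\log^a n$ and $\log(1/\eps)=\Theta(\log n)$, choosing $t\approx \log(1/\eps)/(a\log\log n)$ makes the head about $\eps^{-1/a\cdot(a-1)\cdot(1+o(1))}$ and the tail factor $d^{-t/2}t^{-t}\approx\eps^{1/a}$. Both terms then equal $(1/\eps)^{1-1/a+o(1)}$, and the $r^{2t}$ factor is $\eps^{-o(1)}$ since $r=\log^{o(1)} n$. The main obstacle I expect is making the discrepancy bound for the tail kernel rigorous in the streaming merge-and-reduce framework. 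This requires an explicit feature map for $\expker_{>t}$ and a Banaszczyk-type bound depending on its norm, uniformly over the partition cells. I would first try the Gram-matrix/vector-balancing analysis of \cite{kochetkova2025streaming} with the kernel replaced by $\expker_{>t}$.

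For the lower bound, reduce from \texttt{INDEX}. Alice's bits are encoded into the key set as points whose degree-$t$ moments carry the information: pick $m\approx \binom{d}{t}$ nearly orthogonal directions formed from $t$-sparse sign patterns, and include or omit small clusters according to the bits. Bob, using the remaining side information (the paper's ``large side information'' technique), appends keys that cancel all lower-order moments. He then queries along a direction $q$ correlated with one pattern, so that the difference in $\expker(K,q)$ between bit $0$ and $1$ is about $(r^2/\sqrt d)^{t}/t!$ times the total mass, which must be detectable at precision $\eps$. Setting parameters as above gives $m\ge(1/\eps)^{1-1/a-o(1)}$ bits, capped by $n$. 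The restriction $a>2$ is where cross-talk between the $m$ patterns is controlled, since $d\gg t^2$-type near-orthogonality must hold. Establishing this cancellation cleanly is the second hard step, and I would rely on the paper's side-information \texttt{INDEX} lemma for it.
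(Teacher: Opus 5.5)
\textbf{Upper bound.} Your skeleton is exactly the paper's: a moment sketch for $\expker_{\le t}$, a discrepancy coreset for $\expker_{>t}$ under Merge-and-Reduce, and $t\approx\log(1/\eps)/\log d$. On its own it already gives the bound. At $d^t\approx 1/\eps$, the head $d^t/t!$ and the tail $\frac{e^{r^2}}{\eps}\frac{r^{2t+2}}{(t+1)!}$ are both $\frac{1}{\eps\,t!}$ up to $(1/\eps)^{o(1)}$, and $t!\ge(1/\eps)^{1/a-o(1)}$.

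The space-partitioning step, however, is false and must be dropped. The error of \BaseCompress{} (\Cref{lem:simple_coreset}) is controlled by $\max_u\expker_{>t}(u,u)=\sum_{l>t}\|u\|_2^{2l}/l!$, which depends only on norms. Bucketing by direction does not change it, and an adversarial stream need not look random inside any bucket. In fact, if the tail cost really were $\eps^{-1}(r^2/\sqrt d)^t/t!$, balancing would give $t\approx\frac23\log(1/\eps)/\log d$ and total space $(1/\eps)^{\frac23(1-\frac1a)+o(1)}$. That contradicts the very lower bound you are proving, e.g.\ at $\eps=1/n$. Your arithmetic reaches $(1/\eps)^{1-1/a}$ only because it drops the $\sqrt d$: at $t=\log(1/\eps)/\log d$ one has $d^{-t/2}t^{-t}\approx\eps^{1/2+1/a}$, not $\eps^{1/a}$.

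Without the partition, your ``main obstacle'' also disappears. The kernel $\expker_{>t}(k,q)=\sum_{l>t}\langle k^{\otimes l},q^{\otimes l}\rangle/l!$ is positive definite, so \Cref{lem:simple_coreset} gives per-block error $\tilde O(\sum_{l>t}r^{2l}/l!)$. Merge-and-Reduce with block size $b=\tilde\Theta\bigl(\frac{e^{r^2}}{\eps}\sum_{l>t}\frac{r^{2l}}{l!}\bigr)$, using $\expker(K_j,q)\ge je^{-r^2}$, then finishes the argument.

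\textbf{Lower bound.} Here the mechanism is missing.

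Bob cannot ``append keys that cancel all lower-order moments.'' Appended keys only add terms $k^{\otimes l}$. The even-order tensors pair with $q^{\otimes 2l}$ to give $\sum_k\langle k,q\rangle^{2l}\ge 0$, which extra keys can only increase. Bob also does not know Alice's moments, and extra keys inflate $\expker(K,q)$ and hence the allowed error.

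In the paper the side information goes from Alice to Bob and is subtracted numerically:
\begin{enumerate}
\item The keys $k_i\sim\mathcal N(0,\frac{r'^2}{d}I_d)$, with $r'=r/2$, come from shared randomness.
\item Alice sends the KDE state of $K_\bx$ together with the sketch $\sum_{k\in K_\bx}\psi_A(k)$ of the Hermite truncation $T_{\le t}$ at scale $\lambda=r'^2/\sqrt d$.
\item Bob queries $q=k_jr'/\|k_j\|_2$ and subtracts the sketch value to isolate $\sum_{k\in K_\bx}T_{>t}(\langle k,q\rangle)$.
\end{enumerate}

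You have also put the $\sqrt d$ in the wrong place. Flipping $\bx_j$ changes this quantity by one key's tail, $T_{>t}(\langle k_j,q\rangle)=\Omega(r'^{2t+2}/(t+1)!)$, with no $\sqrt d$ loss. This is where $t=o(\sqrt d)$, hence $a>2$, enters, via $\text{He}_{t+1}(x)\approx x^{t+1}$ at $x\approx\sqrt d$. The scale $r'^2/\sqrt d$ is instead the typical $\langle k,q\rangle$ of the other $\approx m$ keys. By Hermite orthogonality, their tails have mean zero and variance $\approx\lambda^{2t+2}/(t+1)!$.

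The proof then consists of exhibiting a nonempty window
\[
\frac{d^t}{t!}\log n\ll m\ll\min\Bigl\{\frac{d^{t+1}}{(t+1)!},\ \frac{1}{\eps}\cdot\frac{r'^{2t+2}}{(t+1)!}\Bigr\}
\]
and checking that it contains some $m$ with $\min\{n,(1/\eps)^{1-1/a-o(1)}\}\le m\le n$. The three constraints say, in order: the side information is $o(m)$; the aggregate cross-key noise is below the signal; and the KDE error $\eps\expker(K_\bx,q)\approx\eps m$ is below the signal.

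None of these constraints appears in your proposal, and ``setting parameters as above'' is never actually checked. Deferring to ``the paper's side-information lemma'' is circular, since that lemma is exactly what has to be proved.
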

\begin{remark}
    \label{remark:const_r_attn}
    This result also holds for the Streaming Attention problem (\Cref{def:streaming-attn}). See \Cref{th:const_r_ub_lb_attn} for the formal statement and \Cref{sec:attn_ub,sec:attn_lb} for details of the reduction.
\end{remark}
Here and throughout, for the upper bounds, we say a randomized streaming algorithm solves a problem if at every point $j = 1, \dots, n$, given a fixed query, the algorithm outputs a correct approximation with $1- \frac{1}{\poly n }$ success probability. For the lower bounds of the space complexity, the entire stream is given at once and the protocol only needs to succeed with $.99$ probability.

\begin{comment}
\begin{remark}\label{rem:const_r_ub_lb}
    Additionally, if we consider larger (sub-polynomial) values of $\eps$, then simple coreset construction gives $\tilde{O}\qty(\frac{1}{\eps})$ upper bound and this is almost optimal, i.e. there is a\todo{proof not written yet}
    \begin{align*}
        \tilde{\Theta}\qty(\min\qty{n, \qty(\frac{1}{\eps})^{1 - o(1)}})
    \end{align*}
    upper/lower bound. $o(1)$ in this case is $\frac{\log \log 1 / \eps - \log \log d}{\log d}$. 
\end{remark}    
\end{comment}

% \begin{remark}
% Our streaming algorithm assumes knowledge of $n$ for simplicity of presentation, but is very robust to approximations: we note that an $O(1)$ approximation to $\log n$, i.e., knowledge of $n$ up to a polynomial, is enough.
% \end{remark}

\noindent\paragraph{Our techniques: upper bound.}

Given a stream of at most $n$ data points (denoted by $K$), the data structure maintains:
\begin{enumerate}
    \item a  sketch $\sum_{k \in K} \psi(k)$ for some feature map $\psi$, and
    \item a coreset $C \subset K$, %\todo{explain more how we maintain $C$ (not just existence)}
    %PIOTR: since we are only briefly stating that such coresets exist and why, there is no need to explain here how to maintain them.
\end{enumerate}
such that for any query $q \in \R^d$ and some fixed  integer $t$  of our choice,  one can use the  sketch $\sum_{k \in K} \psi(k)$ to approximate $\sum_{k \in K}  \expker_{\leq t} (k, q)$, and the coreset $C$ to approximate $\sum_{k \in K}  \expker_{>t} (k, q)$.

 We will give the constructions of $\psi$ and $C$, and then show that there exists $t$ such that the total space usage of
$\sum_{k \in K} \psi(k)$ and $C$ is $\approx (1/\eps)^{1- \frac{\log\log(1/\eps)}{\log d}}$. Formally, we show the existence of a pair of embeddings $\psi_1, \psi_2$, for the keys and the queries respectively, that approximate the low degree part of the exponential function well:
    \[\qty|\Bigl \langle \sum_{k \in K} \psi_1(k), \psi_2(q) \Bigr \rangle - \sum_{k \in K}  \expker_{\leq t}(k, q)| \leq \eps \cdot \expker(K, q),\] and that $ \sum_{k \in K} \psi_1(k)$ can be stored in $\tilde{O}\Bigl(\frac{(d+t)^{t}}{t!}\Bigr)$ bits of space. The embeddings are simple: we store low moments of the dataset of keys, similarly to the Fast Multipole Method. The formal guarantees are given in \cref{lem:map} in Section~\ref{sec:ub-high-temp}. 

Given the aforementioned sketch of the dataset $K$, namely $\sum_{k \in K} \psi_1(k)$, it remains to design a small space compression of $K$ that allows one to approximate $\sum_{k \in K} \expker_{>t} (k, q)$. We accomplish this with a {\em coreset} (a small subset) of $K$ for the truncated kernel $\expker_{>t}$. Specifically, we show that there exists $C \subset K$ such that with high probability, for any fixed $q$:
\[
    \Bigl | \sum_{k \in K}\expker_{>t}(k, q) -  \frac{|K|}{|C|} \cdot \sum_{k \in C} \expker_{>t}(k, q) \Bigr | \leq \eps \sum_{k \in K} \expker(k, q). 
\]
The existence of such a coreset follows by noting that for any integer $t \geq 0$, the kernel 
$$
\expker_{>t}(k, q) := \sum_{l=t+1}^\infty  \frac{\langle k,q \rangle^l}{l!}
$$ 
is positive definite. This means that if we construct an embedding of keys and queries into Hilbert space such that {\bf (a)} the dot products in the Hilbert space are equal to  the kernel similarities $\expker_{> t}(k, q)$ and {\bf (b)} the Hilbert space norms of the embedded keys and queries are appropriately small, then by Banaszczyk's theorem (a deep result in discrepancy theory\footnote{The idea of constructing coresets for kernel density estimation using Banaszczyk's theorem was pioneered in~\cite{PT20}, and the recent work of~\cite{CKW24} gave a powerful asymmetric construction of embeddings (i.e., different embeddings for keys and queries), leading to surprisingly efficient KDE data structures for smooth kernels. }), 
there is a coreset that approximates the KDE value of the keys at every query well.  The  Hilbert space norms of the embeddings govern the size of the coreset (they bound the $\gamma_2$ norm of the truncated kernel), and, intuitively, they are small exactly because $\expker_{>t}$ is a truncated kernel. Specifically, a careful analysis yields Lemma \ref{lem:coreset}, which states that our truncated kernel $\expker_{>t}$ admits an $\epsilon$-approximate coreset $C$ that can be stored in space
 $$
 \tilde{O}\left(\frac{e^{r^2}}{\eps}\cdot \sum_{l=t+1}^\infty  \frac{r^{2l}}{l!} \right).
 $$

Finally, it remains to choose $t$ to minimize the total space usage of the embedding based approach and the coreset size, i.e. 

\begin{equation}\label{eq:ub}
    \min_{t} \left \{ \frac{(d+t)^{t}}{t!} + \frac{e^{r^2}}{\eps}\cdot \sum_{l=t+1}^\infty  \frac{r^{2l}}{l!} \right \}.
\end{equation}

The ultimate value of the cutoff point $t$ that we use is 
 $t\sim \frac{\log (1 / \eps)}{\log d}$, resulting in a total space complexity of $\tilde{O}((1/\eps)^{1- \frac{\log\log(1/\eps)}{\log d} + o(1)})$, as claimed.
 %\todo{it is the first time we see $\frac{\log\log(1/\eps)}{\log d}$, it is not just "as claimed"}.
 %PIOTR: the expression appears earlier.
 The quick approximate derivation for this amounts to truncating the series to the dominant first term, ignoring the $e^{r^2}$ and $r^{2(t+1)}$ as $(1/\eps)^{o(1)}$ factors and noting that $(d+t)^t\approx d^t$, so that the expression above becomes
 $$
 \approx \frac{d^t}{t!}+\frac1{\epsilon}\cdot \frac1{t!}.
 $$
 For $t= \frac{\log(1/\epsilon)}{\log d}$ one has $d^t=1/\epsilon$ and the remaining $\frac{1}{t!}$ amounts to at most $(1/\eps)^{-1/a + o(1)}$ since
 \[
    \frac{\log (t!)}{\log (1/\eps)} \sim \frac{t\log t}{\log (1/\eps)} = \frac{\log t}{\log d} \sim \frac{\log \log (1/\eps)}{a\log \log n} \sim \frac{1}{a}
 \]
 % $$
 %  t!=t^{\Omega(t)}=\left(\frac{\log(1/\epsilon)}{\log d}\right)^{\Omega\left(\frac{\log(1/\epsilon)}{\log d}\right)}
 % =e^{\Omega\left(\log (1/\epsilon)\cdot \frac{\log \log(1/\epsilon)}{a \log \log n}\right)-o(1)}=(1/\epsilon)^{\Omega(1/a)-o(1)}    
 % $$
 where "$\sim$" denotes equality up to $1 + o(1)$ factors and where we used $n^{-100} < \epsilon<n^{-c}$ for a constant $c>0$ and $\log d=\log (\log^a n)=a \log\log n$ in the worst case by assumption. A more careful analysis 
 % shows that the constant multiplying $1/a$ in the exponent is $1$, and 
 % the details 
 can be found in Section~\ref{sec:ub-high-temp}.

\noindent\paragraph{Our techniques: lower bound.} Our lower bound uses a reduction from the communication problem of $\texttt{INDEX}$. In the \texttt{INDEX} problem Alice holds a binary string $\bx$ of length $n$ and Bob has an index $i\in [n]$. Alice sends Bob a message $m$, after receiving which Bob should be able to output $\bx_i$ with a constant advantage over random guessing, i.e. with $1/2+\Omega(1)$ success probability. The $\texttt{INDEX}$ problem requires $\Omega(n)$ communication from Alice to Bob, even if Alice and Bob share randomness and even if the input string $\bx$ contains exactly half zeros and half ones.
Although reductions from $\texttt{INDEX}$ are commonly used in streaming algorithms, see e.g.,~\cite{jayram2008one,haris2025compression,kochetkova2025streaming}, the challenge lies in mapping the instances into the problem of interest so that the bits can be decoded from the answers to the problem.

We start by presenting a lower bound for approximating the normalizing constant $\expker (K, q)$ in the streaming model, and then obtain a lower bound for approximating $\Attn(K, q, V)$ by reduction.

Let \texttt{ALG} be a streaming algorithm that computes an approximation $\widehat{\expker}(K, q)$ to the normalizing constant $\expker (K, q)$ as per Definition~\ref{def:streaming-kde}, namely
$$
\left|\widehat{\expker}(K, q)-\expker (K, q)\right|\leq \epsilon \expker (K, q).
$$
We show how to use \texttt{ALG} to obtain a communication efficient protocol for \texttt{INDEX} on binary vectors $\bx$ of length $2n$.
Let $r' := r/2$, where $r$ is the upper bound on the Euclidean norm of keys and queries in our setting. Using the shared randomness, Alice and Bob sample a point set $K = \{k_1, k_2, \cdots, k_{2n}\} \subset \R^d$ such that $k_i \sim \mathcal{N}\p{0, \frac{r'^2}{d} I_d}$ i.i.d. for each $k_i \in K$. Upon receiving her input $\bx \in \{0,1\}^{2n}$, Alice constructs the dataset
$$
K_\bx = \left\{ k_i \in K\mid \bx_i = 1\right\},
$$
feeds $K_\bx$ to her streaming algorithm and sends the state of the streaming algorithm \texttt{ALG} to Bob. In order to find out whether $\bx_i=1$ or $\bx_i=0$, Bob queries \texttt{ALG} at $q=k_i$. The intuition for this approach is simple. One notes that 
\begin{equation}\label{eq:thresholding}
\begin{split}
\expker(K_\bx, q)&=\bx_i\cdot e^{k_i^Tq}+\sum_{\substack{j=1\\j\neq i}}^{2n} \bx_j\cdot e^{k_j^T q}=\bx_i\cdot e^{r^2}+\sum_{\substack{j=1\\j\neq i}}^{2n} \bx_j\cdot e^{k_j^Tk_i},
\end{split}
\end{equation}
and therefore it is natural to hope that the sum over $j\neq i$ contributes a  `noise' term that is smaller than $e^{r^2}$. If that were the case, Bob would simply threshold $\expker(K_\bx, q)$, concluding that $\bx_i=1$ if $\expker(K_\bx, q)\geq \theta$, and that $\bx_i=0$ if $\expker(K_\bx, q)<\theta$, for a threshold $\theta$. This intuition is further supported by the fact that the random Gaussian vectors $k_j$ and $k_i$ are quite close to orthogonal with high probability. This is the approach used in~\cite{haris2025compression,kochetkova2025streaming}. However, one can verify that this intuition is only correct for extremely large values of $r$. Therefore this natural, but quite basic reduction is in fact not sufficiently powerful to give tight dependence on $r$ and $1/\epsilon$. 
    
\paragraph{Side information.} The problem with thresholding $\expker(K_\bx, q)$ in order to learn $\bx_i$ is that the noise term in \eqref{eq:thresholding} is too large. Therefore, we need to find a way to reduce it.  To this end, Alice sends some side information about $\bx$ to Bob, in the form of the low-degree part of the dataset $K_\bx$, essentially mimicking our upper bound approach. Just like in our upper bound, we write 

\[\expker(K_\bx, q) =\sum_{k \in K_\bx} \sum_{l=0}^\infty  \frac{\langle k,q \rangle^l}{l!} = \sum_{k \in K_\bx}  \expker_{\leq t} (k, q)+ \expker_{> t} (k, q)\]
and define feature maps $\psi_1(k)=(k^{\otimes l}/l!)_{l=0}^t$ and $\psi_2(q)=(q^{\otimes l})_{l=0}^t$ such that 
$\exp_{\leq t}(k, q)=\left\langle \psi_1(k), \psi_2(q)\right\rangle$.
Then Alice sends Bob the sketch  $\sum_{k\in K_\bx} \psi_1(k)$, from which Bob can compute 
$$\exp_{\leq t}(K_\bx, q)=\left\langle \sum_{k_i\in K} \psi_1(k), \psi_2(q)\right\rangle.$$

\paragraph{Signal to noise ratio and the Hermite expansion.} The hope for this approach is that the  truncated softmax kernel $\expker_{> t} (k, q)$ has smaller values and therefore introduces less noise. Specifically, the truncated version of ~\eqref{eq:thresholding} now becomes 

\begin{equation}\label{eq:thresholding-truncated}
\begin{split}
\expker_{> t}(K_\bx, q)&=\bx_i\cdot \expker_{> t}(k_i, k_i)+\sum_{\substack{j=1\\j\neq i}}^{2n} \bx_j\cdot \expker_{> t}(k_j, k_i).
\end{split}
\end{equation}
The noise term contributed by the sum on the rhs is indeed smaller now.
However, so is the signal term $\bx_i\cdot \expker_{> t}(k_i, k_i)$.
In principle, it is not clear which of the terms is reduced faster. This concern turns out to be well founded: if we use Taylor expansions as above, the signal drops too fast relative to the noise for us to achieve the desired bound. However, a more careful choice of the polynomial expansion, specifically, the use of the Hermite basis, allows us to carry out this approach. The detailed analysis is presented in Section~\ref{sec:lb-high-temp}. 
Here we just mention that the upper and lower bounds match quite tightly because the expression for the upper bound (\Cref{eq:ub}) is the same as for the lower bound, namely 
\[\tilde{\Omega}\qty(\max_{t} \min\left \{ \frac{(d+t)^{t}}{t!}, \frac{1}{\eps}\cdot \sum_{l=t+1}^\infty  \frac{r^{2l}}{l!} \right \}).\]

\subsection{Low temperature regime (large $r$)}

In this section we outline our results and techniques for the `low temperature'  regime of large $r$, namely when $r^2=\Omega(\log n)$.  In this regime the exponential dependence on $r$ becomes important, and we resolve the optimal dependence up to lower order terms in the exponent. 

\begin{theorem}
    \label{th:big_r_ub_lb}
    
   Assume that $r^2 = \Omega(\log n)$, i.e. $e^{r^2} = n^{\Omega(1)}$. Then the Softmax KDE problem \eqref{eq:streaming-kde} can be solved using
    \begin{align*}
        \tilde{O}\qty(\frac{e^{r^2(1 + o(1))}}{\eps})
    \end{align*}
    memory. Furthermore, this bound is almost tight for any $d = (\log n)^{1 + \Omega(1)}$, i.e. there is a
    \begin{align*}
        \tilde{\Omega}\qty(\min\qty{n, \frac{e^{r^2(1 - o(1))}}{\eps}})
    \end{align*}
    lower bound for the streaming space complexity.
\end{theorem}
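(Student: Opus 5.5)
The plan has two parts: an upper bound reusing the high-temperature machinery with a different choice of truncation degree, and a lower bound via the \texttt{INDEX} reduction with side information.

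\textbf{Upper bound.} I would maintain the same two-part data structure as in the high temperature regime: the moment sketch $\sum_{k\in K}\psi_1(k)$ handles $\expker_{\le t}$, and a discrepancy coreset handles $\expker_{>t}$. By \cref{lem:coreset} the coreset costs
\[
\tilde O\Bigl(\tfrac{e^{r^2}}{\eps}\sum_{l>t} \tfrac{r^{2l}}{l!}\Bigr)\le \tilde O\bigl(\tfrac{e^{2r^2}}{\eps}\bigr),
\]
and by \cref{lem:map} the sketch costs $\tilde O\bigl((d+t)^t/t!\bigr)$. This crude bound has exponent $2r^2$, so it is off by a factor $e^{r^2}$ from the target. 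Taylor truncation alone cannot remove this factor when $r^2=\Omega(\log n)$, since the mass of $\sum_l r^{2l}/l!$ sits near $l\approx r^2$.

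To fix the exponent I would add space partitioning in the spirit of \cite{CKNS20}. Write $\langle k,q\rangle=r^2\cos\theta$. Bucket the keys according to how close they are to each query direction, using a locality-sensitive partition of the sphere (spherical caps or random spherical LSH), with $\Delta$-granular levels of inner product. Within a level where $\langle k,q\rangle\approx r^2-\Delta$, the keys have nearly equal kernel value, so uniform sampling with rate proportional to $e^{-\Delta}$ relative to the contribution suffices.

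The accounting I would aim for is the following. For a key $k$, the contribution variance relative to $\expker(K,q)$ is controlled by $e^{\langle k,q\rangle}\cdot e^{r^2}/\eps$ only for keys very close to $q$. Those keys are caught by the partition, and then the $\gamma_2$-type bound on the truncated kernel restricted to a cap gives $e^{r^2(1+o(1))}/\eps$ rather than $e^{2r^2}/\eps$. I would then choose $\Delta=r^{2-\Omega(1)}$ (e.g.\ giving $e^{r^2(1-r^{-1/5})}$-type losses) so that the number of buckets and the LSH overhead are $e^{o(r^2)}$. This yields $\tilde O\bigl(e^{r^2(1+o(1))}/\eps\bigr)$.

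\textbf{Lower bound.} I would reuse the \texttt{INDEX} reduction with shared Gaussian keys $k_i\sim\mathcal N(0,\tfrac{r'^2}{d}I_d)$. Alice sends the stream state, and Bob queries $q$ aligned with $k_i$. To get the $1/\eps$ factor in addition to $e^{r^2}$, I would encode many bits per scale. Plant a baseline mass $M$, a large set of keys known to Bob that makes $\expker(K,q)\approx M$. Then a bit contributes $e^{r^2}$ on top of it, which is detectable iff $e^{r^2}\gtrsim \eps M$. Bob subtracts the known baseline, and the side-information trick (sending the low-degree Hermite part, as in \cref{sec:lb-high-temp}) removes the low-order noise.

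With $N$ random keys, the noise is $\sum_j e^{\langle k_j,k_i\rangle}$ with variance about $N e^{\Theta(r'^4/d)}$. Since $d=(\log n)^{1+\Omega(1)}$ and $r^2\le \tfrac12\log n$, we have $r^4/d=o(r^2)$, so this noise is $e^{o(r^2)}$ per key. Planting $N\approx e^{r^2}/\eps$ bits then keeps the noise below $\eps\cdot$ total while the signal survives, which gives $\Omega\bigl(\min\{n,e^{r^2(1-o(1))}/\eps\}\bigr)$.

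\textbf{Main obstacle.} I expect the hardest step to be the upper-bound partitioning: showing that the coreset plus bucketing gets exponent $r^2(1+o(1))$ instead of $2r^2$ simultaneously for all queries, and handling the streaming maintenance. My first attempt would be merge-and-reduce of Banaszczyk coresets within each LSH bucket, bounding per-bucket $\gamma_2$ norms by $e^{r^2\cos\theta}$-weighted terms.
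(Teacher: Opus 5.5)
\textbf{Upper bound.} Your diagnosis of the naive bound is correct: the truncated-kernel coreset costs about $e^{2r^2}/\eps$, and no admissible truncation degree helps. The proposed fix, however, is not an argument, and it targets the wrong bottleneck. The second factor $e^{r^2}$ appears when the additive error $O(e^{r^2}\log n)$ of \BaseCompress{} is converted into a relative error via the worst-case bound $\expker(K,q)\ge |K|e^{-r^2}$. That bound is tight when the keys concentrate opposite to $q$, so the hard queries are those with tiny KDE, not those with keys close to $q$.

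Two parts of your plan do not work as stated. Sampling inside buckets at rate $\propto e^{-\Delta}$ brings back a $1/\eps^2$ dependence. Also, the error of per-part compression adds up over parts, so data-independent caps or LSH cells (whose number of nonempty cells can be comparable to $|K|$) do not combine with \BaseCompress{} without a further idea.

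The missing idea is a data-dependent decomposition (\textsc{Certify}/\textsc{Partition}, adapted from \cite{CKNS20}) into few parts of two kinds:
\begin{itemize}
\item Pseudo-random parts $K_i\subset B(c_i,r_i)$. For every direction, at most a $\tau\le\frac12$ fraction of their keys are strongly correlated with it, so $\expker(K_i,q)\ge \frac{|K_i|}{2}e^{\langle c_i,q\rangle-\Delta r^2}$. The additive-to-relative ratio then becomes $e^{r^2(1+\Delta)}$.
\item Parts in balls of radius $r_i\le r/2$. Recentering at $c_i$ and rescaling keys by $\sqrt{r/r_i}$ (queries by $\sqrt{r_i/r}$) gives additive error $e^{\langle c_i,q\rangle+rr_i}$ against a denominator $|K_i|e^{\langle c_i,q\rangle-rr_i}$, i.e.\ ratio $e^{2rr_i}\le e^{r^2}$.
\end{itemize}
You then still need three things. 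First, the number of parts $I$ must be $n^{o(1)}$, which is why $\Delta$ is taken near $(\log\log n/\log n)^{1/5}$. Second, only parts of size at least $|K|/(2I)$ are compressed, so the per-part errors sum to $O\bigl(Ie^{r^2(1+\Delta)}\log n\cdot \expker(K,q)/|K|\bigr)$. Third, a weighted Merge-and-Reduce is required, because \Compress{} outputs weights in $\{1,2\}$ and shrinks only by a factor $3/4$.

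\textbf{Lower bound.} The framework (\texttt{INDEX} with low-degree side information) is right, but you invoke the side information without using it, and your accounting does not close. The variance $Ne^{\Theta(r^4/d)}$ is that of the untruncated kernel. The resulting fluctuation $\sqrt N e^{o(r^2)}$ is below the signal $e^{r^2}$ only if $N\le e^{2r^2(1-o(1))}$. In the case you single out, the target $N=\min\{n,e^{r^2}/\eps\}$ can be much larger: $r^2=0.3\log n$ and $\eps=1/n$ give $N=n$ against $e^{2r^2}=n^{0.6}$.

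Comparing the noise with $\eps\cdot$total is also not the right test. The two conditions are, first, $\eps\,\expker(K_\bx,q)\le\eps\bigl(e^{r^2}+Ne^{o(r^2)}\bigr)\lesssim e^{r^2}$, which is what caps $N$ at $e^{r^2(1-o(1))}/\eps$; and second, that the fluctuation of the \emph{truncated} sum is $o(e^{r^2})$. For the second, you must pick $t$ with $\frac{(d+t)^t}{t!}\log^2 n=o(N)$ and show the truncated variance $O\bigl(N\frac{(2r)^{4t}}{d^t t!}\bigr)$ is $n^{o(1)}=o(e^{2r^2})$. This is where $d=(\log n)^{1+\Omega(1)}$, rather than just $d=\omega(\log n)$, enters.

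The high-temperature Hermite lemmas also do not transfer verbatim. They used $\lambda=r'^2/\sqrt d<1$ and $t=o(\sqrt d)$, both of which fail here for $d=(\log n)^{1+\gamma}$ with $\gamma<1$. The paper instead takes keys uniform on $\mathcal{S}_r$, truncates the Taylor series at some $t<r^2$, and uses the Poisson-median bound $\expker_{>t}(k_j,k_j)\ge\frac12 e^{r^2}$ together with \cref{lem:expectation_product}.

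Finally, the high-temperature choice $r'=r/2$ yields only a signal $e^{r^2/4}$, so you need $r'=r(1-O(\sqrt{\log n/d}))$ or spherical keys. The planted baseline is superfluous, since the $N$ random keys already supply mass about $N$.
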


\begin{remark}
    \label{remark:big_r_attn}
    This result also holds for the Streaming Attention problem (\Cref{def:streaming-attn}), but the lower bound holds only under additional assumption $r^2 < (\log n)/2$. See \Cref{th:big_r_ub_lb_attn} for the formal statement and \Cref{sec:attn_ub,sec:attn_lb} for details of the reduction.
\end{remark}

Our starting point is the algorithm BalanceKV (\cite{kochetkova2025streaming}), an online discrepancy minimization method based on~\cite{ALS21} that maintains a coreset approximating the dataset in all directions simultaneously. While this mechanism uses basic geometric properties of the dataset such as the points being bounded by $r$ in Euclidean norm, it is largely oblivious to the distribution of keys inside the Euclidean ball of radius $r$. In particular, for inputs consisting of tight clusters, it is strictly more effective to run separate instances of BalanceKV on each cluster rather than on the dataset as a whole. This is the observation that leads to our improvement over~\cite{kochetkova2025streaming} in terms of the dependence on $r$, which we also show to be optimal. We now give more details of the approach. 

\paragraph{Pseudorandom datasets.} A key tool in  our algorithm is a  \textit{pseudo-randomization procedure}  that decomposes the dataset into two types of components: random-like datasets and dense datasets contained in balls whose radius is smaller by a nontrivial factor, e.g., by a factor of $2$. We show that both types admit very compact coresets. The precise definition we use follows~\cite{ALRW17} (see also~\cite{CKNS20}):

\begin{definition}[Pseudo-random dataset; slightly specialized version of Definition~\ref{def:pseudorandom}]\label{def:prandom}
    A dataset $K$ of $n$ keys of Euclidean norm bounded by $r$ is $(r, \Delta, \tau)$-\emph{pseudo-random}\footnote{Our definition follows \cite{ALRW17,CKNS20}, but with $K$ in a ball instead of a sphere. This makes our notion somewhat counterintuitive. For instance, a dataset that is highly concentrated around $c$ is pseudo-random. We nevertheless retain the term to emphasize the connection with the earlier literature.} if
    \begin{gather}\label{eq:dense-region}
        \forall q, \|q\|_2\leq r \quad |\{k \in K \mid \<k, q> > \Delta r^2\}| \leq \tau n.
    \end{gather}
\end{definition}

We find the dense regions as per~\eqref{eq:dense-region}, i.e. small spherical caps containing a nontrivial fraction of the points, in the input dataset and recurse on them. These dense regions are by definition contained in small spherical caps of the original Euclidean ball of radius $r$, and therefore each such region can be enclosed by a Euclidean ball of a constant factor lower radius. Once the radius is reduced by at least a factor of $2$ in the sequence of recursive calls, we can stop the recursion because a stronger coreset for the dataset in question can be constructed -- see discussion below. The other termination condition is that the dataset is pseudorandom as per Definition~\ref{def:prandom}. In this case, we have a strong lower bound on the value of our softmax estimate. Intuitively, the dataset is close to random with respect to the query, in the sense that softmax KDE value cannot be much smaller than for a random dataset; see \Cref{lem:ps_rand_coreset} for more details. This stronger lower bound on softmax KDE value directly implies stronger coreset size bounds due to the multiplicative nature of our approximation guarantee.

We first present the coreset construction that we will invoke later in the recursive analysis. Although our main object is the kernel $\expker(\cdot,\cdot)$, it is useful to formulate the lemma for a general kernel $\gamma$, since, during the recursive calls, we work with shifted and rescaled versions of $\expker$.

\noindent {\em {\bf Lemma} (Restated from Lemma~\ref{lem:simple_coreset})
    Let $\gamma: \R^d \times \R^d \to \R$  be a positive definite kernel, and let $U \times V\subset   \R^d \times \R^d $ be a predefined subset of the domain. There is a randomized algorithm $\BaseCompress$ which receives $\gamma$ and $K \subset U$ as inputs. $\BaseCompress(\gamma, K)$ outputs $\widetilde{K} \subset K$ such that $|\widetilde{K}| \leq |K|/2$ and for any $q \in V$, with probability $1 - \delta$,
           \[\left|\gamma(K, q) - 2\gamma(\widetilde{K}, q)\right| \leq O\left(\sqrt{\max_{u \in U}\gamma(u, u)\cdot \max_{v \in V}\gamma(v, v)}\cdot \log (|K|/\delta)\right).\]
    $\BaseCompress(\gamma, K)$ runs in $O(|K|^2 \cdot T)$ time, where $T$ is the maximum time to access $\gamma(k,q)$.
}

This lemma crucially relies on algorithms for discrepancy minimization that have been the subject of a beautiful line of work in theoretical computer science, e.g. \cite{Ban10, LM15, BDGL18, MNT20, ALS21}, and, for a particularly efficient implementation, on the result of~\cite{ALS21}.

\paragraph{Low radius dataset as a leaf.}
In this case our algorithm for constructing coresets crucially exploits the asymmetry in Lemma~\ref{lem:simple_coreset} above, i.e. the fact that the error term on the right hand side depends on the product of kernel norms $\sqrt{\gamma(u, u)}$ of $u\in U$ and $\sqrt{\gamma(v, v)}$ of $v\in V$.

To see why this asymmetry helps, compare a dataset $K \subset B(c,r/2)$ ($B(c, r/2)$ stands for the Euclidean ball centered at $c$ with radius $r/2$) with its recentered version $K-c \subset B(0,r/2)$. Finding coresets for $K$ and $K-c$ is essentially the same approximation problem, since
\[
\exp(K,q)=\exp(\langle c,q\rangle)\cdot \exp(K-c,q),
\]
so any weighted coreset for one yields the same multiplicative guarantee for the other. Yet, since our coreset construction primitive uses embeddings of keys to determine the coreset, the actual choice of embeddings matters a lot. When $K$ is far from the origin, the common offset $c$ creates a large global bias in the kernel values and embeddings, worsening approximation quality. After recentering $K$ to $K - c$ and rescaling $K-c$ and $q$ so that both fit inside a smaller radius ball, however, this shared bias disappears, and the algorithm only has to capture the local variation inside the cluster. The fact that the small cluster has low radius leads to a smaller coreset. 

More precisely, any dense part $K'$ of the dataset partition $\mathcal{K}$ given by \Cref{lem:ps_randomification} is contained in a ball $B(c',r')$ much smaller than the query domain $B(0,r)$. If we first shift and rescale it to
\[
\sqrt{\frac{r}{r'}}\,(K'-c'),
\]
and rescale the query domain accordingly, then both the key and query domains lie in
$B(0,\sqrt{rr'})$. Applying our basic coreset construction in these coordinates
yields a multiplicative approximation error proportional to
\[
\frac{
\exp(\langle c',q\rangle)\cdot \exp(rr')
}{
\exp(\langle c',q\rangle)\cdot
\expker\left(\sqrt{\frac{r}{r'}}(K'-c'), \sqrt{\frac{r'}{r}}q\right)
}
=
\frac{\exp(rr')}{
\expker\left(\sqrt{\frac{r}{r'}}(K'-c'), \sqrt{\frac{r'}{r}}q\right)
}.
\]
Since we do not assume any additional structure of the shifted and rescaled
dataset or query, the best universal lower bound on the denominator is
$\exp(-rr')$. Thus the multiplicative error is bounded by
\[
\frac{\exp(rr')}{\exp(-rr')}=\exp(2rr').
\]

By contrast, applying the coreset construction directly to the original dataset
only guarantees an error term of order
\[
\exp\bigl(r^2/2 + (r'+\|c'\|_2)^2/2\bigr) \geq \exp(r^2).
\]
Since the best universal lower bound on $\expker(K',q)$ is $\exp(-r^2)$, this
gives the worst-case multiplicative guarantee
\[
\frac{\exp(r^2)}{\exp(-r^2)}=\exp(2r^2),
\]
which is worse than $\exp(2rr')$ as $r' < r$.

% \bor{Equations above do not showcase any improvement as $r'+\|c'\|_2$ is as big as $r$. What we show is that we can give better multiplicative guarantee:
% \linebreak
% ... we obtain a coreset whose worst-case multiplicative approximation quality is proportional to
% \[
% \frac{\gamma(K, q) - 2\gamma(\widetilde{K}, q)}{\gamma(K, q)} \lesssim \frac{\exp(\langle c',q\rangle)\cdot \exp(rr')}{\exp(\langle c',q\rangle)\cdot \exp(-rr')}
% \leq
% \exp\bigl(2rr'\bigr),
% \]
% whereas applying the coreset construction directly to the original dataset is only guaranteed
% \[
% \frac{\gamma(K, q) - 2\gamma(\widetilde{K}, q)}{\gamma(K, q)} \lesssim \frac{\exp(r^2)}{\exp(-r^2)}
% \leq
% \exp\bigl(2r^2\bigr).
% \]
% }
% \eee{done; see if you like it}

 See \Cref{fig:recenter+rescale} for an illustration of the recentering and rescaling procedure, and Section~\ref{sec:ub_low_temp}, \Cref{lem:small_r_coreset} for more details. Finally, we would like to note that while the step getting us from a coreset construction, or, rather, a compression algorithm, to an actual streaming algorithm is essentially an application of the classical Merge-and-Reduce framework, this application is nontrivial, for example, in how it interacts with the pseudorandomification procedure above -- see \Cref{lem:merge-reduce-w} for more details.

\paragraph{Pseudorandom dataset as a leaf.}  For pseudo-random sets in \Cref{lem:ps_rand_coreset}, the same transformation does not by itself improve the guarantee, because by construction the pseudo-random parts need not lie in balls smaller than $B(0,r)$. Instead, the gain comes from a better lower bound on the KDE value: for an arbitrary dataset, the worst case is when the mass concentrates on
$\text{argmin}_{k\in B(0,r)} \exp(\langle k,q\rangle),$
whereas pseudo-randomness rules out such concentration.

\begin{figure}[ht]
  \centering
  \vspace{5mm}
  \resizebox{\linewidth}{!}{%
  \begin{tikzpicture}[
    scale=1,
    transform shape,
    >=Latex,
    line cap=round,
    line join=round
  ]
    % Radii enlarged by factor 1.3
    \def\R{3.38}
    \def\rp{1.69}

    % Left red-ball center
    \def\cx{-1.85}
    \def\cy{0.35}

    % Separation of the two panels
    \def\Sep{7.9}

    % Shift sending c to 0
    \pgfmathsetmacro{\sx}{-\cx}
    \pgfmathsetmacro{\sy}{-\cy}

    % Radius of the black dashed ball: sqrt(r * r/2)
    \pgfmathsetmacro{\rmid}{sqrt(\R*\rp)}

    % Ten points spread through the overlap region
    \def\PointList{
      -2.30/0.80,
      -2.10/0.10,
      -2.25/-0.65,
      -1.75/1.30,
      -1.65/0.65,
      -1.55/-0.10,
      -1.35/-0.75,
      -1.15/0.95,
      -0.95/0.25,
      -0.85/-0.20
    }

    % ---------------- Left picture ----------------
    \begin{scope}[shift={(-\Sep,0)}]
      % white fillings
      \fill[white] (0,0) circle (\R);
      \fill[white] (\cx,\cy) circle (\rp);

      % blue ball
      \draw[blue, thick] (0,0) circle (\R);
      \fill[blue] (0,0) circle (1.8pt);
      \node[blue] at (0.20,-0.20) {$0$};

      % q on the blue boundary
      \fill ({\R*cos(40)},{\R*sin(40)}) circle (1.4pt);
      \node[black] at ({\R*cos(40)+0.28},{\R*sin(40)+0.20}) {$q$};

      % red ball
      \draw[red, thick] (\cx,\cy) circle (\rp);
      \fill[red] (\cx,\cy) circle (1.6pt);
      \node[red] at ({\cx-0.24},{\cy+0.24}) {$c$};

      % overlap points and label
      \foreach \x/\y in \PointList {
        \fill (\x,\y) circle (1.15pt);
      }
      \node[black] at (-0.55,1.72) {$K$};

      % Left callouts
      \node[black, align=left, anchor=west, font=\small] (Lblue) at (4.0,-1.05)
        {ball with radius $r$\\ centered at $0$};
      \draw[->, black, thick] (Lblue.west) -- (2.70,-0.58);

      \node[black, align=left, anchor=west, font=\small] (Lred) at (-7.0,-2.75)
        {ball with radius $r/2$\\ centered at $c$};
      \draw[->, black, thick] (Lred.east) -- ({\cx-0.32},{\cy-1.08});
    \end{scope}

    % Middle arrow and caption
    \draw[->, thick] (-1.35,0) -- (1.35,0);
    \node[black, font=\small] at (0,0.62) {Recenter + rescale};

    % ---------------- Right picture ----------------
    \begin{scope}[shift={(\Sep,0)}]
      % white fillings
      \fill[white] (0,0) circle (\R);
      \fill[white] (0,0) circle (\rp);

      % blue ball
      \draw[blue, thick] (0,0) circle (\R);

      % q on the blue boundary
      \fill ({\R*cos(40)},{\R*sin(40)}) circle (1.4pt);
      \node[black] at ({\R*cos(40)+0.28},{\R*sin(40)+0.20}) {$q$};

      % dashed black ball
      \draw[black, dashed, thick] (0,0) circle (\rmid);

      % red ball
      \draw[red, thick] (0,0) circle (\rp);

      % shifted points and their label
      \foreach \x/\y in \PointList {
        \fill ({\x+\sx},{\y+\sy}) circle (1.15pt);
      }
      \node[black] at (0, 2) {$K-c$};

      % five uniformly spaced contraction arrows: blue -> black
      \foreach \ang in {20,92,164,236,308} {
        \draw[black, dashed, ->, thick]
          ({\R*cos(\ang)},{\R*sin(\ang)}) --
          ({\rmid*cos(\ang)},{\rmid*sin(\ang)});
      }

      % five uniformly spaced expansion arrows: red -> black
      \foreach \ang in {56,128,200,272,344} {
        \draw[black, dashed, ->, thick]
          ({\rp*cos(\ang)},{\rp*sin(\ang)}) --
          ({\rmid*cos(\ang)},{\rmid*sin(\ang)});
      }

      % Right callouts
      \node[black, align=left, anchor=west, font=\small] (Rblue) at (4.00,-1.05)
        {ball with radius $r$\\ centered at $0$};
      \draw[->, black, thick] (Rblue.west) -- (2.85,-0.60);

      \node[black, align=right, anchor=east, font=\small] (Rred) at (-5.00,-2.75)
        {ball with radius $r/2$\\ centered at $0$};
      \draw[->, black, thick] (Rred.east) -- (-1.18,-1.18);

      \node[black, align=center, font=\small] (Rblack) at (0.5,4.60)
        {ball with radius $\sqrt{r\cdot r/2}$\\ centered at $0$};
      \draw[->, black, thick] (Rblack.south) -- (0.1,\rmid);
    \end{scope}
  \end{tikzpicture}%
  }
  \vspace{4mm}
  \caption{Recentering and rescaling procedure for low radius datasets}
  \label{fig:recenter+rescale}
  \vspace{5mm}
\end{figure}

\section{Open Problems}
Our work naturally leads to several interesting open questions.
\begin{itemize}
\item Adaptivity: KV caching is often deployed in scenarios where the input tokens are generated sequentially via a feedback loop, where the attention mechanism is iteratively reapplied to the tokens produced earlier. In this setting, the input stream is generated adaptively, so it would be interesting to develop adversarially robust~\cite{ben2022framework} streaming algorithms for this task.

\item Gaussian kernel density estimation: the currently best results for Gaussian kernel density estimation~\cite{CKNS20} use independent random sampling, incurring a $1/\epsilon^2$ dependence on the precision parameter $\epsilon$. Recent work of~\cite{CKW24} was able to use discrepancy methods together with geometric space partitioning instead, achieving a $1/\epsilon$ dependence for {\em smooth} kernels (kernels decaying inverse polynomially with distance). There are challenges in extending the approach of~\cite{CKW24} beyond smooth kernels, but we hope that our present work for the softmax kernel, which is intimately connected to the Gaussian kernel, may offer insights.
\end{itemize}

\section*{Acknowledgements}
Ekaterina Kochetkova was supported by SNSF grant 10004935.

\section{Preliminaries}
\paragraph{Notations.}
Throughout this paper, we use $\mathcal{S}^{d-1}_r$ to denote the $d$-dimensional Euclidean sphere centered at $0$ with radius $r$, i.e. $\mathcal{S}^{d-1}_r := \{x \in \R^d: \|x\|_2 = r\}$, with the convention that $\mathcal{S}^{d-1}$ denotes the unit sphere. When the dimension is clear in the context, we drop the superscript and write $\mathcal{S}_r$. Similarly, we use $\mathcal{B}_r$  to denote the $d$-dimensional Euclidean ball centered at $0$ with radius $r$, i.e. $\mathcal{B}_r := \{x \in \R^d: \|x\|_2 \leq r\}$. For any integer $n \in \Z_{>0}$, we use $[n]$ to denote the set of integers $\{1, \cdots, n\}$.

\subsection{Multivariate Polynomials}

In this section, we formalize the polynomial sketching primitive used throughout the paper. \cref{{prop:poly_embed_dim}} bounds the number of monomials in a degree-$t$ polynomial, while \cref{lem:embed_memory} uses this bound to obtain a finite-dimensional embedding for degree-$t$ polynomials in the inner product. These together give a compact way to store the low-degree part of the kernel and will be used repeatedly later in the paper.

\begin{proposition}
    \label{prop:poly_embed_dim}
    Let $f: \R^d \to \R$ be a multivariate polynomial with degree at most $t$. The number of monomials of $f$ is at most
    \begin{align*}
        {d - 1 \choose d - 1} + {d \choose d - 1} + \dots + {d + t - 1 \choose d - 1} = {d + t \choose t} \leq \frac{(d+t)^t}{t!}.
    \end{align*}
    Moreover, when $d = d(t) = \omega(t^2)$, this bound is $(1+o_t(1))\frac{d^t}{t!}$.
\end{proposition}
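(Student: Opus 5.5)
We prove the three claims in turn: count monomials by exact degree, sum with the hockey-stick identity, and then bound the resulting binomial coefficient above and below.

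\emph{Counting.} A monomial of degree exactly $l$ in $d$ variables is $x_1^{a_1}\cdots x_d^{a_d}$ with $a_i\ge 0$ and $\sum_i a_i=l$. By stars and bars there are ${l+d-1 \choose d-1}$ of them. A polynomial of degree at most $t$ has at most as many monomials as the total number of monomials of degree $0,1,\dots,t$. This total is $\sum_{l=0}^t {l+d-1\choose d-1}$, which is exactly the displayed sum.

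\emph{Summing.} The hockey-stick identity $\sum_{m=k}^{N}{m\choose k}={N+1\choose k+1}$, applied with $k=d-1$ and $N=d+t-1$, gives ${d+t\choose d}={d+t\choose t}$. One can also see this directly: monomials of degree at most $t$ in $d$ variables correspond bijectively to monomials of degree exactly $t$ in $d+1$ variables, via a slack variable.

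\emph{Upper bound.} We write
\[
{d+t\choose t}=\frac{(d+t)(d+t-1)\cdots(d+1)}{t!}.
\]
The numerator has $t$ factors, each at most $d+t$, so it is at most $(d+t)^t$.

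\emph{Asymptotics.} The same expression also gives the lower bound ${d+t\choose t}\ge d^t/t!$, since each factor is at least $d$. Hence
\[
1\le \frac{{d+t\choose t}}{d^t/t!}\le \left(1+\frac{t}{d}\right)^t\le e^{t^2/d}.
\]
If $d=\omega(t^2)$, then $t^2/d\to 0$. Therefore $e^{t^2/d}=1+o_t(1)$, and the bound equals $(1+o_t(1))\,d^t/t!$.

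The only step needing any care is this last estimate, and it reduces to the elementary inequality $1+x\le e^x$.
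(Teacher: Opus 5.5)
Your proof is correct and complete. The paper states this proposition without proof, and the displayed chain in the statement is exactly the argument you supply: per-degree stars-and-bars counts, the hockey-stick identity, and the factor-by-factor bound on $\binom{d+t}{t}=\frac{(d+1)\cdots(d+t)}{t!}$. Your sandwich
\[
\frac{d^t}{t!}\;\le\;\binom{d+t}{t}\;\le\;\frac{(d+t)^t}{t!}\;\le\;e^{t^2/d}\,\frac{d^t}{t!}
\]
also settles the asymptotic claim under either reading of ``this bound'', whether it means the binomial or the estimate $(d+t)^t/t!$.
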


\begin{lemma}\label{lem:embed_memory}
    Let  $\gamma(k, q) := \sum_{l = 0}^{t} a_l \<k, q>^l$ be a degree-$t$ polynomial in the inner product $\langle k, q\rangle$ for some $t$ and $a_l$. Fix an arbitrary $r >0 $. There exist embeddings $\psi_1, \psi_2$ such that for any finite set $K \subset \mathcal{B}_r$ and any $q \in \mathcal{B}_r$, we guarantee
    \[\qty|\Bigl \langle \sum_{k \in K} \psi_1(k), \psi_2(q) \Bigr \rangle - \sum_{k \in K}  \gamma(k, q)| \leq \delta,\] 
    Moreover, $ \sum_{k \in K} \psi_1(k)$ can be stored in $O\Bigl(\frac{(d+t)^{t}}{t!}\log\frac{|K|\cdot (d+t)^{t} \cdot\max_l |a_l|\cdot \max\{r, 1\}^t}{\delta}\Bigr)$ bits of space, and $\psi_1(k)$ can be evaluated in $O\Bigl(\frac{(d+t)^{t}}{t!}\log\frac{|K|\cdot (d+t)^{t} \cdot\max_l |a_l|\cdot \max\{r, 1\}^t}{\delta}\Bigr)$ time.
\end{lemma}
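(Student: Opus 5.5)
We construct the embeddings explicitly from tensor powers, merge symmetric coordinates into monomials, and then round the result to finite precision. For each $l\le t$ we have $\langle k,q\rangle^l=\langle k^{\otimes l},q^{\otimes l}\rangle$. The tensor $k^{\otimes l}$ is symmetric, so its coordinates indexed by multi-indices $\alpha\in\Z_{\ge0}^d$ with $|\alpha|=l$ can be grouped. By the multinomial theorem,
\[
\langle k,q\rangle^l=\sum_{|\alpha|=l}\binom{l}{\alpha}k^\alpha q^\alpha .
\]
We therefore define, for each $\alpha$ with $|\alpha|\le t$,
\[
\psi_1(k)_\alpha := a_{|\alpha|}\binom{|\alpha|}{\alpha}k^\alpha, \qquad \psi_2(q)_\alpha:=q^\alpha .
\]
This gives $\langle\psi_1(k),\psi_2(q)\rangle=\gamma(k,q)$ exactly. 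The number of coordinates equals the number of monomials of degree at most $t$ in $d$ variables. By \cref{prop:poly_embed_dim} this is at most $\binom{d+t}{t}\le\frac{(d+t)^t}{t!}$, and that count supplies the leading factor in both the space and time bounds.

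Next we handle precision. Each coordinate of $\psi_1(k)$ satisfies $|\psi_1(k)_\alpha|\le |a_{|\alpha|}|\,l^l\, r^{l}\le \max_l|a_l|\,t^t\max\{r,1\}^t$, using $\binom{l}{\alpha}\le l^l$. Likewise $|\psi_2(q)_\alpha|\le\max\{r,1\}^t$. We round each coordinate of $\psi_1(k)$ to an integer multiple of a grid size $\eta$ and store the running sum $S=\sum_{k\in K}\tilde\psi_1(k)$, where $\tilde\psi_1$ denotes the rounded map. The rounding error in $\langle S,\psi_2(q)\rangle$ is then at most
\[
|K|\cdot N\cdot \eta\cdot\max\{r,1\}^t, \qquad N=\binom{d+t}{t}.
\]
Choosing $\eta=\delta/(|K|\,N\max\{r,1\}^t)$ makes the error at most $\delta$. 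Each stored coordinate of $S$ is an integer multiple of $\eta$ with magnitude at most $|K|\max_l|a_l|t^t\max\{r,1\}^t$. It therefore needs
\[
O\Bigl(\log\tfrac{|K|^2 N\, t^t\max_l|a_l|\max\{r,1\}^{2t}}{\delta}\Bigr)
\]
bits. Since $t^t\le (d+t)^t$ and $N\le(d+t)^t$, this is $O\bigl(\log\frac{|K|(d+t)^t\max_l|a_l|\max\{r,1\}^t}{\delta}\bigr)$, because squaring inside the logarithm only changes constants. Multiplying by $N$ coordinates gives the claimed space.

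For the running time, we compute all monomials $k^\alpha$ with $|\alpha|\le t$ incrementally: each monomial is a previously computed monomial of degree one less times a single coordinate $k_j$. This costs $O(1)$ arithmetic operations per coordinate. The multinomial coefficients can be precomputed, or updated along the same recursion. Each operation is done at the stated bit precision, which gives the claimed evaluation time if arithmetic on such numbers is counted per bit-length word.

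The main obstacle is the bookkeeping in the precision step. Inputs are real vectors, so we must assume they are given to sufficient precision, or round $k$ itself first and propagate the error through degree-$t$ monomials. That error is at most $t\max\{r,1\}^{t-1}$ times the input rounding, which is absorbed into the same logarithmic term. Beyond that, the argument is a direct computation.
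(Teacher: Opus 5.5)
Your proposal is correct and follows the paper's proof. Both use the tensor-power embedding $\psi_1(k)=(a_l k^{\otimes l})_{l\le t}$, $\psi_2(q)=(q^{\otimes l})_{l\le t}$, merge symmetric coordinates into the $\binom{d+t}{t}\le\frac{(d+t)^t}{t!}$ monomials, round to a grid scaled by $\delta/\max\{r,1\}^t$, and count bits per coordinate. Your version is slightly more careful than the paper's in two cosmetic ways: you track the multinomial coefficients $\binom{l}{\alpha}\le t^t$ that appear after merging, and you round each $\psi_1(k)$ before summing, which only adds an extra factor of $|K|$ inside the logarithm.
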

\begin{proof}
Fixing $K$, we can write $\sum_{k \in K} \gamma(k, q)$ as a function of $q$ is a multivariate polynomial from $\R^d$ to $\R$ with degree at most $t$.
We let 
\[
        \psi_1(k) := \qty(a_l \cdot k^{{\otimes l}})_{l = \overline{0..t}} \quad \text{ and } \quad \psi_2(q) := \qty(q^{{\otimes l}})_{l= \overline{0..t}}
\]
where $x^{\otimes k}$ denotes the $k$-th tensor power of the vector of $x$, then
\[
    \gamma(k, q) = \sum_{l = 0}^{t} a_l \<k, q>^l = \sum_{l = 0}^{t}a_l\<k^{\otimes l}, q^{\otimes l}> = \<\psi_1(k), \psi_2(q)>.
\]
Using \Cref{prop:poly_embed_dim}, the embedding dimension of $\psi_1$ after gluing together repeating monomials is $m=O\qty(\frac{(d+t)^{t}}{t!})$. Regarding the bit required per coordinate, we round the embedding vectors and argue that it gives the desired accuracy of approximation. Every coordinate of $k^{{\otimes l}}$, $q^{{\otimes l}}$ are bounded by $\max\{r, 1\}^t$.
Suppose we store a rounded vector $z$ such that $\| z - \sum_{k \in K}  \psi_1(k)\|_\infty \leq \eta$, then
\begin{align*}
    \Bigl |\Bigl \langle z,  \psi_2(q) \Bigr \rangle - \Bigl \langle \sum_{k \in K}  \psi_1(k),  \psi_2(q) \Bigr \rangle  \Bigr |  &\leq m \cdot \| z - \sum_{k \in K}  \psi_1(k)\|_\infty \cdot  \| \psi_2(q)\|_\infty \\
    &\leq m\cdot \eta \cdot  \max\{r, 1\}^t
\end{align*}
 Therefore, to guaranty the total error $\delta_0$, we can pick $\eta = \frac{\delta_0}{m\cdot  \max\{r, 1\}^t }$. Bounding over the sum of all keys, the bit complexity to store a single coordinate for total error $\delta_1$ is 
 \[O\qty(\log \frac{ |K|\max_l |a_l| \max\{r, 1\}^t \cdot (m\cdot \max\{r, 1\}^t)}{\delta_1}).\]
 Multiplying with $m=O\qty(\frac{(d+t)^{t}}{t!})$ gives the final space complexity. The evaluation time of $\psi_1$ is the time to write down these bits, thus is the same as the space complexity.
\end{proof}

\subsection{Coresets for kernel density estimation (KDE)}\label{sec:kde_coresets}

Our upper bounds in both the high- and low-temperature regimes are based on discrepancy minimization for KDE. If \(\gamma\) is positive definite (\Cref{def:pos_def}), then \(\gamma(k,q)\) can be written as the inner product of embeddings \(\psi(k)\) and \(\psi(q)\) in a Hilbert space \(H\). Viewed this way, constructing a coreset that approximates \(\gamma(K,q)\) for an unknown query \(q\) amounts to selecting a coreset of vectors \(\{\psi(k)\}_{k \in K}\) whose sum has approximately the same projection as \(\sum_{k \in K}\psi(k)\) onto an unknown direction. This is precisely the \textit{vector balancing problem}, and in \Cref{lem:simple_coreset} we formalize this connection and use it to build a compression algorithm.

\begin{comment}
The algorithm of \Cref{lem:simple_coreset}, \BaseCompress{}, however, is not directly applicable in the streaming setting, since it requires access to the full dataset. To address this, in \Cref{lem:merge_reduce} we show how to implement \BaseCompress{} via a standard Merge-and-Reduce scheme, so that it only needs to be applied locally.

This compression routine then serves as the main building block in both regimes. In the high-temperature regime, \BaseCompress{} is used to build coresets for the truncated softmax kernel \(\exp_{>t}(k,q)\); see \Cref{sec:ub-high-temp}, especially \Cref{lem:coreset}, for more details. In the low-temperature regime, it is used to build coresets for the softmax kernel \(\exp(k,q)\) on a carefully chosen partition of \(K\); see \Cref{sec:ub-high-temp}, especially \Cref{lem:ps_rand_coreset} and \Cref{lem:small_r_coreset}, for more details.
\end{comment}

\begin{definition}\label{def:pos_def}
    $\gamma: \R^d \times \R^d \to \R$ is a positive definite kernel if there exists a feature map $\psi : \R^d \to H$ to some Hilbert space $H$ such that $\gamma(x, y) = \<\psi (x), \psi(y)>_H$.
\end{definition}

\begin{lemma}
    \label{lem:simple_coreset}
    Let $\gamma: \R^d \times \R^d \to \R$  be a positive definite kernel, and let $U \times V\subset   \R^d \times \R^d $ be a predefined subset of the domain. There is a randomized algorithm $\BaseCompress$ which receives $\gamma$ and $K \subset U$ as inputs. $\BaseCompress(\gamma, K)$ outputs $\widetilde{K} \subset K$ such that $|\widetilde{K}| \leq |K|/2$ and for any $q \in V$, with probability $1 - \delta$,
           \[\left|\gamma(K, q) - 2\gamma(\widetilde{K}, q)\right| \leq O\left(\sqrt{\max_{u \in U}\gamma(u, u)\cdot \max_{v \in V}\gamma(v, v)}\cdot \log (|K|/\delta)\right).\]
    $\BaseCompress(\gamma, K)$ runs in $O(|K|^2 \cdot T)$ time, where $T$ is the maximum time to access $\gamma(k,q)$.

\end{lemma}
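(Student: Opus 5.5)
We reduce the statement to online/offline vector balancing in the Hilbert space given by the feature map $\psi$ of \Cref{def:pos_def}. Write $K=\{k_1,\dots,k_m\}$ and set $u_i=\psi(k_i)\in H$, so that $\|u_i\|_H^2=\gamma(k_i,k_i)\le R_U^2:=\max_{u\in U}\gamma(u,u)$. For each query $q\in V$, let $w_q=\psi(q)$, with $\|w_q\|_H^2\le R_V^2:=\max_{v\in V}\gamma(v,v)$. If we find signs $\sigma\in\{\pm1\}^m$ and let $\widetilde K=\{k_i:\sigma_i=-1\}$, then
\[
\gamma(K,q)-2\gamma(\widetilde K,q)=\sum_i \sigma_i\langle u_i,w_q\rangle=\Bigl\langle \sum_i\sigma_i u_i,\,w_q\Bigr\rangle .
\]
So it suffices to produce a random signing whose signed sum has subgaussian projections. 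We need that for every fixed direction $w$, the variable $\langle\sum_i\sigma_iu_i,w\rangle$ is $O(R_U\|w\|)$-subgaussian. A tail bound then gives error $O(R_UR_V\sqrt{\log(1/\delta)})$ with probability $1-\delta$, which is within the claimed $O(R_UR_V\log(|K|/\delta))$.

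For the signing we use the self-balancing walk of Alweiss--Liu--Sawhney~\cite{ALS21}, run with threshold $c=\Theta(\log(m/\delta))$. We process $u_1,\dots,u_m$ in order, maintaining the current signed sum $s_{i-1}$. We set $\sigma_i=+1$ with probability $\frac12-\frac{\langle s_{i-1},u_i\rangle}{2cR_U^2}$. The algorithm declares failure if $|\langle s_{i-1},u_i\rangle|>cR_U^2$, which happens with probability at most $\delta$. The ALS analysis is dimension-free: it only uses inner products and the norm bound $R_U$, so it applies verbatim in $H$. It shows that for every fixed $w$, $\langle s_m,w\rangle$ is $O(\sqrt{c}\,R_U\|w\|)$-subgaussian. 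This gives error $O(R_UR_V\sqrt{\log(m/\delta)}\sqrt{\log(1/\delta)})=O(R_UR_V\log(m/\delta))$, matching the claim. Only inner products $\langle s_{i-1},u_i\rangle=\sum_{j<i}\sigma_j\gamma(k_j,k_i)$ are needed, so each step costs $O(m)$ kernel evaluations and the total time is $O(m^2T)$.

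Finally, we enforce $|\widetilde K|\le |K|/2$. If the $-1$ side is larger, we swap the roles of the two sides, i.e.\ negate $\sigma$. Negation preserves the absolute error, so we may take $\widetilde K$ to be the smaller side.

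The main obstacle is verifying that the ALS guarantee transfers to an arbitrary (possibly infinite-dimensional) Hilbert space, and that the failure event combines correctly with the per-query tail bound. We plan to handle this by restricting to the finite-dimensional span of $u_1,\dots,u_m$ and $w_q$, where ALS applies directly. A union bound then combines the failure probability with the tail bound for the fixed $q$, which is legitimate because $q$ is independent of the algorithm's randomness.
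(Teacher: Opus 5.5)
Your proposal is correct and follows essentially the same route as the paper: restrict the feature map to the finite-dimensional span of $\psi(k_1),\dots,\psi(k_m),\psi(q)$, note that the signing only needs kernel evaluations, and apply the Alweiss--Liu--Sawhney self-balancing walk with a per-query tail bound. The only difference is that the paper cites the packaged inner-product compression result (Theorem 9 of \cite{CKW24}) as a black box. You unpack it yourself: the signed-sum identity, the $O(\sqrt{c}\,R_U\|w\|)$-subgaussian projection with $c=\Theta(\log(m/\delta))$, and taking the smaller side via negation.
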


\begin{remark}
    In this paper, we by default apply \BaseCompress{} to the softmax kernel. Therefore, we omit the first parameter $\gamma$ and write $\BaseCompress(K)$.
\end{remark}
\begin{proof}
We reduce the problem to the inner-product setting, where the following result is known:
    \begin{theorem}[Theorem 9 of \cite{CKW24}. Adapted from \cite{ALS21}]\label{thm:compress_l2}
         Let $\gamma_0: \R^m \times \R^m \to \R$ be defined as $\gamma_0(k,q):=\langle k,q\rangle$ and let $U, V \subset \R^m$.  There is a randomized algorithm $\mathcal{A}$ which receives $K \subset U$ as input, and outputs $\widetilde{K} \subset K$ such that $|\widetilde{K}| \leq |K|/2$ and for any fixed $q \in V$, with probability $1 - \delta$,
           \[\left|\gamma_0(K, q) - 2\gamma_0(\widetilde{K}, q)\right| \leq O\left(\sqrt{\max_{u \in U}\gamma_0(u, u)\cdot \max_{v \in V}\gamma_0(v, v)}\cdot \log (|K|/\delta)\right).\]
           Furthermore, $\mathcal{A}$ does not require explicit access to $K, q$; it only needs oracle access to $\gamma_0(k,q)$. $\mathcal{A}$ runs in $O(|K|^2 \cdot T)$ time, where $T$ is the maximum time to access $\gamma_0(k,q)$.
    \end{theorem}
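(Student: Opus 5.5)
The statement is the kernel-free halving guarantee, so I would prove it directly with the self-balancing walk of \cite{ALS21}. I would implement the walk using only Gram-matrix entries $\gamma_0(k_i,k_j)$ and then read off the halving error from the signed sum. Throughout write $R_U:=\max_{u\in U}\sqrt{\gamma_0(u,u)}$, $R_V:=\max_{v\in V}\sqrt{\gamma_0(v,v)}$ and $N:=|K|$. After rescaling all keys by $1/R_U$ we may assume $\|k\|_2\le 1$ for $k\in K$, and we multiply back at the end.

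\textbf{The walk.} Order $K=(k_1,\dots,k_N)$ arbitrarily and maintain the implicit vector $w_{i}=\sum_{j\le i}\sigma_j k_j$, with signs $\sigma_j\in\{\pm1\}$ and threshold $c=C\log(N/\delta)$.
\begin{itemize}
\item At step $i$, compute $\langle w_{i-1},k_i\rangle=\sum_{j<i}\sigma_j\gamma_0(k_j,k_i)$ with at most $N$ oracle calls.
\item If $|\langle w_{i-1},k_i\rangle|>c$, declare failure.
\item Otherwise set $\sigma_i=+1$ with probability $\tfrac12-\tfrac{\langle w_{i-1},k_i\rangle}{2c}$ and $\sigma_i=-1$ otherwise.
\end{itemize}
Only inner products among keys are ever used, so no explicit access to $K$ or $q$ is needed. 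The total time is $O(N^2T)$. The output is $\widetilde K:=$ the smaller of $K_+=\{k_i:\sigma_i=+1\}$ and $K_-=\{k_i:\sigma_i=-1\}$, so $|\widetilde K|\le N/2$. Since $\gamma_0(K,q)=\gamma_0(K_+,q)+\gamma_0(K_-,q)$, either choice gives
\[
\gamma_0(K,q)-2\gamma_0(\widetilde K,q)=\pm\langle w_N,q\rangle .
\]
The bound therefore only has to control $|\langle w_N,q\rangle|$ for a fixed $q$, uniformly over which side is selected.

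\textbf{Key probabilistic step (main obstacle).} This is the analysis from \cite{ALS21}: for $c=C\log(N/\delta)$, the vector $w_i$ stays $O(\sqrt c)$-subgaussian in every fixed direction at every step. I would establish it by the same potential argument. One shows inductively that $\E\, e^{\langle w_i,\theta\rangle}\le e^{O(c)\|\theta\|^2}$ for every fixed $\theta$. The inductive step uses that the drift term $-\langle w_{i-1},k_i\rangle k_i/c$ exactly compensates the variance increase $k_ik_i^\top$ in the exponential moment. Two consequences follow:
\begin{itemize}
\item Applying the bound with $\theta$ along $k_i$ and taking a union bound over the $N$ steps shows $|\langle w_{i-1},k_i\rangle|\le c$ for all $i$ with probability $1-\delta/2$. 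So failure is rare, and on failure we may simply output an arbitrary half.
\item Applying it with $\theta$ along the fixed query $q$ gives $|\langle w_N,q\rangle|\le O(\sqrt{c\log(1/\delta)})\,\|q\|_2$ with probability $1-\delta/2$.
\end{itemize}
This is the step I expect to need the most care, namely verifying the moment-generating-function recursion with the chosen drift. I would take it essentially verbatim from \cite{ALS21}, or from its kernelized restatement in \cite{CKW24}.

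\textbf{Conclusion.} Undoing the rescaling and using $\|q\|_2\le R_V$, we get
\[
\bigl|\gamma_0(K,q)-2\gamma_0(\widetilde K,q)\bigr|\le O\bigl(R_UR_V\sqrt{\log(N/\delta)\log(1/\delta)}\bigr)\le O\bigl(R_UR_V\log(N/\delta)\bigr),
\]
with probability $1-\delta$. This is exactly the claimed guarantee $O\bigl(\sqrt{\max_U\gamma_0(u,u)\max_V\gamma_0(v,v)}\log(|K|/\delta)\bigr)$, together with the $O(|K|^2T)$ running time.
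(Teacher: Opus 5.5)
Your proposal is correct and takes essentially the route behind the cited result: the paper does not prove this statement but imports it from \cite{CKW24}, where it is the \cite{ALS21} self-balancing walk run purely through inner-product evaluations, keeping one sign class so that the halving error is $\pm\langle w_N,q\rangle$, as you set it up. When you write out the key step, run the exponential-moment induction on the no-failure event, i.e.\ prove $\E\bigl[e^{\langle w_i,\theta\rangle}\mathbf{1}\{\text{no failure through step } i\}\bigr]\le e^{c\|\theta\|^2/2}$ for unit-norm keys (the sign probabilities are undefined once $|\langle w_{i-1},k_i\rangle|>c$); this also makes explicit that the subgaussian constant does not depend on $C$, so a large enough $C$ makes your union bound over the $N$ steps go through.
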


It  remains to reduce our setting to this inner-product case. Since $\gamma$ is positive definite, there exists a feature map $\psi:\R^d\to H$ into a Hilbert space $H$ such that
\[
\gamma(x,y)=\langle \psi(x),\psi(y)\rangle_H
\qquad\text{for all }x,y\in\R^d.
\]

We consider the subspace $\mathcal{P}$ spanned  by $\{\psi(k): k \in K\}$ and $\psi(q)$. Choosing an orthonormal
basis of  $\mathcal{P}$, we may identify  $\mathcal{P}$ isometrically with $\R^m$ for some
dimension $m \leq |K|+1$. Under this identification, each $\psi(k)$ and $\psi(q)$ becomes a vector in $\R^m$ , and all inner products among these vectors are preserved, i.e.,
\[
\gamma(k,q)=\langle \psi(k),\psi(q)\rangle_H
=\langle \pi\psi(k),\pi\psi(q)\rangle_{\mathbb R^m} = \gamma_0(\pi\psi(k),\pi\psi(q))
\]
where $\pi: \mathcal{P}\rightarrow\R^m$ is an isometric map defined by the chosen orthonormal basis.
Therefore, \cref{thm:compress_l2} applies to $\pi\psi(k),\pi\psi(q)$ and gives exactly the same error bound as claimed. Moreover, $\mathcal{A}$ only requires oracle access to inner products, which is exactly the kernel evaluations $\gamma(k,q)$. So we don't need to explicitly compute any embedding, and the total runtime is $O(|K|^2\cdot T)$ for $T$ being the evaluation time of $\gamma(k,q)$.

\end{proof}
The offline algorithm of \Cref{lem:simple_coreset} can be adapted to the streaming setting using the standard Merge-and-Reduce framework~-- see \Cref{fig:merge-reduce-standard_0} for an illustration. In this framework, the stream is partitioned into contiguous \emph{blocks}, and the offline compression algorithm is applied to each block. Whenever two contiguous blocks have been compressed, we merge their compressions and recursively compress the result. Equivalently, this process can be viewed as a binary tree: each leaf applies \BaseCompress{} to an initial stream block, and each internal node applies \BaseCompress{} to the union of the compressions produced by its descendant blocks. At any time, at most one node at each level of the tree stores data, and hence at most one node per level invokes  \BaseCompress{}. This yields an algorithm that is efficient in both space and time.  The same Merge-and-Reduce framework is used in \cite{kochetkova2025streaming} to obtain a streaming implementation of their algorithm. For completeness, we present this construction in \Cref{lem:merge_reduce}, specialized to \Cref{lem:simple_coreset} and the other offline routines developed in this paper, and include a proof sketch. Further details can be found in \cite{kochetkova2025streaming}.
\begin{figure}[H]
\centering
\vspace{5mm}
\resizebox{0.98\linewidth}{!}{%
\begin{tikzpicture}[
  >=Latex,
  font=\small,
  level/.style={font=\bfseries\small, anchor=east},
  stream/.style={-{Latex[length=2.2mm]}, line width=0.9pt},
  fan/.style={-{Latex[length=2mm]}, semithick},
  tocomp/.style={-{Latex[length=1.8mm]}, line width=0.85pt},
  block/.style={draw, rounded corners=2pt, fill=white, minimum width=15mm, minimum height=5.5mm, inner sep=0pt},
  comp/.style={draw, rounded corners=2pt, fill=white, minimum width=15.5mm, minimum height=6.3mm, inner sep=1pt},
  dots/.style={font=\large}
]

% ------------------------------------------------------------------
% Legend on the left
% ------------------------------------------------------------------
\node[anchor=east, font=\small] at (-0.15,4.75) {BC = \BaseCompress{}};

% ------------------------------------------------------------------
% Level 0
% ------------------------------------------------------------------
\node[level] at (0.75,0.0) {level 0};
\draw[stream] (1.35,0) -- (14.90,0) node[right] {input stream};

\node[block] (b01) at (2.30,0) {$1$};
\node[block] (b02) at (3.95,0) {$1$};
\node[block] (b03) at (5.60,0) {$1$};
\node[block] (b04) at (7.25,0) {$1$};
\node[block] (b05) at (8.90,0) {$1$};
\node[block] (b06) at (10.55,0) {$1$};
\node[block] (b07) at (12.20,0) {$1$};
\node[block] (b08) at (13.85,0) {$1$};

\node[comp] (c01) at (2.30,1.10) {BC};
\node[comp] (c02) at (3.95,1.10) {BC};
\node[comp] (c03) at (5.60,1.10) {BC};
\node[comp] (c04) at (7.25,1.10) {BC};
\node[comp] (c05) at (8.90,1.10) {BC};
\node[comp] (c06) at (10.55,1.10) {BC};
\node[comp] (c07) at (12.20,1.10) {BC};
\node[comp] (c08) at (13.85,1.10) {BC};

\foreach \b/\c in {b01/c01,b02/c02,b03/c03,b04/c04,b05/c05,b06/c06,b07/c07,b08/c08}{
  \draw[tocomp] (\b.north) -- (\c.south);
}

% ------------------------------------------------------------------
% Level 1
% ------------------------------------------------------------------
\node[level] at (0.75,3.65) {level 1};
\draw[stream] (1.40,3.65) -- (14.90,3.65);

\node[block] (b11) at (3.13,3.65) {$2$};
\node[block] (b12) at (6.43,3.65) {$2$};
\node[block] (b13) at (9.73,3.65) {$2$};
\node[block] (b14) at (13.03,3.65) {$2$};

\draw[fan] (c01.north) to[out=105,in=-90] ([xshift=-2pt]b11.south);
\draw[fan] (c02.north) to[out=75,in=-90]  ([xshift= 2pt]b11.south);

\draw[fan] (c03.north) to[out=105,in=-90] ([xshift=-2pt]b12.south);
\draw[fan] (c04.north) to[out=75,in=-90]  ([xshift= 2pt]b12.south);

\draw[fan] (c05.north) to[out=105,in=-90] ([xshift=-2pt]b13.south);
\draw[fan] (c06.north) to[out=75,in=-90]  ([xshift= 2pt]b13.south);

\draw[fan] (c07.north) to[out=105,in=-90] ([xshift=-2pt]b14.south);
\draw[fan] (c08.north) to[out=75,in=-90]  ([xshift= 2pt]b14.south);

\node[comp] (c11) at (3.13,4.75) {BC};
\node[comp] (c12) at (6.43,4.75) {BC};
\node[comp] (c13) at (9.73,4.75) {BC};
\node[comp] (c14) at (13.03,4.75) {BC};

\foreach \b/\c in {b11/c11,b12/c12,b13/c13,b14/c14}{
  \draw[tocomp] (\b.north) -- (\c.south);
}

% ------------------------------------------------------------------
% Level 2
% ------------------------------------------------------------------
\node[level] at (0.75,7.30) {level 2};
\draw[stream] (1.40,7.30) -- (14.90,7.30);

\node[block] (b21) at (4.78,7.30) {$4$};
\node[block] (b22) at (11.38,7.30) {$4$};

\draw[fan] (c11.north) to[out=105,in=-90] ([xshift=-2pt]b21.south);
\draw[fan] (c12.north) to[out=75,in=-90]  ([xshift= 2pt]b21.south);

\draw[fan] (c13.north) to[out=105,in=-90] ([xshift=-2pt]b22.south);
\draw[fan] (c14.north) to[out=75,in=-90]  ([xshift= 2pt]b22.south);

\node[comp] (c21) at (4.78,8.40) {BC};
\node[comp] (c22) at (11.38,8.40) {BC};

\foreach \b/\c in {b21/c21,b22/c22}{
  \draw[tocomp] (\b.north) -- (\c.south);
}

% continuation
\draw[fan] (c21.north) to[out=105,in=-120] (7.55,9.55);
\draw[fan] (c22.north) to[out=75,in=-60]  (8.65,9.55);
\node[dots] at (8.10,10.10) {$\vdots$};

\end{tikzpicture}%
}
\vspace{4mm}
\caption{Illustration of the standard Merge-and-Reduce algorithm (\Cref{lem:merge_reduce}).}
\label{fig:merge-reduce-standard_0}
\vspace{5mm}
\end{figure}

\begin{lemma}[Merge-and-Reduce]\label{lem:merge_reduce}
   Fix $n, \gamma, U, V$. Let $\mathcal{A}$ be an offline algorithm which receives $K \subset U$  as input, and outputs $\widetilde{K} \subset K$ such that $|\widetilde{K}| \leq |K|/2$ and for any $q \in V$, with probability $1 - \frac{1}{\poly n}$, 
   \begin{equation*}
       \left|\gamma(K, q) - 2\gamma(\widetilde{K}, q)\right| \leq \eta
   \end{equation*}
   for some error bound $\eta =  \eta(|K|)$.
   Let $T = T(|K|)$ be a deterministic upper bound on the runtime of $\mathcal{A}$. 
   Then for every block size $b \geq 1$, there is a streaming algorithm that processes a stream $k_1,  \cdots, k_n$ and, at every time step $j$, maintains a weighted coreset $C_j$ for the prefix $K_j = (k_1, \cdots, k_j)$ such that the following holds: 
    
    For any $q \in U$, with probability $1-\frac{1}{\poly n}$,
   \[\left|
  \gamma(K_j,q)-\gamma(C_j,q)\right|
   \leq O\qty(\eta(b) \cdot \frac{j}{b} \log\frac{j}{b}), \]
   where $\gamma(C_j,q) := \sum_{x \in C_j} w_x \gamma(x,q)$
   for $w_x$ denoting the weight of $x$ in the coreset $C_j$.
   
Moreover, $|C_j|  = O\left(b\log\frac{j}{b}\right)$. The streaming algorithm uses $\tilde{O}\left(b\log\frac{j}{b}\right )$ bits of storage space and $O\left(T(b)\cdot \log\frac{j}{b}\right)$ update time.

\end{lemma}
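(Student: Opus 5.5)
We will use the standard Merge-and-Reduce construction and track weights, error, and space level by level. Split the stream into consecutive blocks of $b$ keys. A level-$0$ node holds a completed block $B$. A node at level $\ell$ holds a set $S$ of at most $b$ points, each with weight $2^\ell$, representing $2^\ell b$ original keys. When a block completes, we apply $\mathcal{A}$ to it. The output has size at most $b/2$, and its points get weight $2$.

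The merging rule is binary-counter style. Whenever two nodes at level $\ell$ exist, we take the union of their compressed sets. This union has at most $b$ points, all of weight $2^{\ell+1}$. We run $\mathcal{A}$ on it and obtain a set of size at most $b/2$ with weights $2^{\ell+2}$, which becomes the node at level $\ell+1$. Formally, the error recursion works with the unweighted sets and multiplies by the common weight. At time $j$ the coreset $C_j$ consists of three pieces:
\begin{itemize}
\item the stored node at each level $\ell \le \log(j/b)$, of which there is at most one;
\item the current partial block, stored uncompressed with weight $1$;
\item possibly a pending uncompressed union.
\end{itemize}
This gives $|C_j| = O(b\log(j/b))$. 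Weights are powers of $2$ bounded by $j$, and points are stored to $\poly n$ precision, so the space is $\tilde O(b\log(j/b))$. Each arriving key triggers at most one cascade of merges per level. With deamortization, or simply by charging $T(b)$ per level, the update time is $O(T(b)\log(j/b))$.

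For the error, fix $q$. Each application of $\mathcal{A}$ to an unweighted set $S$ of size at most $b$ replaces $\gamma(S,q)$ by $2\gamma(\widetilde S,q)$. By hypothesis this incurs error at most $\eta(|S|)\le\eta(b)$; here we assume $\eta$ is monotone, or else define $\eta(b)$ as the maximum over sizes up to $b$. If this call happens inside a node whose points carry weight $w=2^\ell$, the error is scaled by $w$. By the triangle inequality the total error is at most $\sum_{\text{calls}} w_{\text{call}}\,\eta(b)$.

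It remains to count weighted calls. At level $\ell$ there are at most $j/(2^\ell b)$ calls, each with weight $2^\ell$. So every level contributes at most $(j/b)\,\eta(b)$, and summing over the $O(\log(j/b))$ levels gives $O(\eta(b)\,(j/b)\log(j/b))$, as claimed.

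The only probabilistic step is a union bound: each call succeeds for the fixed $q$ with probability $1-1/\poly n$, and there are at most $O(j/b)\le n$ calls, so all succeed with probability $1-1/\poly n$. The hypothesis on $\mathcal{A}$ holds for any fixed $q$, and each call's internal randomness is independent of $q$, so the union bound over calls is legitimate. Note that $\mathcal{A}$'s input depends on earlier outputs, but $q$ is fixed in advance, so the per-call guarantee still applies conditioned on the input.

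The main subtlety we expect is the weight bookkeeping. We must check that within one node all points share the same weight, so that $\mathcal{A}$ can be applied to the unweighted set and its error scaled by that weight. Merging two same-level nodes preserves this, since their weights coincide. Getting the per-level weighted error to telescope to exactly $j/b$ is where care is needed. Everything else is routine.
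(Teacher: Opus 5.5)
Your proposal is correct and follows essentially the same binary-counter Merge-and-Reduce argument as the paper's proof sketch. Both keep at most one stored set per level and charge weight $2^\ell$ times $j/(2^\ell b)$ calls, giving $O(\eta(b)\, j/b)$ error per level, summed over $O(\log(j/b))$ levels, with a union bound over the polynomially many calls. Your additional remarks fill in details the paper leaves implicit: keeping the partial block uncompressed, taking $\eta(b)$ monotone or as a maximum over sizes up to $b$, and applying the per-call guarantee conditionally on adaptively formed inputs for a fixed $q$.
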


\begin{proof}[Proof sketch]
The algorithm partitions the stream into consecutive blocks of size $b$. Whenever a block is completed, the algorithm applies $\mathcal{A}$ to that block. And whenever two compressed sets appear at the same level of the of the Merge-and-Reduce binary tree, the algorithm merges them and applies $\mathcal{A}$  again to their union.
Therefore, for each $\ell \geq 0$, a node at level $\ell$ represents the union of $2^\ell$ initial blocks. Hence each surviving point at level $\ell$ carries weight $ 2^{\ell}$. Let $C_j$ be the union of all weighted compressed sets stored by the Merge-and-Reduce tree after processing $K_j$.

We first bound the size of $C_j$ and the space usage. At any time, there is at most one nonempty compressed set at each level, and the number of levels is $O(\log\frac{j}{b})$. Since each such set contains at most $b$ points, it follows that $|C_j|  = O\left(b\log\frac{j}{b}\right)$ and the streaming algorithm uses $\tilde{O}\left(bd\log\frac{j}{b}\right )$ bits space\footnote{Here and throughout, we assume that every coordinate of the stream elements can be stored in polylog$(n)$ bits.}, which is $\tilde{O}\left(b\log\frac{j}{b}\right)$  in our regime $d \leq \poly\log(n)$.
For the update time, a newly arriving stream element can trigger at most one invocation of 
$\mathcal{A}$ per level, which takes $O\left(T(b)\cdot \log\frac{j}{b}\right)$ time.

It remains to bound the error at time step $j$. At the bottom level, each invocation of $\mathcal{A}$ is applied to a block of size $b$, and therefore contributes additive error at most $\eta(b)$. Since there are at most $O(\frac{j}{b})$ such invocations, the bottom level gives  $O(\eta(b)\frac{j}{b})$ error.

Now consider a level $\ell\geq 1$. Each invocation at level $\ell$ corresponds to a compressed set representing $2^\ell$ original blocks, so its weighted error is larger by a factor of $2^\ell$. On the other hand, the number of invocations at level  $\ell$ is smaller by the same factor, therefore, the total contribution of level \(\ell\) is again $O(\eta(b)\frac{j}{b})$.
Since the number of levels is $O(\log \frac{j}{b})$, summing over all levels yields the claimed error bound.

Finally, the total number of invocations of $\mathcal{A}$ up to time $j$ is indeed polynomially bounded in $n$. Since each invocation succeeds with probability at least $1-\frac{1}{\poly n}$, a union bound implies that all invocations succeed simultaneously with probability at least $1-\frac{1}{\poly n}$.

\end{proof}

\subsection{Hermite Polynomials}

Let $\{\text{He}_l\}_{l=0}^{\infty}$ denote the sequence probabilists' Hermite polynomials; i.e. 
\[\text{He}_l(x) = (-1)^l e^{x^2/2}\cdot \frac{d^l}{dx^l}(e^{-x^2/2}).\] And let $\ph_{l, \lambda}(x)$ denote the rescaled version $\ph_{l, \lambda}(x) := \frac{1}{\sqrt{l!}}\text{He}_l\qty(x / \lambda)$ with rescaling factor $\lambda > 0$. Below are all the facts we use about Hermite polynomials. Proposition \ref{prop:hermite:basis} is the fundamental property of Hermite polynomials (see, e.g., Corollary 3.1 of ~\cite{A18}). Proposition \ref{prop:hermite:coefficients} and \ref{prop:hermite:exponent_expansion} can be found in Chapter 18 of  ~\cite{DLMF}.

\begin{proposition}
    \label{prop:hermite:basis}
    For any $\lambda > 0$, $\{\ph_{l, \lambda}\}_{l=0}^{\infty}$ is an orthonormal basis in $L^2(\mu)$ for $\mu = \mathcal{N}(0, \lambda^2)$.
\end{proposition}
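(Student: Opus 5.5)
By the change of variables $x=\lambda y$, the claim for general $\lambda>0$ reduces to $\lambda=1$. If $x\sim\mathcal N(0,\lambda^2)$ then $y=x/\lambda\sim\mathcal N(0,1)$, and $\ph_{l,\lambda}(x)=\frac{1}{\sqrt{l!}}\text{He}_l(y)$. So inner products in $L^2(\mathcal N(0,\lambda^2))$ equal the corresponding ones in $L^2(\mathcal N(0,1))$. Moreover, $f\mapsto f(\lambda\,\cdot)$ is a unitary map $L^2(\mathcal N(0,\lambda^2))\to L^2(\mathcal N(0,1))$. It therefore suffices to show that $\{\text{He}_l/\sqrt{l!}\}_{l\ge 0}$ is an orthonormal basis of $L^2(\mathcal N(0,1))$.

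\emph{Step 1: degree and leading coefficient.} I would prove by induction on $l$ that $\frac{d^l}{dx^l}e^{-x^2/2}=P_l(x)e^{-x^2/2}$, where $P_l$ has degree $l$ and leading coefficient $(-1)^l$. The induction step uses $P_{l+1}=P_l'-xP_l$. Consequently $\text{He}_l=(-1)^lP_l$ is monic of degree $l$. In particular, $\mathrm{span}\{\text{He}_0,\dots,\text{He}_t\}$ is the space of all polynomials of degree at most $t$.

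\emph{Step 2: orthonormality.} Fix $m\le l$. By the Rodrigues formula,
\[
\int \text{He}_m(x)\text{He}_l(x)e^{-x^2/2}\,dx=(-1)^l\int \text{He}_m(x)\frac{d^l}{dx^l}e^{-x^2/2}\,dx .
\]
I would integrate by parts $l$ times. Each boundary term has the form polynomial $\times\, e^{-x^2/2}$, so it vanishes at $\pm\infty$. The result is $\int \text{He}_m^{(l)}(x)e^{-x^2/2}\,dx$. If $m<l$, then $\text{He}_m^{(l)}\equiv 0$ and the integral is zero. If $m=l$, then $\text{He}_l^{(l)}=l!$ because $\text{He}_l$ is monic, and the integral equals $l!\sqrt{2\pi}$. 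Dividing by $\sqrt{2\pi}\sqrt{l!\,m!}$ gives orthonormality.

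\emph{Step 3: completeness (the main obstacle).} By Step 1, it is enough to show that polynomials are dense in $L^2(\mathcal N(0,1))$. I would take $f\in L^2(\mathcal N(0,1))$ with $\int f(x)x^k e^{-x^2/2}\,dx=0$ for all $k\ge 0$, and consider
\[
F(z)=\int f(x)e^{zx}e^{-x^2/2}\,dx,\qquad z\in\mathbb C .
\]
By Cauchy--Schwarz, $\int |f|e^{c|x|}e^{-x^2/2}\,dx\le \|f\|_{L^2(\mu)}\bigl(\int e^{2c|x|}\,d\mu\bigr)^{1/2}\sqrt{2\pi}<\infty$ for every $c>0$. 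This justifies differentiating under the integral sign (or expanding $e^{zx}$ as a power series and using dominated convergence), so $F$ is entire with $F^{(k)}(0)=\int f x^k e^{-x^2/2}\,dx=0$ for every $k$. Hence $F\equiv 0$. In particular $F(i\xi)=0$ for all real $\xi$, which says the Fourier transform of the $L^1$ function $f e^{-x^2/2}$ vanishes identically. By Fourier uniqueness, $f e^{-x^2/2}=0$ almost everywhere, so $f=0$ in $L^2(\mu)$. The delicate point is the integrability estimate that makes $F$ entire; everything else is routine.
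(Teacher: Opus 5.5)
Your proof is correct and complete. The pieces all go through:
\begin{itemize}
\item the scaling reduction via the unitary map $f\mapsto f(\lambda\,\cdot)$;
\item monicity of $\text{He}_l$ from the recursion $P_{l+1}=P_l'-xP_l$;
\item orthonormality by $l$-fold integration by parts against the Rodrigues formula;
\item completeness via the entire function $F$ and Fourier uniqueness. Here your Cauchy--Schwarz bound against $\int e^{2c|x|}\,d\mu<\infty$ is exactly what justifies expanding $e^{zx}$ termwise.
\end{itemize}

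The paper gives no proof of this proposition. It quotes it as a standard fact (Corollary 3.1 of \cite{A18}), so the comparison is between a citation and a self-contained derivation. Yours costs a page but removes the dependence on the reference.

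Your write-up also makes visible which part of the statement the paper actually uses. The only application is Claim~\ref{clm:upper_bound}, which needs $\E[T_{>t}]=0$ and $\E[T_{>t}^2]=\sum_{l>t}c_{l,\lambda}^2$. These follow from orthonormality alone, i.e.\ your Steps 1--2. The reason is that $\sum_l c_{l,\lambda}^2=e^{2\lambda^2}<\infty$, so the series $\sum_{l>t}c_{l,\lambda}\ph_{l,\lambda}$ converges in $L^2(\mu)$. Its $L^2$ limit agrees almost everywhere with the pointwise sum given by the generating function (\Cref{cor:hermite:exponent_expansion}), and continuity of the inner product then gives both identities. Completeness, your Step 3, is therefore more than the downstream argument requires, though it is of course needed for the proposition as stated.
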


\begin{proposition}[Explicit Coefficient Formula]
    \label{prop:hermite:coefficients}  
    $\text{He}_l(x)$ is a polynomial of degree exactly $l$, with the explicit formula $\text{He}_l(x) = l!\sum_{h=0}^{\floor{l/2}}\frac{(-1)^h}{h!2^h}\frac{x^{l-2h}}{(l-2h)!}$.
\end{proposition}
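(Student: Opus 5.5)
The plan is to derive the formula from the generating function of the probabilists' Hermite polynomials,
\[
e^{xs - s^2/2} = \sum_{l=0}^\infty \text{He}_l(x)\frac{s^l}{l!},
\]
which follows directly from the Rodrigues definition $\text{He}_l(x) = (-1)^l e^{x^2/2}\frac{d^l}{dx^l}e^{-x^2/2}$. To see this, apply Taylor's theorem to the entire function $g(y)=e^{-y^2/2}$ at the point $x$ with increment $-s$:
\[
g(x-s)=\sum_l \frac{(-s)^l}{l!}g^{(l)}(x).
\]
Multiplying by $e^{x^2/2}$ gives
\[
e^{x^2/2-(x-s)^2/2}=e^{xs-s^2/2}=\sum_l \frac{s^l}{l!}(-1)^l e^{x^2/2}g^{(l)}(x)=\sum_l \text{He}_l(x)\frac{s^l}{l!}.
\]
Since $g$ is entire, the series converges for all $s$, so coefficient extraction is legitimate.

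Next, expand the left-hand side as a product of two absolutely convergent series:
\[
e^{xs}e^{-s^2/2}=\Bigl(\sum_{m\ge 0}\frac{x^m s^m}{m!}\Bigr)\Bigl(\sum_{h\ge 0}\frac{(-1)^h s^{2h}}{2^h h!}\Bigr).
\]
Collecting the coefficient of $s^l$ means summing over pairs $(m,h)$ with $m+2h=l$, so $h$ ranges over $0,\dots,\lfloor l/2\rfloor$ and $m=l-2h$. The coefficient of $s^l$ is therefore
\[
\sum_{h=0}^{\lfloor l/2\rfloor}\frac{(-1)^h}{h!\,2^h}\frac{x^{l-2h}}{(l-2h)!}.
\]
Equating this with $\text{He}_l(x)/l!$ (uniqueness of power series coefficients in $s$) gives the explicit formula.

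Finally, the degree claim is read off the formula. The $h=0$ term is $l!\cdot x^l/l! = x^l$, and every other term has degree $l-2h<l$. So $\text{He}_l$ is monic of degree exactly $l$.

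The only step needing care is justifying the generating function, specifically the Taylor expansion of $e^{-y^2/2}$ about $x$ converging globally. This is standard because the function is entire. Alternatively, one could induct using the recurrence $\text{He}_{l+1}(x)=x\text{He}_l(x)-\text{He}_l'(x)$, which follows from differentiating the Rodrigues formula, and check that the explicit sum satisfies it. I would keep that as a fallback, since the bookkeeping of binomial-type coefficients is messier.
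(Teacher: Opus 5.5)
Your proof is correct, and it takes a different route from the paper simply because the paper gives no proof: the explicit formula is quoted from Chapter 18 of DLMF, as is the generating function in \Cref{prop:hermite:exponent_expansion}.

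Your argument is self-contained, starting from the Rodrigues definition the paper adopts. Its first step (Taylor-expanding $e^{-y^2/2}$ about $x$ with increment $-s$, then multiplying by $e^{x^2/2}$) actually proves \Cref{prop:hermite:exponent_expansion}. Its second step (the Cauchy product of $e^{xs}$ and $e^{-s^2/2}$, then matching the coefficient of $s^l$) gives the explicit formula. The degree claim then follows from the monic $h=0$ term.

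This gets you both cited facts at once and confirms they agree with the paper's choice of $\text{He}_l$. If you are willing to take \Cref{prop:hermite:exponent_expansion} as given, as the paper does, only the coefficient-extraction step is needed, and the recurrence-based fallback is unnecessary.

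Your first step is also the paper's proof of \Cref{cor:hermite:derivative} (Taylor expansion in the shift variable plus coefficient comparison) run in the opposite direction.
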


\begin{corollary}
    \label{cor:hermite:asymptotic}
    $\text{He}_l(x) = x^l - \frac{l(l-1)}{2}x^{l-2} + l!\sum_{h=2}^{\floor{l/2}}\frac{(-1)^h}{h!2^h}\frac{x^{l-2h}}{(l-2h)!}$. When $x = \omega(l)$, the last term is upper bound by $l^4 x^{l-4}$, so $\text{He}_l(x) = x^l \pm O(l^2x^{l-2}) = (1 \pm o(1))x^l$.
\end{corollary}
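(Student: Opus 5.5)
The plan is to read both claims of \cref{cor:hermite:asymptotic} directly off the explicit expansion in \cref{prop:hermite:coefficients}.

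\emph{Step 1: the first two terms.} The $h=0$ term of \cref{prop:hermite:coefficients} is $l!\cdot\frac{x^l}{l!}=x^l$. The $h=1$ term is
\[
-\,l!\,\frac{x^{l-2}}{2\,(l-2)!}=-\frac{l(l-1)}{2}\,x^{l-2}.
\]
Separating these two terms from the sum leaves exactly the tail $l!\sum_{h=2}^{\floor{l/2}}\frac{(-1)^h}{h!2^h}\frac{x^{l-2h}}{(l-2h)!}$. This proves the identity.

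\emph{Step 2: bounding the tail.} Take $|x|\ge 1$, which holds when $x=\omega(l)$. Write the tail in absolute value as a sum of terms
\[
T_h:=\frac{l!}{(l-2h)!\,h!\,2^h}\,|x|^{l-2h}.
\]
Use the crude bounds $\frac{l!}{(l-2h)!}\le l^{2h}$ and $h!\,2^h\ge 2\cdot 2^{h-1}$ for $h\ge 2$. These give $T_h\le l^4|x|^{l-4}\cdot (l^2/x^2)^{h-2}\cdot c_h$, where $c_h\le 2^{-h}$ is a summable factor.

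\emph{Step 3: summing.} The regime $x=\omega(l)$ gives $l^2/x^2=o(1)$. So the sum over $h\ge 2$ is a geometric series dominated by its first term, and it is $O(l^4x^{l-4})$. The first term has coefficient $\frac{l(l-1)(l-2)(l-3)}{8}\le l^4$, which gives the stated constant.

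\emph{Step 4: the asymptotic form.} Dividing by $x^l$, the middle term is $O(l^2/x^2)$ and the tail is $O(l^4/x^4)$. Both are $o(1)$ when $x=\omega(l)$. This gives $\text{He}_l(x)=x^l\pm O(l^2x^{l-2})=(1\pm o(1))x^l$.

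The only mild obstacle is the tail bound in Steps 2 and 3. The stated constant $l^4$ without a big-$O$ needs the geometric ratio to be at most $1/2$ or so. I would handle this by absorbing it into the $x=\omega(l)$ assumption: for $x\ge 2l$ the ratio $l^2/(2x^2)$ is at most $1/8$.
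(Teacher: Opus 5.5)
Your proof is correct and follows the route the paper intends. The paper gives no separate proof and simply reads the corollary off the explicit coefficient formula (\cref{prop:hermite:coefficients}); you do the same, isolating the $h=0,1$ terms and bounding the $h\ge 2$ tail termwise. Your closing worry about the geometric ratio is unnecessary: once $|x|\ge l$, your own termwise bound gives a tail of at most $l^4|x|^{l-4}\sum_{h\ge 2}\frac{1}{h!\,2^h}<l^4|x|^{l-4}$, so the ratio $l^2/x^2$ need only be at most $1$, not small.
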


\begin{proposition}[Generating Function]
    \label{prop:hermite:exponent_expansion}
    The exponential generating function of $\text{He}_l(x)$ is
    \[\sum\limits_{l = 0}^{\infty}\frac{z^l}{l!}\text{He}_l(x) = e^{xz - \frac{z^2}{2}}.\]
\end{proposition}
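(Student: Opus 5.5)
The statement just above is the generating function identity of Proposition~\ref{prop:hermite:exponent_expansion}:
\[
\sum_{l\ge 0}\frac{z^l}{l!}\,\mathrm{He}_l(x)=e^{xz-z^2/2}.
\]
I would prove it directly from the Rodrigues-type definition $\mathrm{He}_l(x)=(-1)^l e^{x^2/2}\frac{d^l}{dx^l}e^{-x^2/2}$ by a Taylor expansion argument.

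\textbf{Step 1: Taylor expansion of the Gaussian.} Set $g(x)=e^{-x^2/2}$. Since $g$ is entire, its Taylor series about $x$ converges for every complex shift $w$:
\[
g(x+w)=\sum_{l\ge0}\frac{w^l}{l!}\,g^{(l)}(x).
\]

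\textbf{Step 2: Substitute $w=-z$.} This gives
\[
e^{-(x-z)^2/2}=\sum_{l\ge0}\frac{(-z)^l}{l!}\,g^{(l)}(x)=\sum_{l\ge0}\frac{z^l}{l!}\,(-1)^l g^{(l)}(x).
\]

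\textbf{Step 3: Multiply by $e^{x^2/2}$.} By the definition of $\mathrm{He}_l$, the right-hand side becomes $\sum_{l} \frac{z^l}{l!}\,\mathrm{He}_l(x)$. On the left-hand side,
\[
e^{x^2/2}\,e^{-(x-z)^2/2}=e^{x^2/2-x^2/2+xz-z^2/2}=e^{xz-z^2/2},
\]
which is exactly the claimed identity.

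\textbf{Main obstacle.} The only subtle point is justifying that the Taylor series converges for every shift, rather than only locally. This follows because $g$ is entire: equivalently, $e^{xz-z^2/2}$ is entire in $z$, and the coefficients of its power series are given by Cauchy's formula. No other difficulty arises.

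\textbf{Cross-check against Proposition~\ref{prop:hermite:coefficients}.} As a consistency check, I would also expand the product directly:
\[
e^{xz}\,e^{-z^2/2}=\Bigl(\sum_{m\ge0}\frac{x^m z^m}{m!}\Bigr)\Bigl(\sum_{h\ge0}\frac{(-1)^h z^{2h}}{2^h h!}\Bigr).
\]
Collecting the coefficient of $z^l$ (taking $m=l-2h$) and multiplying by $l!$ gives
\[
l!\sum_{h=0}^{\lfloor l/2\rfloor}\frac{(-1)^h}{h!\,2^h}\,\frac{x^{l-2h}}{(l-2h)!},
\]
which is the explicit formula of Proposition~\ref{prop:hermite:coefficients}. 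Both statements therefore follow from the same computation.
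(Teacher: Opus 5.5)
Your proof is correct. The paper gives no proof of this proposition: it cites Chapter 18 of \cite{DLMF}, together with \Cref{prop:hermite:coefficients}. Your argument is a self-contained replacement that starts from exactly the Rodrigues formula the paper uses as the definition of $\text{He}_l$. Its only analytic input is that $g(x)=e^{-x^2/2}$ is entire, so the Taylor series of $g$ about any $x$ converges to $g(x+w)$ for every shift $w$. The identity $e^{x^2/2}e^{-(x-z)^2/2}=e^{xz-z^2/2}$ then finishes the proof.

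This is the same mechanism the paper uses in the proof of \Cref{cor:hermite:derivative}: Taylor-expand a shifted exponential and match coefficients of $z^l$. The two arguments run in opposite directions. There, the generating function is assumed and $f^{(l)}(s)=\text{He}_l(x-s)f(s)$ is deduced. Here, you start from the derivatives of the Gaussian, i.e.\ the definition, and deduce the generating function.

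Your Cauchy-product cross-check on $e^{xz}e^{-z^2/2}$ also recovers \Cref{prop:hermite:coefficients}. So both facts the paper imports from \cite{DLMF} follow from your single computation. The citation buys brevity; your route makes the Hermite preliminaries independent of the external reference.
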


\begin{corollary}
\label{cor:hermite:exponent_expansion}
   Substituting $x$  with $x /\lambda$ and $z$ with $\lambda$,  the  generating function gives
    \[e^x = e^{\lambda^2/2}\sum\limits_{l = 0}^{\infty}\frac{\lambda^l}{\sqrt{l!}}\ph_{l, \lambda}(x).\]
\end{corollary}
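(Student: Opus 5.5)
The plan is a direct substitution into the generating function of \Cref{prop:hermite:exponent_expansion}, followed by rewriting each term in terms of the rescaled functions $\ph_{l,\lambda}$. No analytic estimate should be needed beyond the validity of that identity for all real arguments.

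\emph{Substitution.} Fix $\lambda>0$ and $x\in\R$, and apply \Cref{prop:hermite:exponent_expansion} with the argument $x/\lambda$ in place of $x$ and with $z=\lambda$. This gives
\[
\sum_{l=0}^{\infty}\frac{\lambda^l}{l!}\,\text{He}_l\!\left(\frac{x}{\lambda}\right)=\exp\!\left(\frac{x}{\lambda}\cdot\lambda-\frac{\lambda^2}{2}\right)=e^{x-\lambda^2/2}.
\]

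\emph{Rewriting the terms.} By the definition $\ph_{l,\lambda}(x)=\frac{1}{\sqrt{l!}}\text{He}_l(x/\lambda)$, we have $\text{He}_l(x/\lambda)=\sqrt{l!}\,\ph_{l,\lambda}(x)$. Hence each summand becomes
\[
\frac{\lambda^l}{l!}\,\text{He}_l\!\left(\frac{x}{\lambda}\right)=\frac{\lambda^l}{\sqrt{l!}}\,\ph_{l,\lambda}(x).
\]
Multiplying both sides of the identity by $e^{\lambda^2/2}$ then yields exactly $e^x=e^{\lambda^2/2}\sum_{l\ge 0}\frac{\lambda^l}{\sqrt{l!}}\,\ph_{l,\lambda}(x)$.

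\emph{Validity of the substitution.} The only point deserving a remark is that the generating-function identity holds for every real $x$ and every real $z$, not merely as a formal power series. This is standard: $e^{xz-z^2/2}$ is entire in $z$, and its Taylor coefficients in $z$ at $0$ are precisely $\text{He}_l(x)/l!$, which is the Rodrigues-type definition above. The series therefore converges absolutely for every $z$, and in particular at $z=\lambda$, so the pointwise rearrangement is legitimate. I expect no step to be a genuine obstacle.
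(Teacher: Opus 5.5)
Your proposal is correct and follows the same route as the paper: substitute $x/\lambda$ for $x$ and $z=\lambda$ into the generating function of \Cref{prop:hermite:exponent_expansion}, rewrite $\text{He}_l(x/\lambda)=\sqrt{l!}\,\ph_{l,\lambda}(x)$, and multiply through by $e^{\lambda^2/2}$. Your remark that the identity holds pointwise, because $e^{xz-z^2/2}$ is entire in $z$, is a small piece of rigor that the paper leaves implicit.
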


\begin{corollary}
\label{cor:hermite:derivative}
    For any $x$, the function $f(s):=e^{xs-s^2/2}$ satisfies $f^{(l)}(s) = \text{He}_l(x-s)f(s)$
    for all $s$.
\end{corollary}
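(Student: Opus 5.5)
Fix $x$ and set $f(s)=e^{xs-s^2/2}$. The plan is to obtain the statement directly from the generating function in \Cref{prop:hermite:exponent_expansion}, by re-expanding $f$ around an arbitrary base point $s$ and matching Taylor coefficients. Take any real $s$ and an increment $h$. Completing the exponent gives
\[
x(s+h)-\tfrac{(s+h)^2}{2} \;=\; \Bigl(xs-\tfrac{s^2}{2}\Bigr) + (x-s)h - \tfrac{h^2}{2}.
\]
Hence $f(s+h)=f(s)\,e^{(x-s)h-h^2/2}$.

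Now apply \Cref{prop:hermite:exponent_expansion} with $x$ replaced by $x-s$ and $z$ replaced by $h$. This yields
\[
f(s+h)=f(s)\sum_{l=0}^{\infty}\frac{h^l}{l!}\,\text{He}_l(x-s),
\]
valid for all real $h$.

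The function $h\mapsto f(s+h)$ is entire, so its Taylor series at $h=0$ converges everywhere and is unique. That Taylor series is $\sum_l f^{(l)}(s)h^l/l!$. Comparing it coefficient by coefficient with the expansion above gives $f^{(l)}(s)=\text{He}_l(x-s)f(s)$ for every $l\ge 0$ and every $s$, which is the claim.

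The only point needing care is justifying the coefficient matching. Once both sides are known to be power series in $h$ converging on all of $\R$, they agree as functions, and uniqueness of power series coefficients forces equality term by term. The generating function series in \Cref{prop:hermite:exponent_expansion} converges absolutely for all $z$, since it equals an entire function. I do not expect any genuine obstacle; the argument is a shift-and-reexpand identity.

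As a sanity check, one could instead argue by induction on $l$ using the recurrence $\text{He}_{l+1}(y)=y\,\text{He}_l(y)-\text{He}_l'(y)$. Differentiating $\text{He}_l(x-s)f(s)$ in $s$ gives $\bigl(-\text{He}_l'(x-s)+(x-s)\text{He}_l(x-s)\bigr)f(s)$, which equals $\text{He}_{l+1}(x-s)f(s)$. That route requires the recurrence, which the paper has not stated, so I would present the generating-function argument.
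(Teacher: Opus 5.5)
Your proof is correct and is essentially the paper's argument: rewrite $f(s+z)=f(s)e^{(x-s)z-z^2/2}$, expand it with the generating function (with $x$ replaced by $x-s$), and match coefficients against the Taylor series of $f$ at $s$. You add one justification the paper leaves implicit: both sides are power series in $z$ converging on all of $\R$, so uniqueness of power series coefficients licenses the term-by-term comparison.
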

\begin{proof}
    Substituting $x$  with $x-s$,  the  generating function gives
    \[\sum\limits_{l = 0}^{\infty}\frac{z^l}{l!}\text{He}_l(x-s) = e^{(x-s)z - \frac{z^2}{2}}.\]
    For $f(s):=e^{xs-s^2/2}$, this implies
    \[f(s+z)  = f(s)e^{(x-s)z - \frac{z^2}{2}} = f(s)\sum\limits_{l = 0}^{\infty}\frac{z^l}{l!}\text{He}_l(x-s).\] 
    On the other hand, $f(s+z) = \sum\limits_{l = 0}^{\infty}\frac{z^l}{l!}f^{(l)}(s)$ by Taylor's theorem.
    Comparing the coefficients of $z^l$ gives 
    \[f^{(l)}(s) = \text{He}_l(x-s)f(s).\]
\end{proof}

\subsection{Communication Complexity}

For a function $f: \mathcal{X} \times \mathcal{Y} \to \{0,1\}$, we denote by $R^{\text{pub}, \to}(f)$ the minimum communication of a randomized, public coin one-way protocol for two parties, Alice and Bob, to compute $f$ with probability at least $4/5$.
We consider the indexing problem with fixed support size where Alice has $2n$ bit string with Hamming weight $n$, and Bob has an index in $\{1, \ldots, 2n\}$.
Alice sends Bob a message, and Bob's goal is to determine the bit in Alice's string corresponding to his index.
We say $f = \texttt{INDEX}_{2n, n}$ where $\mathcal{X} = \{\mathbf{x} \in \{0,1\}^{2n}: \|\mathbf{x}\|_0 = n\}$, $\mathcal{Y} = \{1,\ldots, 2n\}$, and $f(\mathbf{x},j) = \mathbf{x}_j$. We will make use of the following theorem which follows the same argument as Theorem 5 of \cite{kremer95communication}.%along with the observation that the VC dimension of $f$ does not change by more than a constant factor by fixing the support size of $\mathbf{x}$ to be half of the domain.
\begin{theorem}\label{thm:indexing}
    $$R^{\text{pub}, \to}(\texttt{INDEX}_{2n, n}) = \Omega(n).$$
\end{theorem}

% % \todo{We want to change this theorem (a) to be about the distributional comm complexity over random queries $i$ (we are fine with worst-case or random choice of $\bx$). We also want the success probability of the protocol to be only slightly greater than $1/2$, say $2/3$ would suffice.}

% \bor{Actually I think current statement is fine - we don't need random $\bx$ as our dataset is already random. Our recovery probability is also $0.9$ so far (see \Cref{lem:gap})}

\subsection{High dimensional geometry}
\begin{proposition}[\cite{vershynin2026highdimprob}]
    \label{prop:norm_gauss}
   Let $g \sim \mathcal{N}(0, I_d)$. Then there exists a universal  constant $C>0$ such that for any $w \geq 1$,
    \begin{equation*}
        \Pr\p{\abs{\|g\|_2 - \sqrt{d}} \leq w} \geq 1- 2e^{-Cw^2} .
    \end{equation*}
\end{proposition}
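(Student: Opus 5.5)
The plan is to derive this from Gaussian concentration for Lipschitz functions, together with an estimate showing that $\E\|g\|_2$ is within distance $1$ of $\sqrt{d}$. The function $f(x) = \|x\|_2$ is $1$-Lipschitz on $\R^d$ by the triangle inequality. The Gaussian concentration inequality (Tsirelson--Ibragimov--Sudakov, or via the Gaussian log-Sobolev inequality) then gives, for every $s \ge 0$,
\[
\Pr\bigl(\abs{\|g\|_2 - \E\|g\|_2} \geq s\bigr) \leq 2e^{-s^2/2}.
\]

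Next I compare $\E\|g\|_2$ with $\sqrt{d}$. Jensen's inequality gives the upper bound $\E\|g\|_2 \le \sqrt{\E\|g\|_2^2} = \sqrt{d}$. For the lower bound I use the Gaussian Poincaré inequality, again with the $1$-Lipschitz $f$, which gives $\mathrm{Var}(\|g\|_2) \le 1$. Hence
\[
(\E\|g\|_2)^2 = d - \mathrm{Var}(\|g\|_2) \geq d-1,
\]
so $\E\|g\|_2 \ge \sqrt{d-1} \ge \sqrt{d} - 1$ for $d \ge 1$. Together these give $\abs{\E\|g\|_2 - \sqrt{d}} \le 1$.

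Now I split on $w$. For $w \ge 2$, the event $\abs{\|g\|_2 - \sqrt d} > w$ implies $\abs{\|g\|_2 - \E\|g\|_2} > w - 1 \ge w/2$. Its probability is therefore at most $2e^{-w^2/8}$. For $1 \le w < 2$, I take $C \le (\ln 2)/4$. Then $2e^{-Cw^2} \ge 2e^{-4C} \ge 1$, so the claimed lower bound $1 - 2e^{-Cw^2}$ is nonpositive and holds trivially. Setting $C = \min\{1/8, (\ln 2)/4\}$ covers both ranges of $w$ with one universal constant.

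There is no real obstacle here. The only subtle point is the shift from centering at $\E\|g\|_2$ to centering at $\sqrt d$, which the $w \ge 1$ hypothesis and the choice of small $C$ absorb. If one prefers to avoid Poincaré, an alternative is Bernstein's inequality for the $\chi^2_d$ variable $\|g\|_2^2 - d$, combined with the elementary inequality $\abs{z-1} \le \abs{z^2-1}$ for $z \ge 0$, which gives the same conclusion.
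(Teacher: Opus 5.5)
Your proof is correct. The $1$-Lipschitz concentration $\Pr(|\|g\|_2-\mathbb{E}\|g\|_2|\ge s)\le 2e^{-s^2/2}$ works as stated. So does the sandwich $\sqrt d-1\le\sqrt{d-1}\le\mathbb{E}\|g\|_2\le\sqrt d$ from Jensen and the Gaussian Poincar\'e inequality. Finally, the split at $w=2$ is fine, since the range $1\le w<2$ is vacuous once $C\le(\ln 2)/4$. Together these give the claim with $C=1/8$.

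The paper gives no proof of its own and only cites Vershynin. There the result is the concentration-of-the-norm theorem, proved along the lines of your alternative. One applies Bernstein's inequality to $\|g\|_2^2-d=\sum_i(g_i^2-1)$, a sum of independent centered sub-exponential variables, and then transfers the bound from $\|g\|_2^2$ to $\|g\|_2$.

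Your main route trades generality for sharpness. It uses Gaussian-specific tools and needs the extra mean-comparison step and case split. In exchange it gives an explicit constant and extends to any $1$-Lipschitz function of $g$. The Bernstein route centers directly at $\sqrt d$ with no mean comparison. It applies verbatim to any random vector with independent mean-zero unit-variance sub-gaussian coordinates, at the price of unspecified constants.

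One caveat on the alternative as you sketched it. The inequality $|z-1|\le|z^2-1|$ by itself only yields a tail of order $2\exp(-c\min\{w^2,w\sqrt d\})$. That falls short of $e^{-cw^2}$ when $w>\sqrt d$. To recover the full sub-gaussian tail you need the sharper fact $|z^2-1|\ge\max\{|z-1|,|z-1|^2\}$ for $z\ge 0$. Concretely, when $w>\sqrt d$ the lower-tail event is empty. On the event $\|g\|_2>\sqrt d+w$ one has $\|g\|_2^2-d> w^2> d$. So Bernstein's bound $\exp(-c\min\{t^2/d,t\})$ at $t=w^2$ gives $e^{-cw^2}$. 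This is exactly the step used in Vershynin's proof.
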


\begin{proposition}[\cite{vershynin2026highdimprob}]
    \label{prop:rand_scalar_prod}
    Let $n \geq 10$ and let $v_1, \cdots, v_n$  be independent, uniform samples from the unit sphere $\mathcal{S}^{d-1}$. Then there exists a universal constant $C>0$ such that
    \begin{equation*}
        \Pr\p{\forall i\neq j \in [n]: \abs{\<v_i, v_j>} \leq C \sqrt{\frac{\log n}{d}}} \geq 1 - \frac{1}{n^3}.
    \end{equation*}
\end{proposition}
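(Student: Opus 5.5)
The plan is to reduce to a single pair, bound its tail, and then take a union bound over all $\binom{n}{2}$ pairs. Set $t := C\sqrt{\log n/d}$ for a constant $C$ to be chosen. If $t \geq 1$ there is nothing to prove, because $\abs{\<v_i, v_j>} \leq 1$ always. So I will assume $d > C^2 \log n$, which means $d$ is at least a large constant multiple of $\log n$. This assumption is exactly what makes the norm concentration step below strong enough.

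\textbf{Single pair.} Fix $i \neq j$. The uniform distribution on $\mathcal{S}^{d-1}$ is rotation invariant and $v_j$ is independent of $v_i$. So, conditioned on $v_i$, I can rotate $v_i$ to $e_1$, and $\<v_i, v_j>$ has the same law as the first coordinate $x_1$ of a uniform point $x$ on the sphere. I write $x = g/\|g\|_2$ with $g \sim \mathcal{N}(0, I_d)$. Then for any $t$,
\[
\Pr\p{\abs{x_1} > t} \leq \Pr\p{\|g\|_2 < \sqrt{d}/2} + \Pr\p{\abs{g_1} > t\sqrt{d}/2}.
\]
For the first term I apply \Cref{prop:norm_gauss} with $w = \sqrt{d}/2$, which gives at most $2e^{-Cd/4}$; under our assumption $d \geq C^2\log n$ this is at most $n^{-6}$ once $C$ is large. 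For the second term I use the standard Gaussian tail, which gives at most $2e^{-t^2 d/8} = 2n^{-C^2/8}$.

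\textbf{Union bound.} There are fewer than $n^2$ pairs. Choosing $C$ large enough (for example $C^2/8 \geq 6$, together with the condition from the norm step) makes the total failure probability at most $n^2 \cdot (n^{-6} + 2n^{-6}) \leq n^{-3}$ for $n \geq 10$. The constant $C$ in the conclusion can be taken as the maximum of the constants needed in the two steps.

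\textbf{Main obstacle.} The only delicate point is the norm concentration step when $d$ is small relative to $\log n$. There $e^{-\Omega(d)}$ is not polynomially small, so the Gaussian-normalization route breaks down. It is handled by the case split above: in exactly that regime the claimed bound exceeds $1$ and is vacuous. Alternatively, I could avoid the split by using the direct spherical-cap estimate $\Pr(\abs{x_1} > t) \leq 2e^{-dt^2/2}$, which follows from concentration of measure on the sphere.
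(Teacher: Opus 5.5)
Your proof is correct. The paper gives no argument of its own and only cites Vershynin, and your route is exactly the standard textbook argument (sub-gaussianity of marginals of the uniform distribution on the sphere) that the citation stands for: write the marginal as $g_1/\|g\|_2$, control $\|g\|_2$ via \Cref{prop:norm_gauss}, use the Gaussian tail for $g_1$, split off the trivial case $C\sqrt{\log n/d}\geq 1$, and union bound over pairs.

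The one thing to tidy is notation. The constant in $2e^{-Cd/4}$ is the fixed constant $c_0$ of \Cref{prop:norm_gauss}, not your free $C$. So in the case $d>C^2\log n$ you get $2n^{-c_0C^2/4}$ and must take $C^2\gtrsim 1/c_0$, which is presumably what ``once $C$ is large'' means. Taking $C$ large also ensures $w=\sqrt{d}/2\geq 1$, as \Cref{prop:norm_gauss} requires.
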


\begin{lemma}\label{lem:expectation_product}
     Let  $u, v$  be independent, uniform samples from the unit sphere $\mathcal{S}^{d-1}$. Then for any integer $t >1$,
    \begin{align*}
        \E \<u, v>^{2t}  \leq \frac{2^t t!}{d^t}.
    \end{align*}
\end{lemma}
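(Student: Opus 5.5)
We want $\E\langle u,v\rangle^{2t}\le 2^t t!/d^t$ for independent uniform $u,v$ on $\mathcal{S}^{d-1}$. The plan is to replace the sphere by a Gaussian. By rotation invariance we may fix $v=e_1$, so $\langle u,v\rangle$ has the law of $u_1$. Write $u=g/\|g\|_2$ with $g\sim\mathcal{N}(0,I_d)$. It is a standard fact that $g/\|g\|_2$ and $\|g\|_2$ are independent. Hence
\[
\E g_1^{2t}=\E\|g\|_2^{2t}\cdot \E u_1^{2t},
\qquad\text{so}\qquad
\E u_1^{2t}=\frac{\E g_1^{2t}}{\E\|g\|_2^{2t}} .
\]

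The numerator is the Gaussian moment $(2t-1)!!=\frac{(2t)!}{2^t t!}$. For the denominator, $\|g\|_2^2\sim\chi^2_d$, whose $t$-th moment is
\[
\E\|g\|_2^{2t}=d(d+2)\cdots(d+2t-2)\ge d^t .
\]
This gives $\E u_1^{2t}\le (2t-1)!!/d^t$.

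It remains to compare $(2t-1)!!$ with $2^t t!$. We have
\[
(2t-1)!!=\prod_{i=1}^{t}(2i-1)\le\prod_{i=1}^{t}2i=2^t t!,
\]
which finishes the argument. In fact the stated bound has slack: it only needs $(2t-1)!!\le 2^t t!$, whereas $(2t-1)!!\le 2^t t!/\sqrt{\pi t}$ holds roughly.

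The only step needing care is the independence of direction and radius for a standard Gaussian, which is classical via spherical symmetry of the density. If one prefers to avoid it, an alternative is to compute exactly: $u_1^2\sim\mathrm{Beta}(1/2,(d-1)/2)$, whose $t$-th moment is
\[
\prod_{i=0}^{t-1}\frac{1/2+i}{d/2+i}\le\frac{(2t-1)!!}{d^t}.
\]
Either route is routine, so I expect no real obstacle.
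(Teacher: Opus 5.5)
Your proof is correct and essentially the paper's. The paper also fixes $v=e_1$, writes $u=g/\|g\|_2$, and uses $u_1^2\sim\mathrm{Beta}(1/2,(d-1)/2)$ (your fallback route) to get the exact moment $\prod_{l=0}^{t-1}\frac{2l+1}{2l+d}$, which it bounds termwise via $2l+1\le 2(l+1)$ and $2l+d\ge d$; your primary route through the independence of $\|g\|_2$ and $g/\|g\|_2$ yields the same product $(2t-1)!!/\prod_{i=0}^{t-1}(d+2i)$ and the same termwise bound, so the two differ only in how the exact moment formula is derived.
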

\begin{proof}
    Since the uniform distribution over $\mathcal{S}^{d-1}$ is rotationally invariant, we can fix $v = (1, 0, \dots, 0)$ without loss of generality. In addition, the uniformly random $u$ from $\mathcal{S}^{d-1}$can be equivalently sampled as $u = \frac{g}{\|g\|_2}$ for $g  = (g_1, \cdots, g_d) \sim \mathcal{N}(0, I_d)$. Let $u_1$ denote the first coordinate of $u$, we have that 
    \[\<u, v>^2 = u_1^2 = \frac{g_1}{\|g\|^2_2} = \frac{g_1^2}{g^2_1+ \cdots+ g^2_d}\]
    for i.i.d. $g_1, \cdots, g_d \sim \mathcal{N}(0, 1)$. By the definition of Gamma distributions, $g^2_1 \sim \chi^2_1 = \Gamma(\frac{1}{2},2)$ and independently, $(g^2_2 + \cdots + g^2_d)  \sim  \chi^2_{d-1}  = \Gamma(\frac{d-1}{2},2)$. Therefore, 
    \[\<u, v>^2 \sim \frac{\Gamma(\frac{1}{2},2)}{\Gamma(\frac{1}{2},2) + \Gamma(\frac{d-1}{2},2)} = \text{Beta}(\frac{1}{2}, \frac{d-1}{2})\]
    by the standard relationship between Gamma and Beta distributions. By the density of the Gamma distribution, for $Z \sim \text{Beta}(\alpha, \beta)$, the $t$-th moment is $\E[Z^t] = \prod_{l=0}^{t-1} \frac{\alpha+l}{\alpha+\beta+l}$. Setting $\alpha = 1/2$ and $\beta = (d-1)/2$, we have
    \[
    \E[\<u, v>^{2t}] = \prod_{l=0}^{t-1} \frac{1/2 + l}{d/2 + l} = \prod_{l=0}^{t-1} \frac{2l + 1}{2l + d}.
    \]
    To prove the claimed upper bound, we observe that $2l+1 \leq 2(l+1)$ and $2l+d \geq d$, thus
    \[
    \prod_{l=0}^{t-1} \frac{2l + 1}{2l + d} \leq \prod_{l=0}^{t-1} \frac{2(l+1)}{d} = \frac{2^t}{d^t} \prod_{l=0}^{t-1} (l+1) = \frac{2^t t!}{d^t}.
    \]
\end{proof}

\begin{lemma}
    \label{lem:rand_dataset_kde} 
    Fix an arbitrary $r >0 $. Let $K \subset \mathcal{S}_r$ be a set containing i.i.d. uniformly random samples  from $\mathcal{S}_r$. Then for any  $\alpha > 0$ and any $q \in  \mathcal{B}_{r}$ (which can be chosen from $K$), with high probability,
    \begin{align*}
        \Omega(|K|) \leq \sum_{k \in K} e^{\alpha\langle k, q \rangle} \leq |K|e^{O\left(\alpha r^2\sqrt{\frac{\log |K|}{d}}\right)} + e^{\alpha r^2}.
    \end{align*}
\end{lemma}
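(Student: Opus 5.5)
We will prove the two inequalities separately, conditioning on the high-probability event of \Cref{prop:rand_scalar_prod}. Write $K = \{k_1,\dots,k_m\}$ with $m = |K|$, so that each $k_j/r$ is uniform on $\mathcal{S}^{d-1}$.

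\emph{Upper bound.} The plan is to split off at most one ``self'' term. If $q \in K$, say $q = k_i$, the term $k = q$ contributes $e^{\alpha\langle q,q\rangle} \le e^{\alpha r^2}$. For every other $k_j$ with $j \ne i$, \Cref{prop:rand_scalar_prod} applied to the normalized points gives, with probability $1 - m^{-3}$,
\[
|\langle k_j, k_i\rangle| \le C r^2\sqrt{\tfrac{\log m}{d}} .
\]
Hence each of those terms is at most $e^{O(\alpha r^2\sqrt{\log m/d})}$, and summing gives the first term of the upper bound.

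If instead $q$ is an arbitrary fixed point of $\mathcal{B}_r$ chosen independently of $K$, we use rotational invariance. We apply the same concentration bound to the $m$ i.i.d. unit vectors $k_j/r$ together with the direction $q/\|q\|$, for example by invoking \Cref{prop:rand_scalar_prod} on $m+1$ points with the extra point a uniformly random rotation image, or directly by a one-dimensional cap bound and a union bound. This yields $\langle k_j, q\rangle \le C r\|q\|\sqrt{\log m/d} \le C r^2\sqrt{\log m/d}$ for all $j$. Every term is then bounded by $e^{O(\alpha r^2\sqrt{\log m/d})}$, and the $+e^{\alpha r^2}$ slack is not even needed.

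\emph{Lower bound.} The plan is to use symmetry of the random inner products rather than Jensen on the product. For each $j$ not equal to the possible self index, the value $\langle k_j,q\rangle$ is symmetric about $0$, by the reflection $k_j \mapsto -k_j$, which preserves the uniform measure on $\mathcal{S}_r$ and is independent across $j$. So the indicator $\mathbf{1}[\langle k_j,q\rangle \ge 0]$ is Bernoulli with parameter at least $1/2$, independently over $j$. Whenever this indicator equals $1$, we have $e^{\alpha\langle k_j,q\rangle}\ge 1$. A Chernoff bound then shows that at least $(m-1)/4$ of the indicators are $1$ with probability $1-e^{-\Omega(m)}$. This gives a sum of at least $\Omega(m)$, interpreted as ``with high probability'' for $m$ larger than a constant; for constant $m$, the self term or any single term trivially handles it up to constants.

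\emph{Main obstacle.} The only delicate point is making the ``with high probability'' uniform over the case $q\in K$ versus a fixed external $q$. In the latter case we need the concentration of $\langle k_j,q\rangle$ for a fixed direction, which is the standard spherical cap bound underlying \Cref{prop:rand_scalar_prod}. Everything else is a union bound.
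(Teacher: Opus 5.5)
Your proposal is correct and follows essentially the same route as the paper: split off the at most one self term $e^{\alpha r^2}$, bound every other term using the pairwise inner-product concentration of \Cref{prop:rand_scalar_prod} plus a union bound, and get the lower bound by applying Chernoff to the independent symmetric indicators $\mathbf{1}[\langle k,q\rangle\ge 0]$. Your separate handling of a fixed external $q$ (via a random rotation or a cap bound) only makes explicit a step that the paper covers with ``union bound''.
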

\begin{proof}
    First, we note that $q$ is independent of at least $|K|-1$ points from $K$, and these $|K|-1$ points are independent of each others. Since the distribution of each $k \in K$ is rotationally invariant, we have that $\Pr[\<k, q> \geq 0] = 1/2$, and the indicator variable $\mathds{1}(\langle k, q \rangle \geq 0)$ are independent. Therefore, 
    \begin{align*}
        \sum_{k \in K} e^{\alpha\langle k, q \rangle} \geq \sum_{\substack{k \in K \\ k \neq q}} e^{\alpha\langle k, q \rangle} \geq \frac{|K|}{4}e^0 = \Omega(|K|)
    \end{align*}
    where the second inequality holds with high probability by applying the Chernoff bound to the sum of indicators $\sum_{\substack{k \in K :  k \neq q}}\mathds{1}(\langle k, q \rangle \geq 0)$ and bound the deviation from the mean $\frac{|K|}{2}$ by $\frac{|K|}{4}$.
   On the other hand, by \Cref{prop:rand_scalar_prod} and the  union bound, we have, with high probability,
    \begin{align*}
        \<k, q> < C\|k\|_2\|q\|_2\sqrt{\frac{\log |K|}{d}} = O\qty(r^2\sqrt{\frac{\log |K|}{d}})
    \end{align*}
    simultaneously for all $k \in K$ such that $k\neq q$. Therefore, 
    \begin{align*}
        \sum_{k \in K} e^{\alpha\langle k, q \rangle} \leq e^{\alpha r^2} + \sum_{\substack{k \in K \\ k \neq q}} e^{\alpha\langle k, q \rangle} \leq |K|e^{O\qty(\alpha r^2\sqrt{\frac{\log |K|}{d}})} + e^{\alpha r^2}.
    \end{align*}
\end{proof}

\section{Proof of \Cref{th:const_r_ub_lb}}
In this section we prove \Cref{th:const_r_ub_lb}, restated here for convenience of the reader:
\begin{theorem*}[Restated from \Cref{th:const_r_ub_lb}]
    Assume that $r = \log^{o(1)} n$, $\log n \leq d \leq \log^a n$ for some constant $a > 1$ and $n^{-100} < \eps < n^{-c}$ for an arbitrarily small constant $c > 0$. Then the Softmax KDE problem \eqref{eq:streaming-kde} can be solved using
    \begin{align*}
        \tilde{O}\qty(\qty(\frac{1}{\eps})^{1 - \frac{1}{a} + o(1)})
    \end{align*}
    memory, where $\tilde{O}(\cdot)$ suppresses $O(\poly\log(n))$ factors. Furthermore, this bound is tight for any $d \geq \log^a n$, $a > 2$, i.e. there is a
    \begin{align*}
        \tilde{\Omega}\qty(\min\qty{n, \qty(\frac{1}{\eps})^{1 - \frac{1}{a} - o(1)}})
    \end{align*}
    lower bound for the streaming space complexity.
\end{theorem*}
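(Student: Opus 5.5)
The proof has two halves, following the technical overview.

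\textbf{Upper bound.} Fix a cutoff $t$ and split $\expker(K,q)=\sum_{k\in K}\expker_{\le t}(k,q)+\sum_{k\in K}\expker_{>t}(k,q)$.

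The low-degree part is handled by \Cref{lem:embed_memory} with $a_l=1/l!$ and additive error $\delta=\eps e^{-r^2}n^{-1}$. This is at most $\eps\cdot\expker(K,q)$, since every term of $\expker(K,q)$ is at least $e^{-r^2}$. The sketch $\sum_k\psi_1(k)$ is linear, so it is maintained in a stream trivially, in space $\tilde O\bigl((d+t)^t/t!\bigr)$.

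For the high-degree part, note that $\expker_{>t}(k,q)=\sum_{l>t}\langle k^{\otimes l},q^{\otimes l}\rangle/l!$ is positive definite. On $\mathcal B_r$ its diagonal is at most $\sum_{l>t}r^{2l}/l!$. Apply \BaseCompress{} (\Cref{lem:simple_coreset}) inside Merge-and-Reduce (\Cref{lem:merge_reduce}) with block size $b$. This gives additive error
\[
\tilde O\Bigl(\tfrac{n}{b}\textstyle\sum_{l>t}r^{2l}/l!\Bigr).
\]
Using $\expker(K,q)\ge n e^{-r^2}$, it suffices to take
\[
b=\tilde\Theta\Bigl(\tfrac{e^{r^2}}{\eps}\textstyle\sum_{l>t}r^{2l}/l!\Bigr),
\]
so the space is $\tilde O(b)$. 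This reproduces \eqref{eq:ub}.

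Now choose $t=\lceil\log(1/\eps)/\log d\rceil$. Since $d\ge\log n$ and $\eps<n^{-c}$, we have $t=O(\log n/\log\log n)$ and $t=\omega(1)$. Also $t\le d^{1/2}$ whenever $a>1$ fails only in edge cases, and there one uses $(d+t)^t\le d^t e^{t^2/d}$ with $t^2/d=o(\log(1/\eps))$. The sketch term is then $\le d^t/t!\cdot(1/\eps)^{o(1)}=(1/\eps)^{1+o(1)}/t!$. The coreset term is $\le (1/\eps)\,e^{2r^2}r^{2t+2}/(t+1)!$, and $e^{r^2},r^{2t}$ are $(1/\eps)^{o(1)}$ because $r=\log^{o(1)}n$.

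Finally, $\log t!=(1-o(1))\,t\log t$ and $\log t=(1-o(1))\log\log(1/\eps)\ge(1-o(1))\log\log n$, while $\log d\le a\log\log n$. Hence
\[
t!\ \ge\ (1/\eps)^{1/a-o(1)},
\]
which gives the claimed bound. Streaming correctness and the union bound over time steps come from \Cref{lem:merge_reduce}.

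\textbf{Lower bound.} Reduce from $\texttt{INDEX}_{2n',n'}$ (\Cref{thm:indexing}), where $n'\le n$ is chosen so that the target bound is $\Theta(n')$. Alice and Bob share random keys $k_1,\dots,k_{2n'}$, Gaussian or uniform on $\mathcal S_{r'}$ with $r'\le r$. Alice streams $K_\bx$ and sends two things: the \texttt{ALG} state, and the low-degree side information, namely the projection of $\expker$ onto Hermite components of degree $\le t$. By \Cref{prop:hermite:basis} and \Cref{cor:hermite:exponent_expansion}, these components are polynomials of degree $\le t$, so by \Cref{lem:embed_memory} they cost $(d+t)^t/t!$ bits. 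Bob queries $q=k_i$, subtracts the low-degree part, and thresholds the residual.

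The signal is the degree-$>t$ self term, of order $r'^{2(t+1)}/(t+1)!$ after Hermite normalization. The noise has two sources:
\begin{itemize}
\item the algorithm error $\eps\,\expker(K_\bx,q)\approx\eps n'$, by \Cref{lem:rand_dataset_kde};
\item the random cross terms $\sum_{j\ne i}\bx_j\expker_{>t}(k_j,k_i)$. By orthogonality of the Hermite basis these are mean-zero, with variance about $n'\sum_{l>t}\mathbb E\langle k_j,k_i\rangle^{2l}/(l!)^2$. \Cref{lem:expectation_product} bounds this by $n'\sum_l (2r'^4/d)^l/l!$.
\end{itemize}

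The Hermite choice is what makes this work. Unlike Taylor truncation, orthogonality kills the covariance between the low-degree projection and the tail. The obstacle is that the Hermite basis is orthonormal for a Gaussian argument $\langle k_j,q\rangle$, while the self term $\langle k_i,k_i\rangle=r'^2$ is evaluated far out in the tail. There one needs \Cref{cor:hermite:asymptotic}, $\mathrm{He}_l(x)\approx x^l$ for $x=\omega(l)$, to show the signal remains $\approx r'^{2l}/l!$. I expect this to be the main technical step, and it is also where $a>2$ (that is, $d\gg t^2$) is required.

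Choosing $t$ with $d^t/t!\approx n'$ and setting $\eps n'\approx\text{signal}$ forces $n'\approx(1/\eps)^{1-1/a-o(1)}$. If Bob succeeds with constant advantage, the total message length is $\Omega(n')$. The side information uses only $o(n')$ bits, or can be balanced to $n'/2$ by adjusting $t$. The remaining $\Omega(n')$ must therefore come from the streaming state, which gives the stated $\tilde\Omega\bigl(\min\{n,(1/\eps)^{1-1/a-o(1)}\}\bigr)$ bound.
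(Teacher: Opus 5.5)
Your upper bound is the paper's argument: the Taylor-moment sketch of \Cref{lem:embed_memory} for $\expker_{\le t}$, plus a Merge-and-Reduce coreset for the positive definite kernel $\expker_{>t}$, which gives \eqref{eq:ub}. Your choice $t=\lceil\log(1/\eps)/\log d\rceil$ works as well as the paper's balanced choice, since both terms become $(1/\eps)^{1+o(1)}/t!$ and $t!\ge(1/\eps)^{1/a-o(1)}$. The sentence about $t\le d^{1/2}$ is garbled. What you actually need, $t\le d$ and $t^2/d=o(\log(1/\eps))$, follows from $t=O(\log n/\log\log n)$ and $d\ge\log n$.

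Your lower bound follows the paper's route, INDEX with Hermite side information. However, the step you single out as the crux is not delivered by the tool you cite. \Cref{cor:hermite:asymptotic} controls $\mathrm{He}_l(y)$ only when $y=\omega(l)$, which here means $l=o(\sqrt d)$. The signal, though, is the whole tail $e^{\lambda^2/2}\sum_{l>t}\frac{\lambda^l}{l!}\mathrm{He}_l(y)$ with $y=\langle k_i,q\rangle/\lambda\approx\sqrt d$. For large $l$ the values $\mathrm{He}_l(y)$ are no longer $\approx y^l$ and need not be positive. So term-by-term asymptotics do not show that the remainder $\beta=\sum_{l\ge t+2}\frac{\lambda^l}{l!}\mathrm{He}_l(y)$ is harmless.

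The paper closes this with a separate argument. Take $f(s)=e^{ys-s^2/2}$. By \Cref{cor:hermite:derivative}, the Hermite partial sums are exactly the Taylor polynomials of $f$ at $0$ evaluated at $\lambda$. The Lagrange remainder then gives $\beta=\frac{\lambda^{t+2}}{(t+2)!}\mathrm{He}_{t+2}(y-\xi)f(\xi)$ for some $\xi\in(0,\lambda)$. This is nonnegative because $y-\xi=\omega(t)$. A direct bound via $|\mathrm{He}_l(y)|\le\mathbb{E}|y+iZ|^l$ would also work, but some such argument is needed.

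Three further points need fixing.

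\begin{itemize}
\item Mean zero and the exact per-key variance $e^{\lambda^2}\sum_{l>t}\lambda^{2l}/l!$ require $\langle k_j,q\rangle\sim\mathcal N(0,\lambda^2)$ with the same $\lambda$ used in the side information. This holds for Gaussian keys $\mathcal N(0,\frac{r'^2}{d}I_d)$ with the normalized query $q=k_i r'/\|k_i\|$, which is the paper's choice. It fails for uniform-sphere keys, and for the unnormalized $q=k_i$ with a fixed $\lambda$.
\item The variance you wrote, $n'\sum_{l>t}\mathbb{E}\langle k_j,k_i\rangle^{2l}/(l!)^2$, is a Taylor-monomial expression and also drops cross terms. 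It differs from the Hermite value only by $2^{O(t)}=(1/\eps)^{o(1)}$, so your conclusion survives. The derivation should still use the Hermite coefficients.
\item The side information must be $o(n')$, not $n'/2$, because \Cref{thm:indexing} only gives an unspecified constant. This requires the window $\frac{(d+t)^t}{t!}\log n\ll n'\ll\min\{\frac{d^{t+1}}{(t+1)!},\frac{r'^{2t+2}}{\eps(t+1)!}\}$ to be nonempty. It is, since $d/(t+1)=\omega(\log n)$ for $a>2$. When $n'=n$, you must pick $t$ from $n$ rather than from $\eps$, then verify the $\eps$-condition; that check is exactly where $n\le(1/\eps)^{1-1/a-o(1)}$ is used.
\end{itemize}
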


We prove the upper bound in \Cref{sec:ub-high-temp} below, the lower bound in \Cref{sec:lb-high-temp} below. These two results are put together in \Cref{sec:high_tem_putting_together} to obtain the proof of \Cref{th:const_r_ub_lb}.
\subsection{Upper Bound}\label{sec:ub-high-temp}
In this section, we assume that the parameters satisfy $r = \log^{o(1)} n$, $\log n \leq d \leq \log^a n$ for some constant $a > 1$ and $n^{-100} < \eps < n^{-c}$ as in the statement of \cref{th:const_r_ub_lb}.
Recall that for any $k, q\in \R^d$, we denote $\expker_{\leq t} (k,q ) := \sum_{l=0}^t \frac{\langle k,q \rangle^l}{l!}$  and $\expker_{>t} (k,q ) := \sum_{l=t+1}^{\infty} \frac{\langle k,q \rangle^l}{l!}$.
So, for any nonnegative integer $t$,
\[\expker(K, q) =\sum_{k \in K} \sum_{l=0}^\infty  \frac{\langle k,q \rangle^l}{l!} = \sum_{k \in K}  \expker_{\leq t} (k,q )+ \expker_{> t} (k,q ).\]

Given a stream of at most $n$ data points (denoted as $K$), the data structure maintains:
\begin{enumerate}
    \item a  sketch $\sum_{k \in K} \psi(k)$ for some feature map $\psi$, and
    \item a coreset $C \subset K$,
\end{enumerate}
such that for any query $q \in \R^d$ and some fixed  integer $t$  of our choice,  one can use the  sketch $\sum_{k \in K} \psi(k)$ to approximate $\sum_{k \in K}  \expker_{\leq t} (k,q)$, and the coreset $C$ to approximate $\sum_{k \in K}  \expker_{>t} (k,q)$.

This is formalized in the following \cref{lem:map} and \cref{lem:coreset}. We will give the constructions of $\psi$ and $C$, and then show that there exists $t$ such that the total space usage of
$\sum_{k \in K} \psi(k)$ and $C$ is $\tilde{O}\qty( (1/\eps)^{1- \frac{1}{a} + o(1)})$.

\begin{lemma}\label{lem:map}
Let  $t$ be an integer such that $0 \leq t \leq d$. There is a streaming algorithm that processes a stream $K = (k_1,  \cdots, k_n) \subset  \mathcal{B}_r$ and, at every time step $j \in [n]$, maintains a sketch $ \sum_{k \in K_j} \psi_1(k)$  for the prefix $K_j = (k_1, \cdots, k_j)$ such that the following holds: 
    
    There is an explicit feature map $\psi_2$ such that for any $q \in \mathcal{B}_r$,
    \[\qty|\left\langle \sum_{k \in K_j} \psi_1(k), \psi_2(q)  \right\rangle - \sum_{k \in K_j}  \expker_{\leq t}(k, q)| \leq \eps \expker(K_j, q).\]
    Moreover, $ \sum_{k \in K_j} \psi_1(k)$ can be stored in $\tilde{O}\Bigl(\frac{(d+t)^{t}}{t!}\Bigr)$ bits of space, and can be updated in $\tilde{O}\Bigl(\frac{(d+t)^{t}}{t!}\Bigr)$  time
\end{lemma}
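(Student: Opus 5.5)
The plan is to apply \Cref{lem:embed_memory} directly to the truncated kernel $\gamma(k,q) := \expker_{\leq t}(k,q) = \sum_{l=0}^{t} \frac{1}{l!}\langle k, q\rangle^l$. This is a degree-$t$ polynomial in the inner product with coefficients $a_l = 1/l!$, so $\max_l |a_l| = 1$. The lemma then provides feature maps $\psi_1(k) = (k^{\otimes l}/l!)_{l=0}^t$ and $\psi_2(q) = (q^{\otimes l})_{l=0}^t$ with $\langle \psi_1(k), \psi_2(q)\rangle = \expker_{\leq t}(k,q)$ exactly. The quantity to maintain is the running sum $S_j = \sum_{k \in K_j} \psi_1(k)$, which is linear in the stream. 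On arrival of $k_{j}$ we compute $\psi_1(k_j)$, with repeated monomials glued so the dimension is $O\bigl(\tbinom{d+t}{t}\bigr) = O\bigl(\frac{(d+t)^t}{t!}\bigr)$ by \Cref{prop:poly_embed_dim}, and add it coordinatewise to the stored (rounded) vector. The update time is therefore dominated by evaluating $\psi_1(k_j)$ and performing the addition, and both costs equal the stated space bound.

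The only real issue is converting the additive error $\delta$ of \Cref{lem:embed_memory} into the multiplicative guarantee $\eps \expker(K_j,q)$. For this I would use the trivial lower bound $\expker(K_j,q) \geq j e^{-r^2} \geq e^{-r^2}$, valid since $\langle k,q\rangle \geq -r^2$ on $\mathcal{B}_r$. Taking $\delta = \eps e^{-r^2}$ suffices. It is also necessary to control rounding accumulation across streaming updates. I would round each $\psi_1(k_j)$ to precision $\eta/n$ before adding, so the accumulated $\ell_\infty$ error stays at most $\eta$; equivalently, I would apply the bound of \Cref{lem:embed_memory} with $|K|$ replaced by $n$. This gives the same per-coordinate guarantee as in that proof.

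The remaining step is checking that the bit count is $\tilde O\bigl(\frac{(d+t)^t}{t!}\bigr)$, which amounts to showing the logarithmic factor
\[
\log\frac{n\,(d+t)^t \max\{r,1\}^t e^{r^2}}{\eps}
\]
is $\poly\log n$. Using $t \leq d \leq \log^a n$, $r = \log^{o(1)} n$ and $\eps > n^{-100}$, each term is at most $O(\log n)$, $O(t\log(2d)) = \poly\log n$, $O(t\log r) = \poly\log n$, $r^2 = \poly\log n$, and $100\log n$, respectively. The magnitudes of the stored coordinates are also bounded by $n\max\{r,1\}^t$, so the integer part needs only $\poly\log n$ bits as well. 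I expect no serious obstacle here; the mild care needed is the choice of $\delta$ relative to $\expker(K_j,q)$ and the rounding bookkeeping in the streaming updates.
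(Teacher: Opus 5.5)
Your proposal is correct and matches the paper's proof: apply \Cref{lem:embed_memory} with $\gamma=\expker_{\leq t}$ and additive error $\delta=\eps e^{-r^2}$, use $\expker(K_j,q)\geq e^{-r^2}$ to turn this into the multiplicative guarantee, and check that the logarithmic factor is polylogarithmic in the stated parameter regime. The one addition is your rounding bookkeeping across streaming updates (precision $\eta/n$ per update), which the paper leaves implicit; it is a valid refinement and costs only $O(\log n)$ extra bits per coordinate.
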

\begin{proof}
    The construction of $\psi_1, \psi_2$ is given by \Cref{lem:embed_memory}, with $\gamma$ set to $\expker_{\leq t}$ and the target additive error $\delta$ set to $\eps e^{-r^2}$. Since $\expker(K_j,q)\ge e^{-r^2}$ for every $q\in \mathcal{B}_r$, this gives error at most
$\epsilon\expker(K_j,q)$.

Moreover, the logarithmic factor in \Cref{lem:embed_memory} is
\[
O\left(
\log n+\log \left(\frac{(d+t)^t}{t!}\right)+t\log\max\{r,1\}+\log(1/\epsilon)+r^2
\right).
\]
Under the parameter regime of Theorem 2.1, this factor is $\tilde O(1)$.
Hence the space and update time are
$\tilde O((d+t)^t/t!)$.
\end{proof}

\begin{lemma}\label{lem:coreset}
Let  $t$ be an integer such that $0 \leq t \leq d$.  There is a streaming algorithm that processes a stream $K = (k_1,  \cdots, k_n) \subset \mathcal{B}_r$ and, at every time step $j$,  maintains a weighted coreset $C_j$ for the prefix $K_j = (k_1, \cdots, k_j)$ such that the following holds: 
    
    For any fixed $q \in \mathcal{B}_r$, with probability $1-\frac{1}{\poly n}$,
    \[
        \left| \sum_{k \in K_j}\expker_{> t}(k, q) -   \sum_{k \in C_j} w_k \expker_{> t}(k, q) \right| \leq \eps  \expker(K_j, q)
     \]
     where $w_k$ denoting the weight of $k$ in the coreset  $C_j$. Moreover, $C_j$ can be stored in  $\tilde{O} \qty(\frac{ e^{r^2}}{\eps} \cdot \sum_{l=t+1}^\infty  \frac{r^{2l}}{l!} )$ bits of space, and can be updated in  $\tilde{O} \qty((\frac{ e^{r^2}}{\eps} \cdot \sum_{l=t+1}^\infty  \frac{r^{2l}}{l!} )^2)$ time.

\end{lemma}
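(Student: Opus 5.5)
We plan to apply \Cref{lem:simple_coreset} to the kernel $\gamma=\expker_{>t}$ with $U=V=\mathcal{B}_r$, and then lift to the streaming setting through \Cref{lem:merge_reduce}, choosing the block size $b$ so that the accumulated additive error is at most $\eps e^{-r^2}\cdot j\cdot\Omega(1)$. We then argue that this is at most $\eps\,\expker(K_j,q)$.

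\textbf{Step 1: positive definiteness and kernel norms.} We have $\expker_{>t}(k,q)=\sum_{l>t}\langle k^{\otimes l},q^{\otimes l}\rangle/l!$. Hence $\psi(x)=(x^{\otimes l}/\sqrt{l!})_{l>t}$ is a feature map into $\ell_2$. This series converges since $\sum_l \|x\|^{2l}/l!<\infty$. Therefore $\expker_{>t}$ is positive definite in the sense of \Cref{def:pos_def}, and for $u\in\mathcal{B}_r$,
\[
\expker_{>t}(u,u)=\sum_{l>t}\frac{\|u\|^{2l}}{l!}\le S_t:=\sum_{l=t+1}^\infty\frac{r^{2l}}{l!}.
\]
By \Cref{lem:simple_coreset}, one halving step on a set of size $b$ incurs error $\eta(b)=O(S_t\log(n))$ with probability $1-1/\poly n$. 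Kernel evaluations cost $\tilde O(d)$ plus the cost of evaluating a series, which we compute as $e^{x}-\expker_{\le t}(x)$ to polylog precision.

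\textbf{Step 2: Merge-and-Reduce.} By \Cref{lem:merge_reduce}, the error is $O(S_t\log n\cdot \frac{j}{b}\log\frac{j}{b})$. We need a lower bound on the target. Since $\expker(k,q)\ge e^{-r^2}$ for all $k,q\in\mathcal{B}_r$, we have $\expker(K_j,q)\ge j e^{-r^2}$. It therefore suffices that $S_t\log n\log(j/b)/b\le \eps e^{-r^2}$, i.e.
\[
b=\Theta\!\left(\frac{e^{r^2}S_t\log^2 n}{\eps}\right).
\]
The coreset then has size $O(b\log n)$, and storage is $\tilde O(b)$ bits because each point takes $\tilde O(d)=\tilde O(1)$ bits and each weight is a power of $2$. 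The update time is $T(b)\log n=\tilde O(b^2)$, which matches the claim.

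\textbf{Step 3: edge cases.} If $j<b$, we simply store $K_j$ exactly, with zero error. The space stays within $\tilde O(b)$, with all parameters fixed in advance using the known $n$.

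\textbf{Main obstacle.} The delicate part is bookkeeping, not conceptual. We must check that rounding stored points to polylog bits perturbs $\expker_{>t}$ by at most $\eps e^{-r^2}/\poly(n)$ per point. This is fine because $\eps>n^{-100}$ and $r=\log^{o(1)}n$, so polylog bits suffice. We must also confirm that the failure probabilities of all $\poly(n)$ invocations union-bound correctly, which \Cref{lem:merge_reduce} already handles.
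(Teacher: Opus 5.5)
Your proposal is correct and follows the paper's proof essentially step for step. Both use positive definiteness of $\expker_{>t}$, then \Cref{lem:simple_coreset} with $U=V=\mathcal{B}_r$ giving $\eta(b)=\tilde O(\sum_{l>t} r^{2l}/l!)$, then Merge-and-Reduce via \Cref{lem:merge_reduce}, and finally the lower bound $\expker(K_j,q)\ge j e^{-r^2}$ to fix $b=\tilde\Theta\bigl(\frac{e^{r^2}}{\eps}\sum_{l>t}r^{2l}/l!\bigr)$; your explicit feature map and your precision and edge-case remarks are just bookkeeping that the paper leaves implicit.
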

\begin{proof}
For any integer $t \geq 0$, the kernel $\expker_{> t}$ is positive definite. Therefore, we can apply \cref{lem:simple_coreset} with $\gamma = \expker_{> t}$, $\delta = \frac{1}{\poly n}$ and $U, V = \mathcal{B}_r$. This gives an offline black-box algorithm, \BaseCompress, which we can plug into the Merge-and-Reduce framework of \cref{lem:merge_reduce} to obtain a streaming algorithm, whose approximation error and space and time complexities depend on a block size parameter $b$ that we may choose.

We want to choose $b$ so that the streaming algorithm gives the desired error bound $ \eps  \expker(K_j, q)$, i.e. 
\[\eta(b) \cdot  \frac{j}{b} \log \frac{j}{b} \leq \Theta(\eps  \expker(K_j, q)).\]
$\eta (b)$ here is the error of \BaseCompress, which, regardless of the choice of $b$, is always at most 
\[O\qty(\max_{u: \|u\|_2 \leq r} \expker_{> t}(u, u) \cdot \log n) =\tilde{O} \qty(\sum_{l=t+1}^\infty  \frac{r^{2l}}{l!}).\] Therefore, it suffices to choose $b \in [n]$ that satisfies
\[b \geq \tilde{\Theta} \qty(\frac{1}{\eps}\cdot  \frac{j}{ \expker(K_j, q)} \cdot \sum_{l=t+1}^\infty  \frac{r^{2l}}{l!} ).\]
Since the algorithm does not know the exact value of $ \expker(K_j, q)$ ahead of time, we lower bound it as
$ \expker(K_j, q) \geq j e^{-r^2} $. So we may choose 
\[b  = \tilde{\Theta} \qty(\frac{ e^{r^2}}{\eps} \cdot \sum_{l=t+1}^\infty  \frac{r^{2l}}{l!} )\]
which gives the desired approximation error. As stated in \cref{lem:merge_reduce}, the space complexity is $\tilde{O}\left(b\log\frac{j}{b}\right )$; the update time is $\tilde{O}(b^2)$ since $\exp_{>t}$ can be computed in $\tilde{O}(d+t) = \tilde{O}(1)$ time.  Plugging in our choice of $b$ gives the claimed space and time.

\end{proof}

Running both streaming algorithms from \cref{lem:map} and \cref{lem:coreset}, we can approximate both $ \expker_{\leq t} (k,q)$ and $ \expker_{> t} (k,q)$, with $O(\eps \expker(K_j, q))$ additive error, which satisfies the error guarantee in the upper bound of \cref{th:const_r_ub_lb}.

Finally, it remains to choose $t$ with $0 \leq t\leq d$  to obtain the claimed upper bound on the total space usage. For a given $t$,
the total space is
\[
\tilde O\left(
\frac{(d+t)^t}{t!}
+
\frac{e^{r^2}}{\epsilon}
\sum_{l=t+1}^{\infty}\frac{r^{2l}}{l!}
\right).
\]
We choose $t>2r^2$. 
Then we can derive an upper bound:

\begin{align*}
    \frac{(d+t)^{t}}{t!} + \frac{e^{r^2}}{\eps}\cdot \sum_{l=t+1}^\infty  \frac{r^{2l}}{l!} &\leq \frac{(d+t)^{t}}{t!} + \frac{1}{\eps}\cdot   \frac{2r^{2t}}{t!} \\
    &\leq (d+t)^{t} t^{-t}e^t + \frac{2e^{r^2}}{\eps}\cdot  t^{-t}e^tr^{2t} 
\end{align*}
where the first inequality follows from the fact that when $t > 2r^2$, for any $l \geq t$, 
\[\frac{r^{2(l+1)}/(l+1)!}{r^{2l}/l!} = \frac{r^2}{l+1} \leq \frac{r^2}{t+1} < 1/2,\] so the tail terms $\frac{r^{2l}}{l!} $ decrease geometrically with a ratio $< 1/2$. The second inequality follows from Stirling's approximation that $\frac{1}{t!} \leq t^{-t}e^t$.
To balance the two terms in this upper bound, we observe that their ratio is 
\[\rho := \qty((d+t)^{t} t^{-t}e^t)\Big / \qty(\frac{2e^{r^2}}{\eps}\cdot  t^{-t}e^tr^{2t}) = \frac{\eps}{2e^{r^2}} (\frac{d+t}{r^2})^t.  \]
We choose $t = \floor{\frac{\log (e^{r^2} / \eps)}{\log (d/r^2)}}$, which indeed satisfies $2r^2 \ll t \ll d$ for our regime of $d \in [\log n, \log^a n]$, $r = (\log n)^{o(1)}$, and $\log (1/\eps) = \Theta(\log n)$. Moreover, this choice of $t$ implies 
\[  \frac{2e^{r^2}}{\eps} \asymp (\frac{d}{r^2})^t ,\]
so the ratio $\rho$ is at most $O(1) \cdot (1+\frac{t}{d})^t = O(e^{t^2/d})$. Importantly, this is $(1/\eps)^{o(1)}$ since $t^2/d = O(\frac{\log^2(1/\eps)}{d \log^2d}) = o(\log(1/\eps))$. In other words, the two terms are balanced, up to an $(1/\eps)^{o(1)}$ factor. Finally, to derive the space complexity bound, we compute
\begin{align}
\begin{split}
    t^{-t}e^tr^{2t} &= \qty(\frac{(1 + o(1))\log d\log^{o(1)} n}{\log (1/\eps)})^t \label{eq:small_r_runtime} \\
    &=  \exp(t \cdot \log \frac{\log^{o(1)} n}{\log (1/\eps)}) 
    \\
    &= \exp(-(1 + o(1))\frac{\log (1 / \eps)}{\log d}\log \log (1/\eps)) \\
    &= (1/\eps)^{-\frac{\log\log(1/\eps)}{\log d} (1 + o(1))}  
    \end{split}
\end{align}

This, multiplied with $\frac{2e^{r^2}}{\eps}$, is the second term in the upper bound of the space.  We also observe that $2e^{r^2} = (1/\eps)^{o(1)}$ in our regime. Therefore, the total  space usage is 
\[\tilde{O}(1/\eps \cdot (1/\eps)^{o(1)} \cdot (1/\eps)^{-\frac{\log\log(1/\eps)}{\log d} (1 + o(1))}.\]
In particular, because $d\le \log^a n$ and $\log(1/\epsilon)=\Theta(\log n)$,
we have
\[
\frac{\log\log(1/\epsilon)}{\log d}
\ge
\frac1a-o(1),
\]
thus the space usage is
$\tilde O\!\left(
(1/\epsilon)^{1-\frac{1}{a}+o(1)}
\right)$ as claimed.

\begin{remark}
    As shown in \cref{lem:map} and \cref{lem:coreset}, the update time of this algorithm is at most quadratic in the storage space. 
\end{remark}
\subsection{Lower Bound}\label{sec:lb-high-temp}

We will use a reduction from the communication problem of $\texttt{INDEX}_{2n,n}$ to prove a space lower bound on the Streaming KDE problem (\cref{def:streaming-kde}). 

\begin{lemma}\label{lem:reduction}
Let $S_{n, d, r, \eps}$ denote a deterministic upper bound on the space, measured in bits, of any (randomized) data structure for the Streaming KDE problem with parameters $n, d, r, \eps$ and a failure probability $1/100$.
Assume that $r = \log^{o(1)}n$, $d = \Theta(\log^a n)$ for some constant $a > 2$, and $n^{-100} < \eps  < n^{-c}$ for an arbitrarily small constant $c > 0$. 

Then, there always exists some $m$ such that $\min\{(1/\eps)^{1-\frac{1}{a}-o(1)}, n\} \leq m\leq n$, and that there is a communication protocol with a message size $S_{n, d, r, \eps} + o(m)$ which solves the $\texttt{INDEX}_{2m, m}$ problem with probability at least $9/10$.
\end{lemma}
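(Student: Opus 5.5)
The plan follows the technical overview's sketch. We fix $m$ and a degree cutoff $t$, and the protocol has three parts. First, Alice and Bob use shared randomness to sample $2m$ i.i.d. keys $k_1,\dots,k_{2m}$, uniform on the sphere $\mathcal{S}_{r'}$ with $r'=r/2$ (or Gaussian with variance $r'^2/d$ per coordinate). Second, Alice streams $K_\bx=\{k_i:\bx_i=1\}$ (exactly $m$ keys) into \texttt{ALG} and sends its state. Third, she sends as side information a rounded sketch of the low-degree part of $K_\bx$, namely $\sum_{k\in K_\bx}\psi_1(k)$. Bob queries at $q=\lambda' k_i$, suitably scaled so that $\|q\|\le r$. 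He subtracts the low-degree part and thresholds the remainder.

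The low-degree basis should be Hermite rather than Taylor, as the overview insists. The reason is that the noise should be measured in $L^2$ of the distribution of $\langle k_j,q\rangle$, which is approximately $\mathcal{N}(0,\lambda^2)$ with $\lambda^2\approx \|q\|^2 r'^2/d$. By \Cref{cor:hermite:exponent_expansion},
\[
e^{x}=e^{\lambda^2/2}\sum_{l\ge 0}\frac{\lambda^l}{\sqrt{l!}}\ph_{l,\lambda}(x).
\]
We define $\Gamma_{\le t}$ to be the first $t+1$ terms; it is a degree-$t$ polynomial in $\langle k,q\rangle$. By \Cref{lem:embed_memory} its sum over $K_\bx$ can be stored in $\tilde O((d+t)^t/t!)$ bits, with error $\ll$ the signal. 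Bob computes
\[
R=\widehat{\expker}(K_\bx,q)-\Gamma_{\le t}(K_\bx,q).
\]
The true remainder equals $\bx_i\Gamma_{>t}(k_i,q)+\sum_{j\ne i}\bx_j\Gamma_{>t}(k_j,q)$, and we also have the error $\pm\eps\expker(K_\bx,q)\approx \pm\eps m$, using \Cref{lem:rand_dataset_kde} for the bound $\expker\approx m$.

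The three terms are estimated as follows. \emph{Signal}: $\Gamma_{>t}(k_i,q)$ at $x=\langle k_i,q\rangle=\|q\|r'\gg\lambda$. By \Cref{cor:hermite:asymptotic} the Hermite tail is dominated by its first term and is roughly $x^{t+1}/(t+1)!$, i.e. comparable to $\sum_{l>t} r^{2l}/l!$ up to $(1/\eps)^{o(1)}$, once the constants/scalings of $r',\lambda$ are tuned. \emph{Noise}: since $k_j$ is independent of $q$, orthonormality (\Cref{prop:hermite:basis}) gives mean zero (approximately, after correcting the sphere-vs-Gaussian discrepancy; the Gaussian model of keys makes this exact). The variance is $e^{\lambda^2}\sum_{l>t}\lambda^{2l}/l!$ per term, so the sum over $m$ terms has standard deviation $\sqrt{m}\,(\lambda^{t+1}/\sqrt{(t+1)!})$. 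This is tiny because $\lambda^2\sim r^4/d\ll 1$. \emph{Algorithm error}: at most $\eps m$.

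Success (Chebyshev, probability $9/10$ after averaging over $i$ and the randomness) requires signal $\gg \eps m+\sqrt{m}\cdot\text{noise}$. We therefore pick $m\approx \min\{n,\ \tfrac{1}{\eps}\sum_{l>t}r^{2l}/l!\}$, and pick $t$ so that $(d+t)^t/t!=o(m)$. With $d=\Theta(\log^a n)$ and $t\approx \log(1/\eps)/\log d$ minus a small slack, the calculation mirroring \eqref{eq:small_r_runtime} gives $m\ge(1/\eps)^{1-1/a-o(1)}$. The message size is $S_{n,d,r,\eps}+o(m)$. Padding $K_\bx$ to length $n$ is unnecessary, since the stream length $m\le n$.

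The main obstacle is the signal-versus-noise bookkeeping. First, the signal point $x=\|q\|r'$ must be far enough in the Hermite-tail regime that $\ph_{l,\lambda}(x)\approx (x/\lambda)^l/\sqrt{l!}$ for all $l>t$. This needs $x/\lambda\gg t$, i.e. $\sqrt d\gg t$, which is exactly where $a>2$ enters (since $t\approx\log n/(a\log\log n)$ must be $\ll \log^{a/2}n$). Second, the noise variance computation must be made rigorous for sphere-distributed keys, or one should switch to Gaussian keys and truncate norms via \Cref{prop:norm_gauss}. I would first try Gaussian keys with $q$ a normalized copy of $k_i$, conditioning on $q$ so that the other $\langle k_j,q\rangle$ are exactly $\mathcal{N}(0,\lambda^2)$. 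Finally, the $\eps\expker$ term must be controlled with $\expker(K_\bx,q)=O(m)$ with high probability, via \Cref{lem:rand_dataset_kde}.
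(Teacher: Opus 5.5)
Your setup is the paper's: Gaussian keys with variance $r'^2/d$, the query $k_j r'/\|k_j\|$, Hermite truncation at scale $\lambda=r'^2/\sqrt d$, and Chebyshev on the off-diagonal terms. The gap is in the noise step, and therefore in your choice of $(m,t)$. The noise is not small ``because $\lambda^2\ll1$''. Dividing its standard deviation $\sqrt{m}\,\lambda^{t+1}/\sqrt{(t+1)!}$ by the signal $\approx r'^{2t+2}/(t+1)!$ gives exactly $\sqrt{m(t+1)!/d^{t+1}}$, and $\lambda$ cancels. So you need the extra constraint $m\ll d^{t+1}/(t+1)!$, which is the second premise of \Cref{lem:gap}. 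Combined with the sketch constraint $\frac{(d+t)^t}{t!}\log n\ll m$, it confines $m$ to a window of width only about $d/(t\log n)$ above $d^t/t!$. In other words, the side information must be almost as large as $m$, and that is the real tradeoff of the reduction.

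Your recipe ignores this constraint: it sets $m$ from the $\eps m$ condition alone and then takes any $t$ with sketch size $o(m)$. That can fail in both regimes:
\begin{itemize}
\item With $m=n$ and $t$ so small that $n\gg d^{t+1}/(t+1)!$, the noise swamps the signal even though the sketch is $o(n)$. Already at $t=0$ the ratio is $\sqrt{n/d}$.
\item In the large-$n$ case, take $t\approx\log(1/\eps)/\log d$ and $m\approx\frac1\eps r'^{2t+2}/(t+1)!$. Then $m(t+1)!/d^{t+1}\approx r'^{2t+2}/d$, which can be superpolylogarithmic for $r'=\log^{o(1)}n$.
\end{itemize}

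The paper imposes all three premises of \Cref{lem:gap} simultaneously.
\begin{itemize}
\item For large $n$, it takes $t=\lfloor(\log(1/\eps)-3\log\log(1/\eps))/\log(d/r'^2)\rfloor$ (note $d/r'^2$, not $d$). It then picks $m$ between $\frac{d^t}{t!}\log n$ and $\min\{\frac{d^{t+1}}{(t+1)!},\frac1\eps\frac{r'^{2t+2}}{(t+1)!}\}$.
\item When $m=n$ is forced, it takes $t\approx\frac{a}{a-1}\frac{\log n}{\log d}$ so that $n$ lands in the window. It then checks $\frac1\eps(r'^2/d)^{t+1}=\omega(1)$, which is exactly where $n\le(1/\eps)^{1-1/a-o(1)}$ is used.
\item The intermediate range is handled by relaxing $\eps$ to $\eps^{1-o(1)}$.
\end{itemize}

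There is a smaller gap in the signal bound. \Cref{cor:hermite:asymptotic} controls $\mathrm{He}_l(x/\lambda)$ only for $l\ll\sqrt d$, not ``for all $l>t$''. The far tail terms could be negative and need separate treatment. The paper shows that the part of the tail beyond $l=t+1$ is nonnegative, using the Lagrange remainder together with \Cref{cor:hermite:derivative}.
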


% Using the public random coins, Alice and Bob will decide on a point set $P$ of $2n$.
% Consider Alice's input $\bx \in \{0,1\}^{2n}$ with $\|\bx\|_0 = n$. Let $K_\bx \subset P$ be the set of points indexed by the ones in $\bx$.
% Let $S_{n, d, r, \eps}$ be a deterministic upper bound on the space, measured in bits, of any (randomized) data structure for the Streaming KDE problem with error parameter $\eps$ and failure probability $1/100$.
% The core lemma of this section is the reduction from \texttt{INDEX}.
% \begin{lemma}\label{lem:reduction}
% \justin{add bounds on $d, \eps, r, n$}
% Consider a point set $P$ such that each $k \in P$ is drawn i.i.d.\ from $\mathcal{N}\p{0, \frac{r'^2}{d} I_d}$ generated with public coins where $r' = r/2$.
% Let $\bx$ be Alice's input from the $\texttt{INDEX}_{2n, n}$ problem.
% Then, there exists a communication protocol with a message size $S_{n, d, r, \eps} + o(n)$ which solves the $\texttt{INDEX}_{2n, n}$ problem with probability at least $9/10$.
% \end{lemma}
% Concretely, that is when there exists some integer $t$ such that the storage space of \cref{lem:psi} simplifies to $o(n)$  and the premise of  \cref{lem:gap} holds. We will discuss the parameter setting at the end of this section.
 
This lemma, combined with the bound $R^{\text{pub}, \to}(\texttt{INDEX}_{2m, m}) = \Omega(m)$ from \cref{thm:indexing} implies the lower bound of \cref{th:const_r_ub_lb}. 
% as it must be the case that $S_{n, d, r, \eps} = \Omega(n)$.
The rest of the section is dedicated to proving \cref{lem:reduction}.

\paragraph{The reduction.}
Assume that we have chosen an appropriate $m$, which we will describe later. Let $r' := r/2$. Using the public randomness, Alice and Bob agree on a point set $K = \{k_1, k_2, \cdots, k_{2m}\} \subset \R^d$ such that $k_i \sim \mathcal{N}\p{0, \frac{r'^2}{d} I_d}$ i.i.d. for each $k_i \in K$.
Given her input string $\bx \in \{0,1\}^{2m}$ from the $\texttt{INDEX}_{2m, m}$ problem, Alice defines a corresponding dataset $K_\bx = \{ k_i \in K\mid \bx_i = 1\}$. Then she constructs her message to Bob, which contains two components:
for the first component, Alice builds a streaming KDE data structure  with parameters $n, d, r, \eps$. This takes $S_{n, d, r, \eps}$ bits. Note that for $m \leq n$, such a KDE data structure can indeed solve the instance with the dataset $K_\bx$. Also, note that the use of the data structure requires that the input points are promised to have norm at most $r$. On the other hand, the point set $K$ contains vectors which are sampled from $\mathcal{N}\p{0, \frac{r'^2}{d} I_d}$ with unbounded worst-case norm. We make the following standard argument:

\begin{claim}\label{lem:shell}
     With $.99$ probability  over the random choice of $K$, all vectors $k \in K$ satisfy
    $\|k\|_2 \leq r$.
    Similarly, for any single $v \sim  \mathcal{N}\p{0, \frac{r'^2}{d} I_d}$, with $.99$ probability, $\|v\|_2 = (1\pm O(\frac{1}{\sqrt{d}}))r'$
\end{claim}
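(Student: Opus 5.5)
Both parts of the claim reduce to concentration of the norm of a scaled Gaussian, so \Cref{prop:norm_gauss} is the tool I would use, together with a union bound for the first part. Write each key as $k=\frac{r'}{\sqrt d}g$ with $g\sim\mathcal N(0,I_d)$. Then
\[
\|k\|_2=\frac{r'}{\sqrt d}\|g\|_2 ,
\]
and $\|k\|_2\le r=2r'$ is equivalent to $\|g\|_2\le 2\sqrt d$.

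\textbf{Second statement.} Take a single $v$ and apply \Cref{prop:norm_gauss} with $w$ a large enough constant, so that $2e^{-Cw^2}\le 0.01$. With probability at least $.99$ we get $\bigl|\|g\|_2-\sqrt d\bigr|\le w$. Multiplying through by $r'/\sqrt d$ gives
\[
\|v\|_2=r'\Bigl(1\pm \tfrac{w}{\sqrt d}\Bigr)=\Bigl(1\pm O\bigl(\tfrac{1}{\sqrt d}\bigr)\Bigr)r'.
\]

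\textbf{First statement.} Apply \Cref{prop:norm_gauss} to each of the $2m\le 2n$ keys with $w=\sqrt d$. The event $\|g\|_2\ge 2\sqrt d$ is contained in $\bigl|\|g\|_2-\sqrt d\bigr|\ge \sqrt d$, so each key fails with probability at most $2e^{-Cd}$. A union bound gives total failure probability at most $4n e^{-Cd}$. In the regime $d=\Theta(\log^a n)$ with $a>2$ we have $d\gg\log n$, so $4ne^{-Cd}=o(1)\le 0.01$ for large $n$. Hence with probability at least $.99$ every $k\in K$ satisfies $\|k\|_2\le r$.

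There is no real obstacle here. The one point to check is that the union bound over $2m$ vectors needs $d\gtrsim\log n$, and the parameter assumptions of \cref{lem:reduction} supply this. If one wants constants explicit, note also that $w=\sqrt d\ge1$, as \Cref{prop:norm_gauss} requires.
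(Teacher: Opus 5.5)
Your proof is correct and follows the paper's argument: write $k=\frac{r'}{\sqrt d}g$, apply \Cref{prop:norm_gauss} with a union bound over the $2m$ keys for the first part, and with a constant $w$ for the single vector $v$ in the second part. The only difference is the deviation parameter in the first part. The paper takes $w=\Theta(\sqrt{\log m})$, which gives the slightly stronger $\|k_i\|_2=(1\pm O(\sqrt{\log n/d}))r'$ for all keys, while you take $w=\sqrt d$, which is exactly enough for $\|k\|_2\le 2r'=r$; both choices rely on $d\gg\log n$.
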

\begin{proof}
We write each $k_i \in K$ as $k_i = \frac{r'}{\sqrt{d}}g_i$ for i.i.d. $g_i  \sim \mathcal{N}(0,1)$. Then applying  \cref{prop:norm_gauss}  and a union bound give
$\|g_i\|_2 = \sqrt{d} \pm \Theta(\sqrt{\log m}) = \sqrt{d} \pm O(\sqrt{\log n})$ simultaneously for all $g_i$ with $.99$ chance. On this event,  $\|k_i\|_2 \leq \frac{r'}{\sqrt{d}}( \sqrt{d} \pm O(\sqrt{\log n})) \leq r$ for  $d \gg \log n$ and sufficiently large $n$.

Similarly, applying \cref{prop:norm_gauss}  on scaled $v$ gives $\|v\|_2 \leq \frac{r'}{\sqrt{d}}( \sqrt{d} \pm \Theta(1))$, which is $ (1 \pm O(\frac{1}{\sqrt{d}}))r'$ as claimed.
\end{proof}

% So the failure probability of the first component is at most $.01$ from \cref{lem:shell} plus $.1$ from the KDE data structure.

For the second component of Alice's message,  similar to the upper bound proof, Alice defines a feature map associated with the degree-$t$ truncation of the exponential kernel and sends the aggregated sketch of $K_\bx$. However, instead of truncating the Taylor expansion as in \cref{lem:map}, here we will truncate the Hermite expansion of the exponential kernel. Concretely, recall from \cref{cor:hermite:exponent_expansion} that the exponential function can be written as 
\[e^x = \sum_{l=0}^\infty c_l \varphi_{l, \lambda}(x),\] where $c_{l, \lambda} := e^{\lambda^2/2} \frac{\lambda^l}{\sqrt{l!}}$ and $\ph_{l, \lambda}$ is the rescaled $l$-th Hermite polynomial with rescaling factor $\lambda$.
We will set $\lambda = \frac{r'^2}{\sqrt{d}}$.
We denote
\[T_{\leq t}(x) := \sum_{l=0}^t c_{l, \lambda} \ph_{l, \lambda}(x),\]
and
\[
    T_{> t}(x) := \sum_{l=t+1}^\infty c_{l, \lambda} \ph_{l, \lambda}(x).
\]
The goal of the second component is to allow Bob to approximate
$\sum_{k \in K_\bx} T_{\leq t} (\langle k, q\rangle)$ at his query $q$. We formalize this in the following lemma:

\begin{lemma}\label{lem:psi}
    For any integer $t \geq 0$, there exist embeddings   $\psi_A, \psi_B$ such that 
    \[
    \abs{\left \langle \sum_{k \in K_\bx} \psi_A(k), \psi_B(q) \right \rangle - \sum_{k \in K_\bx} T_{\leq t} (\langle k, q\rangle)} \leq \eps m,
    \]
    and that $\sum_{k \in K_\bx} \psi_A(k)$ can be stored in $O\p{\frac{(d+t)^t}{t!} \p{t\log(d+t) + \log(n/\eps)}}$ bits of space.
\end{lemma}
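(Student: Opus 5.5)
The plan is to reduce the claim to \Cref{lem:embed_memory}. The function $x \mapsto T_{\leq t}(x) = \sum_{l=0}^t c_{l,\lambda}\ph_{l,\lambda}(x)$ is a univariate polynomial of degree at most $t$. Hence $T_{\leq t}(\langle k,q\rangle) = \sum_{l=0}^t a_l \langle k,q\rangle^l$ for explicit coefficients $a_l$. We compute these by expanding each $\ph_{l,\lambda}(x) = \frac{1}{\sqrt{l!}}\text{He}_l(x/\lambda)$ with the explicit coefficient formula of \Cref{prop:hermite:coefficients} and collecting powers of $x$. We then set $\gamma(k,q) := T_{\leq t}(\langle k,q\rangle)$ and take $\psi_A := \psi_1$, $\psi_B := \psi_2$ from \Cref{lem:embed_memory}, namely $\psi_A(k) = (a_l k^{\otimes l})_{l\le t}$ and $\psi_B(q) = (q^{\otimes l})_{l \le t}$, with monomials glued. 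We apply it on the ball $\mathcal{B}_r$; by \Cref{lem:shell} all points of $K$, and the query $q = k_i$, lie there with the required probability. We choose target error $\delta = \eps m$. This immediately gives the approximation guarantee, and the embedding dimension is $O\bigl(\frac{(d+t)^t}{t!}\bigr)$ by \Cref{prop:poly_embed_dim}.

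What remains is the bit complexity per coordinate. \Cref{lem:embed_memory} charges
\[
O\Bigl(\log \tfrac{|K_\bx| (d+t)^t \max_l |a_l| \max\{r,1\}^t}{\delta}\Bigr),
\]
and we need this to be $O(t\log(d+t) + \log(n/\eps))$. We have $|K_\bx| = m \le n$ and $\delta = \eps m$, so $|K_\bx|/\delta = 1/\eps$. Also $(d+t)^t$ contributes $t\log(d+t)$, and $\max\{r,1\}^t$ contributes $t\log\max\{r,1\}$. Since $r = \log^{o(1)} n \le d$, this last term is absorbed into $t\log(d+t)$.

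The step needing care is bounding $\max_l |a_l|$. For $j \le t$, the coefficient of $x^j$ is
\[
a_j = \sum_{l} c_{l,\lambda}\frac{1}{\sqrt{l!}}\cdot l!\,\frac{(-1)^h}{h!\,2^h\,(l-2h)!}\,\lambda^{-j},
\]
where the sum ranges over $l \le t$ with $l - j = 2h$ even. Substituting $c_{l,\lambda} = e^{\lambda^2/2}\lambda^l/\sqrt{l!}$, the $l!$ factors cancel, and each term has absolute value $e^{\lambda^2/2}\lambda^{2h}/(h!\,2^h\,j!) \le e^{\lambda^2/2}\cdot e^{\lambda^2/2}$. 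Summing at most $t+1$ terms gives
\[
|a_j| \le (t+1)e^{\lambda^2}.
\]
With $\lambda = r'^2/\sqrt d \le r^2 = \log^{o(1)} n$, we get $\log\max_l|a_l| \le \log(t+1) + \lambda^2 = O(\log n)$, which is absorbed into $\log(n/\eps)$.

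Combining the three estimates, each coordinate needs $O(t\log(d+t) + \log(n/\eps))$ bits. Multiplying by the dimension gives the claimed $O\bigl(\frac{(d+t)^t}{t!}(t\log(d+t)+\log(n/\eps))\bigr)$ bits. The only nonroutine point is this coefficient bound, which shows that re-expanding the Hermite basis in monomials does not blow up the bit precision.
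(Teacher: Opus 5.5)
Your proposal is correct and follows the paper's route: apply \Cref{lem:embed_memory} to $\gamma(k,q)=T_{\leq t}(\langle k,q\rangle)$ with $\delta=\eps m$, and check that the monomial coefficients of $T_{\leq t}$ are only polynomially large. The paper gets $\max_l|a_l|=O(t)$ by using $\lambda=r'^2/\sqrt{d}=o(1)$, whereas your bound $|a_j|\le (t+1)e^{\lambda^2}$ collects the contributions to each $x^j$ more carefully and does not need $\lambda$ small; in both cases the coefficient term is absorbed into $t\log(d+t)+\log(n/\eps)$.
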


\begin{proof}
Using \Cref{lem:embed_memory}, it suffices to give polynomial bounds on the coefficients $\{a_l\}_{l=0}^t$ of $T_{\leq t}$.
The coefficients of $T_{\leq t}$ have magnitude upper bounded by $t\cdot\max_{l}^t |c_{l, \lambda}|$ times the maximum coefficient norm of $\ph_{l, \lambda}$. These coefficients are given in \Cref{prop:hermite:coefficients}, therefore this expands to:
\begin{align*}
    \max_{l = 0}^t |a_l|
    &= t\max_{l=0}^t \abs{e^{\lambda^2/2} \frac{\lambda^l}{\sqrt{l!}} \p{\sqrt{l!} \sum_{h=0}^{\floor{l/2}} \frac{(-1)^h}{h!2^h}\frac{(1/\lambda)^{l-2h}}{(l-2h)!}}} \\
    &= t\max_{l=0}^t \abs{e^{\lambda^2/2} \p{ \sum_{h=0}^{\floor{l/2}} \frac{(-1)^h}{h!2^h}\frac{\lambda^{2h}}{(l-2h)!}}}.
\end{align*}
Recall that we chose $\lambda = r'^2/\sqrt{d}$. Since $d = \Omega(\log^a n), a > 2$ and $r' = (\log n)^{o(1)}$, we have $\lambda = o(1) < 1$ for all sufficiently large $n$.
Therefore,
\[
    \max_{l = 0}^t |a_l|
    \leq O(t) \cdot \max_{l = 0}^t \abs{\sum_{h=0}^{\floor{l/2}} \frac{(-1)^h}{h!2^h}\frac{1}{(l-2h)!}}
    \leq O(t).
\]
% Recall that we have conditioned on $\|p\|_2 \leq r$ for all $k \in P$.
% Therefore, each coordinate of every point $k \in P$ has magnitude upper bounded by $r = O(1)$.
% Then,
% \[
%     \max_{k=0}^t \|p\|_\infty^k \leq r^t.
% \]
% The same holds for $q$: the maximum value of any coordinate of $\psi_B(q)$ is $r^t$.
% Let $\psi_A(p)$ denote the truncation of each coordinate of $\psi_A(p)$ to $b$ bits of precision for some parameter $b > 0$.
% We will upper bound the precision error by summing over all coordinates of the embedding.
% For any $\gamma > 0$, with $b = O\p{\log(\p{\frac{(d+t)^t}{t!}}r^{2t}/\gamma)} = O\p{t\log(d+t) + \log(1/\gamma)}$ bits of precision, 
% \[
% \abs{\left\langle \psi_A(p), \psi_B(q) \right\rangle - \left\langle \psi_A(p), \psi_B(q) \right\rangle} \leq \gamma.
% \]
% Therefore,
% \[
% \abs{\left \langle \sum_{k \in K_\bx} \psi_A(p), \psi_B(q) \right \rangle - \langle \sum_{k \in K_\bx} \psi_A(p), \psi_B(q) \right \rangle}
% = \abs{\left \langle \sum_{k \in K_\bx} \psi_A(p), \psi_B(q) \right \rangle - \sum_{k \in K_\bx} T_{\leq t} (\langle k, q\rangle)}
% \leq \gamma n.
% \]
% Setting $\gamma = \eps/n$ and $b = O(t\log(d+t) + \log(n/\eps))$ completes the proof.
\end{proof}

After receiving the  KDE data structure of $K_\bx$ and $\sum_{k \in K_\bx} \psi_A(k)$, Bob queries the data structure with $q$ and computes $ \langle \sum_{k \in K_\bx} \psi_A(k), \psi_B(q)  \rangle$, where $q = k_j\cdot\frac{r'}{\|k_j\|}$ for $j\in [2n]$ being his input in the $\texttt{INDEX}$  problem.

With $.99$ chance, the KDE data structure outputs $\widehat{\expker}(K_\bx , q)$ with 
\[ |\widehat{\expker}(K_\bx , q) - \expker(K_x, q) | \leq \eps \expker(K_\bx , q) \leq \Theta(\eps m).\]
In the last inequality we used $\exp(k, q) \leq e^{r^2} = n^{o(1)} = \eps^{o(1)}$ and rescaled $\eps \to \eps^{1 - o(1)}$ to unclutter notation. This doesn't affect the final bound since $o(1)$ is already in the exponent.
Equivalently,
\[ \left |\widehat{\expker}(K_\bx , q) - \sum_{k \in K_\bx}\left(T_{\leq t}(\langle k, q\rangle) + T_{> t}(\langle k, q\rangle)\right)\right | \leq\Theta(\eps m), \]
thus
\begin{equation*}
    \left| \left(\widehat{\expker}(K_\bx , q) -\sum_{k \in K_\bx} T_{\leq t} (\langle k, q\rangle) \right) - \sum_{k \in K_\bx}T_{> t}(\langle k, q\rangle) \right | \leq \Theta(\eps m).
\end{equation*} 
Since $\langle \sum_{k \in K_\bx} \psi_A(k), \psi_B(q)  \rangle$ approximates $\sum_{k \in K_\bx} T_{\leq t} (\langle k, q\rangle)$ up to $\eps m$ error,
\[
    S := \widehat{\expker}(K_\bx , q) - \left\langle \sum_{k \in K_\bx} \psi_A(k), \psi_B(q)  \right\rangle
\]
approximates $\sum_{k \in K_\bx}T_{> t}(\langle k, q\rangle)$ up to $\Theta(\eps m)$. We argue that Bob can decide whether $x_j = 1$ (i.e., whether $k_j \in K_\bx$) by thresholding $S$. This follows from the next lemma:
\begin{lemma}\label{lem:gap}
    Let $q_1 = k_{j_1}\cdot\frac{r'}{\|k_{j_1}\|}$, for some $k_{j_1} \in K_\bx$ and let $q_2 = k_{j_2}\cdot\frac{r'}{\|k_{j_2}\|}$ for some $k_{j_2} \in (K \setminus K_\bx)$.  Given $t$ such that 
    \begin{itemize}
        \item $t= o(\sqrt{d})$,
        \item  $m\frac{(t+1)!}{d^{t+1}} = o(1)$, and
        \item $\eps m = o( \frac{r'^{2t+2}}{(t+1)!})$,
    \end{itemize}  with $.96$ probability, 
    \[\sum_{k \in K_\bx} T_{> t}(\langle k, q_1\rangle)- \sum_{k \in K_\bx} T_{> t}(\langle k, q_2\rangle) = \omega(\eps m).\]
\end{lemma}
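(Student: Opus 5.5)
We split each sum into the ``self'' term and the ``cross'' terms. For $q_1$, the sum $\sum_{k\in K_\bx} T_{>t}(\langle k,q_1\rangle)$ contains the signal $T_{>t}(\langle k_{j_1},q_1\rangle)$, with $\langle k_{j_1},q_1\rangle=\|k_{j_1}\|r'=(1\pm O(d^{-1/2}))r'^2$ by \Cref{lem:shell}. For $q_2$ no such term appears. The plan has three steps: show the signal is $\Omega\bigl(r'^{2t+2}/(t+1)!\bigr)$, which is $\omega(\eps m)$ by the third hypothesis; show each noise sum $\sum_{k\in K_\bx, k\neq k_{j}}T_{>t}(\langle k,q\rangle)$ has magnitude $o(\eps m)$, or at least $o$ of the signal, with probability $.98$ each; then union bound with \Cref{lem:shell}.

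\textbf{Signal.} Write $x=\langle k_{j_1},q_1\rangle\approx r'^2$ and $\lambda=r'^2/\sqrt d$, so $x/\lambda\approx\sqrt d$. For $l\le t+O(1)=o(\sqrt d)$, \Cref{cor:hermite:asymptotic} gives $\text{He}_l(x/\lambda)=(1\pm o(1))(x/\lambda)^l$. Hence
\[
c_{l,\lambda}\ph_{l,\lambda}(x)=e^{\lambda^2/2}\frac{\lambda^l}{l!}\text{He}_l(x/\lambda)\approx \frac{x^l}{l!}.
\]
To handle the whole tail rather than term by term, we use the identity $T_{>t}(x)=e^x-T_{\le t}(x)$ and compare it with the Taylor remainder $\sum_{l>t}x^l/l!$. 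By \Cref{prop:hermite:coefficients}, each lower-order Hermite correction is of relative size $l^2\lambda^2/x^2 = l^2/d=o(1)$. Since $\lambda^2=o(1)$, the factor $e^{\lambda^2/2}$ is $1+o(1)$. Since $x\le r^2\ll t$, the Taylor tail is dominated by its first term, so $T_{>t}(x)=(1\pm o(1))x^{t+1}/(t+1)!\ge \Omega(r'^{2t+2}/(t+1)!)$, using $(1-O(d^{-1/2}))^{t+1}=1-o(1)$ since $t=o(\sqrt d)$. If a cruder bound on $|T_{\le t}(x)-\exp_{\le t}(x)|$ is needed, we control it directly via \Cref{cor:hermite:derivative}.

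\textbf{Noise.} This is the main step. Fix $q$, which is independent of the other $k\in K$; condition on the set $K_\bx$. For $k\sim\mathcal N(0,\frac{r'^2}{d}I_d)$, the projection $\langle k,q\rangle$ is exactly distributed as $\mathcal N(0,r'^2\cdot r'^2/d)=\mathcal N(0,\lambda^2)$. This is the reason for the choice $\lambda=r'^2/\sqrt d$ and for normalizing $q$ to norm exactly $r'$. By \Cref{prop:hermite:basis}, the functions $\ph_{l,\lambda}$ are orthonormal in $L^2(\mathcal N(0,\lambda^2))$. Therefore each cross term has mean $\E\,T_{>t}(\langle k,q\rangle)=0$ (as $t\ge0$, it is orthogonal to $\ph_0$) and second moment
\[
\sum_{l>t}c_{l,\lambda}^2=e^{\lambda^2}\sum_{l>t}\frac{\lambda^{2l}}{l!}\approx\frac{\lambda^{2t+2}}{(t+1)!},
\]
and the $\le 2m$ terms are independent. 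By Chebyshev, the noise is $O\bigl(\sqrt{m\lambda^{2t+2}/(t+1)!}\bigr)$ with probability $.98$. Compare this with the signal squared, $r'^{4t+4}/(t+1)!^2$: the ratio is
\[
m\frac{\lambda^{2t+2}(t+1)!}{r'^{4t+4}}=m\frac{(t+1)!}{d^{t+1}}=o(1)
\]
by the second hypothesis. So the noise is $o(\text{signal})$.

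\textbf{Conclusion.} The difference of the two sums is therefore $(1-o(1))\cdot\text{signal}=\omega(\eps m)$. The failure probabilities are $.01$ from \Cref{lem:shell} plus $.01+.02$ from Chebyshev for the two noise sums, for a total at most $.04$.

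I expect the delicate point to be the signal estimate: making precise that the Hermite tail at argument $\approx\sqrt d\,\lambda$ matches the Taylor tail up to a $1\pm o(1)$ factor uniformly over all $l>t$, including the large-$l$ part where \Cref{cor:hermite:asymptotic} no longer applies. I would handle that part by bounding $|\text{He}_l(y)|\le (|y|+\sqrt l)^l$ and using the superexponential decay of $\lambda^l/l!$.
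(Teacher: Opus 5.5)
Your noise step is the paper's argument. You use that $\langle k,q\rangle\sim\mathcal N(0,\lambda^2)$ exactly because $\|q\|_2=r'$, that orthogonality to $\ph_{0,\lambda}$ gives mean zero, that the variance is $e^{\lambda^2}\sum_{l>t}\lambda^{2l}/l!=O(\lambda^{2t+2}/(t+1)!)$, and you finish with Chebyshev and the ratio $m(t+1)!/d^{t+1}=o(1)$.

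\textbf{The gap is in the signal step as you argue it.} You write $T_{>t}(x)=e^x-T_{\le t}(x)$ and compare $T_{\le t}(x)$ with $\sum_{l\le t}x^l/l!$ using per-term relative errors $O(l^2/d)$ and the factor $e^{\lambda^2/2}=1+o(1)$. This only gives $\bigl|T_{\le t}(x)-\sum_{l\le t}x^l/l!\bigr|\le o(1)\cdot e^{x}$. You need that difference to be $o\bigl(x^{t+1}/(t+1)!\bigr)$, which is exponentially smaller than $e^x$ once $t\gg x$. Concretely, the $l=0$ term alone is off by $e^{\lambda^2/2}-1\approx\lambda^2/2=r'^4/(2d)$. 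That exceeds the signal by a factor of order $(t+1)!/(d\,r'^{2t-2})$, which is enormous once $t\gg r'^2$, as in \Cref{lem:reduction}. This discrepancy is cancelled exactly by the $-\lambda^2/2$ lower-order part of the $l=2$ term, and so on. The lower-order Hermite coefficients reassemble the Taylor series of $e^{-\lambda^2/2}$, which kills the prefactor $e^{\lambda^2/2}$. Termwise relative bounds cannot detect this cancellation, so you need a much finer bound there, not a ``cruder'' one. As written, the paragraph does not prove the lower bound.

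\textbf{Your closing remark points to a route that does work:} compare term by term only over $l>t$. Pick a cutoff $L$ with $\max(t,r'^2)\ll L\ll\sqrt d$; this is possible since $t=o(\sqrt d)$ and $r'^2=\lambda\sqrt d=o(\sqrt d)$. For $t<l\le L$, \Cref{cor:hermite:asymptotic} makes every term positive and $(1\pm o(1))x^l/l!$, so together they contribute at least $(1-o(1))x^{t+1}/(t+1)!$. For $l>L$, the bound $|\text{He}_l(y)|\le(|y|+\sqrt l)^l$ gives $\sum_{l>L}(x+\lambda\sqrt l)^l/l!=o\bigl(x^{t+1}/(t+1)!\bigr)$. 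You only need a lower bound, so the claim $t\gg r^2$ is unnecessary; it is not a hypothesis of the lemma anyway. All you use is $\sum_{l>t}x^l/l!\ge x^{t+1}/(t+1)!$.

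The paper avoids the large-$l$ bookkeeping entirely. It splits off the $(t+1)$-st term $\alpha$ and shows that the remainder $\beta$ is nonnegative via Taylor's theorem with Lagrange remainder for $f(s)=e^{ys-s^2/2}$, $y=x/\lambda$. By \Cref{cor:hermite:derivative}, $\beta=\frac{\lambda^{t+2}}{(t+2)!}\text{He}_{t+2}(y-\xi)f(\xi)$ for some $\xi\in(0,\lambda)$, and $\text{He}_{t+2}(y-\xi)>0$ because $y-\xi=\omega(t)$. Doing the same one order earlier gives the exact identity $T_{>t}(x)=e^{\lambda^2/2}\frac{\lambda^{t+1}}{(t+1)!}\text{He}_{t+1}(y-\xi)f(\xi)\ge(1-o(1))\frac{x^{t+1}}{(t+1)!}$. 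That is presumably what your appeal to \Cref{cor:hermite:derivative} was reaching for.
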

% \todo{Should we use Gaussian instead so we can make use of Hermite polynomials?}
% \bor{We should use Gaussian with $r' = r / 2$. Then we do the same proof as before but for $r'$ instead of $r$, and the only thing that we didn't check last time that the dataset actually lies in the ball with radius $r$. But now we have it with high probability.}
% \todo{I'm a bit worried that for Gaussians, at the bottom of page 6 we only say $\|p\|_2 \leq r$, which means the LHS of claim 3 needs to argue for some quantity $T_{>t}(<r^2)$}

Observe that $q_1$ corresponds to the case  $x_j = 1$ and $q_2$ corresponds to $x_j = 0$. Therefore, \cref{lem:gap} implies that in these two cases, $S$ has an $\omega(\eps m)$ gap after tolerating $\Theta(\eps m)$ approximation error, thus Bob can distinguish the two cases. 

\begin{proof}[Proof (of \cref{lem:gap})]
It suffices to show that with high probability,
\[T_{>t}(\langle k_{j_1}, q_1 \rangle) = \omega(\eps m) +  \sum_{k \in K_\bx} T_{> t}(\langle k, q_2\rangle) - \sum_{k \in (K_\bx\setminus \{k_{j_1}\})} T_{> t}(\langle k, q_1\rangle) .\]

We proceed by lower bounding $T_{>t}(\langle k_{j_1}, q_1 \rangle)$ (\cref{clm:lower_bound}) and upper bounding ~$ \sum_{k \in K_\bx} T_{> t}(\langle k, q_2\rangle) - \sum_{k \in (K_\bx\setminus \{q_1\})} T_{> t}(\langle k, q_1\rangle) $ (\cref{clm:upper_bound}).
\begin{claim}\label{clm:lower_bound}
    In the success case of \cref{lem:shell}, $T_{>t}(\langle k_{j_1}, q_1 \rangle) = \Omega\left(\frac{r'^{2t+2}}{(t+1)!}\right)$.
\end{claim}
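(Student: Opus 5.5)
The plan is to avoid summing the infinite Hermite tail term by term. Instead, I will write $T_{>t}$ as a Lagrange remainder of a one-variable Taylor expansion, using \Cref{cor:hermite:derivative}.

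\textbf{Step 1: rewrite $T_{>t}$ as a Taylor remainder.} Set $x := \langle k_{j_1}, q_1\rangle = r'\|k_{j_1}\|_2$ and $y := x/\lambda$. By the definition of $c_{l,\lambda}$ and $\ph_{l,\lambda}$, we have $c_{l,\lambda}\ph_{l,\lambda}(x) = e^{\lambda^2/2}\frac{\lambda^l}{l!}\text{He}_l(y)$. Let $f(s) := e^{ys - s^2/2}$. By \Cref{cor:hermite:derivative} at $s=0$, $f^{(l)}(0) = \text{He}_l(y)$. Hence $e^{-\lambda^2/2}T_{\le t}(x)$ is exactly the degree-$t$ Taylor polynomial of $f$ at $0$, evaluated at $\lambda$. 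Also $e^{-\lambda^2/2}e^x = f(\lambda)$, which is \Cref{cor:hermite:exponent_expansion}. The Lagrange form of the remainder then gives, for some $\xi\in[0,\lambda]$,
\[
T_{>t}(x) = e^{\lambda^2/2}\,\frac{\lambda^{t+1}}{(t+1)!}\, f^{(t+1)}(\xi) = e^{\lambda^2/2}\,\frac{\lambda^{t+1}}{(t+1)!}\,\text{He}_{t+1}(y-\xi)\, f(\xi),
\]
where the last equality again uses \Cref{cor:hermite:derivative}.

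\textbf{Step 2: bound each factor.} I work in the success event of \Cref{lem:shell}, so $\|k_{j_1}\|_2 = (1\pm O(1/\sqrt d))r'$. Hence $x = (1\pm O(1/\sqrt d))r'^2$. Since $\lambda = r'^2/\sqrt d$, this gives $y = (1\pm O(1/\sqrt d))\sqrt d$.
\begin{itemize}
\item As noted in the proof of \Cref{lem:psi}, $\lambda = o(1)$. Therefore $0\le \xi\le\lambda<1$, so $f(\xi)\ge e^{-\xi^2/2}\ge e^{-1/2}$ and $e^{\lambda^2/2}\ge 1$.
\item We have $y-\xi \ge \sqrt d\,(1-o(1)) - 1$. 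Since $t=o(\sqrt d)$, this is $\omega(t+1)$. \Cref{cor:hermite:asymptotic} then gives $\text{He}_{t+1}(y-\xi) = (1\pm o(1))(y-\xi)^{t+1}$.
\end{itemize}
Combining these,
\[
T_{>t}(x) = \Omega\!\left(\frac{(\lambda y - \lambda\xi)^{t+1}}{(t+1)!}\right).
\]

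\textbf{Step 3: the main obstacle.} The remaining task is to check that the base $\lambda y - \lambda \xi$ raised to the power $t+1$ loses only a constant factor relative to $r'^{2t+2}$. We have $\lambda y = x = (1- O(1/\sqrt d))r'^2$ at worst. We also have $\lambda\xi\le\lambda^2 = r'^4/d = r'^2\cdot o(1/\sqrt d)$. So $\lambda y-\lambda\xi \ge r'^2(1-O(1/\sqrt d))$, and therefore
\[
(\lambda y-\lambda\xi)^{t+1}\ge r'^{2t+2}\,(1-O(1/\sqrt d))^{t+1} = r'^{2t+2}\, e^{-O(t/\sqrt d)} = \Omega(r'^{2t+2}).
\]
The last step uses the hypothesis $t=o(\sqrt d)$. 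This is where that assumption is essential: without it, the $1/\sqrt d$ fluctuation of the Gaussian norm would compound into a non-constant loss. Plugging this into the display at the end of Step 2 yields the claim.
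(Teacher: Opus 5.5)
Your proof is correct and rests on the same device as the paper's: the Lagrange remainder for $f(s)=e^{ys-s^2/2}$, combined with \Cref{cor:hermite:derivative} and \Cref{cor:hermite:asymptotic}. The only difference is bookkeeping. The paper splits off the $(t+1)$-th Hermite term, evaluated exactly at $s=0$, and uses the remainder at order $t+2$ only to show the rest of the tail is nonnegative; you instead apply the remainder at order $t+1$ to the whole tail, which avoids the split at the cost of two easy extra estimates, $f(\xi)\ge e^{-1/2}$ and $\lambda\xi\le\lambda^2=o(r'^2/\sqrt d)$.
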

\begin{proof}
Since $q_1$ is a scaled version of $k_{j_1}$, $\langle k_{j_1}, q_1 \rangle = \|k_{j_1}\|_2 \|q_1\|_2$.
Also, we note that $\|q_1\|_2 = \|k_{j_1}\|_2 \frac{r'}{\|k_{j_1}\|_2} = r'$ by definition.
Therefore, by \cref{lem:shell} applied to $\|k_{j_1}\|_2$, $\langle k_{j_1}, q_1 \rangle = \left(1 \pm O(1/\sqrt{d})\right) r'^2$.

Then using the Hermite polynomial expansion (\cref{cor:hermite:exponent_expansion}), we can expand as follows:
\begin{align*}
    T_{>t}(\langle k_{j_1}, q_1 \rangle) &= \sum\limits_{l = t + 1}^{\infty}c_l\ph_{l, \lambda}(\langle k_{j_1}, q_1 \rangle) \\
    &= e^{\lambda^2/2}\sum\limits_{l = t + 1}^{\infty}\frac{\lambda^l}{l!}\text{He}_l(\langle k_{j_1}, q_1 \rangle/\lambda) \\
    &= e^{\lambda^2/2} \left (\underbrace{\frac{\lambda^{t+1}}{(t+1)!}\text{He}_{t+1}(\langle k_{j_1}, q_1 \rangle/\lambda)}_{=: \alpha} + \underbrace{\sum\limits_{l = t + 2}^{\infty}\frac{\lambda^l}{l!}\text{He}_l(\langle k_{j_1}, q_1 \rangle/\lambda)}_{=: \beta} \right)
\end{align*}
We will show that $\alpha = \Omega\left(\frac{r'^{2t+2}}{(t+1)!}\right)$ and $\beta  \geq 0$. Then the conclusion follows immediately. 
Since we set $\lambda = \frac{r'^2}{\sqrt{d}}$ and $t = o(\sqrt{d})$, in the success case of \cref{lem:shell}, $\langle k_{j_1}, q_1 \rangle/\lambda \geq \left(1- O\left(\frac{1}{\sqrt{d}}\right)\right) \sqrt{d} = \omega(t)$.
\cref{cor:hermite:asymptotic} gives 
\begin{align*}
\text{He}_{t+1}(\langle k_{j_1}, q_1 \rangle/\lambda) &=  (1 \pm o(1) )(\langle k_{j_1}, q_1 \rangle/\lambda)^{t+1},
\end{align*}
so
\[\alpha = \Omega \left(\frac{\lambda^{t+1}}{(t+1)!} (\langle k_{j_1}, q_1 \rangle/\lambda)^{t+1} \right) =  \Omega\left(\frac{\left(1-O\left(\frac{1}{\sqrt{d}}\right)\right)^{2t+2}r'^{2t+2}}{(t+1)!}\right) =  \Omega\left(\frac{r'^{2t+2}}{(t+1)!}\right).\]
The last inequality follows from $\left(1-O\left(\frac{1}{\sqrt{d}}\right)\right)^{2t+2} \geq e^{-O\left(\frac{1}{\sqrt{d}}\right) (2t+2)} =e^{-o(1)}$.

It remains to show that $\beta \geq 0$. Let $f(s) :=e^{(\langle k_{j_1}, q_1 \rangle/\lambda)s  - s^2/2}$, then  the generating function (\cref{prop:hermite:exponent_expansion}) implies
\[\beta = \sum\limits_{l = t + 2}^{\infty}\frac{\lambda^l}{l!}\text{He}_l(\langle k_{j_1}, q_1 \rangle/\lambda) = f(\lambda) - \sum\limits_{l = 0}^{t+1}\frac{\lambda^l}{l!}\text{He}_l(\langle k_{j_1}, q_1 \rangle/\lambda);\]
applying \cref{cor:hermite:derivative} at $s = 0$, this is further  $\beta = f(\lambda) - \sum\limits_{l = 0}^{t+1}\frac{\lambda^l\cdot f^{(l)}(0)}{l!}.$  On the other hand, by Taylor's theorem (with Lagrange remainder),
\[f(\lambda)  = \sum\limits_{l = 0}^{t+1}\frac{\lambda^l\cdot f^{(l)}(0)}{l!} + \frac{\lambda^{t+2}\cdot f^{(t+2)}(\xi)}{(t+2)!}\]
for some $\xi \in (0, \lambda)$. Together, these give 
$\beta = \frac{\lambda^{t+2}\cdot f^{(t+2)}(\xi)}{(t+2)!}$. At this point, we can apply \cref{cor:hermite:derivative}  again at $s = \xi$ and get

\[\beta = \frac{\lambda^{t+2}\cdot \text{He}_{t+2}(\langle k_{j_1}, q_1 \rangle/\lambda - \xi)f(\xi)}{(t+2)!}.\]

To reach our desired conclusion that $\beta \geq 0$, it suffices to argue that $\text{He}_{t+2}(\langle k_{j_1}, q_1 \rangle/\lambda - \xi) \geq 0$. Indeed, this follows immediately from $\langle k_{j_1}, q_1 \rangle/\lambda - \xi = \Theta(\sqrt{d}) - o(1) = \omega(t)$ and an application of \cref{cor:hermite:asymptotic}.

\end{proof}

\begin{claim}\label{clm:upper_bound}
    With $.96$ probability, $ \sum_{k \in K_\bx} T_{> t}(\langle k, q_2\rangle) - \sum_{k \in (K_\bx\setminus \{k_{j_1}\})} T_{> t}(\langle k, q_1\rangle) = o\left(\frac{r'^{2t+2}}{(t+1)!}\right)$. 
\end{claim}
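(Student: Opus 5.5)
The plan is to control the difference through its second moment, relying on the orthonormality of the rescaled Hermite basis under a Gaussian measure. Since $q_1$ and $q_2$ are scaled copies of keys, both have norm exactly $r'$. For a key $k$ independent of $q$, the distribution of $k\sim\mathcal N(0,\frac{r'^2}{d}I_d)$ is Gaussian, so $\langle k,q\rangle\sim\mathcal N(0, r'^4/d)=\mathcal N(0,\lambda^2)$. This is exactly the reason for the choice $\lambda=r'^2/\sqrt d$. By \Cref{prop:hermite:basis}, we then have $\E[\ph_{l,\lambda}(\langle k,q\rangle)\ph_{l',\lambda}(\langle k,q\rangle)]=\mathbb 1[l=l']$. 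The exception is $l=0$, which contributes only to degrees $\le t$. Two consequences follow for every $k\neq k_{j}$. First, $\E[T_{>t}(\langle k,q\rangle)]=0$. Second,
\[
\E\big[T_{>t}(\langle k,q\rangle)^2\big]=\sum_{l>t}c_{l,\lambda}^2=e^{\lambda^2}\sum_{l>t}\frac{\lambda^{2l}}{l!}=O\Big(\frac{\lambda^{2t+2}}{(t+1)!}\Big)=O\Big(\frac{r'^{4t+4}}{d^{t+1}(t+1)!}\Big),
\]
where the tail bound uses $\lambda=o(1)$.

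The difference splits into a common part and a sum of independent terms. I write it as $\sum_{k\in K_\bx\setminus\{k_{j_1}\}}\big(T_{>t}(\langle k,q_2\rangle)-T_{>t}(\langle k,q_1\rangle)\big)+T_{>t}(\langle k_{j_1},q_2\rangle)$. The last term involves $k_{j_1}$, which is independent of $q_2$, so it has mean zero and the same variance bound. Conditioned on $q_1$ and $q_2$, the remaining keys $k\in K_\bx\setminus\{k_{j_1}\}$ are i.i.d. and independent of both queries, because $j_2\notin K_\bx$. Each summand therefore has mean zero and second moment at most $4\sum_{l>t}c_{l,\lambda}^2$, by $(a-b)^2\le 2a^2+2b^2$. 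Independence across $k$ then gives
\[
\E[\mathrm{diff}^2]\le O(m)\cdot \frac{r'^{4t+4}}{d^{t+1}(t+1)!}.
\]
Chebyshev's inequality with failure probability $.04$ bounds $|\mathrm{diff}|$ by
\[
O\Big(\sqrt{m\tfrac{1}{d^{t+1}(t+1)!}}\Big)\,r'^{2t+2}=O\Big(\frac{r'^{2t+2}}{(t+1)!}\Big)\cdot\sqrt{m\tfrac{(t+1)!}{d^{t+1}}}.
\]
The hypothesis $m\frac{(t+1)!}{d^{t+1}}=o(1)$ makes the last factor $o(1)$, which is the claim.

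The main obstacle is justifying the conditioning cleanly. The membership of $k$ in $K_\bx$ is determined by Alice's fixed input $\bx$, not by the randomness of $K$, so it does not bias the distribution of $k$. Likewise, normalizing $q_i=k_{j_i}r'/\|k_{j_i}\|$ depends only on $k_{j_i}$, so conditioned on $q_i$, any other key still has $\langle k,q_i\rangle\sim\mathcal N(0,\lambda^2)$ exactly, by rotational invariance. I also need to check that the conditioning event of \Cref{lem:shell} does not break the Gaussian moment computation. To handle this, I will compute the moments unconditionally and absorb the $.01$ failure probability of \Cref{lem:shell} into the union bound, leaving total success probability at least $.96$ after Chebyshev. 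Finally, the terms coupling $q_1$ and $q_2$ through the same $k$ are correlated, but the inequality $(a-b)^2\le2a^2+2b^2$ avoids computing that covariance.
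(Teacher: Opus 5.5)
Your proposal is correct and follows essentially the same route as the paper: orthonormality of the rescaled Hermite basis under $\mathcal N(0,\lambda^2)$ gives each off-diagonal term mean zero and second moment $\sum_{l>t}c_{l,\lambda}^2=O(\lambda^{2t+2}/(t+1)!)$. This is followed by Chebyshev's inequality and the hypothesis $m\frac{(t+1)!}{d^{t+1}}=o(1)$. The only difference is bookkeeping: the paper applies Chebyshev separately to the $q_1$-sum and the $q_2$-sum and combines them with the triangle inequality, whereas you pair the terms into one sum. To make your single Chebyshev step airtight, note that $T_{>t}(\langle k_{j_1},q_2\rangle)$ is not mean-zero conditionally on $q_1$; so either use $(S+X)^2\le 2S^2+2X^2$, or condition on $k_{j_1},k_{j_2}$ rather than on $q_1,q_2$, which makes the cross term vanish.
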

\begin{proof}
Fixing $q_1$ and over the randomness of any $k \neq k_{j_1}$, $\langle k, q_1\rangle \sim \mathcal{N}(0, \lambda^2)$ for $\lambda = \frac{r'^2}{\sqrt{d}}$.
Note that $\lambda = o(1)$ as we are in the parameter regime where $d$ is super-polynomial in $r$.
Let $\mu=\mathcal{N}(0,\lambda^2)$. Since $T_{>t}$ contains no degree-zero Hermite component, by the orthonormality of the Hermite basis $\{\ph_{l,\lambda}\}_{l=0}^{\infty}$ in $L^2(\mu)$ (\cref{prop:hermite:basis}), we have that  $\E[T_{>t}(\langle k,q_1\rangle)] =  \langle T_{>t}, 1\rangle_{L^2(\mu)} = 0$.
Therefore,
\[
\Var{T_{>t}(\langle k,q_1\rangle)}
=
\E\left[T_{>t}(\langle k,q_1\rangle)^2\right]
=
\|T_{>t}\|^2_{L^2(\mu)}.
\]
By the same orthonormality, this equals the sum of squares of the Hermite coefficients, i.e.
\[
\Var{T_{>t}(\langle k,q_1\rangle)}
=
\sum_{l=t+1}^\infty
\left(e^{\lambda^2/2}\frac{\lambda^l}{\sqrt{l!}}\right)^2,
\]
which is $O\left(\frac{\lambda^{2t+2}}{(t+1)!}\right)$ as $\lambda^2=o(1)$.  Then by Chebyshev's inequality, with $.99$ chance,
\[ \left |\sum_{k \in (K_\bx\setminus \{k_{j_1}\})} T_{> t}(\langle k, q_1\rangle) \right |  = O\left(\sqrt{m \cdot \frac{\lambda^{2t+2}}{(t+1)!}}\right).\]

The exact same argument applies for $q_2$ and all keys $k \in K_\bx$.
% Finally, in the success case of \cref{lem:shell}, $\|q_1\|_2, \|q_2\|_2 \leq (1+O(\frac{1}{\sqrt{d}})) r'$, thus 
% \[
%     \sum_{k \in K_\bx} T_{> t}(\langle k, q_2\rangle) - \sum_{k \in (K_\bx\setminus \{q_1\})} T_{> t}(\langle k, q_1\rangle) \leq O\left(\sqrt{n} \cdot \frac{(1+O(\frac{1}{\sqrt{d}}))^{t+1} r'^{2t+2}}{\sqrt{d^{t+1}(t+1)!}}\right)
% \]
Therefore,
\[
    \sum_{k \in K_\bx} T_{> t}(\langle k, q_2\rangle) - \sum_{k \in (K_\bx\setminus \{k_{j_1}\})} T_{> t}(\langle k, q_1\rangle) \leq O\left(\frac{\sqrt{m} \cdot r'^{2t+2}}{\sqrt{d^{t+1}(t+1)!}}\right)
\]
which is $o\left(\frac{r'^{2t+2}}{(t+1)!}\right)$ as long as $m\frac{(t+1)!}{d^{t+1}} = o(1)$, which holds by the parameter assumptions in \cref{lem:gap}.
\end{proof}
Concluding \cref{clm:lower_bound}, \cref{clm:upper_bound}, and using the last parameter assumption of \cref{lem:gap}, we see that
\[T_{>t}(\langle k_{j_1}, q_1 \rangle) = \Omega(\frac{r'^{2t+2}}{(t+1)!}) = \omega(\eps m) +  \sum_{k \in K_\bx} T_{> t}(\langle k, q_2\rangle) - \sum_{k \in (K_\bx\setminus \{k_{j_1}\})} T_{> t}(\langle k, q_1\rangle)\]
as desired. This completes the proof of \cref{lem:gap}.
\end{proof}

With the above ingredients, we are ready to prove our main lemma of this section.

\begin{proof}[Proof of \cref{lem:reduction}]

The total failure probability of the reduction described above is the sum of $.01$ from the KDE data structure, $.02$ from \cref{lem:shell}, and $.02$ from Chebyshev's inequalities. 

Wrapping up the reduction, we next show how to choose $m$ with $\min\{(1/\eps)^{1-\frac{1}{a} - o(1)}, n\} \leq m\leq n$ and an appropriate truncation threshold $t$ that (1) satisfies the premise of \cref{lem:gap} and (2) gives $o(m)$ storage space for \cref{lem:psi}.

We describe our choice of $m$ and $t$ in separate cases, depending on how $n$ compares with $(1/\eps)^{1-\frac{1}{a}}$. In the first case, when $n \geq (1/\eps)^{1-\frac{1}{a} + o(1)}$, we choose $t = \floor{\frac{\log (1/\eps) - 3\log \log(1/\eps)}{\log (d/r'^2)}}$. In the setting of \cref{th:const_r_ub_lb}, $1/\eps \ll n^{100}$, $r' = (\log n)^{o(1)}$, and $d \geq \log^a n$ for $a > 2$, so such $t$ satisfies $t = o(\sqrt{d})$. It remains to show that for such $t$, there is a nonempty interval of admissible values of $m$ that satisfies the premise of \cref{lem:gap} and gives $o(m)$ storage space for \cref{lem:psi}. Equivalently,  there exists $m$ such that
\begin{equation}\label{eq:interval}
\p{\frac{(d+t)^t}{t!} \p{t\log(d+t) + \log(n/\eps)}} \ll m \ll \min\left\{ \frac{d^{t+1}}{(t+1)!}, \frac{1}{\eps} \frac{r'^{2t+2}}{(t+1)!} \right\}.
\end{equation} 
The lower bound side is at most $\Theta(\frac{(d+t)^t}{t!} \log n) = \Theta(\frac{d^t}{t!} \log n)$. This is asymptotically smaller than $\frac{d^{t+1}}{(t+1)!}$ because 
\[\frac{d^{t+1}/(t+1)!}{d^{t}/t!} = \frac{d}{t+1} = \omega(\sqrt{d}) = \omega(\log n).\] Also, this is asymptotically smaller than $\frac{1}{\eps} \frac{r'^{2t+2}}{(t+1)!} $  because 
\[\frac{1}{\eps} \frac{r'^{2t+2}}{(t+1)!}  = \frac{1}{\eps} \left(\frac{r'^2}{d}\right)^{t} \frac{r'^2}{t+1}\cdot \frac{d^{t}}{t!}.\]
By our choice of $t$, $\frac{1}{\eps} (\frac{r'^2}{d})^{t}  = \Omega(\log^3(1/\eps))$, which, multiplied by $\frac{r'^2}{t+1}$ gives $\Omega(\log^2 (1/\eps)) \gg \log n$. Summarizing these, we see that the lower and upper bounds in \cref{eq:interval} indeed induce a non-empty interval. Moreover,  in this interval, $m \gg \frac{d^t}{t!} \log n = (1/\eps)^{1- \frac{\log\log(1/\eps)}{\log d} + o(1)}$ by Stirling (see similar computation in \Cref{eq:small_r_runtime}), which for $d \geq \log^a  n$ becomes $(1/\eps)^{1 - \frac{1}{a} + o(1)}$. Therefore, we successfully find $m$ that satisfy all of our requirements.

In the second case, when $n \leq (1/\eps)^{1-\frac{1}{a} - o(1)}$, we automatically need to set $m = n$. It remains to choose $t = o(\sqrt{d})$ such that $m = n$ satisfies \cref{eq:interval}. As shown in the previous case, the lower bound side of \cref{eq:interval} is at most $\Theta(\frac{d^t}{t!} \log n)$; also $\frac{d^{t+1} / (t+1)!}{d^t/t!} = \omega(\log n)$. Therefore, there is a non-empty interval between the lower bound and the first term of the upper bound. We choose $t$ such that $n$ lies in this interval, and is arbitrarily close to the upper bound. That is, we choose 
\[t = \left(\frac{a}{a-1} + o(1)\right)\frac{\log n}{\log d} =o(\sqrt{d}).\]
Then $ \frac{d^{t}}{t!}\log n \ll n \ll \frac{d^{t+1}}{(t+1)!}$, and the $o(1)$ term in $t$ can be set arbitrarily small at the cost of pushing $n$ closer to this upper bound. Now we look at the ratio between the two upper bound terms in \cref{eq:interval}, namely,
\[
\frac{
\frac1\epsilon\cdot \frac{r'^{2t+2}}{(t+1)!}
}{
\frac{d^{t+1}}{(t+1)!}
}
=
\frac1\epsilon\left(\frac{r'^2}{d}\right)^{t+1}.
\]
This is $\omega(1)$ as long as $\log (1/\eps) \geq \left(\frac{a}{a-1} + o(1)\right) \log n + \omega(1)$, equivalently, $n \leq  (1/\eps)^{1-\frac{1}{a} - o(1)}$, which is exactly our case assumption. In summary, all premises of \cref{lem:gap} hold, and the side information from \cref{lem:psi} has size $o(m)$. 

Finally, when $
(1/\eps)^{1-\frac1a-o(1)}
\le n \le
(1/\eps)^{1-\frac1a+o(1)}$, we choose a slightly larger error parameter $\eps' \ge \eps$ with $\eps'=\eps^{1-o(1)}$, so that $n \geq
(1/\eps)^{1-\frac1a+o(1)}$. In this way, we are back in the first case, and now we are working with a slightly larger error parameter. Since any data structure that gives $\eps$-relative error also gives $\eps'$-relative error, the lower bound for $\eps'$ also applies to the original data structure. 
This concludes the proof to all cases of \cref{lem:reduction}.

\end{proof}

\subsection{Putting it together}
\label{sec:high_tem_putting_together}

\begin{proof}[Proof of \Cref{th:const_r_ub_lb}]
    The upper bound is shown in \Cref{sec:ub-high-temp} using polynomial embedding and a coreset maintained in a stream.
    
    The lower bound can be found in \Cref{sec:lb-high-temp}. As shown by \Cref{lem:reduction}, for any $n$, there is $m$ such that $\min\{(1/\eps)^{1-\frac{1}{a} - o(1)}, n\} \leq m\leq n$, and there is a $S_{n, d, r, \eps} + o(m)$ size protocol that solves $\texttt{INDEX}_{2m, m}$, where $S_{n, d, r, \eps}$ denotes the space complexity of any algorithm solving Softmax KDE problem. Combined with the lower bound $R^{\text{pub}, \to}(\texttt{INDEX}_{2m, m}) = \Omega(m)$ from \cref{thm:indexing}, we get that $S_{n, d, r, \eps} = \Omega(m) = \Omega(\min\{(1/\eps)^{1-\frac{1}{a} - o(1)}, n\})$,  which is the Softmax KDE lower bound presented in  \cref{th:const_r_ub_lb}.
\end{proof}

\section{Proof of \Cref{th:big_r_ub_lb}}
In this section we prove \Cref{th:big_r_ub_lb}, restated here for convenience of the reader:
\begin{theorem*}[Restated from \Cref{th:big_r_ub_lb}]
    Assume that $r^2 = \Omega(\log n)$, i.e. $e^{r^2} = n^{\Omega(1)}$. Then the Softmax KDE problem \eqref{eq:streaming-kde} can be solved using
    \begin{align*}
    \tilde{O}\qty(\frac{e^{r^2(1 + o(1))}}{\eps})
    \end{align*}
    memory. Furthermore, this bound is almost tight for any $d = (\log n)^{1 + \Omega(1)}$, i.e. there is a
    \begin{align*}
        \tilde{\Omega}\qty(\min\qty{n, \frac{e^{r^2(1 - o(1))}}{\eps}})
    \end{align*}
    lower bound for streaming space complexity.
\end{theorem*}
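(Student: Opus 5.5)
The proof splits into an upper bound, built from a pseudo-randomization recursion plus recentred coresets, and a lower bound, built from an \texttt{INDEX} reduction with side information as in \Cref{sec:lb-high-temp}.

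\textbf{Upper bound.} Fix a parameter $\Delta = r^{-\Theta(1)}$ (something like $r^{-1/5}$). The offline routine partitions $K$ recursively.

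\begin{itemize}
\item While some cap $\{k: \langle k,q\rangle > \Delta r^2\}$ contains more than a $\tau$ fraction of the current set, peel off the points in that cap as a dense part. By \Cref{def:prandom}, whatever is left after peeling is $(r,\Delta,\tau)$-pseudo-random.
\item A dense part lies in the ball $\{\|k\|\le r, \langle k, q/\|q\|\rangle \ge \Delta r\}$. That ball is enclosed in a ball $B(c',r')$ with $r'\le r\sqrt{1-\Delta^2}$. We recurse on it, re-centred at $c'$.
\item Once the radius has shrunk by a constant factor, stop and handle the part as a low-radius leaf.
\item Since each level peels at most $1/\tau$ parts and radii shrink geometrically, the recursion depth is $O(\Delta^{-2})$. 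The number of parts is therefore $(1/\tau)^{O(\Delta^{-2})}$. Choosing $\tau$ and $\Delta$ appropriately keeps this at $e^{o(r^2)}$.
\end{itemize}

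The two kinds of leaves are handled by \BaseCompress{} (\Cref{lem:simple_coreset}) with the right embeddings.

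\begin{itemize}
\item \emph{Low-radius leaf.} Apply the kernel to $\sqrt{r/r'}(K'-c')$ with rescaled queries, and pull out the factor $e^{\langle c',q\rangle}$. The multiplicative error is then $e^{2rr'}$ relative to a per-point lower bound. Since $r'\le r/2$, this is at most $e^{r^2}$.
\item \emph{Pseudo-random leaf.} At least a $(1-\tau)$ fraction of the points have $\langle k,q\rangle\le \Delta r^2$. I would try to show the KDE value is then at least $|K'|e^{-r^2(1+o(1))}$ relative to the error scale $e^{r^2/2+r^2/2}$. I am not sure yet how the pseudo-randomness gives this lower bound; it should be the content of \Cref{lem:ps_rand_coreset}. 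It may require applying pseudo-randomness to $-q$ as well, or a symmetrized definition.
\end{itemize}

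In either case \Cref{lem:merge_reduce} with block size $b=\tilde\Theta(e^{r^2(1+o(1))}/\eps)$ gives a per-part additive error of $\eps\,\expker(K',q)$. Summing over parts gives relative error $\eps$.

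\textbf{Main obstacle on the upper side: streaming.} The partition depends on the whole set, so it cannot be fixed in advance. I would run the partition inside each Merge-and-Reduce node on the weighted points stored there. The delicate point is a weighted version of Merge-and-Reduce in which each node's compression respects the dense/pseudo-random decomposition. Errors must still telescope with an $O(\log n)$ loss, and the number of stored parts per node must be bounded by $e^{o(r^2)}$. This is the content of \Cref{lem:merge-reduce-w}.

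\textbf{Lower bound.} I reuse the \texttt{INDEX} reduction of \Cref{sec:lb-high-temp}, now on a different point set.

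\begin{itemize}
\item Sample $2m$ near-orthogonal Gaussian keys of norm about $r$ (say $r(1-o(1))$, so they stay in the ball). With $d=(\log n)^{1+\Omega(1)}$, pairwise inner products are $O(r^2\sqrt{\log n/d})=o(r^2)$.
\item Bob queries $q=k_j$. The signal term is $e^{r^2(1-o(1))}$.
\item The noise term is the sum over the other keys. If Alice also sends the low-degree Hermite sketch, Bob can subtract its mean and fluctuation as before. Alternatively, padding with $N$ dummy keys orthogonal to everything pins the total KDE at about $N + m\,e^{o(r^2)}$.
\item Then the additive error is $\eps\cdot\expker \approx \eps m e^{o(r^2)}$. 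Distinguishing requires $e^{r^2(1-o(1))}\gg \eps m e^{o(r^2)}$, so we may take $m\approx \min\{n,e^{r^2(1-o(1))}/\eps\}$.
\end{itemize}

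The step I would check carefully is concentration of the noise. A Chebyshev bound on $\sum_k e^{\langle k,q\rangle}$ has variance about $m\,e^{\Theta(r^4/d)}$. This is $m\,e^{o(r^2)}$ when $r^2\le d^{1-\Omega(1)}$, and for larger $r$ the Hermite-truncation side information should handle it.

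Finally, \Cref{thm:indexing} yields $S=\Omega(m)$, which is the claimed lower bound.
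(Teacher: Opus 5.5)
Your upper bound follows the paper's route: partition into pseudo-random and low-radius parts, recenter and rescale, apply \BaseCompress{} on the leaves, and re-run the partition inside every node of a weighted Merge-and-Reduce. The lower bound, however, has a genuine hole exactly at the step you flagged. Your criterion for when a raw Chebyshev bound suffices is wrong. Bob can subtract the known mean of the noise, but its fluctuation $\sqrt{m}\,e^{\Theta(r^4/d)}$ must be compared with the signal $e^{r^2}$. So plain thresholding only works while $m\lesssim e^{2r^2(1-o(1))}$, and this fails inside the theorem's range. For instance, with $r^2<\frac12\log n$ and target $m=n$, $m$ is already too large; for $\eps<e^{-r^2}$, so is $m\approx e^{r^2}/\eps$. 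Without side information you only get $\min\{n,e^{r^2}/\eps,e^{2r^2}\}$ up to $e^{o(r^2)}$ factors; this is the weakness of the basic reduction that the paper points out. Padding with dummy keys does not help: they are known to Bob, add no variance, and only inflate $\expker(K_\bx,q)$, hence the tolerated error. So the low-degree sketch is mandatory throughout, and it needs a fresh analysis rather than ``as before''. You must choose $t$ so that (i) the sketch costs $\binom{d+t}{t}\cdot\tilde O(1)=o(n)$ bits, (ii) the truncated self-term is still $\Omega(e^{r^2})$, and (iii) the truncated noise variance is $o(e^{2r^2})$. The high-temperature argument does not deliver (ii): it lower-bounds the signal only by the first tail term $r'^{2t+2}/(t+1)!$. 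Since (i) forces $t=o(r^2)$, that term is $e^{o(r^2)}$, discarding exactly the $e^{r^2}$ you want. It also used $\lambda=r'^2/\sqrt d=o(1)$, which fails for $d=O(\log^2 n)$. The paper instead takes keys uniform on $\mathcal S_r$ and uses the Taylor truncation $\expker_{\le t}$, with $t$ maximal subject to $\frac{(d+t)^t}{t!}\log^2 n=o(n)$. This gives $(\log n)^{1-o(1)}<t<r^2$ and $n/d^t\le 2^t(\log n)^{O(1)}/t!$. Then $\expker_{>t}(k_j,k_j)=\sum_{l>t}r^{2l}/l!\ge\frac12 e^{r^2}$ by the Poisson median. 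The noise variance is $O(n(2r)^{4t}/(d^t t!))=n^{o(1)}$, via $\E\langle u,v\rangle^{2t}\le 2^t t!/d^t$.

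On the upper bound, the steps you deferred are where the work is, and one of your stated steps is wrong. Running Merge-and-Reduce per part and then summing fails even offline, for two reasons. The leaf guarantees are relative to $\expker(K_i,q)/|K_i|$, which is useless for tiny parts. And sub-blocks of a pseudo-random part need not be pseudo-random. The paper's \Compress{} compresses only parts with $|K_i|>|K|/(2I)$, where $I$ is the number of parts. This still gives a constant-factor ($3/4$) reduction and, for every input, error $O(I\,e^{r^2(1+\Delta)}\log n\cdot\expker(K,q)/|K|)$. It is this universal relative guarantee that the weighted Merge-and-Reduce, with separate weight classes, consumes. The pseudo-random lower bound you were unsure about is immediate, because the definition quantifies over all $q$ in the ball. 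Applying it at the antipodal point of $B(c',r')$ gives $\langle k-c',q\rangle\ge-\Delta r'\|q\|_2$ for all but $\tau|K'|$ points, so $\expker(K',q)\ge e^{\langle c',q\rangle}\frac{|K'|}{2}e^{-\Delta rr'}$. Note the factor $e^{\langle c',q\rangle}$: pseudo-random leaves at positive depth must be recentered too, since $\langle c',q\rangle$ can be $-\Theta(r^2)$. Finally, a level can peel $O(\log|K|/\tau)$ caps, not $1/\tau$, because $\tau$ is relative to the shrinking current set. The resulting part count $(\log n/\tau)^{O(\Delta^{-2})}$ is still $e^{o(r^2)}$.
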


We prove the upper bound in \Cref{sec:ub_low_temp} below, the lower bound in \Cref{sec:lb_low_temp} below. These two results are put together in \Cref{sec:low_temp_putting_together} to obtain the proof of \Cref{th:big_r_ub_lb}.
\subsection{Upper Bound}
\label{sec:ub_low_temp}
\newcommand{\Certify}{\textsc{Certify}}
\newcommand{\Partition}{\textsc{Partition}}
\newcommand{\cone}{\textsc{cone}}
Our starting point is the algorithm BalanceKV (\cite{kochetkova2025streaming}), an online discrepancy minimization method based on~\cite{ALS21} that maintains a coreset approximating the dataset in all directions simultaneously. It relies on the offline factor-2 compression procedure \BaseCompress{}, which we formalise in \Cref{lem:simple_coreset}:

\medskip

\noindent {\em {\bf Lemma} (Restated from Lemma~\ref{lem:simple_coreset})} Let $\gamma: \R^d \times \R^d \to \R$  be a positive definite kernel, and let $U \times V\subset   \R^d \times \R^d $ be a predefined subset of the domain. There is a randomized algorithm $\BaseCompress$ which receives $\gamma$ and $K \subset U$ as inputs. $\BaseCompress(\gamma, K)$ outputs $\widetilde{K} \subset K$ such that $|\widetilde{K}| \leq |K|/2$ and for any $q \in V$, with probability $1 - \delta$,
           \[\left|\gamma(K, q) - 2\gamma(\widetilde{K}, q)\right| \leq O\left(\sqrt{\max_{u \in U}\gamma(u, u)\cdot \max_{v \in V}\gamma(v, v)}\cdot \log (|K|/\delta)\right).\]
    $\BaseCompress(\gamma, K)$ runs in $O(|K|^2 \cdot T)$ time, where $T$ is the maximum time to access $\gamma(k,q)$.
    
\medskip

While this mechanism captures coarse geometric structure, it is largely oblivious to finer structure: it treats all directions as equally adversarial, even when the data is highly organized. In particular, for inputs consisting of tight clusters, it is strictly more effective to run separate instances of \BaseCompress{} on each cluster rather than on the dataset as a whole.

We address this limitation via a \textit{pseudo-randomization procedure}~-- an approach pioneered in work on data-dependent data structures \cite{AR15, ALRW17, CKNS20}.  Pseudo-randomization allows us to view the dataset as union of two types of components: pseudo-random subsets and low radius subsets. On pseudo-random subsets, the original balancing procedure is near-optimal, as no direction is preferred. For low radius subsets, however, we exploit local structure by recentering and treating each region as contained in a smaller ball, allowing \BaseCompress{} to operate at the appropriate scale.

In \Cref{subsub:partition} we present the algorithm \Partition{} (\Cref{alg:partition}), a procedure that recursively isolates dense regions of a dataset and stops once the dataset is either pseudo-random or contained in a much smaller Euclidean ball. Its guarantees are given in \Cref{lem:ps_randomification}. Next, in \Cref{subsub:recenter_rescale} we explain how to recenter and rescale datasets so that \BaseCompress{} yields the lowest-error coresets, and in \Cref{subsub:leaves} we formally apply \BaseCompress{} to the partition returned by \Cref{alg:partition}.

In \Cref{subsub:compress} we then present \Compress{} (\Cref{alg:fast_2_compress}), our offline constant-factor compression algorithm, which first decomposes the dataset using \Partition{} and then applies \BaseCompress{} to the resulting components. Finally, just as \BaseCompress{} was extended to streaming via Merge-and-Reduce in \Cref{lem:merge_reduce}, in \Cref{subsub:merge_reduce_w} we show how to implement \Compress{} in streaming using a \textit{weighted} version of Merge-and-Reduce.

\subsubsection{Partition into pseudo-random and low radius components}\label{subsub:partition}
\begin{definition}[Pseudo-random dataset]\label{def:pseudorandom}
    For a ball $B = B(c, r) \subset \R^d$, a dataset $K \subset B$ and two positive parameters $\Delta = \Delta(d, n), \tau = \tau(d, n)$, we call $K$ $(B, \Delta, \tau)$\emph{-pseudo-random}
    % \footnote{Our definition follows \cite{ALRW17,CKNS20}, but with $K$ in a ball instead of a sphere. This makes our notion somewhat counterintuitive. For instance, a dataset that is highly concentrated around $c$ is pseudo-random. We nevertheless retain the term to emphasize the connection with the earlier literature.}
    if the following holds:
    \begin{gather}\label{eq:sp_caps}
        \forall q \in B \quad |\{k \in K \mid \<k - c, q - c> > \Delta r^2\}| \leq \tau |K|.
    \end{gather}
    When the ball $B$ or parameters $\Delta, \tau$ are clear from the context we omit them.
\end{definition}

We start with an instance of \cite[Lemma 58]{CKNS20} adapted to our definition. We prove it in \Cref{sec:ub_lowt_proofs}.

\begin{restatable}{lemma}{certifyball}[Certify]\label{cor:certify-ball}
There is a randomized procedure that, given a set
\(K \subseteq B(c,r)\) and parameters \(\Delta,\tau,\delta \in \left(0,\tfrac13\right)\), runs in time
\[
O\!\left(\frac{d}{\Delta^{2}\tau}\log\!\left(\frac{2|K|}{\delta}\right)\cdot |K|\right)
\]
and, with probability at least \(1-\delta\), does one of the following:
\begin{enumerate}
    \item returns a point \(k^* \in B(c,r)\) such that
    \[
    \left|B\!\left(k^*, \sqrt{1-\Delta^4}\,r\right)\cap K\right|
    \;\ge\;
    \Omega(\Delta^2)\cdot \tau |K|,
    \]
    \item certifies that \(K\) is \((\Delta',\tau)\)-pseudo-random, where \(\Delta'=\Theta(\Delta)\).
\end{enumerate}
We refer to this procedure as \(\Certify_{\Delta,\tau,\delta}\).
\end{restatable}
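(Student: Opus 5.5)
Without loss of generality I translate so that \(c=0\), and I write \(K\subseteq B(0,r)\). The procedure will be a random-sampling test for dense caps, in the spirit of \cite[Lemma 58]{CKNS20}. I repeat \(N = \Theta\!\left(\frac{1}{\Delta^2\tau}\log\frac{2|K|}{\delta}\right)\) times. In each repetition I draw a uniformly random \(k\in K\). For the candidate center \(k^*=k\) (possibly projected or rescaled so that it lies in \(B(0,r)\)), I count \(|B(k^*,\sqrt{1-\Delta^4}\,r)\cap K|\) by a linear scan, at cost \(O(d|K|)\). If some count reaches the threshold \(c_0\Delta^2\tau|K|\), I output that \(k^*\) (outcome 1). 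Otherwise I certify pseudo-randomness with \(\Delta'=C\Delta\) (outcome 2). The running time is \(N\cdot O(d|K|)\), which matches the claim.

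Soundness of outcome 1 is trivial, since the counting is exact. The work is in showing that if \(K\) is \emph{not} \((\Delta',\tau)\)-pseudo-random, then a random sample finds a dense ball with probability \(\Omega(\Delta^2\tau)\) per trial. Failure of pseudo-randomness gives a query \(q\in B(0,r)\) and a set \(S=\{k:\langle k,q\rangle>\Delta' r^2\}\) with \(|S|>\tau|K|\). Since \(\|q\|\le r\), every \(k\in S\) has \(\|k\|\ge \Delta' r\), and its projection onto \(q/\|q\|\) is at least \(\Delta' r\).

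The key geometric step is to show that a large fraction of \(S\) lies close to a random sample point from \(S\). I plan to use an averaging argument. Consider \(\mu=\frac1{|S|}\sum_{k\in S}k\); then \(\langle \mu,q\rangle\ge \Delta' r^2\), so \(\|\mu\|\ge\Delta' r\). From the identity \(\E_{k,k'\in S}\|k-k'\|^2 = 2\E\|k\|^2-2\|\mu\|^2\le 2r^2-2\Delta'^2r^2\), I obtain \(\E_{k'}\|k-k'\|^2\le 2r^2(1-\Delta'^2)\) on average over \(k\in S\). This is only a factor \(\sqrt 2\) away from the target, so the raw distance is not quite the right quantity.

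To fix this, I will use a shifted center \(k^*=\frac12 k\) (or more generally \(\theta k\)) instead of \(k\). I compute
\[
\|\theta k-k'\|^2=\theta^2\|k\|^2-2\theta\langle k,k'\rangle+\|k'\|^2,
\]
whose average over \(k'\in S\) is at most \(\theta^2 r^2-2\theta\langle k,\mu\rangle+r^2\). Averaging also over \(k\) gives \((1+\theta^2)r^2-2\theta\|\mu\|^2\le (1+\theta^2-2\theta\Delta'^2)r^2\). Choosing \(\theta=\Delta'^2\) yields \((1-\Delta'^4)r^2\) on average. To turn this average into a strict count, I need a margin. I will take \(\Delta'=C\Delta\) with \(C\) large, which gives \(\E\|\theta k-k'\|^2\le (1-C^4\Delta^4)r^2\) while the target radius squared is \((1-\Delta^4)r^2\). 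Markov's inequality applied to the nonnegative quantity \(r^2+\theta^2r^2-\|\theta k-k'\|^2\), whose mean is \(\ge (1-\theta^2)\) times the shortfall, then shows that a \(\Omega(\Delta^2)\) fraction of pairs \((k,k')\in S^2\) satisfy \(\|\theta k-k'\|\le\sqrt{1-\Delta^4}\,r\). The second moment supplies the \(\Delta^2\) factor.

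A second averaging step over \(k\) then shows that with probability \(\Omega(\Delta^2)\) over \(k\in S\), the ball around \(\theta k\) contains \(\Omega(\Delta^2)|S|\ge\Omega(\Delta^2)\tau|K|\) points of \(K\). This uses \(\Omega(\Delta^2)\) rather than \(\Delta^4\) only after tuning the constants; if that tuning fails, I will accept a looser polynomial in \(\Delta\) and absorb it by rescaling \(\Delta\). A uniform sample lands in \(S\) with probability \(\ge\tau\), so each trial succeeds with probability \(\Omega(\Delta^2\tau)\). Then \(N\) trials fail with probability at most \(\delta\).

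I expect the main obstacle to be this exact constant and exponent bookkeeping: matching the radius \(\sqrt{1-\Delta^4}\,r\) with count \(\Omega(\Delta^2)\tau|K|\) while keeping \(\Delta'=\Theta(\Delta)\) and \(\theta k\in B(0,r)\). I would resolve it by choosing \(\theta\) and \(C\) after writing the averaged inequality explicitly.
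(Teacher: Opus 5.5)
Your architecture is sound: sample $k\in K$, test the ball around $\theta k$ by a linear scan, and repeat $\Theta(\frac{1}{\Delta^2\tau}\log\frac1\delta)$ times. Your averaged bound $\E_{k,k'\in S}\|\theta k-k'\|^2\le(1+\theta^2-2\theta\Delta'^2)r^2$ is also correct.

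The gap is in the step that turns this average into a fraction of good pairs. The quantity $(1+\theta^2)r^2-\|\theta k-k'\|^2$ to which you want to apply Markov is not nonnegative. It is only bounded below by $2\theta\langle k,k'\rangle$, which can be $-2\theta r^2$. A Markov bound on it would also point in the wrong direction. The correct Markov bound, applied to $X=\|\theta k-k'\|^2\ge0$, gives only
\[
\Pr\bigl[X\le(1-\Delta^4)r^2\bigr]\;\ge\;1-\frac{1-\Delta'^4}{1-\Delta^4}\;=\;\Theta(\Delta^4),
\]
because $X$ fluctuates on scale $r^2$ while the shortfall is only $\Theta(\Delta^4)r^2$. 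That yields count $\Omega(\Delta^4)\tau|K|$ and forces $\Theta(\frac{1}{\Delta^4\tau}\log\frac1\delta)$ trials, so you miss both the count and the running time. No choice of the constant $C$ fixes this. Your fallback of absorbing a worse power by rescaling $\Delta$ is also not available. The statement couples the radius $\sqrt{1-\Delta^4}\,r$, the count $\Omega(\Delta^2)\tau|K|$, the $1/\Delta^2$ in the running time, and $\Delta'=\Theta(\Delta)$. Substituting $\Delta\mapsto\sqrt\Delta$ turns the radius into $\sqrt{1-\Delta^2}\,r$ and $\Delta'$ into $\Theta(\sqrt\Delta)$.

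The missing observation is to use $\|k\|,\|k'\|\le r$ \emph{before} averaging. Then all the randomness sits in the cross term, whose range is only $O(\theta r^2)$. From $\|\theta k-k'\|^2\le(1+\theta^2)r^2-2\theta\langle k,k'\rangle$, a pair is good as soon as $\langle k,k'\rangle\ge\frac{\theta^2+\Delta^4}{2\theta}\,r^2$. Now apply reverse Markov to $Z=\langle k,k'\rangle/r^2\le1$, whose mean is $\|\mu\|^2/r^2>\Delta'^2$: this gives $\Pr[Z\ge s]\ge\E Z-s$. The $\Theta(\Delta^4)$ shortfall has been divided by $2\theta=\Theta(\Delta^2)$, so the gap is now $\Theta(\Delta^2)$. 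For example, with $\theta=\Delta^2$ and $\Delta'=\sqrt2\,\Delta$ the threshold is $s=\Delta^2$. Then more than a $\Delta^2$ fraction of pairs in $S^2$ satisfy $\|\Delta^2k-k'\|^2\le(1-\Delta^4)r^2$. Your own choice $\theta=\Delta'^2$ with $\Delta'=2\Delta$ gives $\frac{15}{8}\Delta^2$. From there, your second averaging, $\Pr[k\in S]>\tau$, and $\theta k\in B(0,r)$ finish the proof as you planned.

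Once repaired, your argument is a self-contained alternative to the paper's. The paper instead projects $K$ radially onto the sphere and calls \cite[Lemma 58]{CKNS20} as a black box to get a unit vector $u^*$ whose cap $\{y:\langle y,u^*\rangle\ge\Delta^2\}$ is dense. It then checks, for $c=0$, $r=1$, that the cone over that cap lies in $B(\Delta^2u^*,\sqrt{1-\Delta^4})$. In effect you re-prove that lemma's sampling argument directly in the ball, with the unnormalized center $\theta k$ in place of $\Delta^2u^*$.
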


 \Certify{} can be viewed as an algorithm that partitions a non-pseudo-random dataset into a dense component and its complement. The algorithm \Partition{} simply applies \Certify{} recursively to each part of the partition, unless that part is already pseudo-random or has low radius. We may therefore view \Partition{} as building a binary tree whose height is bounded: for every non-pseudo-random dataset, \Certify{} returns one subset with a constant-factor fewer points and another contained in a ball of constant-factor smaller radius.

The analysis of the performance of \Partition{} is standard; we provide it in \Cref{sec:ub_lowt_proofs}.
\begin{restatable}{lemma}{partition}[Performance guarantees of \Partition, \Cref{alg:partition}]
    \label{lem:ps_randomification}
    For any $r_0$, $\Delta \leq 1/3$ and any dataset $K \subset B(c, r)$, $|K| \leq n$, \Partition{} produces a decomposition $\mathcal{K}$ of $K$
    \begin{gather}
        % \label{eq:pseudorandom_decomp}
        \mathcal{K} = \bigsqcup_{i=1}^{N} K_i
    \end{gather}
    such that $K_i \subset B_i = B(c_i, r_i)$ and at least one of the two holds:
    \begin{itemize}
        \item $K_i$ is $(B_i, \Delta, \tau)$-pseudo-random.
        \item $r_i < r_0$. 
    \end{itemize}
    The algorithm is successful with probability at least $1 - \frac{1}{n^2}$, has runtime $O\!\left(\frac{d}{\Delta^{4}\tau^2} \cdot |K|\log\!\left(n\right) \cdot \frac{\log (r/r_0)}{\Delta^4} \right)$ and outputs a partition of size at most
    
    \begin{gather*}
         \qty(\frac{\log |K|}{\Delta^2\cdot \tau})^{O\qty(\frac{\log \frac{r}{r_0}}{\Delta^4})}
    \end{gather*}
\end{restatable}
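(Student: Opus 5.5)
We analyse \Partition{} as a recursion tree. Each node holds a dataset $K' \subseteq K$ together with an enclosing ball $B(c',r')$, and the root is $(K, B(c,r))$. At a node, if $r' < r_0$ we stop and output $K'$ as a low-radius part. Otherwise we run $\Certify_{\Delta_0,\tau,\delta}$ on $K'$, with $\Delta_0 = \Theta(\Delta)$ chosen so that the certified parameter satisfies $\Delta' \le \Delta$, and with $\delta = 1/n^{4}$. If \Certify{} certifies pseudo-randomness we output $K'$ as a leaf. Otherwise it returns a center $k^*$. We then split $K'$ into a dense child $K'\cap B(k^*,\sqrt{1-\Delta_0^4}\,r')$, which lives in the ball of radius $\sqrt{1-\Delta_0^4}\,r'$ around $k^*$, and a sparse child, which keeps the ball $B(c',r')$. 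We recurse on both children.

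Correctness of the output is immediate. The leaves partition $K$, and by construction every leaf is either pseudo-random with respect to its own ball or has radius below $r_0$. The success probability follows from a union bound over all \Certify{} calls. Since the number of calls is at most the number of tree nodes, and we will show this is $\poly(n)$, choosing $\delta = n^{-4}$ gives failure probability at most $1/n^2$.

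The main work is bounding the shape of the tree. Along any root-to-leaf path there are two kinds of steps.
\begin{itemize}
\item A dense step shrinks the radius by a factor $\sqrt{1-\Delta_0^4} \le e^{-\Omega(\Delta^4)}$. So after $O(\log(r/r_0)/\Delta^4)$ dense steps the radius drops below $r_0$ and the branch stops.
\item A sparse step removes at least an $\Omega(\Delta^2)\tau$ fraction of the points of the current node. Between two consecutive dense steps the node's size therefore decays geometrically, which allows at most $L = O\big(\log|K|/(\Delta^2\tau)\big)$ consecutive sparse steps before the set is empty or becomes pseudo-random.
\end{itemize}
Hence each path is a sequence of at most $D = O(\log(r/r_0)/\Delta^4)$ dense steps, separated by runs of at most $L$ sparse steps. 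The number of such sequences, and so the number of leaves, is at most $(L+1)^{D+1}$. This is exactly the claimed bound
\[
\left(\frac{\log |K|}{\Delta^2\tau}\right)^{O(\log(r/r_0)/\Delta^4)}.
\]

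For the runtime we argue level by level rather than path by path. Group the nodes by their number of dense ancestors. Within one group, a point of $K$ lies in a single chain of sparse nodes of length at most $L$, and these chains are disjoint across branches. The sum of $|K'|$ over one group is therefore at most $L|K|$. Multiplying by the cost of \Certify{} per point, $O\big(d\log n/(\Delta^2\tau)\big)$, and by the $D+1$ groups gives
\[
O\!\left(\frac{d}{\Delta^4\tau^2}\,|K|\log n \cdot \frac{\log(r/r_0)}{\Delta^4}\right),
\]
matching the statement.

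I expect the delicate step to be this runtime accounting. The points of a sparse chain are repeatedly re-processed, so one must charge them correctly to obtain a total that is linear in $|K|$ per dense level, rather than multiplying by the leaf count. A secondary point is checking that the radii in the dense child really contract multiplicatively, even though the new center $k^*$ differs from $c'$. This holds because the child uses the ball returned by \Certify{}, and that ball has the smaller radius $\sqrt{1-\Delta_0^4}\,r'$ directly.
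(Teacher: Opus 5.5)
\textbf{Gap in the runtime bound.} Your structure matches the paper's: the dense/sparse decomposition of each root-to-leaf path, the leaf count $(L+1)^{D+1}$, and the union bound over \textsc{Certify} calls. Your choice $\Delta_0=\Theta(\Delta)$, so that the certified $\Delta'$ is at most $\Delta$, is a detail the paper glosses over.

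The problem is the runtime accounting, which you yourself flagged as delicate. Charging each point once per node of its sparse chain gives a group total of $O(L|K|)$, where $L=O(\log|K|/(\Delta^2\tau))$. Multiplying by the per-point cost $O(d\log n/(\Delta^2\tau))$ and by the $D+1$ groups yields
\[
O\Bigl(\frac{d}{\Delta^4\tau^2}\,|K|\log n\cdot \log|K|\cdot\frac{\log(r/r_0)}{\Delta^4}\Bigr).
\]
This is a $\log|K|$ factor worse than the statement; the factor was silently dropped in your final display.

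The fix is to use the geometric decay of sizes along a chain, not its length. A chain headed by $K_0$ (the root or a dense child) has nodes of size at most $(1-\Omega(\Delta^2\tau))^s|K_0|$ for $s=0,1,\dots$, so its total size is $O(|K_0|/(\Delta^2\tau))$. The heads of all chains in one group are pairwise disjoint: the dense children split off along a chain are disjoint subsets of its head, and distinct heads are disjoint by induction. Each group therefore costs $O\bigl(\frac{d\log n}{\Delta^2\tau}\cdot\frac{|K|}{\Delta^2\tau}\bigr)$, and summing over the $D+1$ groups gives the claimed bound. This is exactly the paper's argument: a geometric sum over the \textsc{Certify} calls within one invocation of \textsc{Partition}, then disjointness of the invocations at a fixed recursion depth.

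\textbf{Smaller issue in the union bound.} Fixing $\delta=n^{-4}$ requires at most $n^2$ \textsc{Certify} calls. You assert the node count is polynomial, but your only bound on it is $(L+1)^{D+1}$, which is not polynomial in $n$ for arbitrary $\Delta,\tau,r/r_0$. The paper handles this by assuming without loss of generality that the claimed partition-size bound is at most $n$ (otherwise it outputs singletons). It then bounds the number of calls by the number of leaves times the recursion depth. You need the same reduction, or a choice of $\delta$ that depends on the node count.
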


\begin{algorithm}
\caption{\Partition$(K,r,r_0, \Delta, \tau)$}\label{alg:partition}
\begin{algorithmic}[1]
\State \textbf{input:} $K \subseteq B(c,r)$, parameters $r_0,\Delta,\tau$
\State $\mathcal K \gets \emptyset$
\State $\delta \gets \frac{1}{n^4}$
\medskip
\If{$r \le r_0$}\Comment{stop if the remainder is in a small ball}
    \State $\mathcal K \gets \mathcal K \sqcup \{K\}$
    \State \textbf{return} $\mathcal K$
\EndIf
\medskip
\State \textbf{while} $K \neq \emptyset$
\If{\Certify$_{\Delta,\tau,\delta}(K)$ returns a point $c' \in B(c, r)$}\label{line:while_certify} \Comment{a dense region is found}
    \State $L \gets K \cap B(c', \sqrt{1-\Delta^4}r)$
    \State $\mathcal K' \gets \textsc{Partition}(L,\sqrt{1-\Delta^4}r,r_0,\Delta\tau)$\Comment{recurse on the dense region}
    \State $\mathcal K \gets \mathcal K \sqcup \mathcal K'$
    \State $K \gets K \setminus L$
\Else
    \State $\mathcal K \gets \mathcal K \sqcup \{K\}$
    \State \textbf{break}
\EndIf

\medskip
\State \textbf{output:} $\mathcal K$
\end{algorithmic}
\end{algorithm}

\subsubsection{Recenter \& rescale}\label{subsub:recenter_rescale}
\Cref{lem:ps_rand_coreset} and \Cref{lem:small_r_coreset} show how to build coresets for the pseudo-random and low radius parts of the partition $\mathcal{P}$ given by \Cref{lem:ps_randomification}. In both cases, the algorithm is the procedure \BaseCompress{} from \Cref{lem:simple_coreset}, applied to a \emph{transformation} of the dataset of keys.

To see why this helps, compare a dataset $K \subset B(c,r)$ with its recentered version $K-c \subset B(0,r)$. Finding coresets for $K$ and $K-c$ is essentially the same approximation problem, since
\[
\exp(K,q)=\exp(\langle c,q\rangle)\cdot \exp(K-c,q),
\]
so any weighted coreset for one yields the same multiplicative guarantee for the other. Yet, \BaseCompress{} has a better  error bound on the centered dataset. When $K$ is far from the origin, the common offset $c$ creates a large global trend in the kernel values, and \BaseCompress$(K)$ must spend part of its error budget tracking this trend. After recentering, this shared bias disappears, and the algorithm only has to capture the local variation inside the cluster. Similarly, although approximating $\exp(K,q)$ and $\exp(\alpha K,q/\alpha)$ for $\alpha<1$ is equally difficult at the level of the kernel values, since
\[
\exp(\langle k,q\rangle)=\exp(\langle \alpha k,q/\alpha\rangle),
\]
\BaseCompress{} behaves differently on the two instances, as the error does not scale linearly with the diameters of the domains. 

We exploit this observation in \Cref{lem:small_r_coreset}. Any dense part $K'$ of $\mathcal{K}$ given by \Cref{lem:ps_randomification} is contained in a ball $B(c',r')$ much smaller than the query domain $B(0,r)$. If we first shift and rescale it to
\[
\sqrt{\frac{r}{r'}}\,(K'-c'),
\]
so that both the key and query domains become $B(0,\sqrt{rr'})$, and only then apply \BaseCompress, we obtain a coreset whose approximation quality is proportional to
\[
\exp(\langle c',q\rangle)\cdot \exp(rr')
\leq
\exp\bigl(r(r'+\|c'\|_2)\bigr),
\]
whereas applying \BaseCompress{} directly to the original dataset is only guaranteed error
\[
\exp\bigl(r^2/2 + (r'+\|c'\|_2)^2/2\bigr).
\]

For pseudo-random sets in \Cref{lem:ps_rand_coreset}, the same transformation does not by itself improve the guarantee, because by construction the pseudo-random parts need not lie in balls smaller than $B(0,r)$. Instead, the gain comes from a better lower bound on the KDE value: for an arbitrary dataset, the worst case is when the mass concentrates on
$\text{argmin}_{k\in B(0,r)} \exp(\langle k,q\rangle),$
whereas pseudo-randomness rules out such concentration.

\begin{figure}[ht]
  \centering
  \vspace{5mm}
  \resizebox{\linewidth}{!}{%
  \begin{tikzpicture}[
    scale=1,
    transform shape,
    >=Latex,
    line cap=round,
    line join=round
  ]
    % Radii enlarged by factor 1.3
    \def\R{3.38}
    \def\rp{1.69}

    % Left red-ball center
    \def\cx{-1.85}
    \def\cy{0.35}

    % Separation of the two panels
    \def\Sep{7.9}

    % Shift sending c to 0
    \pgfmathsetmacro{\sx}{-\cx}
    \pgfmathsetmacro{\sy}{-\cy}

    % Radius of the black dashed ball: sqrt(r * r/2)
    \pgfmathsetmacro{\rmid}{sqrt(\R*\rp)}

    % Ten points spread through the overlap region
    \def\PointList{
      -2.30/0.80,
      -2.10/0.10,
      -2.25/-0.65,
      -1.75/1.30,
      -1.65/0.65,
      -1.55/-0.10,
      -1.35/-0.75,
      -1.15/0.95,
      -0.95/0.25,
      -0.85/-0.20
    }

    % ---------------- Left picture ----------------
    \begin{scope}[shift={(-\Sep,0)}]
      % white fillings
      \fill[white] (0,0) circle (\R);
      \fill[white] (\cx,\cy) circle (\rp);

      % blue ball
      \draw[blue, thick] (0,0) circle (\R);
      \fill[blue] (0,0) circle (1.8pt);
      \node[blue] at (0.20,-0.20) {$0$};

      % q on the blue boundary
      \fill ({\R*cos(40)},{\R*sin(40)}) circle (1.4pt);
      \node[black] at ({\R*cos(40)+0.28},{\R*sin(40)+0.20}) {$q$};

      % red ball
      \draw[red, thick] (\cx,\cy) circle (\rp);
      \fill[red] (\cx,\cy) circle (1.6pt);
      \node[red] at ({\cx-0.24},{\cy+0.24}) {$c$};

      % overlap points and label
      \foreach \x/\y in \PointList {
        \fill (\x,\y) circle (1.15pt);
      }
      \node[black] at (-0.55,1.72) {$K$};

      % Left callouts
      \node[black, align=left, anchor=west, font=\small] (Lblue) at (4.0,-1.05)
        {ball with radius $r$\\ centered at $0$};
      \draw[->, black, thick] (Lblue.west) -- (2.70,-0.58);

      \node[black, align=left, anchor=west, font=\small] (Lred) at (-7.0,-2.75)
        {ball with radius $r/2$\\ centered at $c$};
      \draw[->, black, thick] (Lred.east) -- ({\cx-0.32},{\cy-1.08});
    \end{scope}

    % Middle arrow and caption
    \draw[->, thick] (-1.35,0) -- (1.35,0);
    \node[black, font=\small] at (0,0.62) {Recenter + rescale};

    % ---------------- Right picture ----------------
    \begin{scope}[shift={(\Sep,0)}]
      % white fillings
      \fill[white] (0,0) circle (\R);
      \fill[white] (0,0) circle (\rp);

      % blue ball
      \draw[blue, thick] (0,0) circle (\R);

      % q on the blue boundary
      \fill ({\R*cos(40)},{\R*sin(40)}) circle (1.4pt);
      \node[black] at ({\R*cos(40)+0.28},{\R*sin(40)+0.20}) {$q$};

      % dashed black ball
      \draw[black, dashed, thick] (0,0) circle (\rmid);

      % red ball
      \draw[red, thick] (0,0) circle (\rp);

      % shifted points and their label
      \foreach \x/\y in \PointList {
        \fill ({\x+\sx},{\y+\sy}) circle (1.15pt);
      }
      \node[black] at (0, 2) {$K-c$};

      % five uniformly spaced contraction arrows: blue -> black
      \foreach \ang in {20,92,164,236,308} {
        \draw[black, dashed, ->, thick]
          ({\R*cos(\ang)},{\R*sin(\ang)}) --
          ({\rmid*cos(\ang)},{\rmid*sin(\ang)});
      }

      % five uniformly spaced expansion arrows: red -> black
      \foreach \ang in {56,128,200,272,344} {
        \draw[black, dashed, ->, thick]
          ({\rp*cos(\ang)},{\rp*sin(\ang)}) --
          ({\rmid*cos(\ang)},{\rmid*sin(\ang)});
      }

      % Right callouts
      \node[black, align=left, anchor=west, font=\small] (Rblue) at (4.00,-1.05)
        {ball with radius $r$\\ centered at $0$};
      \draw[->, black, thick] (Rblue.west) -- (2.85,-0.60);

      \node[black, align=right, anchor=east, font=\small] (Rred) at (-5.00,-2.75)
        {ball with radius $r/2$\\ centered at $0$};
      \draw[->, black, thick] (Rred.east) -- (-1.18,-1.18);

      \node[black, align=center, font=\small] (Rblack) at (0.5,4.60)
        {ball with radius $\sqrt{r\cdot r/2}$\\ centered at $0$};
      \draw[->, black, thick] (Rblack.south) -- (0.1,\rmid);
    \end{scope}
  \end{tikzpicture}%
  }
  \vspace{4mm}
  \caption{Recentering and rescale procedure for low radius datasets}
  % \label{fig:recenter+rescale}
  \vspace{5mm}
\end{figure}

\subsubsection{Coresets for pseudo-random and low-radius datasets}\label{subsub:leaves}

\begin{lemma}[Coresets for pseudo-random datasets]
    \label{lem:ps_rand_coreset}
    Assume that $K'$ is $(B(c', r'), \Delta, \tau)$-pseudo-random, $r' \leq r, \tau \leq \frac{1}{2}$, $|K'| \leq n$. Then there is a randomized procedure that runs in time $O(|K'|^2 \cdot d)$ and outputs $\tilde{K}' \subset K'$ such that with high probability $1 - \frac{1}{\poly(n)}$ for any fixed $\|q\|_2 \leq r$
    \begin{align*}
        \qty|\expker(K', q) - 2\expker(\tilde{K}', q)| \leq O\left(e^{r^2(1 + \Delta)}\cdot \log n \cdot \frac{\exp(K', q)}{|K'|}\right) .
    \end{align*}
\end{lemma}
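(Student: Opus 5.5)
The plan is to apply \BaseCompress{} (\Cref{lem:simple_coreset}) to the \emph{recentered} dataset, and then convert its additive error into the stated form using a lower bound on $\expker(K',q)$ that comes from pseudo-randomness.

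\textbf{Reduction to the centered kernel.} For every subset $S\subseteq K'$ and every $q$ we have
\[
\expker(S,q)=e^{\langle c',q\rangle}\,\expker(S-c',q).
\]
We run $\BaseCompress(\expker, K'-c')$ with key domain $U=B(0,r')$ and query domain $V=B(0,r)$. Here $\expker$ is positive definite, since $e^{\langle k,q\rangle}=\sum_l \langle k^{\otimes l},q^{\otimes l}\rangle/l!$. Let $\widetilde{K}'-c'$ be the returned set; we output the corresponding $\widetilde{K}'\subset K'$.

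\textbf{Additive error.} With $\delta=1/\poly(n)$, \Cref{lem:simple_coreset} gives
\[
\bigl|\expker(K'-c',q)-2\expker(\widetilde{K}'-c',q)\bigr|\le O\bigl(\sqrt{e^{r'^2}e^{r^2}}\log n\bigr)\le O(e^{r^2}\log n),
\]
using $r'\le r$. Multiplying through by $e^{\langle c',q\rangle}$ yields
\[
\bigl|\expker(K',q)-2\expker(\widetilde{K}',q)\bigr|\le O\bigl(e^{\langle c',q\rangle}e^{r^2}\log n\bigr).
\]

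\textbf{Runtime.} Each kernel evaluation takes $O(d)$ time, so \BaseCompress{} runs in $O(|K'|^2 d)$ time.

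\textbf{Lower bound from pseudo-randomness (main step).} It remains to show
\[
\expker(K'-c',q)\ge \tfrac12 |K'|\,e^{-\Delta r^2}.
\]
Together with the additive error bound, this gives error $O\bigl(e^{r^2(1+\Delta)}\log n\cdot \expker(K',q)/|K'|\bigr)$, which is the claim.

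To prove the lower bound, fix $\|q\|_2\le r$ and apply \Cref{def:pseudorandom} to the test point
\[
q'=c'-\tfrac{r'}{r}q .
\]
This point lies in $B(c',r')$, as the definition requires. Pseudo-randomness then says that at most $\tau|K'|\le |K'|/2$ keys satisfy
\[
\bigl\langle k-c',-\tfrac{r'}{r}q\bigr\rangle>\Delta r'^2 .
\]
Every remaining key, of which there are at least $|K'|/2$, satisfies
\[
\langle k-c',q\rangle\ge -\Delta r r'\ge -\Delta r^2 .
\]
Each such key contributes at least $e^{-\Delta r^2}$ to $\expker(K'-c',q)$, which gives the claimed bound.

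I expect this step to be the main obstacle. The subtle point is choosing the test point $q'$, which must be reflected through $c'$ and rescaled into $B(c',r')$, so that the cap condition controls exactly the keys that have small kernel value with $q$. The remaining steps are routine bookkeeping with the recentering identity.
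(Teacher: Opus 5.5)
Your proposal is correct and follows the paper's proof. You run \BaseCompress{} on the recentered set $K'-c'$, then use pseudo-randomness through the reflected test point, which the paper uses implicitly, to get $\expker(K'-c',q)\ge \frac12|K'|e^{-\Delta r r'}$. The only difference is that the paper also rescales keys by $\sqrt{r/r'}$ and queries by $\sqrt{r'/r}$, so both domains become $B(0,\sqrt{rr'})$ and the additive error is $e^{rr'}\log n$ rather than your $e^{(r^2+r'^2)/2}\log n$. This is immaterial for the stated lemma, since both are at most $e^{r^2}\log n$ when $r'\le r$.
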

\begin{proof}
    Denote $\alpha = \sqrt{\frac{r}{r'}}$.
    We get the desired $\tilde{K}$ by invoking \BaseCompress$(\alpha(K' - c'))$ from \Cref{lem:simple_coreset} with probability of success $1 - \frac{1}{\poly(n)}$. From the approximation guarantees of \BaseCompress{} we have that for any fixed $\|q'\|_2 \leq \alpha r' = \sqrt{rr'}$
    \begin{equation}\label{eq:add_bound}
    \begin{aligned}
        \qty|\sum_{k \in K'} e^{\<\alpha (k - c'), q'>} - 2\sum_{k \in K'} e^{\<\alpha (k - c'), q'>}| &\leq O\left(\sqrt{\max_{u \in B(0, \sqrt{rr'})}e^{\|u\|^2_2}\cdot \max_{v \in B(0, \sqrt{rr'})}e^{\|v\|^2_2}}\cdot \log n\right)\\
        & = O\left(e^{rr'}\cdot \log n\right)
    \end{aligned}
    \end{equation}
    
    Because $K'$ is pseudo-random, there are not that many vectors correlated with direction $-q'$, so only a few vectors may have a big negative projection on $q'$. We use this observation to lower bound the softmax KDE value:
    \begin{align*}
        \sum_{k \in K'} e^{\<\alpha(k - c'), q'>} \geq \sum_{\substack{k \in K' \\ \langle k - c', q'\rangle  \geq -\Delta r' \|q'\|_2}} e^{\<\alpha(k - c'), q'>} \geq (1-\tau)|K|e^{-\alpha^2\Delta r'^2} \geq \frac{|K|}{2}e^{-\alpha^2\Delta r'^2}.
    \end{align*}
We plug this bound into \Cref{eq:add_bound} which, using $\langle \alpha(k - c'), q'\rangle = \langle k, q\rangle - \langle c', q\rangle$, yields the desired bound on the approximation guarantee of $\exp(K', q)$:
\begin{align*} 
\qty|\expker(K', q) - 2\expker(\tilde{K}', q)| &= e^{\langle c', q\rangle}\cdot  \qty|\sum_{k \in K'} e^{\<\alpha (k - c'), q'>} - 2\sum_{k \in K'} e^{\<\alpha (k - c'), q'>}| \\
& \leq O\left(e^{\langle c', q\rangle}\cdot e^{rr'}\cdot \log n \cdot \frac{\sum_{k \in K'} e^{\<\alpha (k - c'), q'>}\cdot e^{\alpha^2\Delta r'^2}}{|K'|}\right) \\
& = O\left(e^{rr'(1 + \Delta)}\cdot \log n \cdot \frac{\exp(K', q)}{|K'|}\right).
\end{align*}
It remains to note $r' \leq r$ to finish the proof.
\end{proof}

\begin{lemma}[Coresets for low radius datasets]
    \label{lem:small_r_coreset}
    Assume that $K'$ is contained in $B(c', r')$, $r' \leq r / 2$, $|K'| \leq n$. Then there is a randomized procedure that runs in time $O(|K'|^2 \cdot d)$ and outputs $\tilde{K}'\subset K'$ such that with probability $1 - \frac{1}{\poly(n)}$ for any fixed $\|q\|_2 \leq r$
    \begin{align*}
        \qty|\expker(K', q) - 2\expker(\tilde{K}', q)| \leq O\left(e^{r^2}\cdot \log n \cdot \frac{\exp(K', q)}{|K'|}\right) .
    \end{align*}
\end{lemma}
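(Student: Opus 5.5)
Following the proof of \Cref{lem:ps_rand_coreset}, we set $\alpha=\sqrt{r/r'}$ and run \BaseCompress{} from \Cref{lem:simple_coreset} on the transformed set $\alpha(K'-c')$, with success probability $1-\frac{1}{\poly(n)}$. The output $\tilde K'$ is the corresponding subset of $K'$. For a fixed query $\|q\|_2\le r$ we put $q'=q/\alpha$, so that $\|q'\|_2\le r/\alpha=\sqrt{rr'}$. The keys satisfy $\|\alpha(k-c')\|_2\le \alpha r'=\sqrt{rr'}$. Hence both domains lie in $U=V=B(0,\sqrt{rr'})$, and the softmax kernel has $\gamma(u,u)\le e^{rr'}$ on them. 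The guarantee of \Cref{lem:simple_coreset} then gives additive error $O(e^{rr'}\log n)$ for the transformed sums, exactly as in \eqref{eq:add_bound}. The runtime is $O(|K'|^2 d)$, since each kernel evaluation costs $O(d)$.

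\textbf{Transferring back.} We use the identity $\langle \alpha(k-c'),q'\rangle=\langle k,q\rangle-\langle c',q\rangle$. Multiplying through by $e^{\langle c',q\rangle}$ gives
\[
\bigl|\expker(K',q)-2\expker(\tilde K',q)\bigr|\le O\bigl(e^{\langle c',q\rangle}e^{rr'}\log n\bigr).
\]

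\textbf{Lower bound on the transformed KDE.} Here we need no structural assumption on the data. Every transformed key has norm at most $\sqrt{rr'}$, as does $q'$, so each term satisfies $e^{\langle\alpha(k-c'),q'\rangle}\ge e^{-rr'}$. Therefore
\[
\sum_{k\in K'} e^{\langle \alpha(k-c'),q'\rangle}\ge |K'|e^{-rr'},
\]
and multiplying by $e^{\langle c',q\rangle}$ gives $\expker(K',q)\ge |K'|\,e^{\langle c',q\rangle}e^{-rr'}$. Equivalently,
\[
e^{\langle c',q\rangle}\le e^{rr'}\,\frac{\expker(K',q)}{|K'|}.
\]

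\textbf{Combining.} Substituting this into the error bound yields
\[
\bigl|\expker(K',q)-2\expker(\tilde K',q)\bigr|\le O\Bigl(e^{2rr'}\log n\,\frac{\expker(K',q)}{|K'|}\Bigr).
\]
The hypothesis $r'\le r/2$ gives $2rr'\le r^2$, which is the claimed bound $O\bigl(e^{r^2}\log n\cdot \expker(K',q)/|K'|\bigr)$.

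The only delicate point is the choice of the rescaling factor $\alpha$. It must balance the key and query radii at $\sqrt{rr'}$, so that the $\sqrt{\gamma(u,u)\gamma(v,v)}$ factor in \Cref{lem:simple_coreset} is $e^{rr'}$ rather than $e^{(r^2+r'^2)/2}$. The factor $2$ in the exponent, coming from the worst-case lower bound on the KDE, is then absorbed exactly by the assumption $r'\le r/2$.
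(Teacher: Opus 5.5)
Your proof is correct and follows the paper's argument step for step. You rescale to $\alpha(K'-c')$ with $\alpha=\sqrt{r/r'}$ and $q'=q/\alpha$ so both domains lie in $B(0,\sqrt{rr'})$, apply \BaseCompress{} for additive error $O(e^{rr'}\log n)$, lower bound each transformed term by $e^{-rr'}$ to get $e^{\langle c',q\rangle}\le e^{rr'}\expker(K',q)/|K'|$, and finish with $2rr'\le r^2$; the paper does the same, leaving the substitution $q'=q/\alpha$ implicit.
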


\begin{proof}
    We proceed as in \Cref{lem:ps_rand_coreset} by invoking \BaseCompress$(\alpha(K'-c'))$ from \Cref{lem:simple_coreset} with success probability $1 - \frac{1}{\poly(n)}$. Just as in \Cref{eq:add_bound}, we get that for any $q'$, $\|q'\|_2 \leq \alpha r' = \sqrt{rr'}$
        \begin{align}\label{eq:add_bound_2}
        \qty|\sum_{k \in K'} e^{\<\alpha (k - c'), q'>} - 2\sum_{k \in K'} e^{\<\alpha (k - c'), q'>}|  \leq O\left(e^{rr'}\cdot \log n\right).
    \end{align}
Next, instead of using pseudo-randomness to lower bound the softmax KDE value, we simply lower bound each term:
    \begin{align*}
        \sum_{k \in K'} e^{\<\alpha(k - c'), q'>} \geq |K'|\cdot \min_{k, q'}e^{-\|k-c\|_2\cdot \|q'\|_2} =  |K'|\cdot e^{-rr'}.
    \end{align*}

Just as before, plugging this into \Cref{eq:add_bound_2} and using $\langle \alpha(k - c'), q'\rangle = \langle k, q\rangle - \langle c', q\rangle$, 
\begin{align*} 
\qty|\expker(K', q) - 2\expker(\tilde{K}', q)| &= e^{\langle c', q\rangle}\cdot  \qty|\sum_{k \in K'} e^{\<\alpha (k - c'), q'>} - 2\sum_{k \in K'} e^{\<\alpha (k - c'), q'>}| \\
& \leq O\left(e^{\langle c', q\rangle}\cdot e^{rr'}\cdot \log n \cdot \frac{\sum_{k \in K'} e^{\<\alpha (k - c'), q'>}\cdot e^{rr'}}{|K'|}\right) \\
& = O\left(e^{2rr'}\cdot \log n \cdot \frac{\exp(K', q)}{|K'|}\right).
\end{align*}
Using $r' \leq r/2$ gives us the desired bound.
\end{proof}

\subsubsection{\Compress{}: offline compression algorithm}\label{subsub:compress}

\begin{algorithm}
\caption{\Compress$(\Delta, K)$}\label{alg:fast_2_compress}
\begin{algorithmic}[1]
\State \textbf{input:} $K \subseteq B(c,r)$
\State $\mathcal{K} \gets \Partition(K, r, r/2, \Delta, 1/2)$\Comment{partition into dense $\&$ pseudo-random parts}
\State
\While{$\sum_i|K_i| > |K|/2$}
\State Sort $|K_1| \geq |K_2| \geq \ldots$
\State $K_1 \gets \BaseCompress\left( \sqrt{\frac{r}{r_1}}(K_1 - c_1)\right)$\label{line:compress_dense}\Comment{recenter \& rescale first}
\State $w_x \gets 2\cdot w_x$ if $x \in K_1$ 
\EndWhile
\State
\State \textbf{output:} $\tilde{K} \coloneqq \bigcup_{i}K_i$

\end{algorithmic}
\end{algorithm}

Our constant-factor compression primitive \Compress{}, \Cref{alg:fast_2_compress} first decomposes a dataset $K \subset B(c,r)$ into pseudo-random and dense parts using \Partition\ (\Cref{alg:partition}), and then applies the factor-$2$ compression routines from \Cref{lem:ps_rand_coreset} and \Cref{lem:small_r_coreset}. Since the guarantees in \Cref{lem:ps_rand_coreset} and \Cref{lem:small_r_coreset} depend on the average Softmax KDE over the parts of $\mathcal{K}$, while we have no control over either these averages or the sizes of the parts, we cannot simply compress every part by the same factor and then relate the resulting error back to $\exp(K,q)$. Instead, we compress only the sufficiently large parts of the partition.

\begin{lemma}\label{lem:fast_2_compress}[Performance guarantees of \Cref{alg:fast_2_compress}]
For any $K \subset B(c, r)$, $|K| \leq  n$, $\Delta \geq \left(\frac{\log\log n}{\log n}\right)^{1/4}$, \Cref{alg:fast_2_compress} outputs $\tilde{K} \subset K$, weight function $w_x \in \{1, 2\}$ $|\tilde{K}| \leq 3|K|/4$ such that with high probability $1 - \frac{1}{\poly(n)}$ for any fixed $\|q\|_2 \leq r$
\begin{align}\label{eq:error_bound_weighted}
\left|\exp(K, q) - \exp(\tilde{K}, w,  q)\right| \leq O\left(e^{r^2(1 + \Delta)}\cdot \qty(\frac{\log |K|}{\Delta^2})^{O\qty(\Delta^{-4})}\cdot \log n \cdot \frac{\exp(K, q)}{|K|}\right),
\end{align}
where $\exp(\tilde{K}, w, q) \coloneqq \sum_{x \in \tilde{K}}w_x e^{\langle x, q\rangle}$ is the weighted KDE.
The runtime of \Cref{alg:fast_2_compress} is \[O\left(\max\left\{|K|^2d, \frac{d}{\Delta^{4}}|K|\log n \cdot \frac{1}{\Delta^4} \right\}\right)\] and the probability of success is $1 - \frac{1}{\poly(n)}$.
\end{lemma}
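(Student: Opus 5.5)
The plan is to combine \Cref{lem:ps_randomification} with the per-part guarantees of \Cref{lem:ps_rand_coreset} and \Cref{lem:small_r_coreset}, and then charge the error of each compressed part to $\exp(K,q)$ using the fact that only large parts are compressed. First, run \Partition$(K,r,r/2,\Delta,1/2)$. By \Cref{lem:ps_randomification} with $r_0=r/2$, so that $\log(r/r_0)=\log 2$, we get with probability $1-1/n^2$ a partition $K=\bigsqcup_{i=1}^N K_i$. Each part is either $(B_i,\Delta,1/2)$-pseudo-random with $r_i\le r$, or lies in a ball of radius $r_i<r/2$. The number of parts satisfies
\[
N\le \left(\frac{\log|K|}{\Delta^2}\right)^{O(\Delta^{-4})}.
\]
The runtime bound of \Partition{} with $\tau=1/2$ is exactly the second term in the claimed runtime.

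Next, I analyse the while loop. Each iteration compresses the currently largest part, replacing $K_1$ by $\tilde K_1$ with $|\tilde K_1|\le |K_1|/2$ and doubling its weights. I will argue that each original part is compressed at most once. The loop stops as soon as the total size drops to $|K|/2$. The sorting picks the largest parts in decreasing order, and I will check that a part, once halved, is not the largest again before the total drops below $|K|/2$; if that fails, I will cap at one compression per part. Granting this, weights lie in $\{1,2\}$. Moreover, every compressed part has size at least $|K|/(2N)$: at any iteration the remaining total exceeds $|K|/2$ and is split among at most $N$ parts, so the largest has size $\ge |K|/(2N)$. The output size is at most $|K|/2$ plus one halved part, which gives $|\tilde K|\le 3|K|/4$ after a short accounting.

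For the error, apply \Cref{lem:ps_rand_coreset} to pseudo-random compressed parts and \Cref{lem:small_r_coreset} to low-radius ones. In both cases the error on part $K_i$ is
\[
O\!\left(e^{r^2(1+\Delta)}\log n\cdot \frac{\exp(K_i,q)}{|K_i|}\right).
\]
Here the low-radius bound $e^{r^2}$ is dominated by the pseudo-random one. Since $|K_i|\ge |K|/(2N)$, each term is at most $O(e^{r^2(1+\Delta)}\log n\cdot N\cdot \exp(K_i,q)/|K|)$. Summing over $i$ with $\sum_i\exp(K_i,q)\le\exp(K,q)$ gives
\[
O\!\left(e^{r^2(1+\Delta)}\cdot N\cdot\log n\cdot \frac{\exp(K,q)}{|K|}\right),
\]
and substituting the bound on $N$ gives \eqref{eq:error_bound_weighted}. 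A union bound over at most $N\le\poly(n)$ calls, which holds for $\Delta\ge(\log\log n/\log n)^{1/4}$, gives success probability $1-1/\poly(n)$. The \BaseCompress{} calls cost $\sum_i O(|K_i|^2 d)\le O(|K|^2 d)$.

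The main obstacle is the accounting in the loop. I need to guarantee simultaneously that each compressed part has size $\Omega(|K|/N)$, that no part is compressed twice (so weights stay in $\{1,2\}$ and the per-part lemmas apply to original sets), and that $|\tilde K|\le 3|K|/4$. If the greedy loop as written could re-select a halved part, I will instead compress the parts in decreasing size order once each until the total of the remaining parts falls below $|K|/2$. This preserves all three properties with the same bounds.
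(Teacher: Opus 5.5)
Your proposal follows the paper's proof: partition with $\tau=1/2$ and $r_0=r/2$, and observe that every compressed part has $|K_i|>|K|/(2I)$. Charge its error as $O\bigl(e^{r^2(1+\Delta)}\log n\cdot 2I\exp(K_i,q)/|K|\bigr)$ and sum over parts. Finish with the bound on $I$, a union bound, and disjointness of the \BaseCompress{} inputs for the runtime.

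One correction: the check you plan on the greedy loop fails. A part holding $0.9|K|$ points is halved to $0.45|K|$ and is again the largest while the total is $0.55|K|>|K|/2$. It would then be compressed twice and get weight $4$. The second call may also act on a subset that is no longer pseudo-random, so \Cref{lem:ps_rand_coreset} need not apply. The paper's proof asserts without justification that no part is compressed twice, so your fallback is genuinely needed, not just a safety net.

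With the fallback, read ``remaining'' as the not-yet-compressed parts. Each selected part then has size at least $|K|/(2N)$, because the uncompressed mass is still at least $|K|/2$ when it is chosen. If $U<|K|/2$ is the uncompressed mass at termination, then $|\tilde K|\le U+(|K|-U)/2=(|K|+U)/2<3|K|/4$. This accounting should replace your ``$|K|/2$ plus one halved part'' remark.
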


\begin{remark}[Selection of $\Delta$]\label{rem:Delta} From now on, we select $\Delta$ by approximately minimizing the error bound in \Cref{eq:error_bound_weighted}:
\begin{align}\label{eq:delta}
     e^{r^2(1 + \Delta)}\cdot \qty(\frac{\log |K|}{\Delta^2})^{\Delta^{-4}} = e^{r^2(1 + \Delta) +\Delta^{-4}\log(\log |K|/\Delta^2) }.
\end{align}
Setting $\Delta \coloneqq Const\cdot \left(\frac{\log\log n}{\log n}\right)^{1/5}$ for a big enough constant $Const$, the expression above becomes
\[\leq e^{r^2(1 + o(1))} \cdot e^{(\log n)^{4/5}\cdot (\log\log n)^{1/5}} = e^{r^2(1 + o(1))} \cdot n^{o(1)}.\]
Hence, the error bound \Cref{eq:error_bound_weighted} becomes
\begin{align}\label{eq:error_bound_w-out_delta}
\left|\exp(K, q) - \exp(\tilde{K}, w,  q)\right| \leq O\left(e^{r^2(1 + \Delta)}\cdot n^{o(1)} \cdot \frac{\exp(K, q)}{|K|}\right),
\end{align}
and the runtime becomes
\[O\left(|K|^2\poly(\log n)\right),\]
as $d \leq \poly(\log n)$ and $1/\Delta \leq \poly(\log n)$.

\end{remark}
\begin{proof}
\textbf{Error bound.} Let $I = |\mathcal{K}|$ be the number of parts in the partition $\mathcal{K}$. First, note that \Compress$(\Delta, K)$ never compresses a part $K_i$ with $|K_i| \leq |K|/(2I)$, nor does it ever compress the same dataset twice. Indeed, the union of all parts satisfying $|K_i| \leq |K|/(2I)$ contains at most $n/2$ points. Thus the remaining $|K|/2$ points belong to parts that are compressed by the algorithm, so before we ever need to touch any such small part, the required factor-$3/4$ compression has already been achieved.

We bound the errors incurred in line~\eqref{line:compress_dense} by \Cref{lem:ps_rand_coreset} and \Cref{lem:small_r_coreset}, respectively. Hence the total error is bounded by
\begin{align*}
&\sum_{i}\mathbbm{1}\{|K_i|> |K|/(2I)\}\cdot O\left(e^{r^2(1 + \Delta)}\cdot \log |K_i| \cdot \frac{\exp(K_i, q)}{|K_i|}\right) \\
\leq {}& \sum_i O\left(e^{r^2(1 + \Delta)}\cdot \log n \cdot \frac{\exp(K_i, q)}{|K|/(2I)}\right) \\
\leq {}& O\left(e^{r^2(1 + \Delta)}\cdot I\cdot \log n \cdot \frac{\exp(K, q)}{|K|}\right).
\end{align*}
Finally, by \Cref{lem:ps_randomification},
\[
I \leq \qty(\frac{\log n}{\Delta^2})^{O\qty(\Delta^{-4})},
\]
and the result follows.

\paragraph{Runtime.} By \Cref{lem:ps_randomification}, we obtain the partition $\mathcal{K}$ in time
\[
O\!\left(\frac{d}{\Delta^{4}} |K|\log n \cdot \frac{1}{\Delta^4} \right).
\]
Throughout the algorithm, we invoke the routine in line~\eqref{line:compress_dense} on disjoint datasets. An invocation on a dataset $K_i$, by \Cref{lem:ps_rand_coreset} and \Cref{lem:small_r_coreset}, takes time $O(|K_i|^2 d)$. Hence, we bound the runtime spent in line~\eqref{line:compress_dense} as $O(|K|^2d)$, the total runtime is bounded by
\[
O\left(\max\left\{|K|^2 d, \frac{d}{\Delta^{4}\tau^2} |K|\log n \cdot \frac{1}{\Delta^4} \right\}\right).
\]

\paragraph{Probability of success.} From the assumption $\Delta \geq \left(\frac{\log\log n}{\log n}\right)^{1/4}$, we get
\[
I \leq \left(\log n\cdot \sqrt{\frac{\log n}{\log\log n}}\right)^{O(\log n/\log\log n)} \leq \poly(n).
\]
Our algorithm succeeds provided each of the $I$ calls to \BaseCompress{} succeeds. Since, by \Cref{lem:ps_rand_coreset} and \Cref{lem:small_r_coreset}, each such call succeeds with probability $1 - \frac{1}{\poly(n)}$, a union bound implies that the entire algorithm succeeds with probability $1 - \frac{1}{\poly (n)}$.
\end{proof}

\subsubsection{From \Compress{} to streaming (weighted version of Merge-and-Reduce)}\label{subsub:merge_reduce_w}

 The proof of the upper bound from \Cref{th:big_r_ub_lb} follows by applying the offline factor $3/4$ compression algorithm \Cref{lem:fast_2_compress} in streaming via the Merge-and-Reduce pipeline described in \Cref{sec:kde_coresets}. We want a statement very close to \Cref{lem:merge_reduce}~-- in fact, there are only two subtleties which prevent us from applying this statement as a black-box: it assumes that the base compression doubles the weight $w$ of all vertices, and it assumes that the base compression algorithm returns $1/2$ of the points instead of $3/4$. 
\begin{figure}[H]
\centering
\vspace{5mm}
\resizebox{0.98\linewidth}{!}{%
\begin{tikzpicture}[
  >=Latex,
  font=\small,
  level/.style={font=\bfseries\small, anchor=east},
  stream/.style={-{Latex[length=2.2mm]}, line width=0.9pt},
  fan/.style={-{Latex[length=2mm]}, semithick},
  tocomp/.style={-{Latex[length=1.8mm]}, line width=0.85pt},
  block/.style={draw, rounded corners=2pt, fill=white, minimum width=15mm, minimum height=5.5mm, inner sep=0pt},
  comp/.style={draw, rounded corners=2pt, fill=white, minimum width=15.5mm, minimum height=6.3mm, inner sep=1pt},
  dots/.style={font=\large}
]

% ------------------------------------------------------------------
% Legend on the left
% ------------------------------------------------------------------
\node[anchor=east, font=\small] at (-0.15,4.75) {BC = \BaseCompress{}};

% ------------------------------------------------------------------
% Level 0
% ------------------------------------------------------------------
\node[level] at (0.75,0.0) {level 0};
\draw[stream] (1.35,0) -- (14.90,0) node[right] {input stream};

\node[block] (b01) at (2.30,0) {$1$};
\node[block] (b02) at (3.95,0) {$1$};
\node[block] (b03) at (5.60,0) {$1$};
\node[block] (b04) at (7.25,0) {$1$};
\node[block] (b05) at (8.90,0) {$1$};
\node[block] (b06) at (10.55,0) {$1$};
\node[block] (b07) at (12.20,0) {$1$};
\node[block] (b08) at (13.85,0) {$1$};

\node[comp] (c01) at (2.30,1.10) {BC};
\node[comp] (c02) at (3.95,1.10) {BC};
\node[comp] (c03) at (5.60,1.10) {BC};
\node[comp] (c04) at (7.25,1.10) {BC};
\node[comp] (c05) at (8.90,1.10) {BC};
\node[comp] (c06) at (10.55,1.10) {BC};
\node[comp] (c07) at (12.20,1.10) {BC};
\node[comp] (c08) at (13.85,1.10) {BC};

\foreach \b/\c in {b01/c01,b02/c02,b03/c03,b04/c04,b05/c05,b06/c06,b07/c07,b08/c08}{
  \draw[tocomp] (\b.north) -- (\c.south);
}

% ------------------------------------------------------------------
% Level 1
% ------------------------------------------------------------------
\node[level] at (0.75,3.65) {level 1};
\draw[stream] (1.40,3.65) -- (14.90,3.65);

\node[block] (b11) at (3.13,3.65) {$2$};
\node[block] (b12) at (6.43,3.65) {$2$};
\node[block] (b13) at (9.73,3.65) {$2$};
\node[block] (b14) at (13.03,3.65) {$2$};

\draw[fan] (c01.north) to[out=105,in=-90] ([xshift=-2pt]b11.south);
\draw[fan] (c02.north) to[out=75,in=-90]  ([xshift= 2pt]b11.south);

\draw[fan] (c03.north) to[out=105,in=-90] ([xshift=-2pt]b12.south);
\draw[fan] (c04.north) to[out=75,in=-90]  ([xshift= 2pt]b12.south);

\draw[fan] (c05.north) to[out=105,in=-90] ([xshift=-2pt]b13.south);
\draw[fan] (c06.north) to[out=75,in=-90]  ([xshift= 2pt]b13.south);

\draw[fan] (c07.north) to[out=105,in=-90] ([xshift=-2pt]b14.south);
\draw[fan] (c08.north) to[out=75,in=-90]  ([xshift= 2pt]b14.south);

\node[comp] (c11) at (3.13,4.75) {BC};
\node[comp] (c12) at (6.43,4.75) {BC};
\node[comp] (c13) at (9.73,4.75) {BC};
\node[comp] (c14) at (13.03,4.75) {BC};

\foreach \b/\c in {b11/c11,b12/c12,b13/c13,b14/c14}{
  \draw[tocomp] (\b.north) -- (\c.south);
}

% ------------------------------------------------------------------
% Level 2
% ------------------------------------------------------------------
\node[level] at (0.75,7.30) {level 2};
\draw[stream] (1.40,7.30) -- (14.90,7.30);

\node[block] (b21) at (4.78,7.30) {$4$};
\node[block] (b22) at (11.38,7.30) {$4$};

\draw[fan] (c11.north) to[out=105,in=-90] ([xshift=-2pt]b21.south);
\draw[fan] (c12.north) to[out=75,in=-90]  ([xshift= 2pt]b21.south);

\draw[fan] (c13.north) to[out=105,in=-90] ([xshift=-2pt]b22.south);
\draw[fan] (c14.north) to[out=75,in=-90]  ([xshift= 2pt]b22.south);

\node[comp] (c21) at (4.78,8.40) {BC};
\node[comp] (c22) at (11.38,8.40) {BC};

\foreach \b/\c in {b21/c21,b22/c22}{
  \draw[tocomp] (\b.north) -- (\c.south);
}

% continuation
\draw[fan] (c21.north) to[out=105,in=-120] (7.55,9.55);
\draw[fan] (c22.north) to[out=75,in=-60]  (8.65,9.55);
\node[dots] at (8.10,10.10) {$\vdots$};

\end{tikzpicture}%
}
\vspace{4mm}
\caption{Illustration of the standard Merge-and-Reduce algorithm (\Cref{lem:merge_reduce}).}
\label{fig:merge-reduce-standard}
\vspace{5mm}
\end{figure}

Our updated \textit{weighted} Merge-and-Reduce framework is illustrated schematically in \Cref{fig:merge-reduce-weighted}. As in the standard Merge-and-Reduce framework, illustrated in \Cref{fig:merge-reduce-standard}, we begin by partitioning the stream into blocks and applying the offline compression routine \Compress{} to each block. However, instead of immediately merging the outputs from level \(0\), sending them to level \(1\), and recursing as in \Cref{lem:merge_reduce}, we first separate the outputs of \Compress{} into \textit{weight classes}. This is important because \Compress{} is designed to work with a dataset of points with the same weight. Within each weight class, we then merge the outputs, repartition them into blocks, and apply \Compress{} blockwise. The weights of points in the output are updated multiplicatively: each point’s current weight is multiplied by the weight assigned to it by \Compress{}. The resulting outputs at level \(1\) are again grouped by weight and merged within weight classes. All subsequent levels are handled in the same way.

Note that weighted Merge-and-Reduce no longer has the structure of a binary tree, even if all outputs of \Compress{} have the same weight. The reason is that, unlike \BaseCompress{}, \Compress{} does not reduce the dataset size by a factor of \(2\). In the standard Merge-and-Reduce framework, a block is compressed only after both of its children have been compressed by \BaseCompress{}. In the weighted setting, by contrast, we wait until the memory buffer associated with a given weight class at a given level contains exactly \(b\) points, where \(b\) is the chosen block size. We then compress this buffer and send the resulting coreset to the next level.

\begin{figure}[H]
\centering
\vspace{5mm}
\scalebox{0.9}{
\begin{tikzpicture}[
  >=Latex,
  font=\small,
  level/.style={font=\bfseries\small, anchor=east},
  stream/.style={-{Latex[length=2.2mm]}, line width=0.9pt},
  fan/.style={-{Latex[length=2mm]}, semithick},
  tocomp/.style={-{Latex[length=1.8mm]}, line width=0.85pt},
  block/.style={draw, rounded corners=2pt, fill=white, minimum width=15mm, minimum height=5.5mm, inner sep=0pt},
  comp/.style={draw, rounded corners=2pt, fill=white, minimum width=15.5mm, minimum height=6.3mm, inner sep=1pt},
  dots/.style={font=\large}
]

% ------------------------------------------------------------------
% Level 0
% ------------------------------------------------------------------
\node[level] at (0.75,0.0) {level 0};
\draw[stream] (1.35,0) -- (11.55,0) node[right] {input stream};

\node[block] (b01) at (2.30,0) {$1$};
\node[block] (b02) at (4.25,0) {$1$};
\node[dots] at (6.25,0) {$\ldots$};
\node[block] (b03) at (8.25,0) {$1$};
\node[block] (b04) at (10.20,0) {$1$};

\node[comp] (c01) at (2.30,1.10) {$\Compress$};
\node[comp] (c02) at (4.25,1.10) {$\Compress$};
\node[comp] (c03) at (8.25,1.10) {$\Compress$};
\node[comp] (c04) at (10.20,1.10) {$\Compress$};

\foreach \b/\c in {b01/c01,b02/c02,b03/c03,b04/c04}{
  \draw[tocomp] (\b.north) -- (\c.south);
}

% ------------------------------------------------------------------
% Level 1
% ------------------------------------------------------------------
\node[level] at (0.75,3.65) {level 1};

\draw[stream] (1.40,3.05) -- (11.55,3.05);
\draw[stream] (1.40,5.35) -- (11.55,5.35);

% fan-out from level 0; each compressor contributes to both weight streams
\draw[fan] (c01.north west) to[out=108,in=-90] (2.075,2.775);
\draw[fan] (c01.north east) to[out=72,in=-90]  (2.925,5.075);
\draw[fan] (c02.north west) to[out=108,in=-90] (2.825,2.775);
\draw[fan] (c02.north east) to[out=72,in=-90]  (3.675,5.075);
\draw[fan] (c03.north west) to[out=108,in=-90] (8.275,2.775);
\draw[fan] (c03.north east) to[out=72,in=-90]  (9.125,5.075);
\draw[fan] (c04.north west) to[out=108,in=-90] (9.025,2.775);
\draw[fan] (c04.north east) to[out=72,in=-90]  (9.875,5.075);

% partition blocks, opaque so arrows/stream do not show through
\node[block] (b11) at (2.45,3.05) {$1$};
\node[block] (b12) at (8.65,3.05) {$1$};
\node[dots] at (5.80,3.05) {$\ldots$};

\node[block] (b21) at (3.30,5.35) {$2$};
\node[block] (b22) at (9.50,5.35) {$2$};
\node[dots] at (6.40,5.35) {$\ldots$};

\node[comp] (c11) at (2.45,3.95) {$\Compress$};
\node[comp] (c12) at (8.65,3.95) {$\Compress$};
\node[comp] (c21) at (3.30,6.25) {$\Compress$};
\node[comp] (c22) at (9.50,6.25) {$\Compress$};

\foreach \b/\c in {b11/c11,b12/c12,b21/c21,b22/c22}{
  \draw[tocomp] (\b.north) -- (\c.south);
}

% omitted intermediate levels
\node[dots] at (6.50,7.10) {$\vdots$};

% ------------------------------------------------------------------
% Generic level i
% ------------------------------------------------------------------
\node[level] at (0.75,9.20) {level $i$};

\draw[stream] (4.20,8.20) -- (11.55,8.20);
\draw[stream] (4.20,9.80) -- (11.55,9.80);
\draw[stream] (4.20,13.00) -- (11.55,13.00);

\node[block] (bi11) at (5.55,8.20) {$1$};
\node[block] (bi12) at (9.45,8.20) {$1$};
\node[dots] at (7.50,8.20) {$\ldots$};

\node[block] (bi21) at (5.55,9.80) {$2$};
\node[block] (bi22) at (9.45,9.80) {$2$};
\node[dots] at (7.50,9.80) {$\ldots$};

\node[dots] at (7.50,11.40) {$\vdots$};

\node[block] (bi31) at (5.55,13.00) {$2^i$};
\node[block] (bi32) at (9.45,13.00) {$2^i$};
\node[dots] at (7.50,13.00) {$\ldots$};

\node[comp] (ci11) at (5.55,9.10)  {$\Compress$};
\node[comp] (ci12) at (9.45,9.10)  {$\Compress$};
\node[comp] (ci21) at (5.55,10.70) {$\Compress$};
\node[comp] (ci22) at (9.45,10.70) {$\Compress$};
\node[comp] (ci31) at (5.55,13.90) {$\Compress$};
\node[comp] (ci32) at (9.45,13.90) {$\Compress$};

\foreach \b/\c in {bi11/ci11,bi12/ci12,bi21/ci21,bi22/ci22,bi31/ci31,bi32/ci32}{
  \draw[tocomp] (\b.north) -- (\c.south);
}

\end{tikzpicture}
}
\vspace{4mm}
\caption{Illustration of the weighted Merge-and-Reduce algorithm (\Cref{lem:merge-reduce-w}).}
\label{fig:merge-reduce-weighted}
\vspace{5mm}
\end{figure}

\begin{lemma}[Weighted Merge-and-Reduce]\label{lem:merge-reduce-w}    Let $\mathcal{A}$ be a black-box offline algorithm which takes as input a dataset $K \subset B(0,r)$, $|K| \leq n$,  and returns $\widetilde{K} \subset K$, $|\widetilde{K}| \leq 3|K|/4$, weight function $\alpha_x \in \{1, 2\}$ such that with probability $1 - \frac{1}{\poly n}$, for any fixed $q \in B(0, r)$,
\begin{align}\label{eq:error_bound-mr}
\left|\exp(K, q) - \exp(\tilde{K}, \alpha,  q)\right| \leq O\left(e^{r^2(1 + o(1))}\cdot n^{o(1)}\cdot \frac{\exp(K, q)}{|K|}\right).
\end{align}
   
   Let $T(m)$ denote the runtime $\mathcal{A}$ on inputs of size $m$. Assume that the space complexity of $\mathcal{A}$ is $O(m)$.

   Algorithm $\mathcal{A}$ can be adapted to process $K$ as a stream. At every step $j$ of the stream, it maintains a weighted coreset $C_j$ of the $j$ first elements of the stream $K_j$. Let $w_x^j$ denote the weight of a point $x \in C_j$ and $w^j$~-- the weight function. We have with probability $1-\frac{1}{\poly (n)}$, for any fixed $q \in U$ and desired error $\epsilon \in (0, 1)$,
   \[\left|
  \exp(K_j,q)-\exp(C_j, w^j, q)\right|
   \leq \epsilon \cdot \exp(K_j, q).
   \]
The size of the coreset, runtime and memory are defined in terms of the \textit{block size} of the algorithm $b$
\[
|C_j| \leq O\left(b\log^2\left(\frac{|K_j|}{b}\right)\right), \quad \text{ runtime} =  O\left(T(b)\cdot \log^2\left(\frac{|K_j|}{b}\right)\right), \quad \text{ space}=
   O\left(b\log^2\left(\frac{|K_j|}{b}\right)\right),
   \]
and the block size $b$ is defined by the desired error $\epsilon$ 
   \[
   b = \frac{e^{r^2(1 + o(1))}\cdot n^{o(1)}}{\epsilon}.
   \]
\end{lemma}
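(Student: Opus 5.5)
We follow the proof of \Cref{lem:merge_reduce} and adjust for weight classes and the $3/4$ compression ratio. The weighted Merge-and-Reduce structure keeps, for every level $\ell$ and every weight $w$ (a power of two), a buffer of points of weight $w$. Every point stored in a buffer at level $\ell$ has gone through exactly $\ell$ calls of $\mathcal{A}$, and each call multiplies its weight by $1$ or $2$, so $w\in\{1,2,\dots,2^\ell\}$. A buffer that reaches $b$ points is compressed with $\mathcal{A}$, applied to the unweighted block. This is valid because every point in a buffer has the same weight. The output is then rescaled by $w$ and split by its new weight $w\cdot\alpha_x$ into buffers at level $\ell+1$. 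The coreset $C_j$ is the union of all partial buffers. We first bound the number of levels. Every call removes at least $b/4$ points, and total weight is preserved up to the error. A cleaner count is that the number of stored points at level $\ell$ is at most $(3/4)^\ell j$, so there are $L=O(\log(j/b))$ levels. There are at most $\ell+1$ weight classes at level $\ell$, which gives $O(\log^2(j/b))$ buffers of size less than $b$. This yields the bounds on $|C_j|$ and on space. For the runtime, one arrival can cascade through at most one compression per (level, class) pair, so an update costs $O(T(b)\log^2(j/b))$.

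For the error, by the triangle inequality $|\exp(K_j,q)-\exp(C_j,w^j,q)|$ is at most the sum, over all calls, of $w_B\cdot|\exp(B,q)-\exp(\tilde B,\alpha,q)|$. Here $B$ is the block of size $b$ and $w_B$ is its common weight. Applying \eqref{eq:error_bound-mr}, each term is at most $O(e^{r^2(1+o(1))}n^{o(1)})\cdot w_B\exp(B,q)/b$. The key quantity is therefore $\sum_{\text{calls at level }\ell} w_B\exp(B,q)$. At each level the blocks are disjoint pieces of the weighted coreset that level $\ell$ receives, and that weighted coreset approximates $\exp(K_j,q)$. We prove by induction on $\ell$ that the total weighted KDE entering level $\ell$ is at most $2\exp(K_j,q)$, provided the accumulated error so far is at most $\epsilon\exp(K_j,q)\le\exp(K_j,q)$. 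Then level $\ell$ contributes at most $O(e^{r^2(1+o(1))}n^{o(1)}/b)\cdot 2\exp(K_j,q)$. Summing over $L=O(\log(j/b))$ levels gives error $O(e^{r^2(1+o(1))}n^{o(1)}\log n/b)\cdot\exp(K_j,q)$. Choosing $b=e^{r^2(1+o(1))}n^{o(1)}/\epsilon$ absorbs the logarithm into $n^{o(1)}$ and gives error $\epsilon\exp(K_j,q)$, which also closes the induction. Notice that the guarantee is multiplicative in each block's own KDE, so we never need a worst-case lower bound on $\exp(K_j,q)$. This is exactly where the rescaled error form of \eqref{eq:error_bound-mr} is used.

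Finally, for success probability: there are at most $\poly(n)$ calls to $\mathcal{A}$ in total. Each call succeeds for the fixed $q$ with probability $1-1/\poly(n)$, so a union bound finishes the argument. The main obstacle is the induction step relating blocks to $\exp(K_j,q)$. The blocks at a level are formed from outputs produced earlier in the stream, the partial buffers are uncompressed, and all of them must be accounted for consistently. Our plan is to charge each level's total weighted KDE to the exact decomposition ``compressed blocks plus partial buffers plus the higher levels,'' so that a single telescoping identity controls the sum. The cruder bound $w_B\exp(B,q)\le 2\exp(K_j,q)$ per level follows from this identity together with nonnegativity of the kernel.
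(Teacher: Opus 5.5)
Your proposal is correct and follows essentially the same route as the paper. The shared ingredients are the level-by-weight-class buffer structure, the $O(\log(j/b))$ bound on both the number of levels and the classes per level (giving $O(b\log^2(j/b))$ size and space), the per-level error bound $O(e^{r^2(1+o(1))}n^{o(1)}\exp(K_j,q)/b)$ obtained from the block-relative guarantee \eqref{eq:error_bound-mr}, and a union bound over all calls. Your explicit induction on levels, which uses nonnegativity of the kernel on the uncompressed buffers, is a cleaner version of the paper's forward-referenced step showing that each level's weighted KDE stays within $\epsilon\exp(K_j,q)$ of $\exp(K_j,q)$ once $b$ is fixed.
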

\begin{proof}
Let
\[
\alpha := e^{r^2(1+o(1))}\cdot n^{o(1)}.
\]
 Each level of the weighted Merge-and-Reduce algorithm consists of at most $O\left(\log \frac{|K_j|}{b}\right)$ weight classes but every point can belong to at most one stream per level. For every point $x$ which has appeared at the level $i$ at some point before the arrival of $k_{j+1}$, we use $\omega_x^i$ to denote the weight class of $x$. 

Since each invocation of $\mathcal{A}$ on a dataset shrinks it by a factor of $3/4$, at every time step $j$ there are at most $O(\log(|K_j|/b))$ non-empty levels. Therefore, because at any time we store at most $b$ data points per weight class at each level,
\[
|C_j| = O\!\left(b\log^2\!\left(\frac{|K_j|}{b}\right)\right),\qquad
\text{space}=O\!\left(b\log^2\!\left(\frac{|K_j|}{b}\right)\right).
\]
A new arrival in the stream can trigger at most one invocation of $\mathcal A$ per weight class at each level, so
\[
\text{runtime}=O\!\left(T(b)\log^2\!\left(\frac{|K_j|}{b}\right)\right).
\]

It remains to bound the error. Fix $q\in B(0, r)$. Using the guarantees of the offline algorithm given by \Cref{eq:error_bound-mr}, the total error produced by our algorithm at level $0$ at time step $j$ can be bounded by
\[
O\left(\alpha \cdot \frac{\exp(K_j, q)}{b}\right).
\]
Note that the error will be incurred block by block, but we use the fact that the softmax kernel is linear across partitions of $K$.

We can bound the total error at every subsequent level $i$ similarly despite the fact that the guarantees of $\mathcal{A}$ depend on error with the kernel of the prior layer, rather than the original kernel $\expker(K_j, q)$.
At the end of the proof, by choosing $b$ appropriately, we will ensure that the sum of the errors over all levels does not exceed $\epsilon \cdot \exp(K_j, q)$:
\[\left|\exp(K_j, q) - \sum_{k \in \text{level }1}\omega_k^1 e^{\langle k, q\rangle}\right| + \sum_{i \geq 1}\left|\sum_{k \in \text{level } i}\omega^i_ke^{\langle k, q\rangle} - \sum_{k \in \text{level } (i+1)}\omega^{i+1}_ke^{\langle k, q\rangle} \right| \leq \epsilon \cdot \exp(K_j, q),\]
so in particular we may use
\[
\left|\exp\left(K_j, q\right) - \sum_{k \in \text{level } i}\omega_k^i e^{\langle k, q\rangle}  \right| \leq \epsilon\cdot \exp(K_j, q).
\]

The sum of the errors produced by invocations of $\mathcal{A}$ on all blocks of all streams at level $i$ is, therefore, bounded by
\[
O\left(\alpha \cdot \frac{\sum_{k \in \text{level } i}\omega_k^i e^{\langle k, q\rangle} }{b}\right)\leq O\left(\alpha \cdot \frac{\exp\left(K_j, q\right)}{b}\right).
\]

Thus, we have bounded the error at every level by the same expression. Hence, the total error can be bounded by
\[
\left|\exp(C_j, w, q) - \exp(K_j,q )\right| \leq O\left(\alpha \cdot \log\frac{|K_j|}{b}\cdot  \frac{\exp(K_j, q)}{b}\right).
\]

In order to ensure that the right-hand side is bounded by $\epsilon \cdot \exp(K_j, q)$, it suffices to set
\[
b = O\!\left(\frac{\alpha \log |K|}{\epsilon}\right)
=
O\!\left(\frac{e^{r^2(1+o(1))}\cdot n^{o(1)}}{\epsilon}\right).
\]

Finally, note that our algorithm succeeds as long as all calls to $\mathcal{A}$ succeed. We invoke $\mathcal{A}$ at most $\frac{|K_j|}{b}$ times at each level, and there are at most $O\left(\log \frac{|K_j|}{b}\right)$ levels. Since every call to $\mathcal{A}$ succeeds with probability $1 - \frac{1}{\poly(n)}$, a union bound implies that the overall algorithm succeeds with probability $1-\frac{1}{\poly(n)}$.
\end{proof}

%\begin{proof}[Proof of \Cref{th:big_r_ub_lb}] The desired procedure is simply an invocation of the weighted Merge-and-Reduce framework together with \Compress{} as the black-box compression routine $\mathcal{A}$.
%\end{proof}
\subsection{Lower Bound}\label{sec:lb_low_temp}
In this section we are going to prove the lower bound from \Cref{th:big_r_ub_lb}. The proof exploits again the idea from the lower bound in \Cref{th:const_r_ub_lb} providing a reduction to $\texttt{INDEX}_{2n,n}$ problem with sublinear amount of advice bits. But the analysis is going to be different, since we are now in $r = \Omega(\log n)$ regime instead of a constant $r$. The signal term will not be dominated by a single term of the tail as in Hermite case. We will argue instead that the signal term is still of order $e^{r^2}$ by using Taylor expansion. This also allows us to prove the bound for the whole range of $d = (\log n)^{1 + \Omega(1)}$. Since we do not use orthogonal properties of polynomials anymore, we switch to uniform distribution on a sphere instead of Gaussian, so we do not need rescaling. 
    
Below is our main lemma that together with $R^{\text{pub}, \to}(\texttt{INDEX}_{2n, n}) = \Omega(n)$ from \Cref{thm:indexing} implies the lower bound of \Cref{th:big_r_ub_lb}. Note that we can assume not only $r^2 = \Omega(\log n)$, but $r^2 = \Theta(\log n)$ as otherwise we can consider smaller $r' = O(\log n)$ such that $n \ll \frac{e^{r'^2}}{\eps}$ and therefore show $\Omega(n)$ lower bound for original $r$.

\begin{lemma}
    \label{lem:reduction_big_r}
     %Consider a point set $P$ such that each $p \in P$ is drawn i.i.d.\ from a sphere of radius $r$ generated with public coins. Let $\bx$ be Alice's input from the $\texttt{INDEX}_{2n, n}$ problem.
    Let $S_{n, d, r, \eps}$ denote a deterministic upper bound on the space, measured in bits, of a (randomized) data structure for the Streaming KDE problem  with parameters $n, d, r, \eps$ and a failure probability $1/100$. Assume that $d = \omega(\log n), r^2 = \Theta(\log n)$, $ n < \frac{\exp(r^2 \p{1 - C\sqrt{\frac{\log n}{d}}})}{16\eps}$ for some constant $C$, there exists a communication protocol with a message size $S_{n, d, r, \eps} + o(n)$ which solves the $\texttt{INDEX}_{2n, n}$ problem with probability at least $9/10$.
\end{lemma}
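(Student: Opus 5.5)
Following the reduction of \Cref{lem:reduction}, Alice and Bob use public randomness to sample $K=\{k_1,\dots,k_{2n}\}$ i.i.d.\ uniformly from the sphere $\mathcal{S}_r$, so no norm truncation is needed. Alice's dataset is $K_\bx=\{k_i:\bx_i=1\}$. Her message has two parts: the state of the streaming KDE structure on $K_\bx$, costing $S_{n,d,r,\eps}$ bits, and a low-degree Taylor sketch $\sum_{k\in K_\bx}\psi_1(k)$ from \Cref{lem:embed_memory}. The sketch represents $\expker_{\le t}$ up to additive error $\eps$, say, and costs $\tilde O((d+t)^t/t!)$ bits. Bob, holding index $j$, sets $q=k_j$. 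He queries the structure for $\widehat{\expker}(K_\bx,q)$, subtracts $\langle \sum\psi_1(k),\psi_2(q)\rangle$, and obtains an estimate $S$ of $\sum_{k\in K_\bx}\expker_{>t}(k,q)$. He then thresholds $S$. The error in $S$ is at most $\eps\,\expker(K_\bx,q)+\eps$.

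\textbf{Bounding the error.} By \Cref{lem:rand_dataset_kde} with $\alpha=1$,
\[
\expker(K_\bx,q)\le n\,e^{O(r^2\sqrt{\log n/d})}+e^{r^2}.
\]
Since $r^2=\Theta(\log n)$, the first term is $n\cdot n^{o(1)}$ when $d=\omega(\log n)$. The hypothesis $16\eps n<e^{r^2(1-C\sqrt{\log n/d})}$ then gives
\[
\eps\,\expker(K_\bx,q)\le \tfrac{1}{16}e^{r^2}+\eps e^{r^2}\le e^{r^2}/8,
\]
with $C$ chosen to match the constant in \Cref{prop:rand_scalar_prod}.

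\textbf{Signal.} If $\bx_j=1$, the term $k=k_j$ contributes $\expker_{>t}(k_j,k_j)=\sum_{l>t}r^{2l}/l!$. If we pick $t\le r^2/2$, this is at least $(1-o(1))e^{r^2}$, because the Poisson$(r^2)$ mass below $r^2/2$ is exponentially small.

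\textbf{Noise.} The remaining terms are $\sum_{k\in K_\bx,k\neq k_j}\expker_{>t}(k,q)$, which appear in both cases. By \Cref{prop:rand_scalar_prod} each satisfies $|\langle k,q\rangle|\le Cr^2\sqrt{\log n/d}=:\rho$ with $\rho=o(r^2)$, so
\[
|\expker_{>t}(k,q)|\le \sum_{l>t}\rho^l/l!\le 2\rho^{t+1}/(t+1)!
\]
once $t>2\rho$. The simplest route is a union bound, giving noise at most $2n\rho^{t+1}/(t+1)!$. A sharper alternative uses symmetry: $\E\langle k,q\rangle^{l}=0$ for odd $l$, and \Cref{lem:expectation_product} controls the even moments, so Chebyshev gives a noise term of order $\sqrt{n\cdot 2^{t}t!\,r^{4t}/d^{t}}/t!$. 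I would aim for noise $o(e^{r^2})$.

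\textbf{Choosing $t$.} We need three things simultaneously: $t\le r^2/2$ so the signal survives, noise $o(e^{r^2})$, and sketch size $(d+t)^t/t!=o(n)$. Take $t=c\log n/\log(d/\log n)$ for a small constant $c$. Then $d^t\le n^{c}$, so the sketch is $o(n)$. Also $t=o(\log n)\le r^2/2$, using $d=\omega(\log n)$ so that $\log(d/\log n)\to\infty$. For the noise, the union-bound estimate reads
\[
n\Bigl(\tfrac{e\rho}{t}\Bigr)^{t}=n\Bigl(\tfrac{e\,r^2\sqrt{\log n/d}}{t}\Bigr)^{t}.
\]
Since $r^2/t\approx\log(d/\log n)$, the base is roughly $\log(d/\log n)\sqrt{\log n/d}\to0$, but raising it to the power $t$ only yields $n^{-\Theta(c)}$. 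That is not enough to beat $n$ against $e^{r^2}=n^{\Theta(1)}$.

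\textbf{Main obstacle.} Making these parameters compatible is the hard step. I would use the Chebyshev (variance) bound, which replaces $n$ by $\sqrt n$ in the noise, and take $t$ as large as the sketch budget allows, $t\approx(1-o(1))\log n/\log d$. The noise then becomes roughly $\sqrt n\,(r^4/d)^{t/2}$, which must be compared with $e^{r^2}$, and $\sqrt n\le e^{r^2}$ whenever $r^2\ge\frac12\log n$. If the constant in $r^2=\Theta(\log n)$ is small, I would first shrink the instance size. Concretely, I would reduce from $\texttt{INDEX}_{2m,m}$ with $m\le n$ chosen so that $\sqrt m\ll e^{r^2}$ while $m$ still matches the claimed bound, just as $m$ was chosen in \Cref{lem:reduction}.

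Finally, the failure probabilities add up to at most $1/10$: $.01$ from the data structure, plus the spherical inner-product bounds and Chebyshev. Bob thresholds $S$ at $e^{r^2}/2$. This decodes $\bx_j$ correctly, since the gap $(1-o(1))e^{r^2}$ exceeds twice the combined approximation error and noise.
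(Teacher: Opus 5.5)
\textbf{There is a genuine gap.} Your reduction is the paper's: uniform points on $\mathcal{S}_r$, a Taylor sketch of $\expker_{\le t}$, the query $q=k_j$, the Poisson-median signal, \Cref{lem:rand_dataset_kde} for the approximation error, and Chebyshev via \Cref{lem:expectation_product}. But the step you call the main obstacle is left open, and the resolution you sketch does not work.

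\textbf{The missing idea is keeping track of the factorials.} The sketch costs $(d+t)^t/t!$ bits, not $d^t$. So the budget allows $t\approx \log n/\log(d/t)$, which is larger than your $\log n/\log d$ by a constant factor (about $a/(a-1)$ for $d=\log^a n$). The paper takes $t$ maximal with $\frac{(d+t)^t}{t!}\log^2 n=o(n)$. This gives $n/d^t\le 2^t\,\mathrm{polylog}(n)/t!$. Substituting into your own Chebyshev variance $n\cdot 2^t r^{4t}/(d^t t!)$, the factor $n$ cancels:
\[
n\cdot\frac{(2r^4/d)^{t}}{t!}\;\le\;\mathrm{polylog}(n)\cdot\frac{(4r^4)^{t}}{(t!)^{2}}\;\le\;\mathrm{polylog}(n)\cdot\Bigl(\frac{2er^2}{t}\Bigr)^{2t}\;=\;n^{o(1)}.
\]
The last step holds because $r^2/t=O(\log d)$, and $t\log\log d=o(\log n)$ for $d=(\log n)^{1+\Omega(1)}$. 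This is $o(e^{2r^2})$ for \emph{every} constant in $r^2=\Theta(\log n)$. Meanwhile $t=o(r^2)$, so the signal survives.

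Your simplified bound $\sqrt n\,(r^4/d)^{t/2}$ at $t\approx\log n/\log d$ throws away exactly this factor. It evaluates to about $n^{1/a}$ for $d=\log^a n$. Even if you keep the $1/t!$, your smaller $t$ leaves a standard deviation near $n^{1/(2a)}$. Neither need be below $e^{r^2}=n^{\beta}$.

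\textbf{Your fallback does not prove the lemma.} Reducing from $\texttt{INDEX}_{2m,m}$ with $m<n$ proves a different statement. The lemma asks for $\texttt{INDEX}_{2n,n}$ with $o(n)$ side information, under only the stated hypothesis on $n$. The fallback also loses polynomially. For $r^2=\beta\log n$ with $\beta<\tfrac12$, the condition $\sqrt m\ll e^{r^2}$ forces $m\le n^{2\beta}$, so you would get $\Omega(n^{2\beta})$ instead of $\Omega(n)$. The regime $r^2<\tfrac12\log n$ is exactly the one needed for the attention lower bound (\Cref{l:fromkde}), so this is not a corner case.

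\textbf{Your threshold at $e^{r^2}/2$ is also wrong.} The noise $\sum_{k\neq k_j}\expker_{>t}(k,k_j)$ is not centered: only the odd moments vanish, and every even-degree term has positive mean. With the budget-saturating $t$, its mean is $\mu=n^{1/2\pm o(1)}$. This exceeds $e^{r^2}$ once $r^2<(\tfrac12-\Omega(1))\log n$. Chebyshev controls only the deviation from $\mu$. So Bob must threshold at $\mu+\tfrac14 e^{r^2}$, where $\mu=\E_K[\expker_{>t}(K_\bx,k_j)\mid \bx_j=0]$. He can compute $\mu$ explicitly, since by rotational invariance it depends only on $n,d,r,t$. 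This is what the paper does.
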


The rest of the section is dedicated to proving \cref{lem:reduction_big_r}.

\paragraph{The reduction.} Using the public randomness, Alice and Bob agree on a point set $K = \{k_1, k_2, \cdots, $ $k_{2n}\} \subset \R^d$ such that each $k_i \in K$ is drawn i.i.d.\ uniformly at random from a $d$-dimensional Euclidean sphere of radius $r$. Given her input string $\bx \in \{0,1\}^{2n}$ from the $\texttt{INDEX}_{2n, n}$ problem, Alice defines a corresponding dataset $K_\bx = \{ k_i \in K\mid \bx_i = 1\}$. Then she constructs her message to Bob, which contains two components:  for the first component, Alice builds a KDE data structure on the dataset $K_\bx$ with parameters $n, d, r, \eps$. This takes $S_{n, d, r, \eps}$ bits. For the second component,   Alice constructs an additional sketch that approximates KDE for the same dataset $K_{\bx}$  but with polynomial kernel
\[
\expker_{\leq t}(k, q) = \sum_{l = 0}^{t} \frac{\<k, q>^l}{l!}.
\]

This sketch is defined via a pair of feature maps $\psi_A, \psi_B$ such that $\langle \psi_A(k), \psi_B(q) \rangle$ approximates $\expker_{\leq t}(k, q)$. The precise guarantee is described in the following lemma.
\begin{lemma}
    \label{lem:taylor_embed_memory}
    Let $t < d$. There exist embeddings $\psi_A, \psi_B$ such that 
    \[
    \abs{\Bigl \langle \sum_{k \in K_\bx} \psi_A(k), \psi_B(q) \Bigr \rangle - \expker_{\leq t}(K_\bx, q)} \leq \eps \expker(K_\bx, q),
    \]
    and that $\sum_{k \in K_\bx} \psi_A(k)$ can be stored in $O(\frac{(d+t)^t}{t!} \log^2 n)$ bits of space.
\end{lemma}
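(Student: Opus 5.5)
This follows directly from \Cref{lem:embed_memory}, applied to $\gamma = \expker_{\leq t}$. That kernel is a degree-$t$ polynomial in $\langle k,q\rangle$ with coefficients $a_l = 1/l!$, so $\max_l |a_l| = 1$. We take $\psi_A(k) = (k^{\otimes l}/l!)_{l=0}^t$ and $\psi_B(q) = (q^{\otimes l})_{l=0}^t$, merging repeated monomials. Then $\langle \psi_A(k),\psi_B(q)\rangle = \expker_{\leq t}(k,q)$ exactly, and \Cref{prop:poly_embed_dim} gives embedding dimension $O\bigl(\frac{(d+t)^t}{t!}\bigr)$. All error comes from rounding the coordinates of the aggregated vector $\sum_{k\in K_\bx}\psi_A(k)$.

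We need additive rounding error at most $\eps \expker(K_\bx,q)$, but the bound must not depend on the unknown query. Since every key and query lies in $\mathcal B_r$, we have $\expker(K_\bx,q) \ge |K_\bx| e^{-r^2} \ge e^{-r^2}$ whenever $K_\bx\neq\emptyset$. Here $|K_\bx| = n$ because the \texttt{INDEX} input has weight $n$. So it suffices to invoke \Cref{lem:embed_memory} with target additive error $\delta = \eps e^{-r^2}$, exactly as in the proof of \Cref{lem:map}.

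It remains to check that the logarithmic factor in \Cref{lem:embed_memory} is $O(\log^2 n)$. That factor is
\[
\log\frac{|K|\cdot (d+t)^t \cdot \max\{r,1\}^t}{\eps e^{-r^2}} = O\bigl(\log n + t\log(d+t) + t\log\max\{r,1\} + \log(1/\eps) + r^2\bigr).
\]
Each term is controlled in the regime of \Cref{lem:reduction_big_r}. First, $r^2 = \Theta(\log n)$. Second, $t < d \le \poly\log n$ (the standing assumption), so $t\log(d+t) = O(\log n\cdot\log\log n)$ when $t = O(\log n)$. More generally, $t \le d$ gives $t\log(d+t) = \tilde O(1)$. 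Third, $\log(1/\eps) = O(\log n)$ under the standing assumption $\eps > n^{-100}$. This last term is the only mildly delicate point.

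I expect the main (minor) obstacle to be matching the exact $\log^2 n$ form of the bound. If $t$ is as large as $d = \log^{a} n$, then $t\log(d+t)$ exceeds $\log^2 n$. To handle this, I would note that the reduction only uses $t = O(\log n)$, since the side-information size $\frac{(d+t)^t}{t!}$ must be $o(n)$. Under that restriction every term above is $O(\log n\log\log n) \le O(\log^2 n)$. Multiplying by the dimension $O\bigl(\frac{(d+t)^t}{t!}\bigr)$ gives the claimed $O\bigl(\frac{(d+t)^t}{t!}\log^2 n\bigr)$ bits.
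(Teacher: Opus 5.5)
Your proposal is correct and follows the paper's route exactly. The paper's proof is a one-line application of \Cref{lem:embed_memory} to the tensor-power embedding with polynomial precision, giving $\binom{d+t}{t}\cdot O(t\log(d+t)+\log n) = O\bigl(\frac{(d+t)^t}{t!}\log^2 n\bigr)$ bits. You additionally make explicit that the $\log^2 n$ form needs $t = O(\log n)$, which the reduction supplies since it only uses $t < t_{\max} = o(\log n)$; the paper leaves this assumption unstated.
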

\begin{proof}
    As shown by \Cref{lem:embed_memory}, tensor powers embedding with polynomial precision gives the desired data structure with memory consumption ${d + t\choose t}\cdot O(t\log(d+t) + \log n) = O(\frac{(d+t)^t}{t!} \log^2 n)$.
\end{proof}

Alice sends the  KDE data structure of $K_\bx$ and $\sum_{k \in K_\bx} \psi_A(k)$ to Bob. We now describe the second part of the protocol, which is how Bob uses these two information to reveal $\bx_j$ for his input index $j$ from the $\texttt{INDEX}_{2n, n}$ problem, with probability at least $9/10$. This task is equivalent to detecting whether $k_j \in K_{\bx}$. For this, Bob first queries the data structure with $k_j$.
With $.99$ chance,   the KDE data structure outputs $\widehat{\expker}(K_\bx , k_j)$ with 
\[ |\widehat{\expker}(K_\bx , k_j) - \expker(K_\bx, k_j) | \leq \eps \expker(K_\bx, k_j).\] 
Bob also computes 
\[\widehat{\expker}_{\leq t}(K_{\bx}, k_j) := \Bigl\langle \sum_{k \in K_\bx} \psi_A(k), \psi_B(k_j)  \Bigr \rangle\] using the sketch $\sum_{k \in K_\bx} \psi_A(k)$ he receives.
After that, he can compute $\widehat{\expker}(K_{\bx}, k_j) - \widehat{\expker}_{\leq t}(K_{\bx}, k_j)$, which is a noisy evaluation of $\expker_{>t}(K_{\bx}, k_j)$. Formally, with $.99$ chance,
\begin{equation}\label{eq:approximation_err}
    \Bigl |\Bigl (\widehat{\expker}(K_{\bx}, k_j) - \widehat{\expker}_{\leq t}(K_{\bx}, k_j) \Bigr )  - \expker_{>t}(K_{\bx}, k_j)\Bigr| \leq 2\eps \expker(K_\bx, k_j).
\end{equation} 

We next argue that Bob can decide whether  $k_j \in K_{\bx}$ by thresholding the quantity $\widehat{\expker}(K_{\bx}, k_j) - \widehat{\expker}_{\leq t}(K_{\bx}, k_j) $. The idea is that when $k_j \in K_{\bx}$, $\expker_{> t}(K_{\bx}, k_j) $ contains a ``signal term'' $\expker_{> t}(k_j, k_j)$. We will argue that the contribution of this signal term typically stands out from (1) the noise induced by the contribution of $\expker_{> t}(k, k_j)$ for $k \neq k_j$ in $K_\bx$  and (2) the approximation error from \cref{eq:approximation_err}. Concretely, we will show that
there exists an appropriate choice of $t$ such that \begin{itemize}
    \item $\expker_{> t}(k_j, k_j) \geq \frac{1}{2}e^{r^2}$,
    \item $\Var{\expker_{>t}(K_{\bx} \setminus \{k_j\}, k_j)}= o(e^{2r^2})$, and
    \item $2\eps \expker(K_\bx, k_j) \leq  \frac{1}{8}e^{r^2}$.
\end{itemize}
Moreover, for this choice of $t$, $\widehat{\expker}(K_\bx , k_j)$ can indeed be sent using $o(n)$ bits. Once we have these, Bob can detect whether $k_j \in K_{\bx}$  by thresholding whether 
$\widehat{\expker}(K_{\bx}, k_j) - \widehat{\expker}_{\leq t}(K_{\bx}, k_j) \geq \mu + \frac{1}{4}e^{r^2}$ for explicitly computable $\mu := \E_{K}[\expker_{> t} (K_\bx , k_j) \mid \bx_j = 0)]$. This succeeds with high constant probability by an application of Chebyshev's inequality. This concludes the description of the communication protocol.

The rest of this section is organized as follows: we first bound the signal in \cref{lem:signal_bound}. Then we bound the variance with respect to $t$ in  \cref{lem:variance_bound}. Lastly, we state our choice of $t$, and show that it leads to the claimed final bounds for the variance, the approximation noise, and the bit complexity.

\begin{lemma}
    \label{lem:signal_bound}
    For any $t < r^2$ and any $k_j$ from a sphere of radius $r$, we have $\expker_{>t}(k_j, k_j) \geq \frac{1}{2}e^{r^2}$.
\end{lemma}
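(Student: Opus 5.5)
Since $k_j$ lies on the sphere of radius $r$, we have $\langle k_j,k_j\rangle=r^2$, and so
\[
\expker_{>t}(k_j,k_j)=\sum_{l=t+1}^\infty \frac{r^{2l}}{l!}=e^{r^2}-\sum_{l=0}^{t}\frac{r^{2l}}{l!}.
\]
The plan is therefore to show that the head $\sum_{l\le t} r^{2l}/l!$ is at most $\frac12 e^{r^2}$ whenever $t<r^2$. Probabilistically, with $X\sim\mathrm{Poisson}(\lambda)$ and $\lambda=r^2$, this says $\Pr[X\le t]\le 1/2$. Equivalently, $\Pr[X>t]\ge 1/2$, and this is monotone in $t$: it suffices to treat the largest integer $t<\lambda$, namely $t=\lceil\lambda\rceil-1$.

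The key step is to establish $\Pr[X\ge \lceil\lambda\rceil]\ge 1/2$, i.e.\ $\Pr[X\le\lceil\lambda\rceil-1]\le 1/2$. The classical route goes through the median of the Poisson distribution, which lies in $[\lambda-\ln 2,\lambda+1/3)$ (Choi's theorem). The statement we actually need, however, is the cleaner fact that $\Pr[X\ge\lambda]\ge 1/2$ for every real $\lambda>0$ (Teicher's inequality for integer $\lambda$, and its extension to all reals). Concretely, I would write
\[
\Pr[X\le t]=\frac{1}{t!}\int_\lambda^\infty s^t e^{-s}\,ds
\]
via the Gamma-function identity. This is the probability that a $\mathrm{Gamma}(t+1,1)$ variable exceeds $\lambda$. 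Since $\lambda\ge t+1$ by integrality, it remains to show that a $\mathrm{Gamma}(t+1)$ variable exceeds its mean $t+1$ with probability at most $1/2$. That is the statement that the median of $\mathrm{Gamma}(k)$ is below its mean $k$, a known fact: the median is approximately $k-1/3$.

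The main obstacle is proving this median-below-mean fact cleanly rather than just citing it. If citing is acceptable, I would invoke the known bound $\mathrm{median}(\mathrm{Gamma}(k))<k$ (Chen–Rubin). If a self-contained argument is needed, my fallback is a weaker but sufficient route. In the regime of the paper $r^2=\Theta(\log n)$ is large, so a normal approximation / Berry–Esseen bound for the Poisson gives $\Pr[X\le t]\le \Pr[X\le\lambda]\le \frac12+O(1/\sqrt\lambda)$. This would yield a constant like $\frac12-o(1)$ instead of $\frac12$, and the downstream thresholding, which only uses constant gaps such as $\frac14 e^{r^2}$ versus $\frac18 e^{r^2}$, tolerates this. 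I would note that explicitly if the exact constant cannot be secured.
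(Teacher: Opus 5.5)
\paragraph{Gap: a false step at non-integer $r^2$.} Your reduction to showing $\Pr[X\ge\lceil\lambda\rceil]\ge\frac12$ for $X\sim\mathrm{Poisson}(\lambda)$, $\lambda=r^2$, is fine. Both facts you then use fail when $r^2\notin\mathbb{Z}$.
\begin{itemize}
\item The claimed extension of $\Pr[X\ge\lambda]\ge\frac12$ to all real $\lambda>0$ is false. At $\lambda=\frac12$ it equals $1-e^{-1/2}\approx 0.39$.
\item ``$\lambda\ge t+1$ by integrality'' is wrong. From $t<\lambda$ and $t\in\mathbb{Z}$ you only get $t+1\le\lceil\lambda\rceil$. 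The choice $t=\lfloor\lambda\rfloor$ is admissible and has $t+1>\lambda$.
\end{itemize}
No argument can repair this, because the statement itself fails there. For $r^2=\frac12$ and $t=0$ one has $\expker_{>0}(k_j,k_j)=e^{1/2}-1\approx 0.65<0.82\approx\frac12 e^{1/2}$. For large $r^2=k+\theta$ with $\theta>0$ small and $t=k$, continuity from $\Pr_{\mathrm{Poisson}(k)}[X\le k]>\frac12$ gives $\expker_{>k}(k_j,k_j)<\frac12 e^{r^2}$. So the real obstacle is not proving that the Gamma median lies below its mean; that is citable. The obstacle is this boundary case, where the constant $\frac12$ is unattainable. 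There, only something like your Berry--Esseen fallback, with constant $\frac12-O(1/\sqrt{\lambda})$, is true.

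\paragraph{Comparison with the paper.} The paper's one-line proof has the same off-by-one. It bounds $\sum_{l\ge t+1}r^{2l}/l!$ below by $\sum_{l\ge\lfloor r^2\rfloor}r^{2l}/l!$, which needs $t+1\le\lfloor r^2\rfloor$. It then uses that the Poisson median is within $1$ of the mean, hence at least $\lfloor\lambda\rfloor$.

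The right hypothesis is $t\le r^2-1$. This costs nothing downstream, since there $t=o(\log n)$ while $r^2=\Theta(\log n)$. Under that hypothesis your main argument is correct, and it is a genuinely different route. You pass through the duality $\Pr[X\le t]=\Pr[\mathrm{Gamma}(t+1,1)>\lambda]$ and the Chen--Rubin bound that the median of $\mathrm{Gamma}(k)$ is below $k$. The paper instead appeals directly to the Poisson median. Both rest on one cited median fact. Yours makes the monotonicity in $\lambda$ transparent and delivers the strict inequality cleanly, while the paper's stays entirely on the Poisson side.
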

\begin{proof}
    Since the cutoff point $t < r^2$,
    \begin{align*}
        \expker_{>t}(k_j, k_j) = \sum_{l=t+1}^{\infty} \frac{r^{2l}}{l!} > \sum_{l=\floor{r^2}}^{\infty} \frac{r^{2l}}{l!} > \frac{1}{2}e^{r^2}
    \end{align*}
    where the last inequality uses the fact that the median of Poisson distribution is within distance $1$ from its mean.
    \end{proof}

\begin{lemma}
    \label{lem:variance_bound}
    For any $t > (\log n)^{1 - o(1)}$, any fixed $k_j$ from a sphere of radius $r$, and for uniformly random i.i.d. draws of $v_1, \cdots, v_n$ from a sphere of radius $r$, we have
    \begin{align*}
        \Var{\expker_{>t}(\{v_1, \cdots, v_n\}, k_j)} = O\qty(n\frac{(2r)^{4t}}{d^t t!}).
    \end{align*}
\end{lemma}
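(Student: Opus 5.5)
Since $v_1,\dots,v_n$ are i.i.d. and $k_j$ is fixed, the variance of the sum is $n$ times the variance of a single term. It therefore suffices to show
\[
\Var{\expker_{>t}(v,k_j)} \le \E\big[\expker_{>t}(v,k_j)^2\big] = O\!\left(\frac{(2r)^{4t}}{d^t\, t!}\right)
\]
for a single uniform $v$ on the sphere of radius $r$. Write $X=\langle v,k_j\rangle = r^2\langle u,u'\rangle$ with $u,u'$ uniform on $\mathcal{S}^{d-1}$; by rotational invariance we may take $u'$ fixed. Then $\expker_{>t}(v,k_j)=\sum_{l>t} X^l/l!$.

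The first step is to bound the second moment directly. By Cauchy--Schwarz with weights,
\[
\Big(\sum_{l>t}\frac{X^l}{l!}\Big)^2 \le \Big(\sum_{l>t}\frac{|X|^l}{l!}\Big)^2 \le \Big(\sum_{l>t} 2^{-(l-t)}\Big)\Big(\sum_{l>t} 2^{l-t}\frac{X^{2l}}{(l!)^2}\Big),
\]
so $\E[\expker_{>t}^2] \le \sum_{l>t} 2^{l-t}\,\E[X^{2l}]/(l!)^2$. Now apply \Cref{lem:expectation_product}, which gives $\E X^{2l} = r^{4l}\,\E\langle u,u'\rangle^{2l} \le r^{4l}\, 2^l\, l!/d^l$. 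Each term is then at most $2^{l-t}\cdot (2r^4/d)^l/l!$.

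The second step is to sum the resulting series, which I expect to be the main (though mild) obstacle. The ratio of consecutive terms is $4r^4/(d(l+1))$. Since $r^2=\Theta(\log n)$ and $t>(\log n)^{1-o(1)}$, this ratio is $O(\log^2 n/(d\,t))$. It is at most $1/2$ provided $d\,t \gg \log^2 n$, which holds because $d=\omega(\log n)$ and $t$ is nearly $\log n$; the lemma's hypothesis on $t$ is precisely what makes this work. If the $o(1)$ in the exponents were to conflict, one could fall back on using $\E\langle u,u'\rangle^{2l}\le \prod (2i+1)/(2i+d)$ more carefully, or absorb slack into the generous $(2r)^{4t}=16^t r^{4t}$ constant. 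Given geometric decay, the sum is dominated by its first term $l=t+1$, which is $O\big((2r^4/d)^{t+1}/(t+1)!\big)$. This is at most $O\big((2r)^{4t}/(d^t t!)\big)$ up to the factor $2r^4/(d(t+1))=o(1)$, and the $16^t$ slack easily covers the $2^t$ and any remaining constants.

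Multiplying by $n$, using independence, yields $\Var{\expker_{>t}(\{v_1,\dots,v_n\},k_j)} = O\big(n(2r)^{4t}/(d^t t!)\big)$.
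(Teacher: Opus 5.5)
Your proposal follows the paper's proof essentially step for step. The variance is bounded by the second moment, and the squared tail is bounded by $\sum_{l>t}2^{l-t}\,\E\langle v,k_j\rangle^{2l}/(l!)^2$. Your weighted Cauchy--Schwarz yields exactly the bound the paper gets by iterating $(a+b)^2\le 2a^2+2b^2$, and it is a little cleaner because no remainder term is left over. You then apply \Cref{lem:expectation_product}, use geometric decay of the series, and pick up the factor $n$ from independence, all as in the paper.

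The one step to fix is your justification of the decay. $d=\omega(\log n)$ together with $t>(\log n)^{1-o(1)}$ does not give $dt\gg\log^2 n$. For example, $d=\log n\log\log n$ and $t=\log n/(\log\log n)^2$ satisfy both conditions but give $dt=o(\log^2 n)$, so the ratio $4r^4/(d(t+1))$ is not even bounded. You need the standing assumption $d=(\log n)^{1+\Omega(1)}$, which is what the paper's proof invokes; it gives $4r^4/(dt)\le(\log n)^{-\Omega(1)+o(1)}$.

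The $16^t$ slack does not substitute for this assumption. With $\lambda=4r^4/d$, your series equals $2^{-t}\sum_{l>t}\lambda^l/l!$. If $\lambda/t$ is a large enough constant, this sum exceeds $(4\lambda)^t/t!$ by a factor $e^{\Omega(t)}$. So the bound can only hold when $\lambda=O(t)$, i.e.\ $dt=\Omega(r^4)$.
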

\begin{proof}
    For each $v$,
    \begin{align*}
        \Var{\expker_{>t}(v, k_j)} \leq \E \expker^2_{>t}(v, k_j) = \E\qty(\sum_{l=t+1}^{\infty} \frac{\<v, k_j>^{l}}{l!})^2
    \end{align*}
    Terms in this sum are heavily correlated, but they decay rapidly. Therefore we can bound
    \begin{gather*}
        \E\qty(\sum_{l=t+1}^{\infty} \frac{\<v, k_j>^{l}}{l!})^2 \leq 2\E\qty(\frac{\<v, k_j>^{t+1}}{(t+1)!})^2 + 2\E\qty(\sum_{l=t+2}^{\infty} \frac{\<v, k_j>^{l}}{l!})^2 \leq \dots \leq \sum_{l=t+1}^{\infty} \frac{2^{l-t}\E\<v, k_j>^{2l}}{(l!)^2}.
    \end{gather*}
    Using \Cref{lem:expectation_product} to bound moments of dot products and using $r^2 = O(\log n), d = (\log n)^{1 + \Omega(1)}, t > (\log n)^{1 - o(1)}$ to bound tail of the sum, we conclude
    \begin{gather*}
        \sum_{l=t+1}^{\infty} \frac{2^{l-t}\E\<v, k_j>^{2l}}{(l!)^2} \leq \sum_{l=t+1}^{\infty} \frac{2^{2l-t}r^{4l}}{d^{l}l!} \leq \frac{(2r)^{4t}}{d^t t!}.
    \end{gather*}
    Multiplying with $n$ gives the variance of $\expker_{>t}(\{v_1, \cdots, v_n\}, k_j)$  by independence.
    % Using the choice of $t$ we conclude
    % \begin{gather*}
    %     \Var{\expker_{>t}(Y, x)} \leq \frac{n}{d^t t!} = \frac{n}{d^{\floor{\frac{\log n - \log\log n}{\log d}}}}\frac{1}{t!} \leq \frac{d \log n}{t!} = \frac{\log^{\Theta(1)} n}{\log^{\omega(1)} n} = o(1).
    % \end{gather*}
\end{proof}
Now we choose the cutoff parameter $t$:
% such that \Cref{lem:taylor_embed_memory,lem:variance_bound} work and:
% \begin{itemize}
%     \item Alice sends $o(n)$ side information bits, i.e. $\frac{(d+t)^t}{t!}\log^2 n = o(n)$.
%     \item Noise is negligible, i.e. $\Var{\expker_{>t}(K_{\bx} \setminus \{k_i\}, k_i)} = O\qty(n\frac{r^{2t}}{d^t t!}) = o(e^{2r^2})$.
% \end{itemize}
let's first pick $t_{max} = o(\log n)$ such that $t_{\max} > \frac{\log n}{\log d/t_{max}}$. It already satisfies assumptions $t < r^2 = \Theta(\log n)$ and $t < d$ (which are the premises of \cref{lem:taylor_embed_memory} and \cref{lem:variance_bound}), but $\frac{(d+t)^t}{t!} = \omega(n)$, i.e. the bit complexity is too large. Thus, we start decreasing $t$ from $t_{max}$ until $\frac{(d+t)^t}{t!} > \frac{n}{\log ^3 n}$ holds. When we stop, it means that 
\begin{align*}
    \frac{(d+t)^t}{t!}\log ^2 n < \frac{n}{\log n} = o(n).
\end{align*}
Since it is the moment we stopped and $t = o(d)$,
\begin{align*}
    \frac{(d + t + 1)^{t+1}}{{(t+1)}!} > \frac{n}{\log^3 n} \quad\Rightarrow\quad \frac{d^t}{t!} > \frac{n}{2^t \cdot d \log^3 n} \quad\Rightarrow\quad \begin{cases}
        \frac{n}{d^t} = \frac{2^t(\log n)^{O(1)}}{t!} \\
        t > \frac{\log n}{\log d} = (\log n)^{1 - o(1)}
    \end{cases}
\end{align*}
Therefore, the variance is 
\begin{gather*}
    \Var{\expker_{>t}(K_{\bx} \setminus \{k_j\}, k_j)} = O\qty(n\frac{(2r)^{4t}}{d^t t!}) = \frac{2^{5t} r^{4t}(\log n)^{O(1)}}{(t!)^2} = \\
    \frac{(\log n)^{2t(1 + o(1))}}{(\log n)^{2t(1 - o(1))}} = (\log n)^{o(t)} < e^{o(\frac{\log n}{\log d - \log t})\log \log n} = n^{o(\frac{\log \log n)}{\Omega(\log \log n)})} = n^{o(1)} = o(e^{2r^2}).
\end{gather*}

It remains to argue that the approximation error
% This shows that for such $t$ and kernel $\expker_{>t}$ the contribution of one point $\expker_{>t}(k_i, k_i) > \Omega(e^{r^2})$ typically stands out a lot from the noise in contribution of all other points 
% \[
%     \sqrt{\Var{\expker_{>t}(K_{\bx} \setminus \{k_i\}, k_i)}} = o(e^{r^2}).
% \] In other words, the following holds with high probability (from Chebyshev's inequality; we use $\mu = \E_{K} [\expker_{>t}(K_{\bx}, k_i) \mid \bx_i = 0]$):

% \begin{center}
% \begin{tikzpicture}[
%     node distance=0.5cm,
%     every node/.style={align=center, minimum width=0.5cm, minimum height=1cm},
%     arrow/.style={->, thick}
% ]

% % Center node (question)
% \node (question) {Does $k_i \in K_{\bx}$? Or, equivalently, does $\bx_i = 1$?};

% % Left and right nodes
% \node (left) [below left=of question, xshift=-0.1cm] {Yes};
% \node (right) [below right=of question, xshift=0.1cm] {No};

% % Bottom nodes
% \node (leftbottom) [below=of left] {$\expker_{>t}(K_{\bx}, k_i) > \mu + \frac{1}{2}e^{r^2}$};
% \node (rightbottom) [below=of right] {$\expker_{>t}(K_{\bx}, k_i) < \mu + \frac{1}{100}e^{r^2}$};

% % Arrows
% \draw[arrow] (question) -- (left);
% \draw[arrow] (question) -- (right);
% \draw[arrow] (left) -- (leftbottom);
% \draw[arrow] (right) -- (rightbottom);

% \end{tikzpicture}
% \end{center}

% Therefore, Bob will make his decision based on whether $\expker_{>t}(K_{\bx}, k_i)$ is bigger than $\mu + \frac{1}{4}e^{r^2}$. Since he is only observing $\widehat{\expker}(K_{\bx}, k_i) - \widehat{\expker}_{\leq t}(K_{\bx}, k_i)$, we need to  bound the noise
\begin{equation*}    
     2\eps \expker(K_{\bx}, k_j) \leq \frac{1}{8}e^{r^2}.
\end{equation*}
Since the dataset is $K_{\bx}$ sampled uniformly at random from a sphere, with high probability, we have $\expker(K_{\bx}, k_j) \leq e^{r^2} + ne^{O(r^2 \sqrt{\frac{\log n}{d}})}$ by \Cref{lem:rand_dataset_kde}, therefore,
\[\eps \cdot \expker(K_{\bx}, k_j) \leq \eps (e^{r^2} + ne^{O(r^2 \sqrt{\frac{\log n}{d}})}) < \frac{1}{16}e^{r^2}\] as desired,  where for the second inequality we used
$n < \frac{e^{r^2 \p{1 - C\sqrt{\frac{\log n}{d}}}}}{16\eps}.$ Summarizing the above, we show that for our choice of $t$, the contribution  of the signal term $\expker_{> t}(k_j, k_j)$ typically stands out from the noise induced by the contribution of $\expker_{> t}(k, k_j)$ for $k \neq k_j$ in $K_\bx$  and the approximation error. And the bit complexity of the sketch is $o(n)$.  This concludes our proof of the reduction to the \texttt{INDEX} problem.
\subsection{Putting it together}
\label{sec:low_temp_putting_together}

\begin{proof}[Proof of \Cref{th:big_r_ub_lb}]
    The upper bound is shown in \Cref{sec:ub_low_temp} using pseudorandomification \\(\Cref{lem:ps_randomification}) and better compression for pseudo-random parts (\Cref{lem:ps_rand_coreset,lem:small_r_coreset}), resulting in compression procedure \Compress{}. The construction is maintained in a stream using weighted Merge-Reduce algorithm (\Cref{lem:merge-reduce-w}). Final block size $b$ to solve Softmax KDE problem is chosen as $\frac{e^{r^2 (1 + o(1))}}{\eps}n^{o(1)} = \frac{e^{r^2 (1 + o(1))}}{\eps}$ using $r^2 = \Omega(\log n)$. Therefore the final space complexity is $\tilde{O}(\frac{e^{r^2 (1 + o(1))}}{\eps})$.

    The lower bound is given in \Cref{sec:lb_low_temp}.
    As shown by \Cref{lem:reduction_big_r}, there is a $S_{n, d, r, \eps} + o(n)$ size protocol that solves $\texttt{INDEX}_{2n, n}$ when $n \leq \tilde{O}( \exp(r^2 \p{1 - o(1)})/\eps)$, where $S_{n, d, r, \eps}$ denotes the space complexity of any algorithm solving Softmax KDE problem. Combined with the bound $R^{\text{pub}, \to}(\texttt{INDEX}_{2n, n}) = \Omega(n)$ from \cref{thm:indexing} it implies the lower bound of \cref{th:big_r_ub_lb}.
\end{proof}

\section{From Softmax KDE to Attention}
In this section we provide formal results for the Attention problem (\Cref{def:streaming-attn}). First, we restate our main theorems.

\begin{theorem}[Formal version of \Cref{remark:const_r_attn}]
    \label{th:const_r_ub_lb_attn}
    Assume that $r = \log^{o(1)} n$, $\log n \leq d \leq \log^a n$ for some constant $a > 1$ and $n^{-100} < \eps < n^{-c}$ for any constant $c > 0$. Then the Attention problem can be solved using
    \begin{align*}
        \tilde{O}\qty(\qty(\frac{1}{\eps})^{1 - \frac{1}{a} + o(1)})
    \end{align*}
    memory. Furthermore, this bound is tight for any $d \geq \log^a n$, $a > 2$, i.e. there is a
    \begin{align*}
        \tilde{\Omega}\qty(\min\qty{n, \qty(\frac{1}{\eps})^{1 - \frac{1}{a} - o(1)}})
    \end{align*}
    lower bound for the streaming space complexity.
\end{theorem}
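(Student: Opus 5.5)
We reduce both directions to the softmax KDE results of \Cref{th:const_r_ub_lb}. For the upper bound, write $\Attn(K,q,V)=N(K,q,V)/\expker(K,q)$, where $N(K,q,V)=\sum_i e^{\langle k_i,q\rangle}v_i\in\R^d$. The plan is to approximate each coordinate $N_s=\sum_i e^{\langle k_i,q\rangle}V_{i,s}$ with the same two-part data structure used in \Cref{sec:ub-high-temp}.

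For the low-degree part we store $\sum_i V_{i,s}\psi_1(k_i)$ for each $s\in[d]$; by \Cref{lem:embed_memory} this costs an extra factor of $d=\poly\log n$. For the tail $\expker_{>t}$ we use a coreset built on the kernel $\expker_{>t}(k,q)\cdot v$. To do so, we apply \BaseCompress{} (\Cref{lem:simple_coreset}) to signed vector-weighted points, in the spirit of BalanceKV, either by splitting each coordinate into positive and negative parts or by running the discrepancy routine on the embeddings $\psi(k)\otimes v$. The error then scales as $\sqrt{\max\expker_{>t}(u,u)}\cdot\max_i|V_{i,s}|$ per merge.

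In the Merge-and-Reduce analysis (\Cref{lem:merge_reduce}), summing the per-block errors gives a total error of order
\[
\frac{j}{b}\,\tilde O\Bigl(\sum_{l>t} r^{2l}/l!\Bigr)\cdot(\text{block } \ell_2 \text{ mass of } V).
\]
Using Cauchy--Schwarz, $\sum_{\text{blocks}}\|V_{\text{block}}\|_F\le\sqrt{j/b}\,\|V\|_F$. Combining this with $\|\Softmax\|_2\ge 1/\sqrt j$ and $\expker(K,q)\ge je^{-r^2}$, the required relative error $\eps\|\Softmax\|_2\|V\|_F$ is achieved with the same block size $b$ as in \Cref{lem:coreset}, up to polylog factors. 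Separately, $\expker(K,q)$ is estimated to $1\pm\eps$ by \Cref{th:const_r_ub_lb}. The ratio error $\|N\|\cdot\eps/\expker$ is then bounded by $\eps\|\Attn\|_2\le\eps\|\Softmax\|_2\|V\|_F$, again by Cauchy--Schwarz. The choice of $t$ and the final space computation are identical to \Cref{sec:ub-high-temp}, giving $(1/\eps)^{1-1/a+o(1)}$.

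For the lower bound, we reuse the reduction of \Cref{lem:reduction} with the same Gaussian keys, $m$, $t$, and Hermite side information of \Cref{lem:psi}. Attention by itself only gives a normalized average, so the idea is to encode the KDE value into attention by adding one dummy key. Alice appends a single key $k_0=0$ with value $v_0=e_1$ and sets all other values to $v_i=e_2$. Then
\[
\Attn(\cdot)_1=\frac{1}{1+\expker(K_\bx,q)},
\]
and $\expker(K_\bx,q)$ can be recovered from the first coordinate. We rescale so that $\|v\|_2\le 1$, consistent with the footnote.

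The main obstacle is the error conversion. We have $\|V\|_F\approx\sqrt m$ and $\|\Softmax\|_2\approx 1/\sqrt m$, since the keys are near-orthogonal and $r=\log^{o(1)}n$. The attention error therefore translates to an additive error of roughly $\eps\cdot\expker(K_\bx,q)^2/\text{const}\approx\eps m^2$ in $\expker$, which is too large. To fix this, I would instead give the dummy key weight comparable to the whole sum, placing about $m$ copies, or equivalently a key with inner product $\log m$, within the norm budget $r^2\ge\log m$. This keeps the first coordinate $\Theta(1)$-sensitive, so that an additive error $\eps$ in attention becomes $O(\eps m)$ additive error in $\expker$. That matches the tolerance in \Cref{lem:gap}, possibly after rescaling $\eps\to\eps^{1-o(1)}$ as already done there.

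I expect verifying that these extra keys do not disturb the Hermite signal/noise analysis to be the delicate step. The dummy keys contribute a deterministic, known term that Bob can subtract, so \Cref{clm:lower_bound} and \Cref{clm:upper_bound} should apply unchanged.
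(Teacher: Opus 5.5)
There is a genuine gap in the upper bound. Your lower bound is fine; see the end.

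\textbf{The gap.} As you note yourself, the discrepancy bound for the vectors $V_{i,s}\psi(k_i)$ scales with the \emph{maximum} $|V_{i,s}|$ in the node being compressed, not with an average. Write $\Lambda_t=\sum_{l>t}r^{2l}/l!$. Following your steps, the level-$0$ error is at most $\tilde O(\Lambda_t)\sum_{\text{blocks}}\max_{i\in\text{block}}|V_{i,s}|\le\tilde O(\Lambda_t)\sqrt{j/b}\,\|V\|_F$. Your lower bound on the allowed numerator error is $\eps\|\Softmax(K_j,q)\|_2\|V\|_F\,\expker(K_j,q)\ge\eps\sqrt{j}\,e^{-r^2}\|V\|_F$. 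Matching the two forces $b\gtrsim(e^{r^2}\Lambda_t/\eps)^2$, which is the \emph{square} of the block size in \Cref{lem:coreset}. So the claimed $(1/\eps)^{1-1/a+o(1)}$ space does not follow.

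This is not just slack in your Cauchy--Schwarz step. At level $\ell$ of \Cref{lem:merge_reduce} a node carries weight $2^\ell$, while its maximum $|V_{i,s}|$ need not shrink. A single large entry carried up to weight $j/b$ can therefore cost about $(j/b)\expker_{>t}(k_1,q)$. Concretely, take $V_{\cdot,s}=e_1$, $k_1=q$ of norm $r$, the other keys nearly orthogonal to $q$, and $j\gg e^{2r^2}$. The allowed error is then only about $\eps\sqrt{j}$, whereas with the KDE block size your bound is about $\eps j e^{-r^2}$.

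\textbf{The missing step} is the magnitude bucketing of \Cref{l:tokde}.
For each coordinate $s$, split the stream into $O(\log n)$ dyadic classes with $|V_{i,s}|\in[\alpha,2\alpha]$, dropping entries below $\max_i|V_{i,s}|/\poly(n)$. On each class, run a weighted softmax KDE structure with weights $V_{i,s}/(2\alpha)$, with error measured against that class's own $\expker$. Inside a class, $2\alpha\cdot\eps\,\expker(K_{\text{class}},q)\le 2\eps\sum_{i\in\text{class}}e^{\langle k_i,q\rangle}|V_{i,s}|$. Hence the unchanged block size of \Cref{lem:coreset} gives numerator error $O(\eps)\sum_i e^{\langle k_i,q\rangle}|V_{i,s}|$. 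After dividing by the $(1\pm\eps)$ denominator, Cauchy--Schwarz is applied only once, at the very end: $\sum_i e^{\langle k_i,q\rangle}|V_{i,s}|\le\sqrt{\sum_i e^{2\langle k_i,q\rangle}}\,\|V_{\cdot,s}\|_2$.

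\textbf{Lower bound.} Your version is correct with the $m$-copies fix, and it differs from the paper's. With $m$ copies of $(0,e_1)$ and value $e_2$ on $K_\bx$, you get $\Attn_1=m/(m+\expker(K_\bx,q))=\Theta(1)$. Also $\|\Softmax\|_2\|V\|_F=O(1)$, since $e^{2r'^2}=n^{o(1)}\ll m$. So Bob learns $\expker(K_\bx,q)$ to additive error $O(\eps m)$, which is exactly what \Cref{lem:gap} tolerates.

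Drop the alternative of a single key with inner product $\log m$: it is not available here, since $r^2=\log^{o(1)}n\ll\log m=\Omega(\log n)$.

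The paper avoids dummy keys altogether. It streams all $2m$ public keys with scalar values $v_i=\bx_i$. The denominator $\expker(K,q)$ is then independent of $\bx$, and Bob computes it from the shared randomness.
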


\begin{theorem}[Formal version of \Cref{remark:big_r_attn}]
    \label{th:big_r_ub_lb_attn}
    
   Assume that $r^2 = \Omega(\log n)$, i.e. $e^{r^2} = n^{\Omega(1)}$. Then the Attention problem can be solved using
    \begin{align*}
        \tilde{O}\qty(\frac{e^{r^2(1 + o(1))}}{\eps})
    \end{align*}
    memory. Furthermore, this bound is almost tight for any $d = (\log n)^{1 + \Omega(1)}$ and $r^2 < \frac{\log n}{2}$, i.e. there is a
    \begin{align*}
        \tilde{\Omega}\qty(\min\qty{n, \frac{e^{r^2(1 - o(1))}}{\eps}})
    \end{align*}
    lower bound for streaming space complexity.
\end{theorem}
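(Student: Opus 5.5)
The plan is to reduce both directions to the Softmax KDE results already proved (\Cref{th:big_r_ub_lb}), treating attention as a ratio of two KDE-type quantities for the upper bound and embedding an INDEX instance into the values $v_i$ for the lower bound.

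\textbf{Upper bound.} Run the weighted Merge-and-Reduce construction of \Cref{lem:merge-reduce-w} with \Compress{} as before, and store each surviving key together with its value vector. The output is $\hat z = \sum_{x\in C_j} w_x e^{\langle x,q\rangle} v_x \big/ \sum_{x\in C_j} w_x e^{\langle x,q\rangle}$. The denominator is a $(1\pm\eps)$ approximation of $\expker(K_j,q)$ by the KDE guarantee. For the numerator I will reprove \Cref{lem:ps_rand_coreset}, \Cref{lem:small_r_coreset} and \Cref{lem:simple_coreset} coordinatewise for the kernel $e^{\langle k,q\rangle}V_{k,\ell}$, i.e.\ for the signed vectors $\psi(k)V_{k,\ell}$ in the discrepancy setting. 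Banaszczyk/ALS-type balancing handles vectors of norm $\|\psi(k)\|\,|V_{k,\ell}|$, so I would rescale by $\|V\|_F$ (or split by $|V_{k,\ell}|$ scale classes). After summing over coordinates, the additive error is about $\alpha\,\|V\|_F\,\sqrt{\sum_k e^{2\langle k,q\rangle}}/b$ times polylogarithmic factors, where $\alpha = e^{r^2(1+o(1))} n^{o(1)}$ is the factor from \Cref{lem:merge-reduce-w}. This bound comes from Cauchy--Schwarz, which converts the $|K|^{-1}\expker(K,q)$ style bounds into the $\ell_2$ softmax norm. Dividing by $\expker(K,q)$ gives error $\eps\|\Softmax\|_2\|V\|_F$ with the same block size $b$. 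That yields $\tilde O(e^{r^2(1+o(1))}/\eps)$ words, since each stored value costs only $d=\poly\log n$ extra bits. The main difficulty I expect is that the multiplicative guarantees of the partition lemmas must be redone for the signed numerator. I will handle this by bounding the numerator error by the KDE-style error of the positive kernel times $\max_k|V_{k,\ell}|$, with the $\max$ absorbed through weight classes on $\|v_k\|$.

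\textbf{Lower bound.} I will reduce from the KDE instance of \Cref{lem:reduction_big_r}. Keep the random keys on the sphere of radius $r$, insert all $2n$ keys, and encode $\bx$ in the values: $v_i = e_1$ if $\bx_i=1$ and $v_i=0$ otherwise, so $\|V\|_F=\sqrt n$. Then Bob's attention output at $q=k_j$ has first coordinate $\expker(K_\bx,k_j)/\expker(K,k_j)$. He also needs the normalizer $\expker(K,k_j)$, which he can compute exactly because $K$ is public. Hence he recovers $\expker(K_\bx,k_j)$ with additive error $\eps\|\Softmax\|_2\sqrt n\,\expker(K,k_j)=\eps\sqrt n\,\sqrt{\sum_i e^{2\langle k_i,k_j\rangle}}$. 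By \Cref{lem:rand_dataset_kde} with $\alpha=2$, this is $\eps\sqrt n\,(e^{r^2}+\sqrt n\,n^{o(1)})$. Under $r^2<\tfrac12\log n$ we have $e^{r^2}\le\sqrt n$, so the error is $\eps\, n^{1+o(1)}$, which is exactly the error scale used in \Cref{lem:reduction_big_r}. The remainder of that proof (Taylor side information, signal $\tfrac12 e^{r^2}$, variance bound) then goes through verbatim, as does the analogous Hermite argument of \Cref{lem:reduction} for \Cref{th:const_r_ub_lb_attn}, where $e^{r^2}=n^{o(1)}$ makes the softmax norm $n^{-1/2+o(1)}$ automatically. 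The restriction $r^2<\tfrac12\log n$ is exactly where this conversion breaks: for larger $r$ the self term $e^{2r^2}$ dominates $\sum_i e^{2\langle k_i,k_j\rangle}$ and the error no longer matches the KDE error scale.
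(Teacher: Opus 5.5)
Your lower bound is the paper's argument: \Cref{l:fromkde} fed into \Cref{lem:reduction_big_r}. The values encode $\bx$, all $2n$ public keys are inserted, Bob rescales by the exactly computable $\expker(K,k_j)$, and \Cref{lem:rand_dataset_kde} with $e^{2r^2}<n$ brings the error down to $\eps n^{1+o(1)}$; writing it directly as $\eps\sqrt n\,(\sum_i e^{2\langle k_i,k_j\rangle})^{1/2}$ is just cleaner bookkeeping of the same step.

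The upper bound departs from the paper and, as written, has a gap. You output one ratio over a single coreset $C_j$ produced by \Compress{} ``as before''. But \BaseCompress{} chooses its coloring by balancing the vectors $\psi(k)$ alone, and its guarantee says nothing about $\sum_k \pm e^{\langle k,q\rangle}V_{k,\ell}$. For instance, with all keys equal, every equal split is perfectly balanced for $\psi$. That includes the split keeping exactly the tokens with value $+1$ (out of values $\pm 1$), whose numerator error is $|K|e^{\langle k,q\rangle}=\expker(K,q)$. Reproving \Cref{lem:simple_coreset,lem:ps_rand_coreset,lem:small_r_coreset} ``coordinatewise for $\psi(k)V_{k,\ell}$'' fixes each coordinate separately. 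However, the denominator and each coordinate then get their own coloring, hence their own coreset, so your formula for $\hat z$ is no longer the object you analyzed.

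The paper avoids this by never sharing a coreset. \Cref{l:tokde} runs $\tilde O(1)$ independent weighted-KDE structures (one per coordinate and per dyadic class of $|V_{i,j}|$, plus the denominator). It then combines $\hat N/\hat D$ via the same Cauchy--Schwarz step you use.

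To keep a single token cache instead, make the balancing joint. Within a class $\|v_k\|\in[\beta,2\beta]$, run \BaseCompress{} on the kernel $e^{\langle k,k'\rangle}\bigl(1+\langle v_k,v_{k'}\rangle/(4\beta^2)\bigr)$, i.e.\ on the vectors $\psi(k)\otimes(1,v_k/(2\beta))$. This kernel is positive definite by the Schur product theorem, with diagonal at most $2e^{\|k\|^2}$. Then apply the fixed-direction guarantee of \Cref{thm:compress_l2} to the $d+1$ test vectors $\psi(q)\otimes e_\ell$. One coloring then controls the denominator and every numerator coordinate, with the constants of \Cref{lem:simple_coreset}. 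The partition lemmas still apply, since their multiplicative step only lower-bounds the unweighted KDE of each part, and $2\beta\,\expker(K',q)\le 2\sum_{k\in K'}e^{\langle k,q\rangle}\|v_k\|$. The $\ell=0$ coordinate also gives the control of the unweighted level-$i$ KDE that the argument of \Cref{lem:merge-reduce-w} relies on. Patched this way, your route gives a genuine token-subset cache in the spirit of BalanceKV, while the paper's route is more modular at a $\tilde O(1)$ overhead.

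Of the two normalizations you mention, only the dyadic classes work. Rescaling by a global quantity such as $\|V\|_F$ (which a stream does not know in advance) gives normalized error $\alpha\|V\|_F/b$. This misses the target $\eps\|\Softmax(K,q)\|_2\|V\|_F$ by a factor $1/\|\Softmax(K,q)\|_2$, up to $\sqrt n$ for a flat softmax. Your Cauchy--Schwarz step needs the error in the per-class form $\sum_k e^{\langle k,q\rangle}\|v_k\|$.
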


We prove quite general upper bound reduction to Softmax KDE in \Cref{sec:attn_ub} below, lower bound reduction in \Cref{sec:attn_lb} below. These results are put together in \Cref{sec:attn_putting_together} to obtain proofs of \Cref{th:const_r_ub_lb_attn,th:big_r_ub_lb_attn}.

\subsection{Upper bounds for Attention}
\label{sec:attn_ub}
In this section we will show how one can adapt our data structure that solves Softmax KDE problem to solve Attention problem. First, we slightly generalize the setup.

\begin{definition}[Weighted Softmax KDE problem]\label{def:weighted-kde}
The algorithm receives key-weight pairs 
\\$(k_1, w(k_1)) \ldots, (k_n, w(k_n))\in \R^d \times \R$ in a stream and at every point $j = 1,\ldots, n$, 
    given any fixed $q\in \R^d$,  outputs an estimate $\widehat \expker(K_{j}, w_j, q)$ of $\expker(K_{j}, w_j, q)$ that satisfies
    \begin{gather}
        |\widehat{\expker}(K_{j}, w_j, q) - \expker(K_{j}, w_j, q)| \leq \eps \cdot \expker(K_{j},  q)
    \end{gather}
    with probability at least $1-\delta$ for some $\delta\in (0, 1)$ for any fixed query $q$.  Here $K_{j}=(k_1, \ldots, k_{j})$ is the sequence of keys received by time $j$,  $w_{j}=(w(k_1), \ldots, w(k_{j}))$ is the sequence of weights received by time $j$, $|w(k)| \leq 1$ for all $k \in K$, and 
    \begin{gather*}
        \expker(K, w, q) := \sum\limits_{k\in K} w(k)\expker(k, q).
    \end{gather*}
\end{definition}

 Weighted  Softmax  KDE can be solved using our current techniques for non-weighted  Softmax,  with the same guarantees and very little adjustments. Indeed, the only places where the algorithms and proofs use the KDE structure directly are \Cref{lem:embed_memory} and \cref{lem:simple_coreset}. Elsewhere, they depend on the KDE structure  only indirectly through these lemmas. Therefore, if we can show weighted analogues of \Cref{lem:embed_memory} and \cref{lem:simple_coreset}, the rest of the framework carries over unchanged. These weighted analogues are stated next:
 
 \begin{lemma}[Weighted version of \Cref{lem:embed_memory}]
\label{lem:weighted_embed_memory}
    Let $\gamma(k, q) := \sum_{l = 0}^{t} a_l \<k, q>^l$ for some $t$ and $a_l$. Then there exist embeddings $\psi_1, \psi_2$ such that for any $K \subset \R^d$  with $\|k\|_2  \leq r$ for all $k \in K$, any $q \in \R^d$ with $\|q\|_2 \leq r$ , and any weights $\{w(k)\}_{k \in K}$ with $|w(k)| \leq 1$ for all $k \in K$, we guarantee
    \[\qty|\Bigl \langle \sum_{k \in K} w(k) \psi_1(k), \psi_2(q) \Bigr \rangle - \sum_{k \in K}  w(k)\gamma(k, q)| < \delta,\] and the weighted aggregate $ \sum_{k \in K} w(k)\psi_1(k)$ can be stored in $O\Bigl(\frac{(d+t)^{t}}{t!}\log\frac{|K|\cdot (d+t)^{t} \cdot\max_l |a_l|\cdot \max\{r, 1\}^t}{\delta}\Bigr)$ bits of space.
\end{lemma}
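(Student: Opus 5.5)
We reuse the feature maps from the proof of \Cref{lem:embed_memory} unchanged, namely $\psi_1(k) := (a_l k^{\otimes l})_{l=0}^t$ and $\psi_2(q) := (q^{\otimes l})_{l=0}^t$, with repeated monomials glued together so that the embedding dimension is $m = O\bigl(\frac{(d+t)^t}{t!}\bigr)$ by \Cref{prop:poly_embed_dim}. By bilinearity of the inner product,
\[
\Bigl\langle \sum_{k\in K} w(k)\psi_1(k), \psi_2(q)\Bigr\rangle = \sum_{k\in K} w(k)\langle \psi_1(k),\psi_2(q)\rangle = \sum_{k\in K} w(k)\gamma(k,q).
\]
The exact (unrounded) weighted aggregate therefore computes the weighted polynomial KDE with no error. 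All of the error comes from storing this vector with finite precision.

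For the precision bound, each coordinate of $a_l k^{\otimes l}$ is at most $\max_l|a_l|\cdot\max\{r,1\}^t$ in absolute value. Since $|w(k)|\le 1$, each coordinate of $\sum_k w(k)\psi_1(k)$ is at most $|K|\max_l|a_l|\max\{r,1\}^t$ in absolute value, which is the same magnitude bound as in the unweighted case. We then round each coordinate to a vector $z$ with $\|z-\sum_k w(k)\psi_1(k)\|_\infty\le\eta$. Exactly as in \Cref{lem:embed_memory},
\[
|\langle z,\psi_2(q)\rangle - \langle \textstyle\sum_k w(k)\psi_1(k),\psi_2(q)\rangle| \le m\,\eta\,\max\{r,1\}^t,
\]
because every coordinate of $\psi_2(q)$ is bounded by $\max\{r,1\}^t$. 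Choosing $\eta = \delta/(2m\max\{r,1\}^t)$ makes this error strictly less than $\delta$.

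Each coordinate is a signed number with magnitude at most $|K|\max_l|a_l|\max\{r,1\}^t$ and resolution $\eta$. It therefore needs $O\bigl(\log\frac{|K|\max_l|a_l|\max\{r,1\}^{2t} m}{\delta}\bigr)$ bits, which is $O\bigl(\log\frac{|K|(d+t)^t\max_l|a_l|\max\{r,1\}^t}{\delta}\bigr)$ after absorbing $\max\{r,1\}^{2t}$ against $\max\{r,1\}^t$ up to a constant factor in the logarithm. The sign costs one extra bit. Multiplying by $m$ gives the claimed space. The only point to check is that the weights, which may be negative or fractional, do not increase the coordinate magnitudes. This holds by $|w(k)|\le 1$, so the argument is otherwise a verbatim repetition of the earlier proof. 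In a streaming setting, rounding could be applied after each update, but the statement concerns only storing the aggregate, so rounding once suffices.
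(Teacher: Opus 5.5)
Your proposal is correct and matches the paper's proof essentially step for step. It uses the same tensor-power feature maps, exactness by bilinearity, the observation that $|w(k)|\le 1$ keeps every coordinate of the aggregate bounded by $|K|\max_l|a_l|\max\{r,1\}^t$, and rounding to precision $\eta$ with the resulting $m\,\eta\max\{r,1\}^t$ error bound. Your choice $\eta=\delta/(2m\max\{r,1\}^t)$ is slightly more careful than the paper's $\eta=\delta/(m\max\{r,1\}^t)$: it actually gives the strict inequality $<\delta$ claimed in the statement, whereas the paper's choice only gives $\le\delta$.
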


\begin{lemma}[Weighted version of \Cref{lem:simple_coreset}, specialized to $\gamma = \expker$]
    \label{lem:weighted_streaming_coreset}
    Let $U \times V\subset   \R^d \times \R^d $ be a predefined subset of the domain. There is a randomized algorithm $\BaseCompress'$ which receives as inputs $K \subset U$ and any weights $\{w(k)\}_{k\in K}$ with $|w(k)| \leq 1$ . $\BaseCompress'(K, \{w(k)\}_{k\in K})$ outputs $\widetilde{K} \subset K$ such that $|\widetilde{K}| \leq |K|/2$ and for any $q \in V$, with probability $1 - \delta$,
           \[\left|\expker(K, w, q) - 2\expker(\widetilde{K}, w, q)\right| \leq O\left(\sqrt{\max_{u \in U}\expker(u, u)\cdot \max_{v \in V}\expker(v, v)}\cdot \log (|K|/\delta)\right).\]
    $\BaseCompress'(K,w)$ runs in $O(|K|^2 \cdot T)$ time, where $T$ is the maximum time to access $\expker(k,q)$.

\end{lemma}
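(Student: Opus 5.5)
The plan is to reduce $\BaseCompress'$ to the unweighted inner-product compressor of \Cref{thm:compress_l2}, as in the proof of \Cref{lem:simple_coreset}, by absorbing the weights into the key embeddings. The softmax kernel is positive definite, so there is a feature map $\psi:\R^d\to H$ with $\expker(x,y)=\langle\psi(x),\psi(y)\rangle_H$. For each key $k\in K$ we use the weighted embedding $\phi(k):=w(k)\psi(k)$, and for each query the usual $\psi(q)$. Then
\[
\langle \phi(k),\psi(q)\rangle_H=w(k)\expker(k,q),\qquad \sum_{k\in K}\langle\phi(k),\psi(q)\rangle_H=\expker(K,w,q).
\]
For any subset $\widetilde K\subset K$, the same identity holds with $\widetilde K$ in place of $K$, with weights inherited from $K$.

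Next, as in the proof of \Cref{lem:simple_coreset}, we restrict to the finite-dimensional span $\mathcal P$ of $\{\phi(k)\}_{k\in K}\cup\{\psi(q)\}$ and identify it isometrically with $\R^m$. We apply the algorithm $\mathcal A$ of \Cref{thm:compress_l2} with key domain $U'=\{\phi(u)\}$ and query domain $V'=\{\psi(v)\}$. Then $|\widetilde K|\le |K|/2$, and for any fixed $q\in V$, with probability $1-\delta$,
\[
\bigl|\expker(K,w,q)-2\expker(\widetilde K,w,q)\bigr|\le O\Bigl(\sqrt{\max_{k}\|\phi(k)\|^2\cdot\max_{v\in V}\|\psi(v)\|^2}\,\log(|K|/\delta)\Bigr).
\]
Since $|w(k)|\le 1$, we have $\|\phi(k)\|^2=w(k)^2\expker(k,k)\le\max_{u\in U}\expker(u,u)$, and $\|\psi(v)\|^2=\expker(v,v)$. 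This gives exactly the claimed bound.

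The one point that needs care is that the bound of \Cref{thm:compress_l2} is stated in terms of $\max_{u\in U}\gamma_0(u,u)$ over a predefined key domain. The weighted key domain is $\{w\psi(u): u\in U, |w|\le 1\}$, and it is contained in the ball of squared radius $\max_{u\in U}\expker(u,u)$. So I would note that \Cref{thm:compress_l2} only uses $U$ through this norm bound. Equivalently, I would apply it with the enlarged domain $\{w\psi(u)\}$, whose maximum squared norm is still $\max_{u\in U}\expker(u,u)$.

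For the runtime, $\mathcal A$ needs only oracle access to the inner products $\langle\phi(k),\psi(q)\rangle=w(k)\expker(k,q)$ and $\langle\phi(k),\phi(k')\rangle=w(k)w(k')\expker(k,k')$. Each of these costs $O(T)$, so the total runtime is $O(|K|^2T)$. The output subset $\widetilde K$ keeps its original weights. I expect no real obstacle here, since the whole argument is a direct reweighting of the embedding.
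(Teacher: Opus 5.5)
Your proposal is correct and follows essentially the same route as the paper: absorb the weights into the key embeddings $w(k)\psi(k)$, apply \Cref{thm:compress_l2} on the enlarged key domain $\{w\psi(u): u\in U,\ |w|\le 1\}$, and bound its squared norms by $\max_{u\in U}\expker(u,u)$. Your remark that the key--key inner products $w(k)w(k')\expker(k,k')$ are also computable from the $\expker$ oracle spells out the runtime claim a bit more explicitly than the paper does.
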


 \begin{claim}
Assume that $r = \log^{o(1)} n$, $\log n \leq d \leq \log^a n$ for some constant $a > 1$ and $\eps < n^{-c}$ for any constant $c > 0$. The Weighted Softmax KDE problem (\cref{def:weighted-kde}) can be solved using
    \begin{align*}
        \tilde{O}\qty(\qty(\frac{1}{\eps})^{1 - \frac{1}{a} + o(1)})
    \end{align*}
    memory. If, on the other hand, $r^2 = \Omega(\log n)$, equivalently
$e^{r^2} = n^{\Omega(1)}$, then the Weighted Softmax KDE problem
\eqref{eq:streaming-kde} can be solved using
    \begin{align*}
    \tilde{O}\qty(\frac{e^{r^2(1 + o(1))}}{\eps})
    \end{align*}
    memory.
 \end{claim}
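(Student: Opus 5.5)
The plan is to re-run the two unweighted pipelines, with every place that touches the kernel sum replaced by its weighted analogue from \Cref{lem:weighted_embed_memory} and \Cref{lem:weighted_streaming_coreset}. Both analogues follow from the unweighted proofs. For \Cref{lem:weighted_embed_memory}, we store $\sum_k w(k)\psi_1(k)$ instead of $\sum_k \psi_1(k)$. Since $|w(k)|\le 1$, every coordinate of this vector is bounded by the same quantity as before, so the rounding argument of \Cref{lem:embed_memory} goes through verbatim. For \Cref{lem:weighted_streaming_coreset}, we feed the vectors $w(k)\psi(k)$ to the vector balancing algorithm of \Cref{thm:compress_l2}. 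Their norms are at most $\|\psi(k)\|$, and $\langle w(k)\psi(k),\psi(q)\rangle = w(k)\expker(k,q)$ is accessible by oracle, so the same error bound follows.

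\textbf{High temperature regime.} We split $\expker(K_j,w_j,q)=\sum_k w(k)\expker_{\le t}(k,q)+\sum_k w(k)\expker_{>t}(k,q)$. The first sum is maintained exactly as in \Cref{lem:map}, using the weighted sketch, with additive target $\delta=\eps e^{-r^2}$. This gives error at most $\eps\,\expker(K_j,q)$, because the unweighted sum satisfies $\expker(K_j,q)\ge e^{-r^2}$. The second sum is handled as in \Cref{lem:coreset}: we plug the weighted \BaseCompress{} for the kernel $\expker_{>t}$ into Merge-and-Reduce (\Cref{lem:merge_reduce}). The error of \Cref{lem:merge_reduce} is additive and linear across blocks, and points keep their original weights multiplied by $2^\ell$. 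The bound $\eta(b)\frac{j}{b}\log\frac{j}{b}$ therefore holds unchanged. We compare it to $\eps\expker(K_j,q)\ge \eps j e^{-r^2}$, which is again the unweighted quantity, so the same block size $b$ works. The choice of $t$ and the resulting space calculation are then identical to \Cref{sec:ub-high-temp}.

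\textbf{Low temperature regime.} Here we follow \Cref{sec:ub_low_temp}. \Partition{} depends only on the key geometry, so it is unaffected by the weights. In \Cref{lem:ps_rand_coreset} and \Cref{lem:small_r_coreset}, the additive error of the weighted \BaseCompress{} is the same $O(e^{rr'}\log n)$ as before. The lower bound on the denominator is applied to the unweighted $\expker(K',q)$, which is exactly what the error guarantee of \Cref{def:weighted-kde} is measured against.

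\textbf{Main obstacle.} The hard part is the bookkeeping in \Compress{} and the weighted Merge-and-Reduce of \Cref{lem:merge-reduce-w}. The coreset points now carry two kinds of weight: the input weights $w(k)\in[-1,1]$ and the multiplicative compression weights $2^i$. The level-by-level argument in \Cref{lem:merge-reduce-w} relates the error at level $i$ to $\sum_{k\in\text{level } i}\omega^i_k e^{\langle k,q\rangle}$. In the weighted setting I would apply that argument to the \emph{unweighted} (nonnegative) compressed measure, i.e.\ the coreset with weights $\omega^i_k$ alone, and treat the signed weights $w(k)$ as carried along by the weighted \BaseCompress{}. The offline error per invocation is still bounded by $\alpha\cdot(\text{unweighted KDE of the block})/b$. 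By induction, the unweighted KDE at level $i$ stays within a factor $1+\eps$ of $\expker(K_j,q)$; this follows by running the unweighted guarantee simultaneously, since the same selected subsets can be certified for both $w\equiv1$ and the given $w$ with a union bound. This is the step I would check most carefully: it requires the compression to succeed for the two weight vectors at once. If that fails, the fallback is to maintain two parallel structures, one for the positive parts and one for the negative parts of $w$, with each split into a weight-bucketed nonnegative instance. This costs only logarithmic factors.
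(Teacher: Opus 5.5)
Your route is the paper's. Its proof of this claim consists only of \Cref{lem:weighted_embed_memory} and \Cref{lem:weighted_streaming_coreset}, proved exactly as you sketch, together with the assertion that both unweighted pipelines then carry over unchanged. Your treatment of the high-temperature regime and of the leaf lemmas matches that.

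Where you go beyond the paper is the obstacle you flag, and it is a real point that the paper's one-line argument passes over. In \Cref{lem:merge-reduce-w}, the error at level $i$ is bounded by $\alpha\sum_{k\in\text{level } i}\omega^i_k e^{\langle k,q\rangle}/b$. This is because \Cref{lem:ps_rand_coreset} and \Cref{lem:small_r_coreset} lower-bound the \emph{unweighted} KDE of each block. With signed input weights, however, the inductive guarantee only controls $\sum_k\omega^i_k w(k)e^{\langle k,q\rangle}$. The trivial bound on the unweighted level-$i$ mass ($\le j e^{r^2}$ against $\expker(K_j,q)\ge je^{-r^2}$) loses a factor $e^{2r^2}$. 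As you note, the high-temperature regime is immune, since $\eta(b)$ in \Cref{lem:merge_reduce} is a worst-case additive bound.

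Your first fix is sound, but not for the reason you state. A union bound does not by itself certify one subset for both weightings. The weighted \BaseCompress{} chooses its subset by balancing the vectors $w(k)\psi(k)$, which says nothing about $\sum_k\sigma_k\psi(k)$. For example, two identical keys with weights $+1$ and $-1$ are perfectly balanced by giving them the same colour.

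The fix becomes correct if you balance $\Psi(k):=(\psi(k),w(k)\psi(k))\in H\oplus H$ instead. Then $\|\Psi(k)\|^2\le 2\,\expker(k,k)$, and the Gram entries $(1+w(k)w(k'))\,\expker(k,k')$ remain oracle-computable. Applying \Cref{thm:compress_l2} with the two fixed queries $(\psi(q),0)$ and $(0,\psi(q))$, followed by a union bound, gives both guarantees for the same subset at a constant-factor cost.

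Your fallback also works. After splitting by sign and rescaling each magnitude bucket to $[1/2,1]$, the unweighted level-$i$ KDE is at most twice the weighted one, so the weighted induction alone controls it. Dropping weights below $1/\poly(n)$, as in \Cref{l:tokde}, keeps the number of buckets at $O(\log n)$.
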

 \begin{proof}
     % Lower bounds obviously hold since weighted Softmax KDE problem is a strict generalization.

     Compactly, this claim says that the upper bounds in \Cref{th:const_r_ub_lb,th:big_r_ub_lb} also hold for weighted Softmax KDE problem. This is achieved through the same algorithms (embeddings and coresets) as for non-weighted KDE, using \Cref{lem:weighted_embed_memory,lem:weighted_streaming_coreset} instead of \Cref{lem:embed_memory,lem:simple_coreset} as subroutines.
 \end{proof}

 In the rest of this section, we first prove \cref{lem:weighted_embed_memory} and \cref{lem:weighted_streaming_coreset}, which imply the existence of streaming algorithms for the Weighted Softmax KDE problem. After that, we show a reduction from the Attention problem to the Weighted Softmax KDE problem. Combining these two parts concludes the proof of upper bounds for Attention.

\begin{proof}[Proof (of  \cref{lem:weighted_embed_memory})]
  As in the proof of \cref{lem:embed_memory}, we let 
    \[
        \psi_1(k) := \qty(a_l \cdot k^{{\otimes l}})_{l = \overline{0..t}} \quad \text{ and } \quad \psi_2(q) := \qty(q^{{\otimes l}})_{l= \overline{0..t}}.
    \]
    By linearity, in exact arithmetic we have:
    \[
        \Bigl \langle \sum_{k \in K} w(k) \psi_1(k), \psi_2(q) \Bigr \rangle = \sum_{k \in K}  w(k)\gamma(k, q).
    \]
    Just as in \cref{lem:embed_memory}, we first round
$\sum_{k \in K} w(k) \psi_1(k)$ to a finite number of bits and store the
result as $z$. Regarding the bit complexity of $z$, every coordinate of $\qty(k^{{\otimes l}})_{l= \overline{0..t}}$ and
$\qty(q^{{\otimes l}})_{l= \overline{0..t}}$ is still bounded by $\max\{r, 1\}^t$; thus every coordinate
of $\sum_{k \in K} w(k) \psi_1(k)$ has absolute value at most
    \[\sum_{k \in K} |w(k)| \max_l |a_l| \max\{r, 1\}^t \leq |K|\max_l |a_l| \max\{r, 1\}^t.\]
 If $\| z - \sum_{k \in K} w(k) \psi_1(k)\|_\infty \leq \eta$, then
    \begin{align*}
        \Bigl |\Bigl \langle z,  \psi_2(q) \Bigr \rangle - \Bigl \langle \sum_{k \in K} w(k) \psi_1(k),  \psi_2(q) \Bigr \rangle  \Bigr |  &\leq m \cdot \| z - \sum_{k \in K} w(k) \psi_1(k)\|_\infty \cdot  \| \psi_2(q)\|_\infty \\
        &\leq m \cdot \eta \cdot  \max\{r, 1\}^t 
    \end{align*}
    where $m$ is the target dimension of $\psi_1, \psi_2$ , which is again $O\left(\frac{(d+t)^t}{t!}\right)$, as in \cref{lem:embed_memory}. Therefore, to guarantee total error $\delta$, we can pick $\eta = \frac{\delta}{m\cdot  \max\{r, 1\}^t }$. The bit complexity to store each coordinate to this precision is $O(\log \frac{ |K|\max_l |a_l| \max\{r, 1\}^t \cdot (m\cdot \max\{r, 1\}^t)}{\delta})$, and multiplying with $m$ again gives the claimed space complexity.
\end{proof}

\begin{proof}[Proof (of  \cref{lem:weighted_streaming_coreset})]
    As in the proof of \Cref{lem:simple_coreset}, we again reduce the problem to
the inner-product setting, where we use a randomized algorithm $\mathcal{A}$
for the inner-product kernel $\gamma_0(k,q) := \langle k,q\rangle$
(see \cref{thm:compress_l2} for details). Since $\expker$ is positive definite, there exists a feature map $\psi:\R^d\to H$ into a Hilbert space $H$ such that
\[
\expker(x,y)=\langle \psi(x),\psi(y)\rangle_H
\qquad\text{for all }x,y\in\R^d.
\]

We consider the subspace $\mathcal{S}$ spanned  by $\{\psi(k): k \in K\}$ and $\psi(q)$. Choosing an orthonormal
basis of  $\mathcal{S}$, we may identify  $\mathcal{S}$ isometrically with $\R^m$ for some
dimension $m \leq |K|+1$. Under this identification, each $\psi(k)$ and $\psi(q)$ becomes a vector in $\R^m$, and all inner products among these vectors are preserved, i.e.,
\[
\expker(k,q)=\langle \psi(k),\psi(q)\rangle_H
=\langle \pi\psi(k),\pi\psi(q)\rangle_{\mathbb R^m} = \gamma_0(\pi\psi(k),\pi\psi(q))
\]
where $\pi: \mathcal{S}\rightarrow\R^m$ is an isometric map defined by the chosen orthonormal basis. Furthermore,
\[
w(k)\expker(k,q)=w(k)\langle \psi(k),\psi(q)\rangle_H
=\langle w(k)\pi\psi(k),\pi\psi(q)\rangle_{\mathbb R^m} = \gamma_0(w(k)\pi\psi(k),\pi\psi(q)).
\]
Therefore, we can apply \cref{thm:compress_l2} to the inner-product kernel $\gamma_0$, with dataset $\{w(k)\pi\psi(k)\}_{k \in K}$, and query $\pi\psi(q)$. The algorithm returns a subset $\tilde{K} \subset K$ of size at most $|K|/2$ and with probability $1-\delta$,
\begin{align*}
    &\Bigl| \sum_{k \in K} \langle w(k)\pi\psi(k),\pi\psi(q)\rangle - 2\sum_{k \in \tilde{K}} \langle w(k)\pi\psi(k),\pi\psi(q)\rangle \Bigr| \\
    &\hspace{4em}\leq  O\left(\sqrt{\max_{u \in U'}\langle u, u\rangle \cdot \max_{v \in V'}\langle v, v\rangle}\cdot \log (|K|/\delta)\right),
\end{align*} 
where $U' := \{w(u) \pi \psi(u) : u \in U, |w(u)| \leq 1\}$ and $V' := \{ \pi \psi(v) : v\in V\} $. The left-hand side is exactly $|\expker(K, w, q) - 2\expker(\widetilde{K}, w, q)|$. For the right-hand side, since $|w(u)| \leq 1$, 
\[\max_{u \in  U'} \langle u, u\rangle = \max_{u \in  U'} w(u)^2\langle u, u\rangle \leq \expker (u, u) \]
and similarly for $v$. Hence the right-hand side is at most the claimed error bound. Finally,  \cref{thm:compress_l2} only requires oracle access to inner products in the transformed space. These can be computed using the weights together with oracle access to $\expker$. Therefore we do not need to compute the embedding explicitly, and the runtime remains $O(|K|^2\cdot T)$  as in \cref{thm:compress_l2}.
\end{proof}

Finally, we show a reduction from the Attention problem to the Weighted Softmax KDE problem.
\begin{lemma}
\label{l:tokde}
    If there exists a streaming algorithm that solves the weighted Softmax KDE problem (\cref{def:weighted-kde}) with parameters $n, d, \eps, r$ using $S(n, d, \eps, r)$ space, then there  exists a streaming algorithm that solves the Attention problem (\cref{def:streaming-attn}) with parameters $n, d, \tilde{O}(\eps), r$  using $\tilde{O}(S(n, d, \eps, r))$ space.
\end{lemma}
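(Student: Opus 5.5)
The plan is a black-box reduction: run polylogarithmically many copies of the weighted Softmax KDE algorithm on sub-streams of keys, assign the weights by dyadic bucketing of the value coordinates, and combine the estimates using the Cauchy--Schwarz inequality.

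\textbf{The instances.} Run one instance on the full key stream with all weights equal to $1$; its output $\widehat Z$ estimates $Z:=\expker(K_j,q)$. For every coordinate $i\in[d]$ and every dyadic scale $s$, run a separate instance $\mathcal{I}_{i,s}$ on the sub-stream of keys $k_\ell$ whose value coordinate satisfies $|V_{\ell,i}|\in(2^{s-1},2^s]$. In $\mathcal{I}_{i,s}$, the key $k_\ell$ gets weight $w(k_\ell)=V_{\ell,i}/2^s$, so $|w|\le 1$ as \Cref{def:weighted-kde} requires.

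\textbf{Why there are few scales.} By the standing assumption, every coordinate is stored in $\poly\log n$ bits. So the nonzero $|V_{\ell,i}|$ range over $2^{\pm\poly\log n}$, and only $\poly\log n$ scales occur. This gives $d\cdot\poly\log n=\poly\log n$ instances in total, hence space $\tilde O(S(n,d,\eps,r))$. Each instance gets its own fresh randomness, and a union bound (after amplifying $\delta$ by a median trick if needed) makes them all succeed for the fixed query simultaneously.

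\textbf{Numerator error.} Let $B_{i,s}$ be the keys in bucket $(i,s)$. Put $\widehat N_i:=\sum_s 2^s\,\widehat\expker_{i,s}$, where $\widehat\expker_{i,s}$ is the output of $\mathcal{I}_{i,s}$. It estimates $N_i:=\sum_\ell V_{\ell,i}e^{\langle k_\ell,q\rangle}$. The guarantee of \Cref{def:weighted-kde} is relative to the unweighted KDE \emph{of the keys fed to that instance}. It therefore gives
\[
|\widehat N_i-N_i|\le \sum_s 2^s\eps\,\expker(B_{i,s},q)\le 2\eps\sum_\ell |V_{\ell,i}|e^{\langle k_\ell,q\rangle}.
\]
This is the key point: bucketing by magnitude turns the additive error into one weighted by $|V_{\ell,i}|$, rather than by $\max_\ell|V_{\ell,i}|$. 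Dividing by $Z$ and applying Cauchy--Schwarz gives
\[
\frac{|\widehat N_i-N_i|}{Z}\le 2\eps\sum_\ell \Softmax(K_j,q)_\ell\,|V_{\ell,i}|\le 2\eps\,\|\Softmax(K_j,q)\|_2\,\|V_{:,i}\|_2 .
\]

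\textbf{The output and its error.} Output $\hat z_i:=\widehat N_i/\widehat Z$. Since $|\widehat Z-Z|\le\eps Z$, we get
\[
|\hat z_i-N_i/Z|\le O\!\left(\frac{|\widehat N_i-N_i|}{Z}+\eps\frac{|N_i|}{Z}\right).
\]
The same Cauchy--Schwarz step bounds $|N_i|/Z$ by $\|\Softmax\|_2\|V_{:,i}\|_2$. Squaring and summing over $i$ then gives
\[
\|\hat z-\Attn(K_j,q,V_j)\|_2\le O(\eps)\,\|\Softmax(K_j,q)\|_2\,\|V_j\|_F,
\]
which is the Attention guarantee with precision $\tilde O(\eps)$.

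\textbf{Main obstacle.} The delicate step is the scale range. If values are not assumed to have bounded precision, the smallest values must be handled separately. I would lump all $|V_{\ell,i}|\le n^{-200}\max_{\ell'}|V_{\ell',i}|$ into the lowest bucket, or drop them. Their total contribution is at most $n^{-199}\|V_{:,i}\|_2$ times $\sum_\ell \Softmax_\ell=1$, which is negligible against $\eps\|\Softmax\|_2\|V_{:,i}\|_2\ge \eps n^{-1/2}\|V_{:,i}\|_2$. The maximum can be tracked online, with buckets opened lazily as it grows; I expect this online bookkeeping to be where most of the care is needed.
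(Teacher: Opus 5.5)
Your proposal is correct and follows the paper's own route. Like the paper, you use one unweighted instance for the denominator, and you bucket each value coordinate dyadically into weighted-KDE instances with weights $V_{\ell,i}/2^s$, so each bucket's error is charged to $\sum_\ell e^{\langle k_\ell,q\rangle}|V_{\ell,i}|$. The ratio estimator and Cauchy--Schwarz then turn this into $\|\Softmax(K,q)\|_2\,\|V_{:,i}\|_2$, as in the paper.

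There are two minor differences, both slight improvements. You square and sum the coordinate errors directly, which avoids the paper's $\sqrt d$ loss from its $\ell_1$ detour. You also bound the discarded tiny values against the attention guarantee itself, rather than through the paper's bound $\expker(K,q)\ge e^{-r^2}\max_i|V_{i,j}|$.
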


\begin{proof}
Recall that in the Attention problem, at every time step of the stream $l$ the algorithm needs to output $\hat z^l$ such that 
    $$
    \|\hat z^l - \Attn(K_{l}, q_l, V_{l})\|_2\leq \epsilon \cdot \|\Softmax(K_{l}, q_l)\|_2 \|V_{l}\|_F.
    $$
    
Our algorithm amounts to running several instances of streaming KDE, which has per-step approximation guarantees. Therefore, we can fix one time step $l \leq n$ and drop the indexes $l$ until the rest of the proof.   Instead of bounding the $\ell_2$  norm of the error we will bound its $\ell_1$ norm:
    \begin{align*}
        \sum_{j=1}^{d} |\hat{z}_j - \Attn(K, q, V)_j| = \norm{\hat{z} - \Attn(K, q, V)}_{1}.
    \end{align*}
     Since we assume poly-log dimension $d = \tilde{O}(1)$,
    \begin{gather*}
        \frac{1}{\sqrt{d}}\norm{\hat{z} - \Attn(K, q, V)}_{1} \leq \norm{\hat{z} - \Attn(K, q, V)}_2 \leq \norm{\hat{z} - \Attn(K, q, V)}_{1}.
    \end{gather*}
    Therefore, we lose at most a $\sqrt d = \tilde O(1)$ factor by switching to
$\norm{\hat{z} - \Attn(K, q, V)}_{1}$. We also split RHS in the error guarantee by coordinates:
     \begin{gather*}
        \norm{V}_F = \sqrt{\sum_{j = 1}^{d} \sum_{i=1}^{l} V_{i, j}^2} \geq \frac{1}{\sqrt{d}}\sum_{j = 1}^{d}\sqrt{\sum_{i=1}^{l} V_{i, j}^2} = \tilde{\Omega}(1)\sum_{j = 1}^{d}\sqrt{\sum_{i=1}^{l} V_{i, j}^2}.
    \end{gather*}
    Therefore, to solve Attention problem with $\tilde{O}(\eps)$ precision it is enough to provide an estimator such that 
    \begin{align*}
        \norm{\hat{z} - \Attn(K, q, V)}_{1} \leq \eps \cdot \|\Softmax(K, q)\|_2 \cdot \sum_{j = 1}^{d}\sqrt{\sum_{i=1}^{l} V_{i, j}^2}.
    \end{align*}
    To do this we approximate each coordinate individually using $d = \tilde{O}(1)$ separate data structures with guarantees
    \begin{align}
        \label{eq:attn_coord}
        \qty|\hat{z}_j - \frac{\sum\limits_{i = 1}^{l} e^{\<k_i, q>}V_{i, j}}{\sum\limits_{i = 1}^{l} e^{\<k_i, q>}}| \leq O(\eps) \cdot \|\Softmax(K, q)\|_2 \cdot \sqrt{\sum_{i=1}^{l} V_{i, j}^2}.
    \end{align}
    Stacking them together will produce the required estimate $\hat{z}$. We fix one coordinate $j$ in what follows. To provide the guarantee of~\eqref{eq:attn_coord} we will give two estimators~-- $\hat{N}$ for the numerator and $\hat{D}$ for the denominator:
    \begin{align}
        \label{eq:attn-num}
        \qty|\hat{N} - \sum\limits_{i = 1}^{l} e^{\<k_i, q>}V_{i, j}| \leq
        \eps \cdot \sum\limits_{i = 1}^{l} e^{\<k_i, q>}|V_{i, j}|
    \end{align}
    and
    \begin{align*}
        \qty|\hat{D} - \sum\limits_{i = 1}^{l} e^{\<k_i, q>}| \leq \eps \cdot \sum\limits_{i = 1}^{l} e^{\<k_i, q>}.
    \end{align*}
    It is easy to see that we can construct $\hat{D}$ the same way as $\hat{N}$ by letting $V_{i, j} = 1$, so we only focus on constructing $\hat{N}$. But first, let's check that we do recover guarantee \eqref{eq:attn_coord} for $\hat{z} = \frac{\hat{N}}{\hat{D}}$:
    \begin{align}
        \label{eq:attn-softmax-norm}
        \notag
        \qty|\frac{\hat{N}}{\hat{D}} - \frac{\sum\limits_{i = 1}^{l} e^{\<k_i, q>}V_{i, j}}{\expker(K, q)}| \leq
        \qty|\frac{\hat{N}}{\hat{D}} - \frac{\hat{N}}{\expker(K, q)}| &+ \qty|\frac{\hat{N}}{\expker(K, q)} - \frac{\sum\limits_{i = 1}^{l} e^{\<k_i, q>}V_{i, j}}{\expker(K, q)}| \leq 
        \\ \notag
        |\hat{N}|\cdot\qty|\frac{\hat{D} - \expker(K, q)}{\hat{D} \cdot \expker(K, q)}| &+ \frac{1}{\expker(K, q)}\cdot\qty|\hat{N} - \sum\limits_{i = 1}^{l} e^{\<k_i, q>}V_{i, j}| \leq 
        \\ \notag
        (1 + \eps)\left(\sum\limits_{i = 1}^{l} e^{\<k_i, q>}|V_{i, j}|\right)\cdot\frac{\eps}{(1 - \eps)\expker(K, q)} &+ \frac{\eps \cdot \sum\limits_{i = 1}^{l} e^{\<k_i, q>}|V_{i, j}|}{\expker(K, q)} \leq
        \\ 
        O(\eps) \cdot \frac{\sum\limits_{i = 1}^{l} e^{\<k_i, q>}|V_{i, j}|}{\expker(K, q)} &\leq  O(\eps) \cdot \frac{\sqrt{\sum_{i=1}^{l} e^{2\<k_i, q>} \sum_{i=1}^{l}V_{i, j}^2}}{\expker(K, q)}
        \\ \notag
        & = O(\eps)\|\Softmax(K, q)\|_2  \cdot \sqrt{\sum_{i=1}^{l} V_{i, j}^2}.
    \end{align}
    Now we only need to show how to build an estimator with guarantee $\eqref{eq:attn-num}$:
     \begin{align*}
        \qty|\hat{N} - \sum\limits_{i = 1}^{l} e^{\<k_i, q>}V_{i, j}| \leq
        \eps \cdot \sum\limits_{i = 1}^{l} e^{\<k_i, q>}|V_{i, j}|.
    \end{align*}
    Note that if all $|V_{i, j}|$ have about the same value, i.e. if all $|V_{i, j}|$ lie in $[\alpha; 2\alpha]$ for some scaling parameter $\alpha$, then this estimator is given by weighted KDE problem (\Cref{def:weighted-kde}) with weights $w(p_i) = \frac{V_i}{2\alpha}$. To see this assume that we are given $\widehat{\expker}$ such that (from \Cref{def:weighted-kde})
    \begin{gather*}
        |\widehat{\expker}(K, w, q) - \expker(K, w, q)| \leq \eps \cdot \expker(K, q)
    \end{gather*}
    Then, since $\expker(K, w, q) = \frac{1}{2\alpha}\sum\limits_{i = 1}^{l} e^{\<k_i, q>}V_{i, j}$,
    \begin{gather*}
        |2\alpha \cdot \widehat{\expker}(K, w, q) - \sum\limits_{i = 1}^{l} e^{\<k_i, q>}V_{i, j}| \leq \eps \cdot 2\alpha \cdot \expker(K, q) \leq 2\eps \cdot \sum\limits_{i = 1}^{l} e^{\<k_i, q>}|V_{i, j}|.
    \end{gather*}
    Therefore, we can partition $P$ into $O(\log n)$ buckets according to the
magnitudes of the values $V_{i,j}$, and approximate each bucket separately,
incurring only a $\tilde{O}(1)$ factor overhead in both error and space. It
suffices to use $O(\log n)$ buckets because we assume that $e^{r^2}$ and
$1/\eps$ are at most $\poly(n)$; hence buckets corresponding to 
\[
    |V_{i,j}| < \frac{\max_{i \leq l} |V_{i,j}|}{\poly(n)}
\]
can be ignored, as their contribution is negligible compared to $\exp(K, q) \geq e^{-r^2}\cdot \max_i|V_{i, j}|$. This concludes the
reduction.
\end{proof} 
\subsection{Lower bounds for Attention}
\label{sec:attn_lb}
% \piotr{I moved this section after the module that discusses upper and lower bound for KDE.}

% \bor{So far we provide a hard instance for $r^2 < \frac{\log n}{2}$ and $\|\Softmax\| \sim \frac{1}{\sqrt{n}}$.}

In this section, we show that our hard instances for Softmax KDE problem translate to hard instances for Attention problem with slight modifications.
We prove the reduction specifically to the type of random KDE instances used in our lower bound instances, which we generalize below.

\begin{definition}[Random KDE Instance]
\label{def:random-kde-instance}
Let $\bx \in \{0,1\}^{2n}$ be a binary vector with exactly $n$ ones. Let $K \subset \R^d$ be a set of $2n$ points drawn i.i.d.\ from a rotationally invariant distribution, independently of $\bx$. Let $K_\bx$ be the set of points indexed by the ones of $\bx$.
A \emph{random KDE instance} is an instance of the streaming KDE problem (\cref{def:streaming-kde}) in which the keys $k_1, \ldots, k_n$ are the points in $K$ and the query $q$ is chosen from $K$. Vector $q$ is chosen to be one of the keys $k_i \in K$; the position in the dataset $i$ is independent of the randomness of $K$.
\end{definition}

\begin{lemma}
\label{l:fromkde}
    Assume $r^2 < \frac{\log n}{2}$. If there exists a streaming algorithm that solves the attention problem (\cref{def:streaming-attn}) with parameters $2n, d, \eps, r$ using $S(2n, d, \eps, r)$ space, then there  exists a streaming algorithm that solves the streaming KDE problem on random instances (\cref{def:random-kde-instance}) with parameters $n, d, \eps', r$  using $S(2n, d, \eps, r)$ space, where $\eps' = \eps \cdot O\left(e^{O(r^2\sqrt{\frac{\log n}{d}})}\right)$.
\end{lemma}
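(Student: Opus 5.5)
The plan is to simulate the KDE data structure by an attention data structure, using a block of ``dummy'' keys whose values carry the signal. The attention instance has length $2n$: the $n$ keys of the random KDE instance, then $n$ copies of the zero key $k=0$ (norm $0\le r$). Since the zero keys are inserted after the real ones, they need no knowledge of the query; on a query the algorithm first inserts them into a copy of the attention state and then queries. Value vectors are set as follows: every real key $k\in K_\bx$ gets $v=0$, and every dummy key gets $v=e_1$, the first standard basis vector. Thus $\|v\|_2\le 1$, consistent with the remark that our lower-bound instances have bounded values. Since $e^{\langle 0,q\rangle}=1$, the first coordinate of the exact attention output at query $q$ is
\[
y:=\Attn(K',q,V)_1=\frac{n}{n+\expker(K_\bx,q)},
\]
and $\|V\|_F=\sqrt n$. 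Given the attention estimate $\hat y$, we output $\widehat{\expker}:=n/\hat y-n$.

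\textbf{Step 1: bound the attention error.} On a random instance, $q=k_i$ is one of the keys. Applying \Cref{lem:rand_dataset_kde} with $\alpha=1$ and $\alpha=2$ gives, with high probability,
\[
\Omega(n)\le \expker(K_\bx,q)\le n e^{O(r^2\sqrt{\log n/d})}+e^{r^2},
\qquad
\sum_{k}e^{2\langle k,q\rangle}\le n e^{O(r^2\sqrt{\log n/d})}+e^{2r^2}.
\]
The rotational invariance in \Cref{def:random-kde-instance} is exactly what that lemma needs. I would state it for the sphere, and for the Gaussian case reduce to it via \Cref{lem:shell}. This is where $r^2<\frac{\log n}{2}$ enters: it gives $e^{2r^2}<n$, so
\[
\|\Softmax(K',q)\|_2=\frac{\sqrt{n+\sum_k e^{2\langle k,q\rangle}}}{n+\expker(K_\bx,q)}\le \frac{e^{O(r^2\sqrt{\log n/d})}}{\sqrt n}.
\]
Hence the attention guarantee yields
\[
|\hat y-y|\le \eps\,\|\Softmax\|_2\sqrt n\le \eps\, e^{O(r^2\sqrt{\log n/d})}=:\eta.
\]

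\textbf{Step 2: invert.} From the upper bound on $\expker(K_\bx,q)$ and $e^{r^2}<\sqrt n$, we get $y\ge e^{-O(r^2\sqrt{\log n/d})}$. We may assume $\eta\ll y$; otherwise $\eps'$ already exceeds a constant and the statement is trivial. The map $y\mapsto n/y$ has derivative of magnitude $n/y^2$, so
\[
|\widehat{\expker}-\expker(K_\bx,q)|\le O\!\left(\frac{n\eta}{y^2}\right)\le n\,\eps\, e^{O(r^2\sqrt{\log n/d})}.
\]
Dividing by $\expker(K_\bx,q)=\Omega(n)$ gives relative error $\eps'=\eps\cdot O(e^{O(r^2\sqrt{\log n/d})})$. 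The success probability is that of the attention algorithm minus the high-probability failure of \Cref{lem:rand_dataset_kde}. The space used is that of the attention structure on $2n$ items, plus nothing more, since the dummies are deterministic.

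\textbf{Main obstacle.} The step that needs care is Step 1's control of $\|\Softmax\|_2$: it requires simultaneously a \emph{lower} bound $\Omega(n)$ on the KDE and an \emph{upper} bound on the self-term $e^{2r^2}$. Without the $n$ dummy keys, or without $r^2<\frac{\log n}{2}$, the self-term $e^{2r^2}$ would dominate and the attention error would swamp the $1/n$-scale signal. I would also double-check that \Cref{lem:rand_dataset_kde} applies when $q$ is itself one of the points of $K_\bx$, which that lemma explicitly allows.
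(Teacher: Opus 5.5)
Your reduction is correct, but it takes a different route from the paper's.

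\textbf{The paper's construction.} The paper streams all $2n$ points of $K$ with scalar values $v_i=\bx_i$, so the attention output is the ratio $\expker(K_\bx,q)/\expker(K,q)$. It recovers $\widehat{\expker}(K_\bx,q)=\hat z\cdot\expker(K,q)$ by multiplying by the exact normalizer. That normalizer is computable only because the whole of $K$, including the points outside $K_\bx$, is known through shared randomness. \Cref{lem:rand_dataset_kde} then supplies both $\expker(K,q)=\Omega(n)$ in the denominator of $\|\Softmax\|_2$ and the comparison $\expker(K,q)\le O(\expker(K_\bx,q)\,e^{O(r^2\sqrt{\log n/d})})$.

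\textbf{Your construction.} You stream only $K_\bx$ with zero values and append $n$ zero keys carrying $e_1$. The normalizer then contains the known additive term $n$, and you recover the KDE value by inverting $y\mapsto n/y-n$.

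\textbf{What each buys.} The paper's version is a purely linear rescaling, with no inversion and no case split. Yours is a genuine black-box streaming-to-streaming reduction: the simulator never needs $K\setminus K_\bx$, the string $\bx$, or $\expker(K,q)$. Also, the denominator bound $\ge n$ in $\|\Softmax\|_2$ comes for free. You still need the lower bound $\expker(K_\bx,q)=\Omega(n)$ to convert to relative error, and the same use of $r^2<\frac{\log n}{2}$ to control the self-term.

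\textbf{Minor points.}
\begin{itemize}
\item Your ``main obstacle'' remark overstates the role of the dummies. In the paper's construction, the random points of $K$ already give a denominator of order $n$. The dummies are essential only because of your particular value assignment.
\item If the KDE structure must keep processing the stream after answering, inserting the dummies into a copy of the state costs a factor $2$ in space. This is harmless, and in the one-shot lower-bound setting no copy is needed.
\item The case split $\eta\le y/2$ is sound, provided you fix the constant in $\eps'$ so that whenever $\eta>y/2$ you have $\eps'\ge 1$. In that case outputting $0$ is a valid answer.
\end{itemize}
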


\begin{proof}
    Given a random KDE instance (\cref{def:random-kde-instance}) we choose key vectors for Attention instance to be $K$, i.e. $2n$ vectors chosen independently of $\bx$. Next we introduce value vectors.
    For each $i \in [n]$, we define the corresponding value vector $v_i = \bx_i \in \{0,1\}$ (we treat $v_i$ as $1$-dimensional, padding it with zeros if needed).

    With our construction, the attention output is a scalar, whose numerator $\sum_{i = 1}^{n} e^{\langle k_i, q \rangle} v_i$ equals $\expker(K_{\bx}, q)$, and the denominator $\sum_{i \in [n]} e^{\langle k_i, q \rangle}$ equals $\expker(K, q)$. The approximation guarantee from \cref{def:streaming-attn} ensures that the output $\hat{z}$ satisfies
    \[
        \qty|\hat{z} - \frac{\expker(K_{\bx}, q)}{\expker(K, q)}| \leq \eps \cdot \|\Softmax(K, q)\|_2 \cdot \|V\|_F.
    \]
    Note that $\|V\|_F = \sqrt{|K_{\bx}|} = \sqrt{n}$ by construction.
    We can bound $\|\Softmax(K, q)\|_2$ as $ O\left(\frac{e^{O(r^2\sqrt{\frac{\log n}{d}})}}{\sqrt{n}}\right)$ by \Cref{lem:rand_dataset_kde} (using randomness of the KDE instance) and our assumption $e^{2r^2} < n$:
    \begin{align*}
        \|\Softmax(K, q)\|^2_2 = \frac{\sum e^{2\langle k_i, q \rangle}}{(\sum e^{\langle k_i, q \rangle})^2} \stackrel{(\Cref{lem:rand_dataset_kde})} = \frac{e^{2r^2} + n\cdot e^{O(r^2\sqrt{\frac{\log n}{d}})}}{\Omega(n^2)} = O\qty(\frac{e^{O(r^2\sqrt{\frac{\log n}{d}})}}{n})
    \end{align*}
    i.e. $\|\Softmax(K, q)\|_2 \cdot \|V\|_F = e^{O(r^2\sqrt{\frac{\log n}{d}})}$. This guarantee means that $\widehat{\expker}(K_{\bx}, q) = \hat{z} \cdot \expker(K, q)$ satisfies
    \[
        \qty|\widehat{\expker}(K_{\bx}, q) - \expker(K_{\bx}, q)| \leq O(\eps) \cdot \expker(K, q) \cdot e^{O(r^2\sqrt{\frac{\log n}{d}})} = \eps' \cdot \expker(K_{\bx}, q).
    \]
Here, the last step follows from \Cref{lem:rand_dataset_kde}. By \Cref{lem:rand_dataset_kde}, since $|K| = 2n$ and $|K_{\bx}| = n$
\[\Omega(n) \leq \sum_{k \in K_{\bx}} e^{\langle k, q \rangle} =  \exp(K_{\bx}, q)\leq n\cdot e^{O\left( r^2\sqrt{\frac{\log n}{d}}\right)} + e^{r^2},\]

\[\Omega(n) \leq \sum_{k \in K} e^{\langle k, q \rangle} =  \exp(K, q)\leq 2n\cdot e^{O\left( r^2\sqrt{\frac{\log n}{d}}\right)} + e^{r^2}. \] Because $r^2 < \log n/2$, $e^{2r^2} < n$. Therefore,
\[
\expker(K, q) \leq O\left(n\cdot e^{O\left( r^2\sqrt{\frac{\log n}{d}}\right)}\right)  \leq O\qty(\expker(K_\bx, q) \cdot e^{O\left(r^2\sqrt{\frac{\log n}{d}}\right)}).
\]
    Therefore we can solve random KDE instance with parameters $n, d, \eps', r$ using $S(2n, d, \eps, r)$ space.
\end{proof}

\subsection{Putting it together}
\label{sec:attn_putting_together}

\begin{proof}[Proof of \Cref{th:const_r_ub_lb_attn}]
The upper bound is given by reduction to \Cref{th:const_r_ub_lb} in \Cref{sec:attn_ub} with just a $\Tilde{O}(1)$ factor loss in $\eps$, not affecting final bound
\begin{align*}
    \tilde{O}\qty(\qty(\frac{1}{\eps})^{1 - \frac{1}{a} + o(1)})
\end{align*}
after rescaling.

The lower bound reduction is shown in \Cref{sec:attn_lb}. Combining \Cref{l:fromkde} with the lower bound proof of \Cref{th:const_r_ub_lb}, namely \Cref{lem:reduction}, we get a 
\begin{align*}
    \tilde{\Omega}\qty(\qty(\frac{1}{\eps'})^{1 - \frac{1}{a} + o(1)})
\end{align*}
lower bound, where $\eps' = \eps \cdot O(e^{O(r^2\sqrt{\frac{\log n}{d}})})$. Since we are in the regime $d > \log^2 n, r^2 = (\log n)^{o(1)}$, $e^{O(r^2\sqrt{\frac{\log n}{d}})} = \Theta(1)$, finishing the proof.
\end{proof}

\begin{proof}[Proof of \Cref{th:big_r_ub_lb_attn}]
The upper bound is given by reduction to \Cref{th:big_r_ub_lb} in \Cref{sec:attn_ub} with just a $\Tilde{O}(1)$ factor loss in $\eps$, not affecting final bound
\begin{align*}
    \tilde{O}\qty(\frac{e^{r^2(1 + o(1))}}{\eps})
\end{align*}
after rescaling.

The lower bound reduction is shown in \Cref{sec:attn_lb}. Combining \Cref{l:fromkde} (where we crucially use $r^2 < \frac{\log n}{2}$) with the lower bound proof of \Cref{th:big_r_ub_lb}, namely \Cref{lem:reduction_big_r}, we get a 
\begin{align*}
        \tilde{\Omega}\qty(\min\qty{n, \frac{e^{r^2(1 - o(1))}}{\eps'}})
    \end{align*}
lower bound, where $\eps' = \eps \cdot O(e^{O(r^2\sqrt{\frac{\log n}{d}})})$. Since we are in the regime $d = (\log n)^{1 + \Omega(1)}$, we can bound $e^{O(r^2\sqrt{\frac{\log n}{d}})} = e^{o(r^2)}$,  finishing the proof.
\end{proof}

\newpage

\appendix
\crefalias{section}{appendix}
\section{Performance guarantees of \Partition{} (\Cref{alg:partition})}\label{sec:ub_lowt_proofs}
In this section, we provide the omitted proofs from \Cref{sec:ub_low_temp}. We start by proving \Cref{cor:certify-ball}, restated below:
\certifyball*

\Cref{cor:certify-ball} is very similar to Lemma 58 from \cite{CKNS20} (stated below as \Cref{lemma:certify}), except that the prior work applies only to datasets lying on a sphere. To prove \Cref{cor:certify-ball}, we reduce to the prior lemma by radially projecting a dataset contained in a ball onto its enclosing sphere.

\begin{lemma}[\cite{CKNS20}, Lemma 58]\label{lemma:certify}
There is a randomized procedure that, given a set
$K \subseteq S^{d-1}$ and parameters $\Delta,\tau,\delta \in \left(0,\tfrac13\right)$, runs in time
\[
O\!\left(\frac{d}{\Delta^{2}\tau}\log\!\left(\frac{2|K|}{\delta}\right)\cdot |K|\right)
\]
and, with probability at least $1-\delta$, does one of the following:
\begin{enumerate}
    \item returns a point $k^* \in S^{d-1}$ such that
    \[
    \left|B\!\left(k^*, \sqrt{2(1 - \Delta^2)}\right)\cap K\right|
    \;\ge\;
    \Omega(\Delta^2)\cdot \tau |K|
    \]
    \item or certifies that the set $K$ is $(\Delta',\tau)$-pseudo-random, where $\Delta' = \Theta(\Delta)$.
\end{enumerate}
\end{lemma}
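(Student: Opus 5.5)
The plan is a sample-and-verify procedure, combined with an averaging argument showing that a dense cap must contain many ``good'' centers. Set $\Delta'$ so that $\Delta'^2/4 = \Delta^2$, hence $\Delta' = 2\Delta = \Theta(\Delta)$. On the unit sphere, $\|k^*-k\|_2^2 = 2-2\langle k^*,k\rangle$. So $k\in B\bigl(k^*,\sqrt{2(1-\Delta^2)}\bigr)$ iff $\langle k^*,k\rangle\ge\Delta^2$. Item~1 therefore asks for a center $k^*$ whose $\Delta^2$-neighbourhood, measured by inner products, contains $\Omega(\Delta^2)\tau|K|$ points.

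The procedure has three steps.
\begin{enumerate}
\item Sample $s = O\bigl(\log(1/\delta)/(\Delta^2\tau)\bigr)$ candidate centers uniformly from $K$.
\item For each candidate $k$, compute exactly $N(k) := |\{k'\in K : \langle k,k'\rangle\ge\Delta^2\}|$ by a linear scan, at cost $O(d|K|)$.
\item Return the first candidate with $N(k)\ge c\,\Delta^2\tau|K|$ for a suitable constant $c$. If there is none, declare $K$ to be $(\Delta',\tau)$-pseudo-random.
\end{enumerate}
The running time is $s\cdot O(d|K|)$, which fits within the claimed $O\bigl(\frac{d}{\Delta^2\tau}\log\frac{2|K|}{\delta}\cdot|K|\bigr)$. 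Because counts are verified exactly, outcome~1 is always correct when it is returned. The only possible failure is declaring pseudo-randomness when it is false.

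The key step, and the main obstacle, is a structural claim. If $K$ is not $(\Delta',\tau)$-pseudo-random, then at least a $\Delta'^2\tau/4$ fraction of $K$ are good centers. Take a witness $q\in S^{d-1}$ with cap $C=\{k\in K:\langle k,q\rangle>\Delta'\}$ and $|C|\ge\tau|K|$. By Cauchy--Schwarz,
\[
\sum_{k,k'\in C}\langle k,k'\rangle=\Bigl\|\sum_{k\in C}k\Bigr\|_2^2\ge\Bigl(\sum_{k\in C}\langle k,q\rangle\Bigr)^2\ge\Delta'^2|C|^2 .
\]
Thus for a uniform pair $(k,k')\in C^2$ we have $\mathbb{E}\langle k,k'\rangle\ge\Delta'^2$. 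Since $\langle k,k'\rangle\le1$, a Markov-type argument gives $\Pr[\langle k,k'\rangle\ge\Delta'^2/2]\ge\Delta'^2/2$.

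A second averaging step then shows that at least a $\Delta'^2/4$ fraction of $k\in C$ have at least $(\Delta'^2/4)|C|$ neighbours $k'\in C$ with $\langle k,k'\rangle\ge\Delta'^2/2$. Here $\Delta'^2/2\ge\Delta^2$, so these are also $\Delta^2$-neighbours. Each such $k$ satisfies $N(k)\ge(\Delta'^2/4)\tau|K|=\Omega(\Delta^2)\tau|K|$. Together they form at least a $(\Delta'^2/4)\tau=\Omega(\Delta^2\tau)$ fraction of $K$.

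It follows that the $s$ uniform samples all miss the good set with probability at most $(1-\Omega(\Delta^2\tau))^s\le\delta$, for the stated $s$. This gives success probability $1-\delta$.

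The delicate points are the two constant-chasing averaging steps that relate $\Delta'$ to $\Delta$, and a boundary detail. Pseudo-randomness is quantified over all $q$ on the sphere, not over points of $K$, and the Cauchy--Schwarz step is exactly what converts an arbitrary witness $q$ into a data point $k^*\in K\subset S^{d-1}$. The extra $\log|K|$ in the running time leaves slack. If one prefers to estimate $N(k)$ by subsampling rather than by a full scan, that slack pays for a union bound over the estimates.
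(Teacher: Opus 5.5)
Your proposal is correct. The paper does not prove this lemma: it imports it from \cite{CKNS20} and uses it as a black box in the proof of \Cref{cor:certify-ball}. So there is no in-paper argument to compare against, and your sample-and-verify argument stands as a self-contained proof.

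The steps go through with explicit constants. With $\Delta'=2\Delta$, the bound $\|\sum_{k\in C}k\|_2^2\ge\Delta'^2|C|^2$ and your two Markov steps produce at least $\Delta^2|C|\ge\Delta^2\tau|K|$ points of $C$. Each of these has $N(k)\ge\Delta^2|C|\ge\Delta^2\tau|K|$, since its neighbours have inner product at least $2\Delta^2\ge\Delta^2$. Hence the threshold constant $c=1$ works, and $s=\lceil\ln(1/\delta)/(\Delta^2\tau)\rceil$ samples suffice. The resulting running time is $O\bigl(\frac{d}{\Delta^2\tau}\log(1/\delta)\,|K|\bigr)$, slightly better than the stated bound.

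You handled three small points implicitly; it would be cleaner to state them.
\begin{itemize}
\item The paper's pseudo-randomness definition quantifies $q$ over the ball $B(0,1)$, not the sphere. This is harmless, because your Cauchy--Schwarz step only uses $\|q\|_2\le 1$.
\item The diagonal pairs $k=k'$ in $C^2$ are legitimately counted, since $N(k)$ includes $k$ itself.
\item The whole argument works verbatim for multisets. This matters because the paper applies the lemma to the radial projection $\pi(K)$.
\end{itemize}
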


\begin{proof}[Proof of \Cref{cor:certify-ball}]
By translation and scaling, it suffices to consider the case $c=0$ and $r=1$.

Let $\pi(K)$ denote the multiset obtained by radially projecting each point of
$K$ onto $S^{d-1}$. Restricted to the sphere, the key sets in \Cref{eq:sp_caps} are exactly those that fall in the spherical caps of radius at most $\sqrt{2(1-\Delta)}$. Therefore, if $K$ is
not $(\Delta,\tau)$-pseudo-random, then $\pi(K)$ is not $(\Delta,\tau)$-pseudo-random
either. Consequently, if \Cref{lemma:certify} certifies that $\pi(K)$ is
$(\Delta',\tau)$-pseudo-random, then $K$ is also $(\Delta',\tau)$-pseudo-random. Otherwise, \Cref{lemma:certify} returns a point $u^* \in S^{d-1}$ such that
\[
\left|B\!\left(u^*,\sqrt{2(1-\Delta^2)}\right)\cap \pi(K)\right|
\ge \Omega(\Delta^2)\cdot\tau |K|.
\]
Let
\[
C := B\!\left(u^*,\sqrt{2(1-\Delta^2)}\right)\cap S^{d-1}
\]
and define its radial cone by
\[
\cone(C) := \{\alpha x : \alpha \in [0,1],\ x \in C\}.
\]
Then
\[
|\cone(C)\cap K| \ge \Omega(\Delta^2)\cdot \tau |K|.
\]
We now output
\[
k^* := \Delta^2 u^* \in B(0,1).
\]
It remains to show that
\[
\cone(C)\subseteq B\!\left(k^*,\sqrt{1-\Delta^4}\right).
\]
Indeed, let $x=\alpha y \in \cone(C)$, where $\alpha\in[0,1]$ and $y\in C$.
Since $\|y-u^*\|_2\le \sqrt{2(1-\Delta^2)}$, we have $\langle y,u^*\rangle \ge \Delta^2$.
Therefore,
\[
\|x-k^*\|_2^2
=
\|\alpha y-\Delta^2u^*\|_2^2
=
\alpha^2+\Delta^4-2\alpha\Delta^2\langle y,u^*\rangle
\le
\alpha^2+\Delta^4-2\alpha\Delta^4
\le
1-\Delta^4,
\]
where the last inequality uses $\alpha\in[0,1]$. 
\end{proof}

Next, we provide the analysis of the performance guarantees of \Partition{}, restated below:

\partition*

\begin{proof}[Proof of \Cref{lem:ps_randomification}]  First of all, we assume without loss of generality that $ \qty(\frac{\log |K|}{\Delta^2\cdot \tau})^{O\qty(\frac{\log \frac{r}{r_0}}{\Delta^4})} \leq n$. Otherwise, we may simply partition the dataset into one point subsets.

We construct the decomposition $\{K_i\}_{i \in [N]}$ by simply recursively applying the procedure \Certify{} (\ref{cor:certify-ball}) which either identifies a dense spherical cap or correctly reports that none is present. We formalize this recursion in \Cref{alg:partition}.

\paragraph{Runtime of \Cref{alg:partition}.}\label{par:runtime}
At each level of recursion, the runtime of the algorithm is dominated by the
calls to \Certify{}. We therefore bound only the total cost of these calls.
Fix one invocation of \Partition{} on a dataset of size $|K|$. Each successful
call to \Certify{} removes an $\Omega(\Delta^2\tau)$ fraction of the current
dataset, which means the total work spent on this invocation is
\[
\sum_{t = 0}^{\infty}  O\!\left(\frac{d}{\Delta^2\tau}(1 - \Delta^2\tau)\cdot|K|\log n\right)= O\!\left(\frac{d}{\Delta^4\tau^2}\cdot |K|\log n\right).
\]

Now fix a recursion depth. The datasets processed at that depth are pairwise
disjoint subsets of the original input $K$, so the sum of their sizes is at
most $|K|$. Therefore the total runtime of all calls to \Certify{} at any fixed depth is
\[
O\!\left(\frac{d}{\Delta^4\tau^2}\cdot |K|\log n\right).
\]
Finally, the recursion depth is
\[
O\!\left(\frac{\log(r/r_0)}{\Delta^4}\right),
\]
because each recursive call shrinks the radius by a factor $\sqrt{1-\Delta^4}$.
This yields the claimed runtime bound.

\paragraph{Size of $\mathcal{K}$.} Note that we only output a set of $\mathcal{K}$ at a leaf of the recursion tree, so it suffices to bound the number of leaves. Because every time line~\eqref{line:while_certify} is triggered the size of $K$ decreases by a factor of $(1 - \Omega(\Delta^2\tau))$, \Partition{} makes $O\left(\frac{\log |K|}{\Delta^2\tau}\right)$ recursive calls to itself. The depth of recursion is bounded by $O\left(\frac{\log (r/r_0)}{\Delta^4}\right)$ because each recursive call targets a dataset enclosed in a ball of radius $\sqrt{1 - \Delta^4}r$. Therefore, 
\[|\mathcal{K}| \leq \qty(\frac{\log |K|}{\Delta^2\cdot \tau})^{O\qty(\frac{\log \frac{r}{r_0}}{\Delta^4})}\]

\paragraph{Probability of success of \Cref{alg:partition}.} The algorithm is successful if all runs of \Certify{} are successful. Note that the number of calls to \Certify{} at a node is at most one more than the number of its children. Therefore, we may bound the number of calls to \Certify{} as product of the number of leaves and the number of levels of recursion which is 
\[O\left(\qty(\frac{\log |K|}{\Delta^2\cdot \tau})^{O\qty(\frac{\log \frac{r}{r_0}}{\Delta^4})}\cdot \frac{\log(r/r_0)}{\Delta^4}\right).\]
By our convention, $\qty(\frac{\log |K|}{\Delta^2\cdot \tau})^{O\qty(\frac{\log \frac{r}{r_0}}{\Delta^4})} \leq n$. Therefore, we crudely bound the above expression by $n^2$. By the union bound argument, all of the calls to \Certify{} are simultaneously successful with probability at least $1 - \frac{1}{n^2}$.

\end{proof}

\bibliographystyle{alpha}
\bibliography{main.bib}

\end{document}